
\newif\ifabstract
\abstracttrue
\newif\iffull
\ifabstract \fullfalse \else \fulltrue \fi

\documentclass[11pt]{article}
\usepackage{amsfonts}
\usepackage{amssymb}
\usepackage{amstext}
\usepackage{amsmath}
\usepackage{xspace}
\usepackage{ntheorem}
\usepackage{graphicx}
\usepackage{url}
\usepackage{graphics}
\usepackage{colordvi}
\usepackage{hyperref}
\usepackage{subfigure}
\usepackage{xcolor}
\usepackage{cleveref}

\textheight 9.3in \advance \topmargin by -1.0in \textwidth 6.7in
\advance \oddsidemargin by -0.8in
\newcommand{\myparskip}{3pt}
\parskip \myparskip

\newcommand{\heap}{\mathsf{Heap}}
\newcommand{\pred}{\mathsf{pred}}

\newcommand{\epsx}{\epsilon}

\newcommand{\Otilde}{\widetilde O}
\newcommand{\otilde}{\widetilde O}

\newcommand{\vertexlist}{\mbox{\sf{VertexList}}}
\newcommand{\clusterlist}{\mbox{\sf{ClusterList}}}

\newcommand{\coveringcluster}{\mbox{\sf{CoveringCluster}}}
\newcommand{\clustercover}{\mbox{\sf{CoveringCluster}}}
\newcommand{\apxdist}{\tilde{dist}}
\newcommand{\MMF}{\mbox{\sf{Maximum Multicommodity Flow}}\xspace}
\newcommand{\MM}{\mbox{\sf{Minimum Multicut}}\xspace}
\newcommand{\SSSP}{\mbox{\sf{SSSP}}\xspace}
\newcommand{\APSP}{\mbox{\sf{APSP}}\xspace}
\newcommand{\NC}{\mbox{\sf{Neighborhood Cover}}\xspace}

\newcommand{\delvertex}{\ensuremath{\operatorname{DeleteVertex}}\xspace}
\newcommand{\addsupernode}{\ensuremath{\operatorname{AddSuperNode}}\xspace}
\newcommand{\csplit}{\ensuremath{\operatorname{ClusterSplit}}\xspace}
\newcommand{\algtransformpath}{\ensuremath{\operatorname{AlgTransformPath}}\xspace}
\newcommand{\alg}{\ensuremath{\operatorname{Alg}}\xspace}

\newcommand{\optmcf}{\mathsf{OPT}_{\mathsf{MCF}}}
\newcommand{\optmm}{\mathsf{OPT}_{\mathsf{MM}}}






\newcommand{\ceil}[1]{\ensuremath{\left\lceil#1\right\rceil}}
\newcommand{\floor}[1]{\ensuremath{\left\lfloor#1\right\rfloor}}


\newcommand{\opt}{\mathsf{OPT}}

\newcommand{\set}[1]{\left\{ #1 \right\}}

\newcommand{\iset}{{\mathcal{I}}}
\newcommand{\pset}{{\mathcal{P}}}
\newcommand{\qset}{{\mathcal{Q}}}

\newcommand{\bset}{{\mathcal{B}}}

\newcommand{\cset}{{\mathcal{C}}}

\newcommand{\mset}{{\mathcal M}}

\newcommand{\sset}{{\mathcal{S}}}

\newcommand{\uset}{{\mathcal{U}}}
\newcommand{\dset}{{\mathcal{D}}}
\newcommand{\oset}{{\mathcal{O}}}

\newcommand{\notdset}{\overline{\mathcal{D}}}

\newcommand{\tw}{w}
\newcommand{\tW}{W}


\newcommand{\be}{\begin{enumerate}}
\newcommand{\ee}{\end{enumerate}}
\newcommand{\bd}{\begin{description}}
\newcommand{\ed}{\end{description}}
\newcommand{\bi}{\begin{itemize}}
\newcommand{\ei}{\end{itemize}}

\newtheorem{theorem}{Theorem}[section]
\newtheorem{lemma}[theorem]{Lemma}
\newtheorem{observation}[theorem]{Observation}
\newtheorem{corollary}[theorem]{Corollary}
\newtheorem{claim}[theorem]{Claim}

\newtheorem*{definition}{Definition}
\newenvironment{proof}{\par \smallskip{\bf Proof:}}{\hfill\stopproof}
\def\stopproof{\square}
\def\square{\vbox{\hrule height.2pt\hbox{\vrule width.2pt height5pt \kern5pt
\vrule width.2pt} \hrule height.2pt}}

\newenvironment{proofof}[1]{\noindent{\bf Proof of #1.}}%
        {\hfill\stopproof}


\newenvironment{prog}[1]{
\begin{minipage}{5.8 in}
\begin{center}
{\sc #1}
\end{center}
}
{
\end{minipage}
}

\newcommand{\program}[3]{\begin{figure} \fbox{\vspace{2mm}\begin{prog}{#1} #3 \end{prog}\vspace{2mm}} 
			\caption{#1 \label{#2}} \end{figure}}

\renewcommand{\phi}{\varphi}
\newcommand{\eps}{\epsilon}

\newcommand{\half}{\ensuremath{\frac{1}{2}}}

\newcommand{\poly}{\operatorname{poly}}
\newcommand{\dist}{\mbox{\sf dist}}
\newcommand{\diam}{\mbox{\sf diam}}
\newcommand{\reals}{{\mathbb R}}
\newcommand{\framework}{{\mathbb F}}


\newenvironment{properties}[2][0]
{
\begin{enumerate} \setcounter{enumi}{#1}}{\end{enumerate}}

\setlength{\parskip}{2mm} \setlength{\parindent}{0mm}



\newcommand{\vol}{\operatorname{Vol}}

\newcommand{\shortestpathquery}{\mbox{\sf{shortest-path-query}}}

\newcommand{\pquery}{\mbox{\sf{path-query}}\xspace}
\newcommand{\exspquery}{\mbox{\sf{expander-short-path-query}}\xspace}
\newcommand{\expquery}{\mbox{\sf{expander-short-path-query}}\xspace}

\newcommand{\spquery}{\mbox{\sf{short-path-query}}\xspace}
\newcommand{\shortestpath}{\mbox{\sf{SSSP-query}}\xspace}
\newcommand{\algbasic}{\ensuremath{\mathsf{AlgBasic}}\xspace}
\newcommand{\proccut}{\ensuremath{\mathsf{ProcCut}}\xspace}
\newcommand{\initnc}{\ensuremath{\mathsf{InitNC}}\xspace}
\newcommand{\computenc}{\ensuremath{\mathsf{InitNC}}\xspace}

\newcommand{\algmaintainNC}{\ensuremath{\mathsf{AlgMaintainNC}}\xspace}

\newcommand{\maintaincluster}{\ensuremath{\mathsf{MaintainCluster}}\xspace}
\newcommand{\maintaingoodcluster}{\ensuremath{\mathsf{MaintainGoodCluster}}\xspace}
\newcommand{\algmaintaingoodcluster}{\ensuremath{\mathsf{AlgMaintainGoodCluster}}\xspace}
\newcommand{\algmaintaincluster}{\ensuremath{\mathsf{AlgMaintainCluster}}\xspace}
\newcommand{\recdynnc}{\ensuremath{\mathsf{RecDynNC}}\xspace}
\newcommand{\recdynNC}{\ensuremath{\mathsf{RecDynNC}}\xspace}
\newcommand{\algslow}{\ensuremath{\mathsf{AlgSlow}}\xspace}
\newcommand{\algpseudo}{\ensuremath{\mathsf{AlgPseudocut\&Expander}}\xspace}
\newcommand{\algmaintainexp}{\ensuremath{\mathsf{AlgMaintainExpander}}\xspace}

\newcommand{\dquery}{\mbox{\sf{dist-query}}\xspace}

\newcommand{\distquery}{\mbox{\sf{dist-query}}\xspace}

\newcommand{\CONNSF}{\mbox{\sf{CONN-SF}}}

\newcommand{\EST}{\mbox{\sf{ES-Tree}}\xspace}
\newcommand{\initexp}{\ensuremath{\mathsf{InitializeExpander}}\xspace}

\newcommand{\ESTs}{\mbox{\sf{ES-Trees}}\xspace}

\newcommand{\conn}{\mbox{\sf{conn}}}







\newcommand{\hW}{\hat W}
\newcommand{\hN}{\hat N}
\newcommand{\LP}{\mbox{LP}}
\begin{document}

\begin{titlepage}
	
	\title{Decremental All-Pairs Shortest Paths in Deterministic Near-Linear Time\footnote{A preliminary version appeared in STOC 2021}}
	\author{Julia Chuzhoy\thanks{Toyota Technological Institute at Chicago. Email: {\tt cjulia@ttic.edu}. Supported in part by NSF grants CCF-1616584 and CCF-2006464.}}
	\maketitle
\pagenumbering{gobble}
	
	\thispagestyle{empty}

\begin{abstract}
	We study the decremental All-Pairs Shortest Paths (APSP) problem in undirected edge-weighted graphs. The input to the problem is an undirected $n$-vertex $m$-edge graph $G$ with non-negative lengths on edges, that undergoes an online sequence of edge deletions. The goal is to support approximate shortest-paths queries: given a pair $x,y$ of vertices of $G$, return a path $P$ connecting $x$ to $y$, whose length is within factor $\alpha$ of the length of the shortest $x$-$y$ path, in time $\tilde O(|E(P)|)$, where $\alpha$ is the approximation factor of the algorithm. APSP is one of the most basic and extensively studied dynamic graph problems.
	A long line of work culminated in the algorithm of [Chechik, FOCS 2018] with near optimal guarantees: for any constant $0<\eps\leq 1$ and parameter $k\geq 1$, the algorithm achieves approximation factor $(2+\eps)k-1$, and  total update time $O(mn^{1/k+o(1)}\log (nL))$, where $L$ is the ratio of longest to shortest edge lengths. Unfortunately, as much of prior work, the algorithm is randomized and needs to assume an \emph{oblivious adversary}; that is, the input edge-deletion sequence is fixed in advance and may  not depend on the algorithm's behavior. 
	
	In many real-world scenarios, and in applications of APSP to static graph problems, it is crucial that the algorithm works against an \emph{adaptive adversary}, where the edge deletion sequence  may  depend on the algorithm's past behavior arbitrarily; ideally, such an algorithm should be \emph{deterministic}. Unfortunately, unlike the oblivious-adversary setting, its adaptive-adversary counterpart is still poorly understood. For unweighted graphs, the algorithm of [Henzinger, Krinninger and Nanongkai, FOCS '13, SICOMP '16] achieves a $(1+\eps)$-approximation with total update time $\tilde O(mn/\eps)$; the best current total update time guarantee of $n^{2.5+O(\eps)}$ is achieved by the recent deterministic algorithm of [Chuzhoy, Saranurak, SODA'21], with 
	$2^{O(1/\eps)}$-multiplicative and $2^{O(\log^{3/4}n/\eps)}$-additive approximation. To the best of our knowledge, for arbitrary non-negative edge weights, the fastest current adaptive-update algorithm  has total update time $O(n^{3}\log L/\eps)$, achieving a $(1+\eps)$-approximation. 
	Even if we are willing to settle for any $o(n)$-approximation factor, no currently known algorithm has a better than $\Theta(n^3)$ total update time in weighted graphs and better than $\Theta(n^{2.5})$ total update time in unweighted graphs.
	Several conditional lower bounds suggest that no algorithm  with a sufficiently small approximation factor can achieve an $o(n^3)$ total update time. 
	
	Our main result is a deterministic algorithm for decremental APSP in undirected edge-weighted graphs, that, for any $\Omega(1/\log\log m)\leq \eps< 1$, achieves approximation factor $(\log m)^{2^{O(1/\eps)}}$, with total update time $O\left (m^{1+O(\eps)}\cdot (\log m)^{O(1/\eps^2)}\cdot \log L\right )$. In particular, we obtain a $(\poly\log m)$-approximation in time $\Otilde(m^{1+\eps})$ for any constant $\eps$, and, for any slowly growing function $f(m)$, we obtain $(\log m)^{f(m)}$-approximation in time $m^{1+o(1)}$.
	We also provide an algorithm with similar guarantees for decremental Sparse Neighborhood Covers.
	
\end{abstract}

\newpage

\tableofcontents{}
\end{titlepage}

\pagenumbering{arabic}

\section{Introduction}
\label{sec: intro}
We study the decremental All-Pairs Shortest-Paths    (\APSP)  problem  in weighted undirected graphs. 
In this problem, we are given as input an undirected graph $G$ with lengths $\ell(e)\geq 1$ on its edges, that undergoes an online sequence of edge deletions. The goal is to support (approximate) shortest-path queries $\shortestpathquery(x,y)$: given a pair $x,y$ of vertices of $G$, return a path connecting $x$ to $y$, whose length is within factor $\alpha$ of the length of the shortest $x$-$y$ path in $G$, where $\alpha$ is the \emph{approximation factor} of the algorithm. We also consider approximate distance queries, $\distquery(x,y)$: given a pair $x,y$ of vertices of $G$, return an estimate $\dist'(x,y)$ on the distance $\dist_G(x,y)$ between $x$ and $y$ in graph $G$, such that $\dist_G(x,y)\leq \dist'(x,y)\leq \alpha\cdot \dist_G(x,y)$. \APSP is one of the most basic and extensively studied problems in dynamic algorithms, and in graph algorithms in general. Algorithms for this problem often serve as building blocks in designing algorithms for other graph problems, in both the classical static and the dynamic settings. Throughout, we denote by $m$ and $n$ the number of edges and the number of vertices in the initial graph $G$, respectively, and by $L$ the ratio of largest to smallest edge length. 
In addition to the approximation factor of the algorithm, two other central measures of its performance are: \emph{query time} -- the time it takes to process a single query; and \emph{total update time} -- the total time that the algorithm takes, over the course of the entire update sequence, to maintain its data structures. Ideally, we would like the total update time of the algorithm to be close to linear in $m$, and the query time for $\shortestpathquery$ to be bounded by $\tilde O(|E(P)|)$, where $P$ is the path that the algorithm returns. 

A straightforward algorithm for the decremental \APSP problem is the following: every time a query $\shortestpathquery(x,y)$ arrives, compute the shortest $x$-$y$ path in $G$ from scratch. This algorithm solves the problem exactly, but it has query time $\Theta(m)$. Another approach is to rely on \emph{spanners}. A spanner of a dynamic graph $G$ is another dynamic graph $H\subseteq G$, with $V(H)=V(G)$, such that the distances between the vertices of $G$ are approximately preserved in $H$; ideally a spanner $H$ should be very sparse. For example, a work of
\cite{BaswanaKS12} provides a randomized algorithm that maintains a spanner of a fully dynamic $n$-vertex graph $G$ (that may undergo both edge deletions and edge insertions), that, for any parameter $k$, achieves approximation factor $(2k-1)$, has expected amortized update time $O(k^2\log^2n)$ per update operation, and expected spanner size $O(kn^{1+1/k}\log n)$.
A recent work of
\cite{new-spanner} provides a randomized algorithm for maintaining a spanner of a fully dynamic $n$-vertex graph $G$ with approximation factor $O(\poly\log n)$ and total update time $\Otilde(m^*)$, where $m^*$ is the total number of edges ever present in $G$; the number of edges in the spanner $H$ is always bounded by $O(n\poly\log n)$. One significant advantage of this algorithm over the algorithm of \cite{BaswanaKS12} is that, unlike the algorithm of \cite{BaswanaKS12}, it can withstand an adaptive adversary; we provide additional discussion of oblivious versus adaptive adversary below.   An algorithm for the \APSP problem can naturally build on such constructions of spanners: given a query $\shortestpathquery(x,y)$ or $\distquery(x,y)$, we simply compute the shortest $x$-$y$ path in the spanner $H$. For example, the algorithm for graph spanners of \cite{new-spanner} implies a randomized $\poly\log n$-approximation algorithm for \APSP that has $O(m\poly\log n)$ total update time.
A recent work of \cite{newest-spanner} provides additional spanner-based algorithms for \APSP. 
Unfortunately, it seems inevitable that this straightforward spanner-based approach to \APSP  must have query time $\Omega(n)$ for both $\shortestpathquery$ and \distquery. 

In this paper, our focus is on developing algorithms for the \APSP problem, whose query time is $\tilde O(|E(P)|)$ for $\shortestpathquery$, where $P$ is the path that the query returns, and $O(\poly\log(mL))$ for $\distquery$. 
There are several reasons to strive for these faster query times. First, we typically want responses to the queries to be computed as fast as possible, and the above query times are close to the fastest possible. Second, obtaining $\tilde O(|E(P)|)$ query time for $\shortestpathquery$ is often crucial to obtaining fast algorithms for classical (static) graph problems that use algorithms for \APSP as a subroutine. We provide an example of such an application to (static) \MMF / \MM in uncapacitated graphs in \Cref{sec: application}.

We distinguish between dynamic algorithms that work against an \emph{oblivious adversary}, where the input sequence of edge deletions is fixed in advance and may not depend on the algorithm's past behavior, and algorithms that work against an \emph{adaptive adversary}, where the input update sequence may depend on the algorithm's past responses and inner states arbitrarily. We refer to the former as \emph{oblivious-update} and to the latter as \emph{adaptive-update} algorithms. We note that any deterministic algorithm for the \APSP problem is an adaptive-update algorithm by definition.

The classical data structure of Even and Shiloach \cite{EvenS,Dinitz,HenzingerKing}, that we refer to as \EST throughout the paper, implies an exact deterministic
algorithm for decremental unweighted $\APSP$ with $O(mn^{2})$ total update time, and the desired $O(|E(P)|)$ query time for $\shortestpathquery$, where $P$ is the returned path. 
Short of obtaining an exact algorithm for \APSP, the best possible approximation factor one may hope for is $(1+\eps)$, for any $\eps$. A long line of work 
\cite{BaswanaHS07,rodittyZ2,henzinger16,bernstein16} is dedicated to this direction. The fastest algorithms in this line of work, due to Henzinger, Krinninger, and Nanongkai \cite{henzinger16}, and due to Bernstein \cite{bernstein16} achieve total update time $\tilde{O}(mn/\epsilon)$; the former algorithm is deterministic but only works in unweighted undirected graphs, while the latter algorithm works in directed weighted graphs, with an overhead of $\log L$ in the total update time, but can only handle an oblivious adversary.
%
Unfortunately, known conditional lower bounds show that these algorithms are likely close to the best possible. Specifically, Dor, Halperin and Zwick \cite{DorHZ00}, and Roddity and Zwick \cite{RodittyZ11} showed that, assuming the Boolean Matrix Multiplication (BMM) conjecture\footnote{The conjecture states that
	there is no ``combinatorial'' algorithm for multiplying two Boolean matrices of size $n\times n$ in time $n^{3-\delta}$ for any constant $\delta>0$.}, for any $\alpha,\beta\geq 1$ with $2\alpha+\beta<4$, no combinatorial algorithm for \APSP achieves a multiplicative $\alpha$ and additive $\beta$ approximation, with total update time
$O(n^{3-\delta})$ and query time $O(n^{1-\delta})$, for any constant $0<\delta<1$. Henzinger et al. \cite{HenzingerKNS15} generalized this result to show the same lower bounds for all algorithms and not just combinatorial ones, assuming the Online Boolean Matrix-Vector Multiplication (OMV) conjecture\footnote{The conjecture assumes that there is no $n^{3-\delta}$-time algorithm, for any constant $0<\delta<1$, for the OMV problem, in which the input is a Bollean $(n\times n)$ matrix, with $n$ Boolean dimension-$n$ vectors $v_1,\ldots,v_n$ arriving online. The algorithm needs to output $Mv_i$ immediately after $v_i$ arrives.}. 
The work of Vassilevska Williams and Williams \cite{williams2018subcubic}, combined with the work of  Roddity and Zwick \cite{RodittyZ11}, implies that obtaining such an algorithm would lead to subcubic-time algorithms for a number of important static problems on graphs and matrices.

Due to these negative results, much work on the \APSP problem inevitably focused on higher approximation factors. 
In this regime, the oblivious-update setting is now reasonably well understood. A long line of work \cite{BernsteinR11,henzinger16,abraham2014fully,HenzingerKN14_focs} recently culminated with a randomized algorithm of Chechik~\cite{chechik}, that, for any integer $k\ge1$ and parameter $0<\eps<1$, obtains a
$((2+\epsilon)k-1)$-approximation, with total update time $O(mn^{1/k+o(1)}\log L)$, when the input graph is  weighted and undirected. This result is near-optimal, as all its parameters almost match
the best static algorithm of \cite{TZ}. We note that this result was recently slightly improved by \cite{lkacki2020near}, who obtain total update time $O(mn^{1/k}\log L)$, and improve query time for $\dquery$.

In contrast, progress in the adaptive-update setting has been much slower. 
Until recently, the fastest adaptive-update algorithm for {\bf unweighted} graphs, due to Henzinger, Krinninger, and Nanongkai
\cite{henzinger16}, only achieved an $\tilde{O}(mn/\epsilon)$ total update
time (for approximation factor $(1+\eps)$); the algorithm was recently significantly simplified by Gutenberg and Wulff-Nilsen  \cite{GutenbergW20}.
A recent work of \cite{APSP-old} provided a deterministic algorithm for {unweighted} undirected graphs, that, for any parameter $1\le k\le o(\log^{1/8}n)$, in response to query $\shortestpathquery(x,y)$, returns a path of length at most $3\cdot 2^{k}\cdot\dist_{G}(x,y)+2^{(O(k\log^{3/4}n)}$, with query time $O(|E(P)|\cdot n^{o(1)})$ for $\shortestpathquery$, and total update time $n^{2.5+2/k+o(1)}$. To the best of our knowledge, the fastest current adaptive-update algorithm for {\bf weighted} graphs has total update time $O(n^{3}\log L/\eps)$ and approximation factor $(1+\eps)$ (see \cite{reliable-hubs}).

Interestingly, even if we allow an $o(n)$-approximation factor,
no adaptive-update algorithms with better than $\Theta(n^{3})$ total
update time and better than $\Theta(n)$ query time for $\shortestpathquery$ and $\distquery$ are currently known for weighted undirected graphs, 
and no adaptive-update algorithms with better than $\Theta(n^{2.5})$ total update time  and better than $\Theta(n)$ query time are currently known for unweighted undirected graphs. Moreover, even for the seemingly simpler Single-Source Shortest Path problem (\SSSP), where all queries must be between a pre-specified source vertex $s$ and another arbitrary vertex of $G$, no algorithms achieving a better than $\Theta(n^2)$ total update time, and better than $\Theta(n)$ query time for $\shortestpathquery$ are known.
To summarize, ideally we would like an algorithm for decremental \APSP in weighted undirected graphs that achieves the following properties:

\begin{itemize}
	\item it can withstand an adaptive adversary (and is ideally deterministic);
	\item it has query time $\tilde O(|E(P)|)$ for $\shortestpathquery$, where $P$ is the returned path, and query time $\tilde O(1)$ for \distquery;
	\item it has near-linear in $m$ total update time; and
	\item it has a reasonably low approximation factor (ideally, polylogarithmic or constant).
\end{itemize}

Our main result comes close to achieving all these properties. Specifically, we provide a \emph{deterministic} algorithm for \APSP in weighted undirected graphs. For any precision parameter $\Omega(1/\log\log m)<\eps<1$, the algorithm achieves approximation factor 
$(\log m)^{2^{O(1/\eps)}}$, with total update time: $$O\left (m^{1+O(\eps)}\cdot (\log m)^{O(1/\eps^2)}\cdot \log L\right ).$$ The query time for processing $\distquery$ is $O(\log m \log\log L)$, and the query time for $\shortestpathquery$ is $O(|E(P)|)+O(\log m \log\log L)$, where $P$ is the returned path.
In particular, by letting $\eps$ be a small enough constant, we obtain a $O(\poly\log m)$-approximation with total update time time $O(m^{1+\delta})$, for any constant $0<\delta<1$, and by letting $1/\eps$ be a slowly growing function of $m$ (for example, $1/\eps=O(\log(\log^*m))$), we obtain an approximation factor $(\log m)^{O(\log^* m)}$, and total update time $O(m^{1+o(1)})$.

In fact we design an algorithm for a more general problem: \emph{dynamic Sparse Neighborhood Cover}. 
Given a graph $G$ with lengths on edges, a vertex $v\in V(G)$, and a distance parameter $D$, we denote by $B_G(v,D)$ the \emph{ball of radius $D$ around $v$}, that is, the set of all vertices $u$ with $\dist_G(v,u)\leq D$.
Suppose we are given a static graph $G$ with non-negative edge lengths, a distance parameter $D$ (that we call \emph{target distance threshold}), and a desired approximation factor $\alpha$. A $(D,\alpha\cdot D)$-\emph{neighborhood cover} for $G$ is a collection $\cset$ of vertex-induced subgraphs of $G$ (that we call \emph{clusters}), such that, for every vertex $v\in V(G)$, there is some cluster $C\in \cset$ with $B_G(v,D)\subseteq V(C)$. Additionally, we require that for every cluster $C\in \cset$, for every pair $x,y\in V(C)$ of its vertices, $\dist_G(x,y)\leq \alpha\cdot D$; if this property holds, then we say that $\cset$ is a \emph{weak} $(D,\alpha\cdot D)$-neighborhood cover of $G$. If, additionally, the diameter of every cluster $C\in \cset$ is bounded by $\alpha\cdot D$, then we say that  $\cset$ is a \emph{strong} $(D,\alpha\cdot D)$-neighborhood cover of $G$. Ideally, it is also desirable that the neighborhood cover is \emph{sparse}, that is, every edge (or every vertex) of $G$ only lies in a small number of clusters of $\cset$. For this static setting of the problem, the work of \cite{neighborhood-cover2,neighborhood-cover1} provides a deterministic algorithm that produces a strong $(D,O(D\log n))$-neighborhood cover of graph $G$, where every edge lies in at most $O(\log n)$ clusters, with running time $\Otilde(|E(G)|+|V(G)|)$.

In this paper we consider a \emph{decremental} version of the problem, in which the input graph $G$ undergoes an online sequence of edge deletions. We are required to maintain a weak $(D,\alpha\cdot D)$-neighborhood cover $\cset$ of the graph $G$, and we require that the clusters in $\cset$ may only be updated in a specific fashion: once an initial neighborhood cover $\cset$ of $G$  is computed, we are only allowed to delete edges or vertices from clusters that lie in $\cset$, or to add a new cluster $C$ to $\cset$, which must be a subgraph of an existing cluster of $\cset$. Additionally, we require that the algorithm supports  queries $\spquery(C,v,v')$: given two  vertices $v,v'\in V$, and a cluster $C\in \cset$ with $v,v'\in C$, return a path $P$ in the current graph $G$, of length at most $\alpha\cdot D$ connecting $v$ to $v'$ in $G$, in time $O(|E(P)|)$. The algorithm must also maintain, for every vertex $v\in V(G)$, a cluster $C=\coveringcluster(v)$ in $\cset$, with $B_G(v,D)\subseteq V(C)$. Lastly, we require that the neighborhood cover is \emph{sparse}, namely, for every vertex $v$ of $G$, the total number of clusters of $\cset$ to which $v$ may ever belong over the course of the algorithm is small. 
It is not hard to verify that an algorithm for the decremental Sparse Neighborhood Cover problem that we just defined immediately implies an algorithm for decremental \APSP with the same approximation factor, and the same total update time (to within $O(\log L)$-factor). We provide a deterministic algorithm for the dynamic Sparse Neighborhood Cover problem with approximation factor $\alpha=O\left ((\log m)^{2^{O(1/\eps)}}\right )$, and total update time $O\left (m^{1+O(\eps)}\cdot (\log m)^{O(1/\eps^2)}\right )$. Our algorithm ensures that, for every vertex $v\in V(G)$, the total number of clusters of $ \cset$ that $v$ ever belongs to is bounded by $m^{O(1/\log\log m)}$. 
We note that algorithms for static Sparse Neighborhood Covers have found many applications in the area of graph algorithms, and so we believe that our algorithm for dynamic Sparse Neighborhood Cover is interesting in its own right. 
A Sparse Neighborhood Cover for a dynamic graph $G$ naturally provides an emulator for $G$. If graph $G$ is decremental, then, while the edges may sometimes be inserted into the emulator (when a new cluster is added to the neighborhood cover $\cset$), due to the restrictions that we impose on the types of allowed updates to the clusters of $\cset$, such edge insertions are limited to very specific types, and so they are relatively easy to deal with. 
This allows us to compose emulators given by the neighborhood covers recursively. 
We note that the idea of using clustering of a dynamic graph $G$ in order to construct an emulator was used before (see e.g. the constructions of \cite{ForsterG19,chechikLowStretch,ForsterGH20} of dynamic low-stretch spanning trees). In several of these works, a family of clusters of a dynamic graph $G$ is constructed and maintained, and the restrictions on the allowed updates to the cluster family are similar to the ones that we impose; it is also observed in several of these works that with such restrictions one can naturally compose the resulting emulators recursively -- an approach that we follow here as well.
However, neither of these algorithms provide neighborhood covers, and in fact the clusters that are maintained for each distance scale are disjoint (something that cannot be achieved by neighborhood covers). Additionally, all the above-mentioned algorithms are randomized and assume an oblivious adversary. On the other hand, the algorithms of \cite{henzinger16,GutenbergW20} implicitly provide a deterministic algorithm for maintaining a neighborhood cover of a dynamic graph. However, these algorithms have a number of drawbacks: first, the running time for maintaining the neighborhood cover is too prohibitive (the total update time is $O(mn)$). Second, the neighborhood cover maintained is not necessarily sparse; in fact a vertex may lie in a very large number of resulting clusters. Lastly, clusters that join the neighborhood cover as the algorithm progresses may be arbitrary. The restriction that, for every cluster $C$ added to the neighborhood cover $\cset$, there must be a cluster $C'$ containing $C$ that already belongs to $\cset$, seems crucial in order to allow an easy recursive composition of emulators obtained from the neighborhood covers, and the requirement that the neighborhood cover is sparse is essential for bounding the sizes of the graphs that arise as the result of such recursive compositions.


We provide an application of our algorithm for the \APSP problem: a deterministic algorithm for \MMF and \MM in unit-capacity graphs. In both problems, the input is an undirected $n$-vertex $m$-edge graph $G$, and a collection $\mset=\set{(s_1,t_1),\ldots,(s_k,t_k)}$ of pairs of its vertices,  called \emph{demand pairs}. In the \MMF problem, the goal is to send maximum amount of flow between the demand pairs, such that the total amount of flow traversing each edge is at most $1$. We denote by $\optmcf$ the value of the optimal solution to this problem. 
In the \MM problem, given a graph $G$ and a collection $\mset$ of demand pairs as before, the goal is to select a minimum-cardinality subset $E'\subseteq E(G)$ of edges, such that, for all $1\leq i\leq k$, vertices $s_i$ and $t_i$ lie in different connected components of $G\setminus E'$. We use the standard primal-dual technique-based algorithm of \cite{GK98, Fleischer00}, that can equivalently be viewed as an application of the multiplicative weight update paradigm \cite{AroraHK12}, which essentially reduces the Multicommodity Flow problem to decremental \APSP; this reduction was first discovered by \cite{Madry10_stoc}.
Plugging in our algorithm for \APSP, we obtain a deterministic algorithm for \MMF, that, for any $0<\eps<1$,  achieves approximation factor $O\left ((\log m)^{2^{O(1/\eps)}}\right )$, and has running time $\tilde O\left (m^{1+O(\eps)}(\log m)^{2^{O(1/\eps)}}+k/\eps\right )$. The algorithm also provides an integral solution to the \MMF problem with congestion $O(\log n)$, and a fractional solution to the standard LP-relaxation for \MM. Using the standard ball-growing technique of \cite{LR,GVY}, we then obtain an algorithm for Minimum Multicut, with the same asymptotic running time, and similar approximation factor. 
The fastest previous approximation algorithms for \MMF, achieving $(1+\eps)$-approximation, have running times $O(k^{O(1)}\cdot m^{4/3}/\eps^{O(1)})$ \cite{KelnerMP12} and $\Otilde(mn/\eps^2)$ \cite{Madry10_stoc}; we are not aware of any algorithms that achieve a faster running time with possibly worse approximation factors, and we are not aware of any fast algorithms for the \MM problem. The best polynomial-time algorithm for \MM, due to \cite{LR,GVY}, achieves an $O(\log n)$-approximation.

Before we discuss our results and techniques in more detail, we provide some additional background on related work.

\subsection{Other Related Work}

\paragraph{\APSP on Expanders.}
A very interesting special case of the \APSP problem is \APSP on expanders. In this problem, we are given an initial graph $G$ that is a $\phi$-expander. Graph $G$ undergoes a sequence of edge deletions and isolated vertex deletions, that arrive in batches. We are guaranteed that, after each such batch of updates, the resulting graph $G$ remains an $\Omega(\phi)$-expander. As in the general \APSP problem, the goal is to support approximate  $\shortestpathquery$ in graph $G$. This problem is especially interesting for several reasons. First, it seems to be a relatively simple special case of the \APSP problem, and, if our goal is to obtain better algorithms for general \APSP, solving the problem in expander graphs is a natural starting step. Second, this problem arises in various algorithms for \emph{static} cut and flow problems, and seems to be intimately connected to efficient implementations of the Cut-Matching game of \cite{KRV}, which is a central tool in the design of fast algorithms for cut and flow problems (see, e.g. \cite{detbalanced}). Third, expander graphs are increasingly becoming a central tool for designing algorithms for various dynamic graph problems, and obtaining good algorithms for \APSP on expanders will likely become a powerful tool in the toolkit of algorithmic techniques in this area. A recent work of \cite{APSP-old}, building on \cite{detbalanced}, 
implies a deterministic algorithm for \APSP in expanders with approximation factor $O\left( \Delta^2(\log n)^{O(1/\eps^2)}/\phi\right )$, query time $O(|E(P)|)$ for \shortestpathquery, where $P$ is the returned path, and total update time $O\left (n^{1+O(\eps)}\Delta^7(\log n)^{O(1/\eps^2)}/\phi^5\right )$; here, $\Delta$ is the maximum vertex degree of $G$, $\phi$ is its expansion, and $\eps$ is a given precision parameter\footnote{The work of \cite{APSP-old} only explicitly provides such an algorithm for a specific setting of the parameter $\eps$, but it is easy to see that the same algorithm works for the whole range of values of $\eps$. We prove this  in \Cref{thm: expander APSP}  for completeness.}. In fact, algorithms in this paper also use this algorithm for \APSP in expanders as a subroutine.

\paragraph{Single-Source Shortest Paths.}
Single-Source Shortest Paths (\SSSP) is a special case of \APSP, where all queries must be between a fixed source vertex $s$ and arbitrary other vertices in the graph $G$. This problem has also been studied extensively. Algorithms for decremental \SSSP are a well-established tool in the design of fast algorithms for various variants of maximum $s$-$t$ flow and minimum $s$-$t$ cut problems (see, e.g. \cite{Madry10_stoc,fast-vertex-sparsest,APSP-old}). 

In the oblivious-adversary setting, our understanding of the problem is almost complete: a sequence of works \cite{HenzingerKN14_soda, BernsteinR11,HenzingerKN14_focs} has led to a $(1+\eps)$-approximation algorithm, that achieves total update time $O(m^{1+o(1)}\log L)$, which is close to the best possible. 
The query time of the algorithm is also near optimal: query time for $\distquery$ is $\poly\log n$, and query time for $\shortestpathquery$ is
$\tilde{O}(|E(P)|)$, where $P$ is the returned path. Conditional lower bounds
of \cite{DorHZ00,RodittyZ11} (that are based on the Boolean Matrix Multiplication conjecture) and of \cite{HenzingerKNS15} (based on the Online Matrix-vector Multiplication  conjecture), 
show that no algorithm that solves the problem exactly can simultaneously achieve an $O(n^{1-\delta})$ query time, and $O(n^{3-\delta})$ total update time, for any constant $\delta>0$, in graphs with $m = \Theta(n^2)$. 
The work of Vassilevska Williams and Williams \cite{williams2018subcubic}, combined with the work of  Roddity and Zwick \cite{RodittyZ11}, implies that obtaining an exact algorithm with similar total update time and query time would lead to subcubic-time algorithms for a number of important static problems on graphs and matrices.
This shows that the above oblivious-update algorithm is likely close to the best possible.

For the adaptive-update setting, the progress has been slower. It is well known
that the classical \EST data structure of Even and Shiloach \cite{EvenS,Dinitz,HenzingerKing},
combined with the standard weight rounding technique (e.g.~\cite{Zwick98,bernstein16}) gives a
$(1+\eps)$-approximate deterministic algorithm for $\SSSP$ with $\tilde O(mn\log L)$
total update time and near-optimal query time. 
Recently, Bernstein and Chechik  \cite{BernsteinChechik,Bernstein, BernsteinChechikSparse}, provided algorithms with total update time $\tilde{O}(n^{2}\log L)$ and $\tilde{O}(n^{5/4}\sqrt{m}) \leq \tilde O(mn^{3/4})$, while Gutenberg and Wulff-Nielsen \cite{GutenbergW20} showed an algorithm with $O(m^{1+o(1)}\sqrt{n})$ total update time. 
Unfortunately, all these algorithms only support distance queries, and they cannot handle shortest-path queries. 
This problem was recently addressed by \cite{fast-vertex-sparsest,APSP-old}, leading to a deterministic algorithm with total update time $O(n^{2+o(1)}\log L/\eps^2)$, that achieves a $(1+\eps)$-approximation factor, and has query time $O(|E(P)|\cdot n^{o(1)}\log\log L)$ for $\shortestpathquery$.
Lastly, the work of~\cite{new-spanner} on dynamic spanners also provides a randomized adaptive-update $(1+\eps)$-approximation
algorithm with total update time $O(m\sqrt{n})$, and query time $\tilde O(n)$. As mentioned already, they also provide an algorithm for dynamic spanners, leading to a $\poly\log n$-approximation algorithm with total update time $O(m\poly\log n)$ for \APSP, and hence for \SSSP, with query time $\tilde O(n)$.
To the best of our knowledge, our result for the \APSP problem is also the first adaptive-adversary algorithm for \SSSP with near-linear total update time, that achieves an approximation that is below $\Theta(n)$, and query time $\otilde(|E(P)|)$ for $\shortestpathquery$. We now discuss our results and techniques in more detail.

\subsection{Our Results and Techniques}

Our main result is a deterministic algorithm for decremental \APSP, that is summarized in the following theorem.
\begin{theorem}\label{thm: main}
	There is a deterministic algorithm, that, given an $m$-edge graph $G$ with length $\ell(e)\geq 1$ on its edges, that undergoes an online sequence of edge deletions, together with a parameter $c/\log\log m<\eps<1$ for some large enough constant $c$, supports approximate $\shortestpathquery$ queries and $\distquery$ queries with approximation factor $O\left ((\log m)^{2^{O(1/\eps)}}\right )$.  The query time for processing $\distquery$ is $O(\log m \log\log L)$, and the query time for processing $\shortestpathquery$ is $O(|E(P)|)+O(\log m \log\log L)$, where $P$ is the returned path, and $L$ is the ratio of longest to shortest edge length. The total update time of the algorithm is bounded by: $$O\left (m^{1+O(\eps)}\cdot (\log m)^{O(1/\eps^2)}\cdot \log L\right ).$$
\end{theorem}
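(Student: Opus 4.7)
The plan is to prove Theorem 1.1 by reducing decremental \APSP to a decremental Sparse Neighborhood Cover (SNC) algorithm with matching guarantees, and then constructing the SNC recursively. For the reduction, I would run an SNC instance for each of $O(\log L)$ geometrically-spaced distance thresholds $D_i = 2^i$, with approximation factor $\alpha = (\log m)^{2^{O(1/\eps)}}$. To answer $\distquery(x,y)$ I would binary-search over the $O(\log\log L)$-compressed scales for the smallest $i$ such that $y\in V(\coveringcluster_i(x))$, and return $\alpha\cdot D_i$; for $\shortestpathquery(x,y)$ I would invoke $\spquery(\coveringcluster_i(x),x,y)$, which returns a path of length $\le \alpha\cdot D_i\le \alpha\cdot\dist_G(x,y)$ in $O(|E(P)|)$ time. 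The $\log L$ factor in the total update time comes from running the SNC on each of the distance scales independently.

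The core technical work is the SNC algorithm, which I would design by recursion on $r = O(1/\eps)$ levels. At level $r$, the algorithm receives a decremental graph $G_r$ on $n_r$ vertices and target distance $D_r$, and maintains a sparse family of clusters such that, for every $v$, some cluster contains $B_{G_r}(v,D_r)$ and has weak diameter $\alpha_r\cdot D_r$. To build level-$r$ clusters, I would call the level-$(r-1)$ SNC with a much smaller threshold $D_{r-1}\approx D_r/(\log m)^{c}$, treat each returned cluster as a super-node in an emulator $H_r$ with roughly $n_r^{1-\eps}$ vertices, and then carry out ball-growing inside $H_r$ in the style of \cite{LR,GVY} to obtain the level-$r$ clusters. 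Crucially, the restricted-update discipline imposed on the SNC (only edge/vertex deletions from clusters, and new clusters that are subgraphs of existing ones) ensures that the emulator $H_r$ is itself decremental up to a bounded number of controlled cluster-split insertions, so that the recursive call on $H_r$ remains well-defined. With stretch blowup $(\log m)^{O(1)}$ per level and size shrinkage $n_r^{1-\eps}$, $r=O(1/\eps)$ levels yield stretch $(\log m)^{2^{O(1/\eps)}}$ and size bound $m^{O(\eps)}$, matching the target. For the base level I would handle each connected piece using expander decomposition: apply a deterministic expander decomposition to the current graph, invoke the expander \APSP algorithm of \cite{APSP-old} (stated in \Cref{thm: expander APSP}) on each expander piece with approximation $O(\Delta^2(\log n)^{O(1/\eps^2)}/\phi)$, and pay the remaining approximation factor for the $\phi\cdot m$ inter-cluster edges removed by the decomposition.

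The hardest part will be the maintenance procedure that repairs the neighborhood cover under an edge deletion: when the ball $B_{G_r}(v,D_r)$ of some vertex $v$ is no longer contained in $\coveringcluster(v)$, the algorithm must either carve out a sub-cluster of an existing cluster that covers $B_{G_r}(v,D_r)$, or add a fresh sub-cluster, while keeping the per-vertex cluster recurrence at $m^{O(1/\log\log m)}$ and the total recomputation cost within $O(m^{1+O(\eps)}(\log m)^{O(1/\eps^2)})$. This requires an amortization argument that charges the cost of each cluster split to a geometric decrease in either a potential tied to edge weights of the affected cluster or to the size of an expander piece; the interaction with the expander-decomposition subroutine — which itself must be deterministic and re-invoked when too many inter-cluster edges have been removed — is the most delicate point, and is where the restriction to the adaptive/deterministic setting forces us away from standard random ball-growing techniques and toward the cut-matching-style expander APSP machinery of \cite{APSP-old,detbalanced}.
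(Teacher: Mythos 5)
Your high-level plan --- reduce \APSP\ to a decremental Sparse Neighborhood Cover via $O(\log L)$ geometric distance scales with binary search, and then build the cover by an $O(1/\eps)$-depth recursion using emulators whose supernodes come from lower levels --- matches the skeleton of the paper. But the recursion you sketch conflates two distinct recursions that the paper keeps separate, and the resulting picture has gaps that are not cosmetic.

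First, the paper's \emph{outer} recursion (\Cref{lem: inductive dynamic NC algorithm}) shrinks the \emph{distance threshold} from $D$ to $D/D'\approx m^{\eps}$ by contracting the lower-level cover into supernodes and rescaling edge lengths; the contracted graph still has $\Theta(m)$ vertices. The size reduction $n\to n^{1-\eps}$ you describe belongs instead to the \emph{inner} recursion inside the \maintaincluster\ algorithm (\Cref{thm: induction for maintain cluster algorithm}), and it comes not from the lower-level cover but from a balanced \emph{pseudocut}: if the computed $(\hat D,\rho)$-pseudocut $\hat E$ is smaller than $N^0(C)/\hat W^{\eps}$, the contracted graph over the endpoints of $\hat E$ has few regular vertices and one recurses. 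Your plan to call the level-$(r-1)$ SNC at threshold $D_{r-1}\approx D_r/(\log m)^{c}$ would require $\Theta(\log m/\log\log m)$ levels, not $O(1/\eps)$, to span the full distance range; the paper uses ratio $m^{\eps}$.

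Second, your approximation-factor accounting is internally inconsistent: $(\log m)^{O(1)}$ stretch per level over $O(1/\eps)$ levels gives $(\log m)^{O(1/\eps)}$, not $(\log m)^{2^{O(1/\eps)}}$. The doubly-exponential exponent arises in the paper because the approximation \emph{squares} at each inner-recursion level (a path in the contracted graph of length $\alpha_{z-1}D$ has each supernode-hop re-expanded into a path of length $\alpha_{z-1}\cdot 2^{i_j}$, giving $\alpha_z\approx\alpha_{z-1}^2$). You need that compounding to land on the paper's bound.

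Third, and most importantly, you say the hardest part is maintaining the per-vertex cluster bound $m^{O(1/\log\log m)}$ but then defer it to ``an amortization argument that charges the cost of each cluster split to a geometric decrease in a potential.'' That is exactly the missing idea. The standard LR/GVY ball-growing controls only the \emph{total} overlap between clusters, which gives an \emph{average} per-vertex bound; a neighborhood cover (unlike a partition) requires a \emph{worst-case} per-vertex bound, and a naive geometric potential does not give it. The paper's \proccut\ needs an additional eligibility condition (\ref{cut condition 3: levels}) indexed by vertex classes $S_j$ of vertices already in $2^j$ clusters, together with the budget analysis of \Cref{lem: change in budget after proccut} and \Cref{thm: bound number of copies}, precisely to push the average bound into a per-vertex one. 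Without this, the recursion breaks: the dynamic degree bound of the contracted graph could blow up, and the running time of the recursive calls is no longer controlled.

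Finally, ``apply a deterministic expander decomposition and invoke expander \APSP\ on each piece'' replaces the paper's pseudocut-plus-embedding machinery with a different primitive. The paper explicitly avoids maintaining a global expander decomposition of a decremental graph (which has its own nontrivial deterministic maintenance problem); instead it computes a static pseudocut together with an expander $X$ over a subset of the pseudocut edges and an embedding of $X$ into $C_{|\hat E^*}$ (\Cref{thm: compute pseudocut and expander}), then relies only on the static expander-pruning guarantee of \Cref{thm: expander pruning}, which is a much weaker and easier primitive than a fully dynamic decomposition. Your alternative is plausible in spirit, but you would owe an argument for how the decomposition is deterministically maintained under deletions without circularity; the proposal does not supply one.
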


Our proof exploits the decremental Sparse Neighborhood Cover problem, for which we provide the following algorithm:

\begin{theorem}\label{thm: NC}
	There is a deterministic algorithm, that, given an $m$-edge graph $G$ with integral lengths $\ell(e)\geq 1$ on its edges, that undergoes an online sequence of edge deletions, together with parameters $c/\log\log m<\eps<1$ for some large enough constant $c$, and $D\geq 1$, maintains a weak $(D,\alpha\cdot D)$-neighborhood cover $\cset$ of $G$, for $\alpha=O\left ((\log m)^{2^{O(1/\eps)}}\right )$, and  supports  queries $\spquery(C,v,v')$: given a cluster $C\in \cset$, and two  vertices $v,v'\in V(C)$, return a path $P$ connecting $v$ to $v'$ in $G$, of length at most $\alpha\cdot D$, in time $O(|E(P)|)$.
	Additionally, for every vertex $v\in V(G)$, the algorithm maintains a cluster $C=\coveringcluster(v)$ in $\cset$, with $B_G(v,D)\subseteq V(C)$.
	The algorithm starts with $\cset=\set{G}$, and the only allowed changes to the clusters in $\cset$ are: (i) delete an edge from a cluster $C\in \cset$; (ii) delete an isolated vertex from a cluster $C\in \cset$; and (iii) add a new cluster $C'$ to $\cset$, where $C'\subseteq C$ for some cluster $C\in \cset$. The algorithm has total update time $O\left (m^{1+O(\eps)}\cdot (\log m)^{O(1/\eps^2)}\right )$ and ensures that, for every vertex $v\in V(G)$, the total number of clusters $C\in \cset$ to which $v$ ever belongs over the course of the algorithm is at most  $m^{O(1/\log\log m)}$. 
\end{theorem}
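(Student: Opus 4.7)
The plan is to build the algorithm recursively on a depth parameter $j \in \{0,1,\dots,\Theta(1/\eps)\}$, where the level-$j$ algorithm achieves approximation factor $\alpha_j$ satisfying $\alpha_{j+1} \leq \alpha_j^2 \cdot \poly\log m$, so that after $\Theta(1/\eps)$ levels we reach $\alpha = (\log m)^{2^{O(1/\eps)}}$. The base case would use the deterministic expander-based \APSP algorithm cited in the introduction (\Cref{thm: expander APSP}): run a deterministic expander decomposition of $G$, maintain approximate distances inside each expander piece via that subroutine, and treat the (sparse) inter-cluster edges explicitly. For an expander piece, balls of radius $D$ can be covered by clusters of diameter $\poly\log m \cdot D$ since the diameter of an expander is already $\poly\log m$ at the relevant scale; covering clusters for individual vertices then follow by taking small expander-\EST-like balls.

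For the recursive step at level $j+1$ with input graph $G$ and distance threshold $D$, I would first invoke the level-$j$ algorithm with a smaller threshold $D' \approx D/(\alpha_j \polylog m)$, obtaining a weak $(D',\alpha_j D')$-neighborhood cover $\cset'$ of $G$. I would then construct an \emph{emulator} $H$ by introducing, for each cluster $C\in\cset'$, a super-node $u_C$ connected to every $v\in V(C)$ by an edge of length $\alpha_j D'$, so that distances in $H$ at scale $D/D' \approx \alpha_j \polylog m$ faithfully approximate distances in $G$ at scale $D$. A second level-$j$ invocation on $H$ at the effective threshold $D/D'$ yields a neighborhood cover $\cset_H$; I combine it with $\cset'$ to produce the required cover $\cset$ of $G$ at scale $D$, by taking, for each cluster $C_H \in \cset_H$, the union of the level-$j$ clusters corresponding to the super-nodes contained in $C_H$. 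The $\spquery$ primitive is implemented by recursion: an edge of $H$ on the path returned by $\spquery$ on $\cset_H$ is unrolled into a path in $G$ by a level-$j$ call to $\spquery$ on the corresponding cluster of $\cset'$, giving total work $O(|E(P)|)$ for the final path.

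The main obstacle I anticipate is making the emulator $H$ evolve only under decremental-like updates compatible with the restricted update model of \Cref{thm: NC}. Edge deletions and vertex deletions in $G$ directly shrink clusters of $\cset'$ and so delete edges of $H$; the problematic case is when the level-$j$ algorithm inserts a new cluster $C' \subseteq C$ into $\cset'$, which would seemingly \emph{insert} a super-node into $H$. This is exactly why the theorem statement restricts new clusters to be subgraphs of existing clusters: in $H$, the corresponding operation is to take an existing super-node $u_C$ and split it so that $u_{C'}$ inherits only the edges incident to $V(C')$, a structured update that the level-$j$ algorithm on $H$ can accommodate (indeed this is its own use of the restricted model). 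Bounding the overhead requires showing that over the course of the algorithm each vertex of $G$ belongs to at most $m^{O(1/\log\log m)}$ clusters, and the emulator has at most $m^{1-\Omega(\eps)}$ edges at level $j+1$ relative to the graph at level $j$, so that the running-time recurrence $T_{j+1}(m) \leq m^{1+O(\eps)} \polylog m + T_j(m)\cdot \polylog m$ solves, after $\Theta(1/\eps)$ levels, to the claimed $m^{1+O(\eps)} (\log m)^{O(1/\eps^2)}$ bound. The sparsity bound on per-vertex cluster membership follows from an amortization argument against the level-$j$ guarantee applied twice (once for $\cset'$, once for $\cset_H$), giving the composed bound $m^{O(1/\log\log m)}$.
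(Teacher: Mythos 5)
Your high-level plan correctly identifies the two ingredients the paper uses: recursive composition of neighborhood covers across distance scales via an emulator that adds supernodes for clusters, and the observation that the restricted update model (only deletions and subgraph insertions) translates into supernode-splitting in the emulator, which keeps the recursion tractable. These are genuine insights and match \Cref{lem: inductive dynamic NC algorithm} in spirit.

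However, there is a serious gap at the base of your recursion, and a secondary structural mismatch.

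\textbf{The base case is circular.} You propose to bottom out by ``running a deterministic expander decomposition of $G$, maintaining approximate distances inside each expander piece via \Cref{thm: expander APSP}, and treating the (sparse) inter-cluster edges explicitly.'' This is not an algorithm. Inter-cluster edges in an expander decomposition of a general graph are not sparse; there can be $\Omega(m/\poly\log m)$ of them. More fundamentally, dynamically maintaining an expander decomposition against an adaptive adversary is precisely the problem whose absence blocks easy solutions here, and \Cref{thm: expander APSP} does not do it: it requires as input a \emph{single} graph that is, and remains, an expander under the update sequence, and it pushes pruned-out vertices into a leftover set $S$ that some \emph{other} mechanism must handle. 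The paper does not get its base algorithm this way; \Cref{thm: main dynamic NC algorithm} is built over Sections 4--7 via Procedure $\proccut$ (a carefully-controlled ball-growing that caps per-vertex cluster membership), the $\maintaincluster$ abstraction with its flag-raising interface, balanced pseudocuts with an embedded expander witness (\Cref{thm: compute pseudocut and expander}), expander pruning, and a second, inner recursion on a contracted graph. That machinery is the technical content of the result; your proposal does not replace it.

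\textbf{The outer recursion does not contract distance scales fast enough.} In \Cref{lem: inductive dynamic NC algorithm}, the paper parameterizes the recursion by $D\le 6W^{\eps i}$ and, at each level, reduces the effective distance threshold by a factor of $m^{\eps}$: the first call is a level-$(i-1)$ call at threshold $\Theta(D')\approx m^{\eps(i-1)}$, and the second call is to the \emph{base} algorithm on a rescaled emulator with threshold $\hat D\approx D/D'\approx m^{\eps}$, which is what makes the cubic dependence on $D$ in \Cref{thm: main dynamic NC algorithm} affordable. In your version, $D'\approx D/(\alpha_j\poly\log m)$, so the contraction per level is only $\alpha_j\poly\log m$, which for the early levels is merely polylogarithmic; to drive $D$ from $\Theta(m)$ down to $O(1)$ would take $\Theta(\log m/\log\log m)$ levels rather than $O(1/\eps)$, and the final approximation would not be $(\log m)^{2^{O(1/\eps)}}$. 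Relatedly, you also treat both calls in the inductive step as level-$j$ recursive calls, whereas the paper calls the fixed base algorithm on the emulator, and your claim that ``the emulator has at most $m^{1-\Omega(\eps)}$ edges'' is not correct for this emulator: its edge count is within an $m^{O(1/\log\log m)}$ factor of $m$, not smaller by a polynomial factor (you may be thinking of the \emph{contracted graph} $\hat H$ inside the $\maintaincluster$ recursion of Section 7, which is indeed smaller since it has at most $|\hat E|\le N^0(C)/\hat W^{\eps}$ regular vertices, but that is a different recursion at a different point in the proof).
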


We remark that the above theorem requires that we initially set $\cset=\set{G}$. Clearly, this initial cluster set $\cset$ may not be a valid neighborhood cover of $G$. Therefore, before the algorithm processes any updates of graph $G$, it may update this initial cluster set $\cset$, via changes of the types that are allowed by the theorem, until it becomes a valid neighborhood cover. We also note that we allow graphs to have parallel edges, so $m$ may be much larger than $|V(G)|$.

Lastly, we provide an efficient algorithm for the \MM and \MMF  
problems in unit-capacity graphs.

\begin{theorem}\label{thm: MMF and MM}
	There is a deterministic algorithm, that, given an $n$-vertex $m$-edge graph $G$, a collection $\mset=\set{(s_1,t_1),\ldots,(s_k,t_k)}$ of pairs of its vertices, called \emph{demand pairs}, and a precision parameter $c/\log\log m<\eps<1$ for some large enough constant $c$, computes, in time $\tilde O\left (m^{1+O(\eps)}(\log m)^{2^{O(1/\eps)}}+k/\eps\right )$, a solution to the \MMF instance $(G,\mset)$, of value at least \newline  
	$\Omega\left (\optmcf/(\log m)^{2^{O(1/\eps)}}\right )$, and a solution to the \MM instance $(G,\mset)$, of cost at most  $O\left ((\log m)^{2^{O(1/\eps)}}\cdot \optmm\right )$, where $\optmcf$ and $\optmm$ are optimal solution values to instance $(G,\mset)$ of \MMF and \MM, respectively.
\end{theorem}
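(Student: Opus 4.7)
The plan is to combine the multiplicative-weights-update framework of Garg-Konemann and Fleischer (as adapted to decremental shortest-path oracles by Madry) with Theorem \ref{thm: main} to solve \MMF, and then apply the standard Leighton-Rao / Garg-Vazirani-Yannakakis ball-growing rounding to the resulting dual fractional multicut to solve \MM. Because every ingredient is deterministic, the entire algorithm is deterministic.

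Concretely, initialize edge lengths $\ell(e) \leftarrow \delta$ with $\delta = \Theta((1+\eps)/((1+\eps)m)^{1/\eps})$. In each iteration: find a demand pair $(s_i,t_i)$ whose current approximate $s_i$-$t_i$ distance is minimum, call $\shortestpathquery(s_i,t_i)$ to obtain a path $P$, route one unit of flow along $P$, and multiplicatively update $\ell(e) \leftarrow (1+\eps)\ell(e)$ for each $e \in P$. Halt once $\sum_e \ell(e) \geq 1$, then scale the accumulated flow down by $\log_{1+\eps}((1+\eps)/\delta) = O(\log m/\eps^2)$ to guarantee edge congestion at most $1$. To realize the required queries inside a decremental data structure, round edge lengths to powers of $(1+\eps)$, which costs only a further $(1+\eps)$ factor, and maintain $O(\log(mL)/\eps)$ instances of the algorithm of Theorem \ref{thm: main}, one per length scale; when an edge's rounded length crosses into scale $j+1$, delete it from the instance at scale $j$ and insert it into the (still-purely-decremental) instance at scale $j+1$. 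For each demand pair $i$ we maintain an approximate shortest-path distance estimate via $\distquery$ and keep the $k$ estimates in a min-heap. The key observation is that since $\ell$ is monotonically nondecreasing, the estimates for pairs that are \emph{not} routed in a given iteration remain valid lower bounds, so only the routed pair's heap entry needs refreshing.

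The Garg-Konemann analysis bounds the total number of iterations, and the total number of edge-length increases, by $\tilde O(m/\eps^2)$. The aggregate cost of the decremental APSP instances across all length scales is therefore $\tilde O(m^{1+O(\eps)} (\log m)^{O(1/\eps^2)})$ by Theorem \ref{thm: main}, and the per-iteration overhead for heap/distance/path operations is $O(\log m \log\log L + \log k) + \tilde O(|E(P)|)$; the sum of $|E(P)|$ over iterations is itself $\tilde O(m/\eps^2)$. A one-time setup of the heap over the $k$ pairs costs $O(k/\eps)$. These add up to the stated $\tilde O(m^{1+O(\eps)} (\log m)^{2^{O(1/\eps)}} + k/\eps)$ total running time. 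For the approximation, the standard Garg-Konemann argument shows that replacing exact shortest paths by $\alpha$-approximate ones degrades the MMF guarantee by a factor $\alpha$; plugging in $\alpha = (\log m)^{2^{O(1/\eps)}}$ from Theorem \ref{thm: main} gives the claimed bound for \MMF. For \MM, the terminal length function $\ell(\cdot)$, after normalization so that every $s_i$-$t_i$ path has length at least $1$, is a feasible fractional multicut whose total weight is at most $O(\alpha) \cdot \optmm$ via LP duality ($\text{LP-MM} = \text{LP-MMF} \leq \optmm$). Applying the GVY ball-growing rounding — iteratively pick an unseparated demand pair, grow a ball around its source in the $\ell$-metric until the boundary weight is small relative to the interior, cut the boundary, and remove all separated pairs — yields an integral multicut of cost $O(\log k) \cdot O(\alpha) \cdot \optmm = O((\log m)^{2^{O(1/\eps)}}) \cdot \optmm$, and can be implemented in $\tilde O(m)$ time via Dijkstra on a static graph.

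The main technical obstacle is controlling the per-iteration cost. A naive implementation that rescans all $k$ demand-pair distances after each flow augmentation would incur $\Omega(km)$ total cost and would violate the near-linear (in $m$) running-time bound; the monotone-length / heap-lower-bound invariant highlighted above is precisely what keeps the overhead sublinear in $k$ per iteration, after the $O(k/\eps)$ one-time setup. A secondary subtlety is that the sequence of edge deletions (and of rounded-length crossings) fed into the decremental APSP is determined by the paths it returns, so the subroutine must tolerate an adaptive adversary — a property that Theorem \ref{thm: main} delivers through its determinism.
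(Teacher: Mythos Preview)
Your overall plan---Garg--K\"onemann/Fleischer MWU driven by the decremental APSP oracle, then GVY ball-growing for \MM---matches the paper's approach, and your observation that the oracle must tolerate an adaptive adversary is exactly right. But there is a genuine gap in the running-time argument, one the paper spends most of Section~\ref{sec: application} addressing: the oracle of Theorem~\ref{thm: main} may return a \emph{non-simple} path $P$, and its query time is $O(|E(P)|)$ counting edges with multiplicity. If you route flow and update lengths only on the simplification $P'$, you charge just $|E(P')|$ edge-updates per iteration while paying $O(|E(P)|)$ for the query, and nothing in your setup bounds the ratio $|E(P)|/|E(P')|$. Hence your claim that $\sum_t |E(P_t)|=\tilde O(m/\eps^2)$ is unjustified---the standard MWU charging controls only $\sum_t |E(P'_t)|$. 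The paper's fix is a nontrivial $O(1/\eps)$-phase scheme: in phase $j{+}1$ every edge carries an additive length $\Theta\bigl(1/(\alpha^*\alpha_j L_{j+1})\bigr)$, which caps $|E(P)|\le 2\alpha^* L_{j+1}$ for any returned path of length at most $1/\alpha_j$, while a cross-phase invariant guarantees the corresponding simple path has at least $L_j=m^{j\eps}$ edges; thus $|E(P)|/|E(P')|\le 2\alpha^* m^{\eps}$, and the query cost can be amortized against the length-doublings on $P'$.

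Two smaller issues. Your ``delete from scale $j$, insert into scale $j{+}1$'' is incompatible with a purely decremental data structure; the paper instead places $O(\log m)$ length-scaled copies of each edge inside a \emph{single} decremental graph per phase and only ever deletes copies. And your heap claim (``only the routed pair needs refreshing'') is incorrect: a stale lower bound at the heap top does not certify that this pair is the current approximate minimum, so without further refreshes you cannot bound the length of the path you route against the true shortest demand-pair distance. The paper avoids this via Fleischer's explored/unexplored sweep over the demand pairs, which is also what produces the additive $O(k/\eps)$ term.
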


The proof of \Cref{thm: MMF and MM} follows immediately from the proof of \Cref{thm: NC}  via standard techniques; see \Cref{sec: application} for more details. 
It is also immediate to obtain the proof of \Cref{thm: main} from \Cref{thm: NC} using the standard approach of considering each distance scale separately; see \Cref{subsubsec: first thm proof} for a formal proof. We now focus on describing our algorithm for the \NC problem from \Cref{thm: NC}, introducing our new ideas and techniques one by one.

\paragraph{Recursive Dynamic Neighborhood Cover.}
As mentioned already, one advantage of considering the \NC problem is that its solution naturally provides an emulator for the input graph $G$, which in turn can be used in order to compose algorithms for \NC recursively.
In fact, we initially prove a weaker version of \Cref{thm: NC}, by providing an algorithm (that we denote here for brevity by $\alg'$), that achieves a similar approximation factor, but a slower running time of: $$O\left (m^{1+O(\eps)}\cdot \poly(D)\cdot (\log m)^{O(1/\eps^2)}\right )$$ (on the positive side, the algorithm maintains a {\bf strong} neighborhood cover of the graph $G$). Recall that we call the parameter $D$ the \emph{target distance threshold} for the \NC problem instance.
We use the recursive composability of \NC in order to obtain the desired running time, as follows\footnote{A similar approach of recursive composition of emulators was used in numerous algorithms for \APSP; see, e.g. \cite{chechik}.}. Using standard rescaling techniques, we can assume that $1\leq D\leq \Theta(m)$. For all $1\leq i\leq O(1/\eps)$, let $D_i=m^{\eps i}$. We obtain an algorithm for the Sparse \NC problem for each target distance threshold $D_i$ recursively. For the base of the recursion, when $i=1$, we simply run Algorithm $\alg'$, to obtain the desired running time of $O\left (m^{1+O(\eps)}\cdot (\log m)^{O(1/\eps^2)}\right )$. Assume now that we have obtained an algorithm for target distance threshold $D_i$, that maintains a neighborhood cover $\cset_i$ of graph $G$. In order to obtain an algorithm for target distance threshold $D_{i+1}$, we construct a new graph $H$, by starting with $H=G$, deleting all edges of length greater than $D_{i+1}$, and rounding the lengths of all remaining edges up to the next integral multiple of $D_i$. Additionally, for every cluster $C\in \cset_i$, we add a vertex $u(C)$ (called a supernode), that connects, with an edge of length $D_i$, to every vertex $v\in V(C)\cap V(G)$. It is not hard to show that this new graph $H$ approximately preserves all distances between the vertices of $G$, that are in the range $(D_i,D_{i+1}]$. Since the length of every edge in $H$ is an integral multiple of $D_i$, scaling all edge lengths down by factor $D_i$ does not change the problem. It is then sufficient to solve the \NC problem in the resulting dynamic graph $H$, with target distance threshold $D_{i+1}/D_i=m^{\eps}$, which can again be done via Algorithm $\alg'$, with total update time $O\left (m^{1+O(\eps)}\cdot (\log m)^{O(1/\eps^2)}\right )$. The final algorithm for \Cref{thm: NC} is then obtained by recursively composing Algorithm $\alg'$ with itself $O(1/\eps)$ times. 

In order to be able to compose algorithms for the \NC problem using the above approach, we define the problem slightly differently, and we call the resulting variation of the problem Recursive Dynamic Neighborhood Cover, or \recdynnc. We assume that the input is a bipartite graph $H=(V,U,E)$, with non-negative edge lengths. Intuitively, the vertices in set $V$, that we refer to as \emph{regular vertices}, correspond to vertices of the original graph $G$, while the vertices in set $U$, that we call \emph{supernodes}, represent some neighborhood cover $\cset$ of the graph $G$ that is possibly maintained recursively: $U=\set{u(C)\mid C\in \cset}$. (In order to obtain the initial graph $H$, we subdivide every edge of $G$ by a new regular vertex; we view the original vertices of $G$ as supernodes; and for every vertex $v\in V(G)$, we add a new regular vertex $v'$ that connects to $v$ with a length-$1$ edge.) In addition to supporting standard edge-deletion and isolated vertex-deletion updates, we require that the algorithm for the \recdynnc problem supports a new update operation, that we call \emph{supernode splitting}\footnote{We note that a similar approach to handling cluster-splitting in an emulator that is based on clustering was used before in numerous works, including, e.g., \cite{BernsteinChechik,Bernstein,fast-vertex-sparsest,chechikLowStretch}.}. In this operation, we are given a supernode $u\in V(H)$, and a subset $E'$ of edges that are incident to $u$ in graph $H$. The update creates a new supernode $u'$ in graph $H$, and, for every edge $e=(u,v)\in E'$, adds a new edge $(u',v)$ of length $\ell(e)$ to $H$. 
The purpose of this update operation is to mimic the addition of a new cluster $C$ to $\cset$, where $C\subseteq C'$ for some existing cluster $C'\in \cset$. The supernode-splitting operation is applied to supernode $u(C')$, with edge set $E'$ containing all edges $(v,u(C'))$ with $v\in V(C)$, and the operation creates a new supernode $u(C)$.
Supernode-splitting operation, however, may insert some new edges into the graph $H$.
This creates several difficulties, especially in bounding total update times in terms of number of edges. We get around this problem as follows. Recall that the supernodes in set $U$ generally correspond to clusters in some dynamic neighborhood cover $\cset$, that we maintain recursively. We ensure that this neighborhood cover is sparse, that is, every regular vertex may only belong to a small number of such clusters (typically, at most $m^{1/O(\log\log m)}$). This in turn ensures that, in graph $H$, for every regular vertex $v\in V(H)$, the total number of edges incident to $v$ that ever belong to $H$ is also bounded by $m^{1/O(\log\log m)}$. We refer to this bound as the \emph{dynamic degree bound}, and denote it by $\mu$. Therefore, if we denote by $N(H)$ the number of regular vertices that belong to the initial graph $H$, then the total number of edges that ever belong to $H$ is bounded by $N(H)\cdot \mu$. This allows us to use the number of regular vertices of $H$ as a proxy to bounding the number of edges in $H$.

To summarize, the definition of the \recdynnc problem is almost identical to that of the Sparse \NC problem. The main difference is that the input graph now has a specific structure (that is, it is a bipartite graph), and, in addition to edge-deletions, we also need to support isolated vertex deletions and supernode-splitting updates. Additional minor difference is that we only require that the covering properties of the neighborhood cover hold for the regular vertices of $H$ (and not necessarily the supernodes), and we only bound the number of clusters ever containing a vertex for regular vertices (and not supernodes). These are minor technical details that are immaterial to this high-level overview. 


\paragraph{Procedure $\proccut$ and reduction to the \maintaincluster problem.}
One of the main building blocks of our algorithm is Procedure \proccut. Suppose our goal is to design an algorithm for the \recdynnc problem on input graph $H$, with target distance threshold $D$, and let $\cset$ be the neighborhood cover that we maintain. 
We denote by $N$ the number of regular vertices in the initial graph $H$, and, for each subgraph $H'\subseteq H$, we denote by $N(H')$ the number of regular vertices in $H'$.
Given a cluster $C\in \cset$, and two vertices $x,y\in C$, such that $\dist_C(x,y)> \Omega(D\poly\log N)$, procedure $\proccut$ produces two vertex-induced subgraphs $C',C''\subseteq C$, such that $N(C')\leq N(C'')$, $\diam(C')\leq O(D\poly\log N)$, and each of $C',C''$ contains exactly one of the two vertices $x,y$. Moreover, it guarantees that, for every vertex $v\in V(C)$, either $B_C(v,D)\subseteq C'$, or $B_C(v,D)\subseteq C''$ holds. We then add $C'$ to $\cset$, and update $C$ by deleting edges and vertices from it, until $C=C''$ holds. This procedure is exploited by our algorithm in two ways: first, we compute an initial strong $(D,D\cdot \poly\log N)$-neighborhood cover $\cset$ of the input graph $H$, before it undergoes any updates, by repeatedly invoking this procedure. Later, as the algorithm progresses, and update operations are applied to $H$, the diameters of some clusters $C\in \cset$ may grow. Whenever we identify such a situation, we use Procedure $\proccut$ in order to cut the cluster $C$ into smaller subclusters. We note that, if $C'$ and $C''$ are the outcome of applying Procedure \proccut to cluster $C$, then we cannot guarantee that the two clusters are disjoint, so they may share edges and vertices. Therefore, a vertex of $H$ may belong to a number of clusters in $\cset$. The main challenge in designing Procedure \proccut is to ensure that every vertex of $H$ only belongs to a small number of clusters (at most $N^{O(1/\log\log N)}$) over the course of the entire algorithm. The procedure uses a carefully designed modification of the ball-growing technique  of \cite{LR} that allows us to ensure this property. 
We note that several previous works used the ball-growing technique in order to compute and maintain a clustering of a graph. For example, \cite{chechikLowStretch} employ this technique in order to maintain clustering at every distance scale. However, the clusters that they maintain at each distance scale are disjoint, and so they can use the standard ball-growing procedure of \cite{LR} in order to ensure that relatively few edges have endpoints in different clusters. In contrast, in order to maintain a neighborhood cover, we need to allow clusters at each distance scale to overlap. While one can easily adapt the standard ball-growing procedure of \cite{LR} to still ensure that the total number of edges in the resulting clusters is sufficiently small, this would only ensure that every vertex belongs to relatively few clusters {\bf on average}. It is the strict requirement that {\bf every} vertex may only ever belong to few clusters in the neighborhood cover that makes the design of Procedure \proccut challenging. We are not aware of any other work that adapted the ball-growing technique to this type of requirement, except for the algorithm of \cite{neighborhood-cover2,neighborhood-cover1}, who did so in the static setting. It is unclear though how to adapt their techniques to the dynamic setting.

We also use Procedure \proccut to reduce the \recdynnc problem to a new problem, that we call \maintaincluster. In this problem, we are given some cluster $C$ that was just added to the neighborhood cover $\cset$. The goal is to support  queries $\spquery(C,v,v')$: given a pair $v,v'\in V(C)$ of vertices of $C$, return a path $P$ connecting $v$ to $v'$ in $C$, of length at most $\alpha\cdot D$, in time $O(|E(P)|)$. However, the algorithm may, at any time, raise a flag $F_C$, to indicate that the diameter of $C$ has become too large. When flag $F_C$ is raised, the algorithm must provide two vertices $x,y\in C$, with $\dist_C{(x,y)}>\Omega(D\poly\log N)$. The algorithm then obtains a sequence of update operations (that we call a \emph{flag-lowering sequence}), at the end of which either $x$ or $y$ are deleted from $C$, and flag $F_C$ is lowered. Queries $\spquery$ may only be asked when the flag $F_C$ is down. Once flag $F_C$ is lowered, the algorithm may raise it again immediately, as long as it supplies a new pair $x',y'\in V(C)$ of vertices with $\dist_C{(x',y')}>\Omega(D\poly\log N)$. Intuitively, once flag $F_C$ is raised, we will simply run Procedure \proccut on cluster $C$, with the vertices $x,y$ supplied by the algorithm, and obtain two new clusters $C'$, $C''$. Assume that $C'$ contains fewer regular vertices than $C''$. We then add $C'$ to $\cset$, and delete edges and vertices from $C$ until $C=C''$ holds, thus creating a flag-lowering update sequence for it. In order to obtain an algorithm for the \recdynnc problem, it is then enough to obtain an algorithm for the \maintaincluster problem. We focus on this problem in the remainder of this exposition.

\paragraph{Pseudocuts, expanders, and their embeddings.}

The next central tool that we introduce is balanced pseudocuts. Consider a cluster $C$, for which we would like to solve the \maintaincluster problem, as $C$ undergoes a sequence of online updates, with  target distance threshold $D$. 
For a given balance parameter $\rho$, a standard balanced multicut for $C$ can be defined as a set $E'\subseteq E(C)$ of edges, such that every connected component of $C\setminus E'$ contains at most $N(C)/\rho$ regular vertices. We weaken this notion of balanced multicut, and use instead \emph{balanced pseudocuts}. Let $D'=\Theta(D\poly\log N)$. A $(D',\rho)$-pseudocut in cluster $C$ is a collection $E'$ of its edges, such that, in graph $C\setminus E'$, for every vertex $v\in V(C)$, the ball $B_{C\setminus E'}(v,D')$ contains at most $N(C)/\rho$ regular vertices. In particular, once all edges of $E'$ are deleted from $C$, if we compute a strong $(D,D')$-neighborhood cover $\cset'$ of $C$, then we are guaranteed that for all $C'\in \cset'$, $N(C')\leq N(C)/\rho$. We note that standard balanced multicuts also achieve this useful property. An advantage of using pseudocuts is that we can design a near-linear time algorithm that computes a $(D',\rho)$-pseudocut $E'$ in graph $C$, for $\rho=N^{\eps}$, and additionally it computes an expander $X$, whose vertex set is $\set{v_e\mid e\in E''}$, where $E''\subseteq E'$ is a large subset of the edges of $E'$, together with an embedding of $X$ into $C$, via short embedding paths, that causes a low edge-congestion  (see \Cref{thm: compute pseudocut and expander} for details).
This allows us to build on known expander-based techniques in order to design an efficient algorithm for the \maintaincluster problem. Consider the following algorithm, that consists of a number of phases. In every phase, we start by computing a $(D',\rho)$-pseudocut $E'$ of $C$, the corresponding expander $X$, and its embedding into $C$. Let $E''\subseteq E'$ be the set of edges $e$, whose corresponding vertex $v_e$ lies in the expander $X$, so $V(X)=\set{v_e\mid e\in E''}$. We then use two data structures. The first data structure is an \EST $\tau$, whose root $s$ is a new vertex, that connects to each endpoint of every edge in $E''$, and has depth $O(D\poly\log N)$. This data structure allows us to ensure that every vertex of $C$ is close enough to some edge of $E''$, and to identify when this is no longer the case, so that flag $F_C$ is raised. Additionally, we use known algorithms for \APSP on expanders, together with the algorithm of  \cite{expander-pruning} for expander pruning, in order to maintain the expander $X$ (under update operations performed on the cluster $C$), and its embedding into $C$. This allows us to ensure that all edges in $E''$ remain sufficiently close to each other. These two data structures are sufficient in order to support the $\spquery(C,v,v')$ queries. If the initial pseudocut $E'$ was sufficiently large, then these data structures can be maintained over a long enough sequence of update operations to cluster $C$. Once a large enough number of edges are deleted from $C$, expander $X$ can no longer be maintained, and we recompute the whole data structure from scratch. Therefore, as long as the pseudocut $E'$ that our algorithm computes is sufficiently large (for example, its cardinality is at least $(N(C))^{1-\eps}$), we can support the $\spquery(C,v,v')$ queries as needed, with a very efficient algorithm. 

It now remains to deal with the situation where the size of the pseudocut $E'$ is small. One simple way to handle it is to maintain $2|E'|$ \EST data structures, each of which is rooted at an  endpoint of a distinct edge of $E'$, and has depth threshold $\Theta(D\poly\log N)$.
As long as the root vertex of an \EST $\tau$ remains in the current cluster $C$, we say that the tree $\tau$ \emph{survives}. As long as at least one of the \ESTs rooted at the endpoints of the edges in $E'$ survives, we can support the $\spquery(C,v,v')$ queries using any such tree. We can also use such a tree in order to detect when the diameter of the cluster becomes too large, and, when this happens, to identify a pair $x,y$ of vertices of $C$ with $\dist_C(x,y)$ sufficiently large. Once every \EST that we maintain is destroyed, we are guaranteed that all edges of $E'$ are deleted from $C$. We can then iteratively apply Procedure \proccut in order to further decompose $C$ into a collection of low-diameter clusters (that is, we compute a collection $\cset'$ of subgraphs of $C$, such that $\cset'$ is a $(D,D')$-neighborhood cover for $C$). Since $E'$ was a $(D',\rho)$-pseudocut for the original cluster $C$, we are then guaranteed that every cluster in $\cset'$ is significantly smaller than $C$, and contains at most $N(C)/\rho$ regular vertices. We can then initialize the algorithm for solving the \maintaincluster problem on each cluster of $\cset'$.
This approach already gives non-trivial guarantees (though in order to optimize it, we should choose a different  threshold for the cardinality of $E'$: if $|E'|>\sqrt{N(C)}$, we should use the expander-based approach, and otherwise we should maintain the \EST's). 
Our rough estimate is that such an algorithm would result in total update time $O\left (m^{1.5+O(\eps)}\cdot (\log m)^{O(1/\eps^2)}\right )$, but it is still much higher than our desired update time.

In order to achieve our desired near-linear total update time, 
we exploit again the recursive composability properties of the \recdynnc problem. 
Specifically, consider the situation where the pseudocut $E'$ that we have computed is small, that is, $|E'|<(N(C))^{1-\eps}$, and consider the graph $H'=C\setminus E'$. For all $1\leq i\leq \ceil{\log D}$, we solve the \recdynnc problem in graph $H'$ with target distance threshold $D_i=2^i$ recursively. Fix some index $1\leq i\leq \ceil{\log D}$, and let $\cset_i$ be the initial strong $(D_i,D_i\cdot \poly\log N)$-neighborhood cover that this algorithm computes. The properties of the balanced pseudocut ensure that each cluster $C'\in \cset_i$ is significantly smaller that $C$: namely, $N(C')\leq N(C)/\rho\leq (N(C))^{1-\eps}$. Therefore, we can solve the \maintaincluster problem on each such cluster recursively, and we also do so for every cluster that is later added to $\cset_i$. Let $\tilde\cset=\bigcup_i\cset_i$ be the dynamic collection of clusters that we maintain.

We use the set $\tilde\cset$ of clusters in order to construct a \emph{contracted graph} $\hat H$. The vertex set of $\hat H$ consists of the set $S$ of regular vertices -- all regular vertices that serve as endpoints of the edges of $E'$ (the edges of the pseudo-cut); and the set $U'=\set{u(C')\mid C'\in \tilde\cset}$ of supernodes. For every edge $e=(u,v)\in E'$, where $v\in S$ is a regular vertex, we add an edge connecting $v$ to every supernode $u(C')$, such that cluster $C'$ contains either $v$ or $u$. The length of the edge is $D_i$, where $i$ is the index for which $C'\in \cset_i$ holds.
It is not hard to show that the distances between the vertices of $S$ are approximately preserved in graph $\hat H$. As cluster $C$ undergoes a sequence of update operations, the neighborhood covers $\cset_i$ evolve, which in turn leads to changes in the contracted graph $\hat H$. However, we ensure that all changes to the neighborhood covers $\cset_i$ are only of the types allowed by
\Cref{thm: NC}, namely: (i) delete an edge from a cluster of $\cset_i$; (ii) delete an isolated vertex from a cluster  of $\cset_i$; or (iii) add a new cluster $C''$ to $\cset_i$, where $C''\subseteq C'$ for some cluster $C'\in \cset_i$. We are then guaranteed that all resulting changes to graph $\hat H$ can be implemented via allowed update operations: namely edge deletions, isolated vertex deletions, and supernode splitting.

We construct two data structures. First, an \EST $\tau$, in the graph obtained from $C$ by adding a new source vertex $s^*$, that connects to every vertex in $S$ with a length-$1$ edge. The depth of the tree is $O(D\poly\log N)$. This data structure allows us to ensure that every vertex of $C$ is sufficiently close to some vertex of $S$, and, when this is no longer the case, to raise the flag $F_C$, and to supply two vertices $x,y\in V(C)$ that are sufficiently far from each other.

The second data structure is obtained by applying the algorithm for the \maintaincluster problem recursively to the contracted graph $\hat H$. This data structure allows us to ensure that all vertices of $S$ are sufficiently close to each other, and, when this is no longer the case, it supplies a pair of vertices $s,s'\in S$, that are sufficiently far from each other in $\hat H$, and hence in $C$. Since we only use this algorithm in the scenario where $|E'|\leq (N(C))^{1-\eps}$, we are guaranteed that $|S|\leq (N(C))^{1-\eps}$, so graph $\hat H$ is significantly smaller than $C$.

To summarize, in order to solve the \maintaincluster problem in graph $C$, we use an expander-based approach, as long as the size of the pseudocut $E'$ that our algorithm computes is above $(N(C))^{1-\eps}$. Once this is no longer the case, we recursively solve the problem on clusters that are added to the neighborhood covers $\cset_i$ of graph $H=C\setminus E'$, for $1\leq i\leq \ceil{\log D}$. 
This allows us to maintain the neighborhood covers $\set{\cset_i}$, which, in turn, allow us to maintain the contracted graph $\hat H$. We then solve the \maintaincluster problem recursively on the contracted graph $\hat H$.
Once all edges of $E'$ are deleted from $C$, we start the whole algorithm from scratch. Since we ensure that the diameter of $C$ is bounded by $D'$, from the definition of a balanced pseudocut, we are guaranteed that $N(C)$ has decreased by at least a factor $N^{\eps}$.

\paragraph{Directions for future improvements.}
A major remaining open question is whether we can obtain an algorithm for decremental \APSP with a significantly better approximation factor, while preserving the near-linear total update time and the near-optimal query time. 
While it seems plausible that the new tools presented in this paper may lead to an improved $(\log m)^{\poly(1/\eps)}$-approximation algorithm with similar running time guarantees, 
improving the approximation factor beyond the $(\log m)^{\poly(1/\eps)}$ barrier seems quite challenging. A necessary first step toward such an improvement is to obtain better approximation algorithms for the decremental \APSP problem on expanders. 
We believe that this is a very interesting problem in its own right, and it is likely that better algorithms for this problem will lead to better deterministic algorithms for basic cut and flow problems, including Minimum Balanced Cut and Sparsest Cut, via the techniques of \cite{detbalanced}. This, however, is not the only barrier to obtaining an  approximation factor below $(\log m)^{\poly(1/\eps)}$ for decremental \APSP in near-linear time. In order to bring the running time of the algorithm for the \recdynnc problem down from $O\left (m^{1+O(\eps)}\cdot \poly(D)\cdot (\log m)^{O(1/\eps^2)}\right )$  to the desired running time of $O\left (m^{1+O(\eps)}\cdot (\log m)^{O(1/\eps^2)}\right )$, we recursively compose instances of \recdynnc with each other. This leads to recursion depth $O(1/\eps)$, and unfortunately the approximation factor accumulates with each recursive level. If the running time of our basic algorithm for \recdynnc (see \Cref{thm: main dynamic NC algorithm}) can be improved to depend linearly instead of cubically on $D$, it seems conceivable that one could use the approach of \cite{BernsteinChechik,Bernstein}, together with Layered Core Decomposition of \cite{APSP-old} in order to avoid this recursion (though it is likely that, in the running time of the resulting algorithm, term $m^{1+O(\eps)}$ will be replaced with $n^{2+O(\eps)}$). 
Lastly, our algorithm for the \maintaincluster problem needs to call to itself recursively on the contracted graph $\hat H$, which again leads to a recursion of depth $O(1/\eps)$, with the approximation factor accumulating at each recursive level. One possible direction for reducing the number of the recursive levels is designing an algorithm for computing a pseudocut $E'$, its corresponding expander $X$, and an embedding of $X$ into the cluster $C$ with a better balance parameter $\rho$  (see \Cref{thm: compute pseudocut and expander}). We believe that obtaining an analogue of \Cref{thm: compute pseudocut and expander} with stronger parameters and faster running time is a problem of independent interest. 


\subsection{Organization}
Most of this paper focuses on the proof of \Cref{thm: NC}. We start with preliminaries in \Cref{sec: prelims}, and then define the Recursive Dynamic Neighborhood Cover problem (\recdynNC), and state our main result for it in \Cref{sec: valid inputs}. We also show in \Cref{sec: valid inputs} that the proofs of \Cref{thm: main} and \Cref{thm: NC} follow from this result. After that, we gradually introduce our new technical tools. In \Cref{sec: proccut} we describe and analyze Procedure \proccut, and use it in order to reduce the \recdynnc problem to a new problem that we define in the same section, called \maintaincluster. In the same section, we show a slow and simple algorithm for the \maintaincluster problem, that we will use as a recursion base. In \Cref{sec: pseudocuts} we define balanced pseudocuts, and provide our main algorithm for them, that allows us to compute a pseudocut in a given cluster $C$, together with the corresponding expander $X$, and its embedding into $C$. We also provide an algorithm for \APSP on expanders, that is implicit in \cite{APSP-old}. Lastly, we show that these new tools already lead to a somewhat faster algorithm for the \maintaincluster problem. In \Cref{sec: pivot decomposition} we define good clusters; intuitively, these are clusters on which we can solve the \maintaincluster problem using the tools that we have developed so far, and we provide an algorithm that does exactly that. We then define a contracted graph and analyze its properties in \Cref{sec: conracted graph}, and we complete the algorithm for the \maintaincluster problem, by combining all tools that we have introduced, in \Cref{sec: final proof}. This completes the proof of \Cref{thm: main} and \Cref{thm: NC}. In \Cref{sec: params} we summarize all main parameters used in the proof.
Lastly, in \Cref{sec: application} we provide our algorithm for \MMF and \MM, proving \Cref{thm: MMF and MM}.

\section{Preliminaries}\label{sec: prelims}

All logarithms in this paper are to the base of $2$. 
All graphs in this paper are undirected. Graphs may have parallel edges, except for simple graphs, that cannot have them.
Throughout the paper, we use a $\tilde O(\cdot)$ notation to hide multiplicative factors that are polynomial in $\log m$ and $\log n$, where $m$ and $n$ are the number of edges and vertices, respectively, in the initial input graph.

We follow standard graph-theoretic notation.   Given a graph $G=(V,E)$ and two disjoint subsets $A,B$ of its vertices, we denote by $E_G(A,B)$ the set of all edges with one endpoint in $A$ and another in $B$, and by $E_G(A)$ the set of all edges with both endpoints in $A$. We also denote by $\delta_G(A)$ the set of all edges with exactly one endpoint in $A$. For a vertex $v\in V(G)$, we denote by $\delta_G(v)$ the set of all edges incident to $v$ in $G$, and by $\deg_G(v)$ the degree of $v$ in $G$.
We may omit the subscript $G$ when clear from context. Given a subset $S\subseteq V$ of vertices of $G$, we denote by $G[S]$ the subgraph of $G$ induced by $S$.

Given a graph $G$ and a weight function $w: V(G)\rightarrow \reals$ on its vertices, for a subset $V'\subseteq V(G)$ of its vertices, we denote by $W(V')=\sum_{v\in V'}w(v')$ the total weight of all vertices in $V'$. Abusing the notation, for a subgraph $C\subseteq G$, we denote the weight of the subgraph $W(C)=\sum_{v\in V(C)}w(v)$.

Given a graph $G$ with lengths $\ell(e)\geq 0$ on edges $e\in E(G)$, for a pair of vertices $u,v\in V(G)$, we denote by $\dist_G(u,v)$ the \emph{distance} between $u$ and $v$ in $G$, that is, the length of the shortest path between $u$ and $v$ with respect to the edge lengths $\ell(e)$.  For a pair $S,T$ of subsets of vertices of $G$, we define the distance between $S$ and $T$ to be $\dist_G(S,T)=\min_{s\in S,t\in T}\dist_G(s,t)$.
For a vertex $v\in V(G)$, and a vertex subset $S\subseteq V(G)$, we also define the distance between $v$ and $S$ as $\dist_G(v,S)=\min_{u\in S}\dist_G(v,u)$.
 The \emph{diameter} of the graph $G$, denoted by $\diam(G)$, is the maximum distance between any pair of vertices in $G$. For a path $P$ in $G$, we denote its length by $\ell_G(P)=\sum_{e\in E(P)}\ell(e)$.
 For a vertex $v\in V(G)$ and a distance parameter $D\geq 0$, we denote by $B_G(v,D)=\set{u\in V(G)\mid \dist_G(u,v)\leq D}$ the \emph{ball of radius $D$ around $v$}.
Similarly, for a vertex subset $S\subseteq V(G)$, we let the ball of radius $D$ around $S$ be $B_G(S,D)=\set{u\in V(G)\mid \dist_G(u,S)\leq D}$.
 We will sometimes omit the subscript $G$ when clear from context.

\paragraph{Neighborhood Covers.}
Neighborhood Cover is a central notion that we use throughout the paper. We use both a strong and a weak notion of neighborhood covers, that are defined as follows.

\begin{definition}[Neighborhood Cover]
	Let $G$ be a graph with lengths $\ell(e)>0$ on edges $e\in E(G)$, let $S\subseteq V(G)$ be a subset of its vertices, and let $D\leq D'$ be two distance parameters. A \emph{weak $(D,D')$-neighborhood cover} for vertex set $S$ in $G$ is a collection $\cset=\set{C_1,\ldots,C_r}$ of vertex-induced subgraphs of $G$ called \emph{clusters}, such that:
	
	\begin{itemize}
		\item for every vertex $v\in S$, there is some index $1\leq i\leq r$ with $B_G(v,D)\subseteq V(C_i)$; and
		\item for all $1\leq i\leq r$, for every pair $s,s'\in S\cap V(C_i)$ of vertices, $\dist_{G}(s,s')\leq D'$.
	\end{itemize} 
	A set $\cset$ of subgraphs of $G$ is a \emph{strong $(D,D')$-neighborhood cover} for vertex set $S$ if it is a weak $(D,D')$-neighborhood cover for $S$, and, additionally, for every cluster $C\in \cset$, for every pair $s,s'\in S\cap V(C)$ of vertices, $\dist_{C}(s,s')\leq D'$.
	
	If the vertex set $S$ is not specified, then we assume that $S=V(G)$.
\end{definition}

\paragraph{Dynamic Graphs.}
Throughout, we consider a graph $G$ that undergoes an online sequence $\Sigma=(\sigma_1,\sigma_2,\ldots)$ of update operations. For now it may be convenient to think of the update operations being edge deletions, though we will consider additional update operations later. After each update operation (e.g. edge deletion), our algorithm will perform some updates to the data structure. 
We refer to different ``times'' during the algorithm's execution. The algorithm starts at time $0$. For each $t\geq 0$, we refer to ``time $t$ in the algorithm's execution'' as the time immediately after all updates to the data structures maintained by the algorithm following the $t$th update $\sigma_t\in \Sigma$ are completed. When we say that some property holds at every time in the algorithm's execution, we mean that the property holds at all times $t$ of the algorithm's execution, but it may not hold, for example, during the procedure that updates the data structures maintained by the algorithm, following some input update operation $\sigma_t\in \Sigma$.
For $t\geq 0$, we denote by $G^t$ the graph $G$ at time $t$; that is, $G^0$ is the original graph, and for $t\geq 0$, $G^t$ is the graph obtained from $G$ after the first $t$ update operations $\sigma_1,\ldots,\sigma_t$.

\paragraph{Even-Shiloach Trees~\cite{EvenS,Dinitz,HenzingerKing}.}
Suppose we are given a graph $G=(V,E)$ with integral lengths $\ell(e)\geq 1$ on its edges $e\in E$, a source $s$, and a distance bound $D\geq 1$. Even-Shiloach Tree (\EST) algorithm maintains, for every vertex $v$ with $\dist_G(s,v)\leq D$, the distance $\dist_G(s,v)$, under the deletion of edges from $G$. Moreover, it maintains a shortest-path tree $\tau$ rooted at vertex $s$, that includes all vertices $v$ with $\dist_G(s,v)\leq D$. We denote the corresponding data structure by $\EST(G,s,D)$. 
The total running time of the algorithm, including the initialization and all edge deletions, is $O(m\cdot D\log n)$, where $m$ is the initial number of edges in $G$ and $n=|V|$. Throughout this paper, we refer to the corresponding data structure as \emph{basic \EST}. We later define a slight generalization of this data structure that allows us to handle some additional update operations. This more general data structure will be referred to as \emph{generalized \EST}.

\paragraph{Cuts, Sparsity and Expanders.}
Given a graph $G$, a \emph{cut} in $G$ is a bipartition $(A,B)$ of the set $V(G)$ of its vertices, with $A,B\neq \emptyset$. The \emph{sparsity} of the cut $(A,B)$ is $\phi_G(A,B)=\frac{|E_G(A,B)|}{\min \set{|A|,|B|}}$. We denote by $\Phi(G)$ the smallest sparsity of any cut in $G$, and we refer to $\Phi(G)$ as the \emph{expansion} of $G$.

\begin{definition}[Expander]
	We say that a graph $G$ is a $\phi$-expander iff $\Phi(G)\geq \phi$.
\end{definition}

We will repeatedly use the following standard observation, whose proof is included in Section \ref{subsec: short paths in expanders} of Appendix for completeness.
\begin{observation}\label{obs: short paths in exp}
	Let $G$ be an $n$-vertex $\phi$-expander with maximum vertex degree at most $\Delta$. Then for any pair $u,v$ of vertices of $G$, there is a path connecting $u$ to $v$ that contains at most $\frac{8\Delta \log n}{\phi}$ edges.
\end{observation}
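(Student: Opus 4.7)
The plan is to do a BFS from $u$, analyze how fast the BFS ball grows using the expansion hypothesis together with the degree bound, and argue that $v$ must be reached within $\frac{8\Delta \log n}{\phi}$ steps. First I would set up the notation: for each integer $i \geq 0$, let $B_i \subseteq V(G)$ denote the set of vertices at BFS-distance at most $i$ from $u$, and let $L_i = B_i \setminus B_{i-1}$ (with $L_0 = \{u\}$). The key local observation is that every edge of $\delta_G(B_i)$ has its ``outside'' endpoint in $L_{i+1}$, since otherwise its endpoint outside $B_i$ would be at BFS-distance $\leq i$ from $u$. Combined with the maximum-degree bound, this gives $|\delta_G(B_i)| \leq \Delta \cdot |L_{i+1}|$.

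Next I would split the argument into two phases depending on whether $|B_i|$ is below or above $n/2$, in each case applying the $\phi$-expander inequality $|\delta_G(B_i)| \geq \phi \cdot \min\{|B_i|,\, |V \setminus B_i|\}$. In the first phase, as long as $v \notin B_i$ and $|B_i| \leq n/2$, combining the two inequalities yields $|L_{i+1}| \geq (\phi/\Delta) |B_i|$, so $|B_{i+1}| \geq (1+\phi/\Delta)|B_i|$. Iterating from $|B_0|=1$ and using $\log(1+x) \geq x/2$ for $x \in (0,1]$, after at most $k_1 := \lceil (2\Delta/\phi)\log n \rceil$ steps we must have either reached $v$ or achieved $|B_{k_1}| > n/2$. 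In the second phase, while $v \notin B_i$ and $|B_i| > n/2$, the same local bound applied to the complement side gives $|V \setminus B_{i+1}| \leq (1-\phi/\Delta)|V \setminus B_i|$, so the complement shrinks geometrically and, since $v$ lies in it (and has weight at least $1$), a further $k_2 \leq \lceil (2\Delta/\phi)\log n \rceil$ steps suffice to force $v \in B_i$.

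Adding the two phases bounds the BFS-distance from $u$ to $v$, and hence the number of edges in a shortest $u$-$v$ path in $G$, by $k_1 + k_2 \leq \frac{4\Delta \log n}{\phi} + O(1) \leq \frac{8\Delta \log n}{\phi}$, as claimed. The only genuinely delicate point is making sure the constant works out to $8$ in the final bound; this is handled by the simple inequality $\log(1+x) \geq x/2$ in the relevant range of $x = \phi/\Delta$ (noting we may assume $\phi \leq \Delta$ since otherwise the expansion bound is trivial), giving plenty of slack. No other step is substantive: everything follows directly from the BFS layering together with a single application of the $\phi$-expander inequality in each of the two regimes.
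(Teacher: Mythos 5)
Your proof is correct and uses essentially the same ball-growing argument as the paper. The only cosmetic difference is that the paper grows balls simultaneously from both $u$ and $v$ until each exceeds $n/2$ vertices and hence they intersect, whereas you grow a single ball from $u$ and, once it passes $n/2$, switch to bounding the geometric shrinkage of the complement; the key inequality $|\delta_G(B_i)| \leq \Delta |L_{i+1}|$ combined with expansion is identical in both.
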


\paragraph{Expander Pruning.}

We use an algorithm for expander pruning by \cite{expander-pruning}. We slightly rephrase it so it is defined in terms of graph expansion, instead of conductance that was used in the original paper. For completeness, we include the proof in Section \ref{subsec: expander pruning} of Appendix.

\begin{theorem}[Adaptation of Theorem 1.3 in~\cite{expander-pruning}]\label{thm: expander pruning}
	There is a deterministic algorithm, that, given an access to the adjacency list of a graph $G=(V,E)$ that is a $\phi$-expander, for some parameter $0<\phi<1$, such that the maximum vertex degree in $G$ is at most $\Delta$, and a sequence $\Sigma=(e_1,e_2,\ldots,e_k)$ of $ {k\leq \phi |E|/(10\Delta)}$ online edge deletions from $G$, maintains a vertex set $S\subseteq V$ with the following properties. Let $G^i$ denote the graph $G\setminus\set{e_1,\ldots,e_i}$; let $S_0=\emptyset$ be the set $S$ at the beginning of the algorithm, and for all $0<i\leq k$, let $S_i$ be the set $S$ after the deletion of the edges of $e_1,\ldots,e_i$ from graph $G$. Then, for all $1\leq i\leq k$:
	
	\begin{itemize}
		\item $S_{i-1}\subseteq S_i$;
		\item $ |S_i|\leq 8i\Delta/\phi$;
		\item $|E(S_i,V\setminus S_i)|\leq 4i$; and
		\item graph $G^i[V\setminus S_i]$ is a $\phi/(6\Delta)$-expander.
	\end{itemize}
	
	The total running time of the algorithm is $\Otilde(k\Delta^2/\phi^2)$.
\end{theorem}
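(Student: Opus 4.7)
The plan is to derive the theorem as a direct rephrasing of the original Theorem 1.3 of \cite{expander-pruning}, which is stated in terms of \emph{conductance} rather than \emph{expansion}. Recall that the conductance of a graph $H$ is $\Phi_c(H) = \min_{T: \vol_H(T) \le \vol_H(V(H))/2} |E_H(T, V(H) \setminus T)| / \vol_H(T)$, where $\vol_H(T)=\sum_{v\in T}\deg_H(v)$. Since $\vol_H(T) \le \Delta|T|$ whenever every vertex of $H$ has degree at most $\Delta$, every $\phi$-expander of maximum degree $\Delta$ has conductance at least $\phi/\Delta$. Conversely, $|T| \le \vol_H(T)$ always, so conductance at least $\phi'$ immediately implies expansion at least $\phi'$ (as a $\phi'$-expander in our sense).

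First I would instantiate Theorem~1.3 of \cite{expander-pruning} on the input graph $G$ with conductance parameter $\phi_c := \phi/\Delta$. By the observation above, $G$ has conductance at least $\phi_c$. The hypothesis $k \le \phi|E|/(10\Delta) = \phi_c|E|/10$ is exactly the deletion budget required by that theorem. The theorem then produces a monotone sequence $\emptyset = S_0 \subseteq S_1 \subseteq \cdots \subseteq S_k$ satisfying, for every $1 \le i \le k$:
\begin{itemize}
\item $\vol_G(S_i) \le 8i/\phi_c$,
\item $|E_{G}(S_i, V \setminus S_i)| \le 4i$, and
\item $G^i[V \setminus S_i]$ has conductance at least $\phi_c/6$,
\end{itemize}
with total running time $\tilde O(k/\phi_c^2)$.

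Second, I would translate each of these guarantees into the form stated in our theorem. Monotonicity $S_{i-1}\subseteq S_i$ is preserved verbatim. Since $|S_i| \le \vol_G(S_i)$, the volume bound yields $|S_i| \le 8i/\phi_c = 8i\Delta/\phi$, as required. The edge-cut bound $|E(S_i,V\setminus S_i)| \le 4i$ is already in the desired form. The conductance bound on $G^i[V \setminus S_i]$ is at least $\phi_c/6 = \phi/(6\Delta)$, and since conductance lower-bounds expansion, $G^i[V \setminus S_i]$ is a $\phi/(6\Delta)$-expander. Finally the running time becomes $\tilde O(k/\phi_c^2) = \tilde O(k\Delta^2/\phi^2)$. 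There is essentially no technical obstacle; the only points requiring care are (i) checking that the algorithm of \cite{expander-pruning} genuinely only needs adjacency-list access (which it does, as it is built on local push/flow routines whose running time depends only on edges incident to explored vertices), and (ii) being precise about the constant factors in the translation between conductance and expansion so that the exact constants $8$, $4$, $6$, and $10$ in the statement match those of the cited result.
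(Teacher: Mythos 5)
Your proposal is correct and matches the paper's own proof: both instantiate the conductance-phrased Theorem~1.3 of \cite{expander-pruning} with parameter $\phi/\Delta$, observe $\Psi(G)\ge\Phi(G)/\Delta$ so the hypothesis is met, and translate the volume/conductance conclusions back to cardinality/expansion bounds using $|S|\le\vol(S)$ and $\Psi(H)\le\Phi(H)$. The only difference is cosmetic---the paper packages the conductance/expansion comparison as a separate observation---so there is nothing substantive to add.
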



\section{Valid Input Structure, Valid Update Operations, Generalized Even-Shiloach Trees, and the Dynamic Recursive Neighborhood Cover Problem}
\label{sec: valid inputs}

Throughout this paper, we will work with inputs that have a specific structure. The structure is designed in a way that will allow us to naturally compose different instances recursively, by exploiting the notion of neighborhood covers. In order to avoid repeatedly defining such inputs, we provide a definition here, and then refer to it throughout the paper. In this section we also define the types of update operations that we allow for such inputs, and extend the basic Even-Shiloach Tree data structure to support such updates. Lastly, we formally define the Recursive Dynamic Neighborhood Cover problem (\recdynnc) and state our main result for this problem. We then show that this result immediately implies the proofs of Theorems \ref{thm: main} and \ref{thm: NC}.

\subsection{Valid Input Structure and Valid Update Operations}

We start by defining a valid input structure; the definition is used throughout the paper and is intended as a shorthand for the types of inputs most our subroutines use. 

\begin{definition}[Valid Input Structure]
	A valid input structure consists of a bipartite graph $H=(V,U,E)$, a distance threshold $D>0$ and integral lengths $1\leq \ell(e)\leq D$ for edges $e\in E$. 
	The vertices in set $V$ are called \emph{regular vertices} and the vertices in set $U$ are called \emph{supernodes}.
	We denote a valid input structure by $\iset=\left(H=(V,U,E),\set{\ell(e)}_{e\in E(H)}, D\right )$. If the distance threshold $D$ is not explicitly defined, then we set it to $\infty$.
\end{definition}

Intuitively, supernodes in set $U$ correspond to clusters in a Neighborhood Cover $\cset$ of the vertices in $V$ with some (smaller) distance threshold, that is computed and maintained recursively. %
Given a valid input structure $\iset=\left(H,\set{\ell(e)}_{e\in E(H)}, D\right )$, we will allow the following types of update operations, that we refer to as \emph{valid update operations}:

\begin{itemize}

\item {\bf Edge Deletion.} Given an edge $e\in E(H)$, delete $e$ from $H$.

\item {\bf Isolated Vertex Deletion.} Given a vertex $x\in V(H)$ that is an isolated vertex, delete $x$ from $H$; and



\item {\bf Supernode Splitting.} The input of this update operation is a supernode $u\in U$ and a non-empty subset $E'\subseteq \delta_H(u)$ of edges incident to $u$.
The update operation creates a new supernode $u'$, and, for every edge $e=(u,v)\in E'$, it adds a new edge $e'=(u',v)$ of length $\ell(e)$ to the graph $H$.  We will sometimes refer to $e'$ as a \emph{copy of edge $e$}.
\end{itemize}

For brevity of notation, we will refer to edge-deletion, isolated vertex deletion, and supernode-splitting operations as \emph{valid update operations}.
Notice that the update operations may not create new regular vertices, so vertices may be deleted from the vertex set $V$, but never added to it. A supernode splitting operation, however, adds a new supernode to the graph $H$, and also inserts edges into $H$.
Unfortunately, this means that the number of edges in $H$ may grow as the result of the update operations, which makes it challenging to analyze the running times of various algorithms that we run on subgraphs $C$ of $H$ in terms of $|E(C)|$. 
In order to overcome this difficulty, we use the notion of the \emph{dynamic degree bound}.

\begin{definition}[Dynamic Degree Bound]
We say that a valid input structure \newline  $\iset=\left(H=(V,U,E),\set{\ell(e)}_{e\in E(H)}, D\right )$, undergoing a sequence $\Sigma$ of valid update operations has dynamic degree bound $\mu$ iff for every regular vertex $v\in V$, the total number of edges incident to $v$ that are ever present in $H$ over the course of the update sequence $\Sigma$ is at most $\mu$. 
\end{definition}

We will usually denote by $N^0(H)$ the  number of regular vertices in the initial graph $H$. If $(\iset,\Sigma)$ have dynamic degree bound $\mu$, then we are guaranteed that the number of edges that are ever present in $H$ over the course of the update sequence $\Sigma$ is bounded by $N^0(H)\cdot \mu$.

In general, we will always ensure that the dynamic degree bound $\mu$ is quite low. It may be convenient to think of it as $m^{o(1)}$, where $m$ is the initial number of edges in the input graph $G$ for the \APSP problem. Intuitively, every supernode of graph $H$ represents some cluster $C$ in a  $(\hat D,\hat D')$-neighborhood cover $\cset$ of $G$, for some parameters $\hat D,\hat D'\ll D$. Typically, each regular vertex of $H$ represents some actual vertex of graph $G$,  and an edge $(v,u)$ is present in $H$ iff vertex $v$ belongs to the cluster $C$ that vertex $u$ represents. Intuitively, we will ensure that the neighborhood cover $\cset$ of $G$ is constructed and maintained in such a way that the total number of clusters of $\cset$ to which a given regular vertex $v$ ever belongs over the course of the algorithm is very small. This will ensure that the dynamic degree bound for graph $H$ is small as well. 

Note that we can assume without loss of generality that every vertex in the original graph $H^0$ has at least one edge incident to it, as otherwise it is an isolated vertex, and will remain so as long as it lies in $H$. Moreover, from the definition of a supernode-splitting operation, it may not be applied to an isolated vertex (as we require that the edge set $E'$ is non-empty). Therefore, any isolated vertex of $H^0$ can be ignored. We will therefore assume from now on that every supernode in the original graph $H^0$ has degree at least $1$. (This assumption is only used for convenience, so that we can bound the total number of vertices in $H^0$ by $O(|E(H^0)|)$.)

We use the following simple observation to show that the distances in the graph $H$ may not decrease as the result of a valid update operation.

\begin{observation}\label{obs: no dist increase}
	Consider the graph $H$ at any time during the execution of the sequence $\Sigma$ of valid update operations, and let $x,x'$ be any two vertices of $H$. Let $H'$ be the graph obtained after a single valid update operation on $H$. Then $\dist_{H'}(x,x')\geq \dist_{H}(x,x')$.
\end{observation}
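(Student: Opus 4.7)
The plan is to case-analyze the three types of valid update operations (edge deletion, isolated vertex deletion, supernode splitting) and, in each case, show that any $x$-$x'$ path in $H'$ can be converted to an $x$-$x'$ walk in $H$ of no greater length; since $\dist_H(x,x')$ is at most the length of any $x$-$x'$ walk in $H$, this yields the desired inequality.

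The first two cases are immediate. For edge deletion, $E(H')\subseteq E(H)$ and $V(H')=V(H)$, so every path in $H'$ is already a path in $H$ with identical edge lengths. For isolated vertex deletion, the removed vertex $y$ is not incident to any edge of $H$, so it cannot appear as an internal vertex of any simple path in $H$ between two other vertices; hence the collection of $x$-$x'$ paths, and their lengths, are the same in $H$ and $H'$.

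The main (but still short) case is supernode splitting. Suppose the update takes supernode $u$ with edge set $E'\subseteq \delta_H(u)$ and creates a new supernode $u'$ together with, for each $e=(u,v)\in E'$, a new edge $e'=(u',v)$ of length $\ell(e')=\ell(e)$. Note that $V(H')=V(H)\cup\{u'\}$ and $E(H')=E(H)\cup\{e':e\in E'\}$; in particular every vertex of $H$ and every edge of $H$ is still present in $H'$. Since $x,x'\in V(H)$, take any shortest $x$-$x'$ path $P'$ in $H'$. If $P'$ avoids $u'$, it is already an $x$-$x'$ path in $H$ of the same length. Otherwise, construct a walk $P$ in $H$ by replacing every occurrence of $u'$ in $P'$ with $u$: each edge $(u',v)\in E(P')$ is replaced by the corresponding edge $(u,v)\in E'\subseteq E(H)$, which has the same length. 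The result is a valid $x$-$x'$ walk in $H$ of length exactly $\ell_{H'}(P')=\dist_{H'}(x,x')$, and therefore $\dist_H(x,x')\leq \dist_{H'}(x,x')$.

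I do not anticipate any real obstacle; the only thing to be mildly careful about is the supernode-splitting case, where edges are actually inserted into $H$. The point to emphasize is that every new edge is a length-preserving copy of an existing edge incident to $u$, and $u'$ has no other neighbors, so any shortcut through $u'$ can be simulated by a walk through $u$ of the same total length.
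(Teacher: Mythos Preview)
Your proof is correct and follows essentially the same approach as the paper: a case split on the three update types, with the supernode-splitting case handled by replacing each occurrence of $u'$ on an $x$-$x'$ path in $H'$ by $u$ to obtain an $x$-$x'$ walk in $H$ of the same length. Your version is slightly more careful in calling the result a walk rather than a path (since $u$ might already appear on $P'$), but the argument is otherwise identical.
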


\begin{proof}
	If the valid update operation is an edge deletion, or an isolated vertex deletion, then the statement is clearly true. Assume now that the valid update operation is supernode-splitting, applied to a supernode $u$, with the corresponding edge set $E'$. It is easy to verify that $\dist_{H'}(x,x')=\dist_H(x,x')$, since every path $P$ connecting $x$ to $x'$ in $H'$ can be transformed into a path $P'$ of the same length connecting $x$ to $x'$ in $H$, by replacing the vertex $u'$ with the vertex $u$ on it (if $u'\in V(P)$; otherwise $P\subseteq H$ holds). 
\end{proof}

Throughout the paper, it may sometimes be convenient for us to work with vertex weights. In such cases, we will always let the weight of every supernode be $0$, and the weight of every regular vertex be $1$. For a subgraph $C\subseteq H$, we can then denote by $W(C)$ the total weight of all vertices in $C$. From the above discussion, $|E(C)|\leq W(C)\cdot \mu$ always holds. Some of the technical tools that we develop work even when vertex weights are arbitrary. Since we believe that some of these tools are interesting in their own right, and may be useful for future work, we state them in the most general way, with arbitrary non-negative vertex weights. However, when applying these tools to our problem, we will always set vertex weights as described above.

\subsection{Generalized Even-Shiloach Trees}
\label{subsec: ES-trees}


In this subsection we show that the basic $\EST$ data structure of \cite{EvenS,Dinitz,HenzingerKing} can be extended to the setting of a valid input structure that undergoes valid update operations. We do so in the following theorem, whose proof is standard and is deferred to Section \ref{sec: proof of ES tree thm} of Appendix. We note that a similar data structure was used, either explicity or implicitly, in numerous previous papers.


\begin{theorem}\label{thm: ES-tree}
	There is a deterministic algorithm, that we refer to as \emph{generalized \EST}, that, given a valid input structure $\iset=\left(H,\set{\ell(e)}_{e\in E(H)},D\right )$, where graph $H$ undergoes a sequence $\Sigma$ of valid update operations with dynamic degree bound $\mu$, and given additionally a source vertex $s\in V(H)$, and a distance threshold $D^*>0$, such that the length of every edge in $H$ is bounded by $D^*$, supports  $\shortestpath$  queries: given a vertex $x\in V(H)$, either correctly establish, in time $O(1)$, that $\dist_H(s,x)>D^*$, or return a shortest $s$-$x$ path $P$ in $H$, in time $O(|E(P)|)$. The total update time of the algorithm is $\otilde(N^0\cdot \mu \cdot D^*)$, where $N^0$ is the number of regular vertices in the initial graph $H$.
\end{theorem}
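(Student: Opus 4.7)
The plan is to extend the classical Even-Shiloach tree to the valid-input setting by carefully handling supernode splitting, while relying on \Cref{obs: no dist increase} which guarantees that distances from $s$ can only increase under the allowed updates.

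First I would maintain the standard \EST data structure: for each vertex $x\in V(H)$ with $\dist_H(s,x)\le D^*$, store a label $d(x)=\dist_H(s,x)$ together with a parent pointer to an edge $(x,y)$ realizing $d(x)=d(y)+\ell(x,y)$; vertices at distance more than $D^*$ are marked by $d(x)=\infty$. Each vertex additionally keeps its incident edges bucketed by the potential $d(y)+\ell(x,y)$ of the other endpoint, so that, as edges are deleted, the next candidate parent edge can be located in $O(\log n)$ amortized time. Then I would handle each type of valid update operation in turn. Edge deletions are processed verbatim by the classical \EST procedure: when the current parent edge of $x$ is deleted, $x$ probes its bucketed edge list for a new parent; if none achieves the current $d(x)$, it raises $d(x)$ and cascades the change to its children. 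Isolated vertex deletions require no real work: by \Cref{obs: no dist increase} they cannot affect any other label, so we simply discard the vertex in $O(1)$ time.

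For a supernode-splitting operation that creates $u'$ from $u$ with a new edge set $E''=\set{(u',v):(u,v)\in E'}$, I would insert $u'$ into the data structure, scan its $|E''|$ edges to compute $d(u')=\min_{(u,v)\in E'}\bigl(d(v)+\ell(u,v)\bigr)$, set its parent pointer to the realizing edge (or mark $d(u')=\infty$ if this minimum exceeds $D^*$), and initialize the edge buckets for $u'$ and for each neighbor $v$. The crucial point is that \Cref{obs: no dist increase} guarantees that no existing vertex $x\ne u'$ can have $\dist_H(s,x)$ decreased by the split, so every other label in the tree remains correct and no cascading recomputation is required. A $\shortestpath(x)$ query then checks $d(x)$ against $D^*$ in $O(1)$ time and, when applicable, walks the parent pointers back to $s$ in $O(|E(P)|)$ time, producing a genuine shortest $s$-$x$ path by the parent invariant.

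For the running time, the dynamic degree bound implies that the total number of edges ever present in $H$ is at most $N^0\cdot\mu$, since $H$ is bipartite and every edge is incident to a regular vertex. The classical \EST potential argument then attributes $O(D^*\log n)$ amortized work to each such edge over its lifetime (each edge is scanned only when an endpoint raises its label, and each label raises in integer steps up to $D^*$). Supernode-splitting contributes only a one-time $O(|E''|\log n)$ initialization cost, which is absorbed into this per-edge budget. Summing yields the claimed $\otilde(N^0\cdot\mu\cdot D^*)$ total update time.

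The main obstacle I anticipate is arguing that supernode splitting truly slots into the standard \EST potential framework with no hidden re-initialization cost: the argument hinges on the fact that splitting merely introduces a single vertex whose distance is determined by neighbors whose labels are already correct, combined with \Cref{obs: no dist increase} ruling out any downstream label decreases. Beyond this, only mild mechanical care is needed to insert $u'$'s edges into the bucket structures of the neighbors $v$ without disturbing the amortized analysis.
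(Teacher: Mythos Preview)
Your proposal is correct and follows essentially the same approach as the paper: extend the classical \EST data structure, handle edge and isolated-vertex deletions in the standard way, and exploit the distance monotonicity of \Cref{obs: no dist increase} to process supernode splitting without cascading recomputation. The only notable implementation difference is in the splitting step: you compute $d(u')$ directly by scanning the new edges, whereas the paper first attaches $u'$ to the tree with the same label as $u$ via $u$'s (possibly fake) parent edge, inserts the edges of $E'$, and then invokes the edge-deletion routine on the fake edge if it was not in $E'$; both mechanisms are correct and yield the same $\otilde(N^0\mu D^*)$ bound.
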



\subsection{The Recursive Dynamic Neighborhood Cover (\recdynNC) Problem}
In this subsection we define the Recursive Dynamic Neighborhood Cover problem, and then state our main technical result that provides a deterministic algorithm for this problem. We then provide an algorithm with faster update time by exploiting the recursive composability properties of this problem. 
This faster algorithm, in turn, will immediately provide an algorithm for the \APSP problem, proving \Cref{thm: main}, and an algorithm for the \NC problem, proving \Cref{thm: NC}.

\paragraph{Problem Definition.}
The input to the Recursive Dynamic Neighborhood Cover (\recdynNC) problem is a valid input structure $\iset=\left(H=(V,U,E),\set{\ell(e)}_{e\in E},D \right )$, where graph $H$ undergoes a sequence $\Sigma$ of valid update operations with some given dynamic degree bound $\mu$. Additionally, we are given a desired approximation factor $\alpha$. We assume that we are also given some arbitrary fixed ordering $\oset$ of the vertices of $H$, and that any new vertex that is inserted into $H$ as the result of supernode-splitting updates appears at the end of the current ordering.
The goal is to maintain the following data structures:

\begin{itemize}
	\item   a collection $\uset$ of subsets of vertices of graph $H$, together with a collection $\cset=\set{H[S]\mid S\in\uset}$ of clusters in $H$, such that $\cset$ is a weak $(D,\alpha \cdot D)$ neighborhood cover for the set $V$ of regular vertices in graph $H$.
		For every set $S\in \uset$, the vertices of $S$ must be maintained in a list, sorted according to the ordering $\oset$;
	\item for every regular vertex $v\in V$, a cluster  $C=\clustercover(v)$, with $B_H(v,D)\subseteq V(C)$;
	\item for every vertex $x\in V(H)$, a list $\clusterlist(x)\subseteq \cset$ of all clusters containing $x$, and for every edge $e\in E(H)$, a list $\clusterlist(e)\subseteq \cset$ of all clusters containing $e$.
\end{itemize} 

The set $\uset$ of vertex subsets must be maintained as follows. Initially, $\uset=\set{V(H^0)}$, where $H^0$ is the initial input graph $H$. After that, the only allowed changes to vertex sets in $\uset$ are:

\begin{itemize}
	\item $\delvertex(S,x)$: given a vertex set $S\in \uset$, and a vertex $x\in S$, delete $x$ from $S$; 
	\item $\addsupernode(S,u)$: if $u$ is a supernode that is lying in $S$ that just underwent supernode splitting update, add the newly created supernode $u'$ to $S$; and
	\item $\csplit(S,S')$: given a vertex set $S\in \uset$, and a subset $S'\subseteq S$ of its vertices, add $S'$ to $\uset$.
\end{itemize}

We refer to the above operations as \emph{allowed changes to $\uset$}.
In other words, if we consider the sequence of changes that clusters in $\cset$ undergo over the course of the algorithm, the corresponding sequence of changes in vertex sets in $\set{U(C)\mid C\in \cset}$ must obey the above rules.


In addition to maintaining the above data structures, an algorithm for the \recdynnc problem needs to support queries $\spquery(C,v,v')$: given two {\bf regular} vertices $v,v'\in V$, and a cluster $C\in \cset$ with $v,v'\in C$, return a path $P$ in the current graph $H$, of length at most $\alpha\cdot D$ connecting $v$ to $v'$ in $H$, in time $O(|E(P)|)$. 
This completes the definition of the \recdynNC problem.



\subsection*{Statement of Main Technical Result}

Our main technical result is a deterministic algorithm for the \recdynNC problem, that is summarized in the following theorem.

\begin{theorem}\label{thm: main dynamic NC algorithm}
	There is a deterministic algorithm for the \recdynNC problem, that,  on input $\iset=\left(H=(V,U,E),\set{\ell(e)}_{e\in E},D \right )$ undergoing a sequence of valid update operations with dynamic degree bound $\mu$, and a parameter $c/\log \log W<\eps<1$, for some large enough constant $c$, where $W$ is the number of regular vertices in $H$ at the beginning of the algorithm, achieves approximation factor $\alpha=(\log(W\mu))^{2^{O(1/\eps)}}$, and  has total update time $O\left (W^{1+O(\eps)}\cdot \mu^{2+O(\eps)}\cdot D^3\cdot (\log (W\mu))^{O(1/\eps^2)}\right )$. Moreover, the algorithm ensures that for every regular vertex $v\in V$, the total number of clusters in the neighborhood cover $\cset$ that the algorithm maintains, to which vertex $v$ ever belongs over the course of the algorithm is bounded by $W^{O(1/\log\log W)}$. It also ensures that the neighborhood cover $\cset$ that it maintains is a strong $(D,\alpha\cdot D)$ neighborhood cover for the set $V$ of regular vertices of $H$.
\end{theorem}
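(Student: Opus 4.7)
The plan is to prove the theorem by first reducing the \recdynnc problem to a simpler problem called \maintaincluster (defined to maintain approximate shortest paths in a single cluster while that cluster undergoes updates and is allowed to ``raise a flag'' when its diameter grows), and then solving \maintaincluster using a combination of expander-based tools and a recursive call on a contracted graph. The reduction itself is driven by Procedure \proccut, a carefully designed ball-growing routine that, given a cluster $C$ together with two vertices $x,y$ with $\dist_C(x,y) > \Omega(D \poly\log W)$, produces two subclusters $C',C''$ of $C$ such that each ball $B_C(v,D)$ is contained in one of them, $\diam(C') \leq O(D \poly\log W)$, and $C'$ contains fewer regular vertices than $C''$. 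I would use \proccut to build the initial strong $(D,\alpha D)$-neighborhood cover by repeatedly splitting off low-diameter pieces, and later to respond to flags raised by \maintaincluster: add $C'$ to $\cset$ and shrink $C$ into $C''$ via allowed deletions, which is precisely the allowed change to $\uset$.

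For \maintaincluster on a cluster $C$, I would operate in phases. At the start of each phase, I would invoke \Cref{thm: compute pseudocut and expander} (from later in the paper) to compute a $(D', \rho)$-pseudocut $E' \subseteq E(C)$ with $\rho = W^{\eps}$ and $D' = \Theta(D\poly\log W)$, together with an expander $X$ on a large subset $E'' \subseteq E'$ and an embedding of $X$ into $C$ with low congestion and short embedding paths. There are two regimes. If $|E'| \geq (N(C))^{1-\eps}$, I would use an expander-based approach: maintain an \EST of depth $O(D\poly\log W)$ rooted at a virtual source adjacent to the endpoints of all edges in $E''$ (to detect when some vertex of $C$ drifts far from the pseudocut), and maintain the expander $X$ under deletions via expander pruning (\Cref{thm: expander pruning}) combined with the expander \APSP algorithm. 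Together these two data structures support $\spquery$ and detect when the flag should be raised. If $|E'| < (N(C))^{1-\eps}$, I would instead recurse: for each $i \leq \lceil \log D \rceil$, solve \recdynnc on $H' = C \setminus E'$ with target distance $D_i = 2^i$ to obtain a neighborhood cover $\cset_i$, and then build a contracted graph $\hat H$ whose regular vertices are the endpoints $S$ of $E'$ (so $|S| \leq 2(N(C))^{1-\eps}$) and whose supernodes are $\{u(C') : C' \in \bigcup_i \cset_i\}$, with each $v \in S$ connected to $u(C')$ whenever $v$ or its $E'$-mate lies in $C'$, using edge length $D_i$. An auxiliary \EST in $C$ rooted at a virtual source attached to $S$ detects when $C$-vertices get far from $S$, and the \maintaincluster algorithm applied recursively to $\hat H$ detects when $S$-vertices are far from one another.

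The main obstacle, and the most delicate part of the proof, is controlling the number of clusters any single regular vertex can ever belong to, which must stay at $W^{O(1/\log\log W)}$. The standard LR-style ball-growing would only bound this count on average; to guarantee it for \emph{every} vertex, the ball radius in \proccut must be chosen so that at each split the smaller piece $C'$ has a carefully bounded edge count relative to the larger piece $C''$, yielding a ``cluster-split tree'' of depth at most $O(\log\log W)$ with branching $(\log W)^{O(1)}$. A second subtle issue is ensuring that all updates to the contracted graph $\hat H$ are valid update operations: this requires that the recursive \recdynnc calls only change $\cset_i$ via edge deletions, isolated vertex deletions, and the addition of new clusters strictly contained in existing ones, which is exactly the interface guaranteed by the inductive statement we are proving. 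Finally, because each phase ends either when the expander breaks or when all of $E'$ is deleted, and because the pseudocut guarantees $N(C'') \leq N(C)/\rho$ for every cluster born in the latter case, the total work per vertex across phases telescopes; combined with $O(1/\eps)$ levels of inner recursion on $\hat H$, this yields the approximation factor $(\log(W\mu))^{2^{O(1/\eps)}}$ (each level multiplies by a $\poly\log$ factor from \proccut and a further factor from distance rounding in $\hat H$) and total update time dominated by $\EST$ maintenance, giving the $W^{1+O(\eps)} \mu^{2+O(\eps)} D^3 (\log(W\mu))^{O(1/\eps^2)}$ bound.
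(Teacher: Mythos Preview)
Your high-level architecture matches the paper: reduce \recdynnc to \maintaincluster via \proccut and the framework \algmaintainNC, then solve \maintaincluster by the pseudocut/expander dichotomy, recursing on the contracted graph $\hat H$ when the pseudocut is small. The phase structure, the role of the \EST rooted at the pseudocut endpoints, and the use of expander pruning are all correct.

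The genuine gap is in your mechanism for the $W^{O(1/\log\log W)}$ clusters-per-vertex bound. You claim \proccut should be designed so that ``the smaller piece $C'$ has a carefully bounded edge count relative to the larger piece $C''$, yielding a cluster-split tree of depth at most $O(\log\log W)$ with branching $(\log W)^{O(1)}$.'' This is not what the paper does, and it is not clear how it could work: each application of \proccut only guarantees $W(C')\le W(C)/2$ (Condition~\ref{cut condition 1: half the weight}), so the split tree has depth $\Theta(\log W)$, not $O(\log\log W)$. A vertex lying in the overlap layer $L_i$ is duplicated into both $C'$ and $C''$, so tree depth alone would give a useless $W^{O(1)}$ bound. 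Nothing in your description of \proccut prevents the same vertex from landing in the overlap layer at every level.

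The paper's actual mechanism is a class-based budget argument that has no analogue in your proposal. Regular vertices are stratified into classes $S_0,\ldots,S_r$ with $r=2\lceil\log W/\log\log W\rceil$, where $v\in S_j$ when $v$ currently lies in roughly $2^j$ clusters. The key is that the eligible-layer criterion in \proccut includes Condition~\ref{cut condition 3: levels}, which requires $W(L_i\cap S_{\ge j})\le \bigl(\sum_{i'<i}W(L_{i'}\cap S_{\ge j})\bigr)/(64\log^2 W)$ simultaneously for \emph{every} class $j$. This class-aware stopping rule is what allows \Cref{lem: change in budget after proccut} to control the budget $\beta_{\ge j}$ of high-class vertices, and an induction on $j$ then shows $W(S_{\ge j})\le W/\log^j W$ throughout the algorithm (\Cref{thm: bound number of copies}), forcing $S_r=\emptyset$. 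A standard LR-style layer choice based only on edge or weight ratios between consecutive layers---which is essentially what ``bounded edge count relative to the larger piece'' describes---gives only the average bound you yourself flag as insufficient.
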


The following sections of this paper focus on the proof of Theorem \ref{thm: main dynamic NC algorithm}. Notice however that the running time of the algorithm from \Cref{thm: main dynamic NC algorithm} is somewhat slow, and in particular it has a high dependence on the distance threshold $D$, that we would like to avoid. 
We now show an algorithm that exploits the recursive composability of instances of the \recdynNC problem in order to do so.

\subsection{A Faster Algorithm  for \recdynnc and Proofs of Theorems \ref{thm: main} and \ref{thm: NC}}

We prove the following theorem, that provides a faster algorithm for the \recdynNC problem, and follows from \Cref{thm: main dynamic NC algorithm}.
\begin{theorem}\label{thm: main final dynamic NC algorithm}
	There is a deterministic algorithm for the \recdynNC problem, that,  given a valid input structure $\iset=\left(H=(V,U,E),\set{\ell(e)}_{e\in E},D \right )$ undergoing a sequence of edge-deletion and isolated vertex-deletion operations, with dynamic degree bound $2$, and  a parameter $c/\log \log W<\eps<1$, for some large enough constant $c$, where $W$ is the number of regular vertices in $H$ at the beginning of the algorithm, achieves approximation factor $\alpha=(\log W)^{2^{O(1/\eps)}}$,  with total update time $O\left (W^{1+O(\eps)}\cdot (\log W)^{O(1/\eps^2)}\right )$. Moreover, the algorithm ensures that for every regular vertex $v\in V$, the total number of clusters in the weak neighborhood cover $\cset$ that the algorithm maintains, to which vertex $v$ ever belongs over the course of the algorithm, is bounded by $W^{O(1/\log\log W)}$.
\end{theorem}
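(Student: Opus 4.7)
The plan is to obtain Theorem \ref{thm: main final dynamic NC algorithm} by recursively composing the algorithm of Theorem \ref{thm: main dynamic NC algorithm} with itself $k = O(1/\eps)$ times, across geometrically-spaced distance scales. Concretely, I would partition the distance range into scales $D_1 < D_2 < \cdots < D_k$ with $D_i = \Theta(D^{i/k})$, chosen so that each ratio $\Delta := D_{i+1}/D_i$ is at most $W^{O(\eps)}$; by standard rescaling (and by truncating edges of length $>D$), we may assume $D \leq \poly(W)$ so that $k = O(1/\eps)$ suffices. For $i=1,\ldots,k$, the algorithm maintains a strong $(D_i,\alpha_i \cdot D_i)$-neighborhood cover $\cset_i$ of the regular vertices of the input graph $H$, with $\cset_k$ being the final output. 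At the base level $i=1$, we run the algorithm of Theorem \ref{thm: main dynamic NC algorithm} directly on the input valid structure $\iset$ with target threshold $D_1 = W^{O(\eps)}$; since the input has dynamic degree bound $\mu = 2$, and $D_1^3 = W^{O(\eps)}$, the running time is $O\bigl(W^{1+O(\eps)}\cdot (\log W)^{O(1/\eps^2)}\bigr)$, as desired.

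Given the algorithm at level $i$ maintaining $\cset_i$, I would construct an auxiliary valid input structure $\iset_{i+1} = (H_{i+1},\set{\ell'(e)},D_{i+1}/D_i)$ as follows. Start with $H$, delete all edges of length greater than $D_{i+1}$, and round up each remaining edge length to the nearest integer multiple of $D_i$; then scale all lengths down by a factor of $D_i$. Next, for every cluster $C \in \cset_i$ introduce a fresh supernode $u(C)$ and connect it to every regular vertex of $V(C)\cap V$ by a length-$1$ edge (representing length $D_i$). Since $\cset_i$ is a strong $(D_i,\alpha_i\cdot D_i)$-neighborhood cover, a standard argument shows that, for any two regular vertices at $H$-distance in the range $(D_i, D_{i+1}]$, their $H_{i+1}$-distance (after rescaling) is preserved up to a multiplicative factor of $O(\alpha_i)$. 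I would then invoke Theorem \ref{thm: main dynamic NC algorithm} on $\iset_{i+1}$ with target threshold $\Delta = D_{i+1}/D_i = W^{O(\eps)}$. The crucial point is that the dynamic degree bound $\mu_{i+1}$ of $H_{i+1}$ is $W^{O(1/\log\log W)}$: each regular vertex $v$ only receives edges from the supernodes $u(C)$ with $v\in C$, and Theorem \ref{thm: main dynamic NC algorithm} guarantees that a vertex belongs to at most $W^{O(1/\log\log W)}$ clusters of $\cset_i$ over the whole algorithm. Updates to $\cset_i$ are translated to valid updates on $H_{i+1}$: edge and vertex deletions from a cluster translate to edge and isolated-vertex deletions in $H_{i+1}$; and a $\csplit(S,S')$ operation on $\cset_i$, which introduces a new cluster $C'\subseteq C$, is implemented by a supernode-splitting update on $u(C)$ in $H_{i+1}$, moving the edges incident to regular vertices of $C'$ onto a new supernode $u(C')$. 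Queries $\spquery(C,v,v')$ for $C\in\cset_{i+1}$ are answered by retrieving a path in $H_{i+1}$ via the level-$(i+1)$ algorithm and then ``unfolding'' each hop through a supernode $u(C_j)\in V(H_{i+1})\setminus V(H)$ by a recursive $\spquery(C_j,v_{j-1},v_j)$ call at level $i$; the resulting $H$-path has length $O(\alpha_i \cdot D_{i+1})$ and is produced in time $O(|E(P)|)$ by charging the recursion to the output edges.

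The parameter analysis then works out: the per-level running time is $O(W^{1+O(\eps)}\cdot \mu_{i+1}^{2+O(\eps)}\cdot \Delta^3\cdot (\log(W\mu_{i+1}))^{O(1/\eps^2)}) = O(W^{1+O(\eps)}\cdot (\log W)^{O(1/\eps^2)})$ since $\mu_{i+1}=W^{o(1)}$ and $\Delta^3 = W^{O(\eps)}$; summing over $k=O(1/\eps)$ levels absorbs the $1/\eps$ factor into the polylog term. Each level multiplies the approximation by $(\log(W\mu_{i+1}))^{2^{O(1/\eps)}} = (\log W)^{2^{O(1/\eps)}}$, so after $O(1/\eps)$ levels the total approximation is $(\log W)^{2^{O(1/\eps)}\cdot O(1/\eps)} = (\log W)^{2^{O(1/\eps)}}$. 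The cluster-membership bound $W^{O(1/\log\log W)}$ holds at the top level $\cset_k$ directly by Theorem \ref{thm: main dynamic NC algorithm}. The main obstacle I anticipate is handling the boundary between scales cleanly: verifying that the rounding/scaling construction of $H_{i+1}$ really preserves the relevant distances up to an $O(\alpha_i)$ factor, and that rounding-induced errors compose appropriately without blowing up the approximation across levels. A secondary subtlety is ensuring that the supernode-splitting sequence imposed on $H_{i+1}$ by updates to $\cset_i$ remains a legal valid-update sequence with the claimed dynamic degree bound at \emph{every} intermediate time (not just cumulatively), which requires matching the internal ordering $\oset$ of $H_{i+1}$ to the cluster evolution of $\cset_i$.
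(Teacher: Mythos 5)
Your proposal is correct and follows essentially the same approach as the paper's proof via \Cref{lem: inductive dynamic NC algorithm}: partition the distance range into $O(1/\eps)$ geometrically-spaced scales, contract each scale's clusters into supernodes to form the next-scale auxiliary graph, invoke \Cref{thm: main dynamic NC algorithm} at each scale with a small target threshold $W^{O(\eps)}$, and use the cluster-membership bound from \Cref{thm: main dynamic NC algorithm} to keep the dynamic degree bound of the auxiliary graph polynomially small so that the per-scale update time absorbs the $\mu^{2+O(\eps)}$ factor. (Two small imprecisions worth fixing: the cover $\cset_i$ you maintain at intermediate scales is only \emph{weak}, since unfolding a supernode-hop via $\spquery$ at the lower level may exit the projected cluster; and supernode-splitting \emph{copies} edges onto the new supernode rather than moving them, which is required for $u(C)$ to continue representing the unchanged cluster $C$.)
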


We note that, since the dynamic degree bound is $2$, the degree of every regular vertex in the initial graph $H$ is at most $2$. Notice that supernode-splitting operations are not allowed in the above theorem, and so the degree of every regular vertex remains at most $2$ over the course of the algorithm. Therefore, the number of edges in $H$ is bounded by $2W$ throughout the algorithm.

We start by showing that Theorems \ref{thm: main} and \ref{thm: NC} immediately follow from \Cref{thm: main final dynamic NC algorithm}.

\subsubsection{Completing the proof of Theorem \ref{thm: NC}}
\label{subsubsection: completing the proof of main thm}

We assume that we are given an $m$-edge graph $G$ with integral length $\ell(e)\geq 1$ on its edges, that undergoes an online sequence of edge deletions, together with parameters $c'/\log \log m<\eps<1$ for some large enough constant $c'$, and $D\geq 1$.
Note that we can assume that $G$ is a connected graph, as otherwise we can run the algorithm on each of its connected components separately, so $|V(G)|\leq m$ holds. 
We construct a bipartite graph $H=(V,U,E)$ as follows. We start with the graph $G$, and we let $U=V(G)$ be the set of the supernodes of $H$. We then subdivide every edge $e\in E(G)$ with a new regular vertex $v_e$, and we set the lengths of both new edges to be $\ell(e)$. The set $V$ of regular vertices of $H$ is the union of two subsets: a set $\set{v_e\mid e\in E(G)}$ of vertices corresponding to edges of $G$, and another subset $S=\set{x'\mid x\in V(G)}$ of vertices corresponding to vertices of $G$. Every vertex $x'\in S$ connects to the corresponding vertex $x\in V(G)$ with a length-$1$ edge.  Once we delete all edges of length greater than $3D$, we obtain a valid input structure
$\iset=\left(H=(V,U,E),\set{\ell(e)}_{e\in E}, 3D\right )$. Given an online sequence $\Sigma$ of edge deletions for graph $G$, we can produce a corresponding online sequence $\Sigma'$ of edge deletions and isolated vertex deletions for graph $H$, as follows: whenever an edge $e\in E(G)$ is deleted from $G$, we delete its two corresponding edges (that are incident to $v_e$) from graph 
$H$, and we then delete vertex $v_e$ that becomes an isolated vertex. We have therefore obtained an instance $\iset$ of the \recdynNC problem, on valid input structure $\iset$ that undergoes a sequence of edge-deletion and isolated vertex-deletion operations. Since the degree of every regular vertex in $H$ is at most $2$, it is easy to see that $H$ has dynamic degree bound $2$. 
We let $W=|V|\leq 2m$ be the number of regular vertices in $H$. By letting $c'$ be a large enough constant, we can assume that $c/\log \log W<\eps<1$, where $c$ is the constant from \Cref{thm: main final dynamic NC algorithm}. We run the algorithm for the \recdynnc problem from \Cref{thm: main final dynamic NC algorithm} on input $\iset$ undergoing the sequence $\Sigma'$ of update operations. Let $\cset$ be the neighborhood cover that the algorithm maintains. We then define a neighborhood cover $\cset'$ for graph $G$ as follows. For every cluster $C\in \cset$, there is a cluster $C'\in \cset'$, which is a subgraph of $G$ induced by vertex set $\set{x\in V(G) \mid x'\in V(C)}$. Recall that cluster set $\cset$ 
	is initially defined to be $\cset=\set{H}$, so initially, $\cset'=\set{G}$ holds. After that, the only changes to vertex sets in $\uset=\set{U(C)\mid C\in \cset}$ are the allowed changes, that include:
	
	\begin{itemize}
		\item $\delvertex(R,x)$: given a vertex set $R\in \uset$, and a vertex $x\in R$, delete $x$ from $R$; if $x=y'$ for some vertex $y\in V(G)$, then we also delete $y$ from the cluster $C\in \cset'$ representing $H[R]$;
		 
		\item $\addsupernode(R,u)$: since we do not allow supernode splitting operations, no such updates will be performed; and
		\item $\csplit(\tilde R,R)$: given a vertex set $R\in \uset$, and a subset $\tilde R\subseteq R$ of its vertices, add $\tilde R$ to $\uset$. In this case, we create a new cluster in $\cset'$ that is a subgraph of $G$, induced by the set $\set{x\in V(G)\mid x'\in R'}$ of vertices.
	\end{itemize}
	
Additionally, when an edge $e$ is deleted from $G$, we need to delete it from every cluster of $\cset$ that contains it.	
The time that is needed to make all  these updates to cluster set $\cset'$ is subsumed by the time required to maintain cluster set $\cset$.

Consider now some vertex $x\in V(G)$. Recall that the algorithm for the \recdynnc problem maintains a cluster $C\in \cset$, with $B_{H}(x',3D)\subseteq V(C)$. It is easy to verify that $B_{H}(x',3D)$ contains, for every vertex $y\in B_G(x,D)$, the corresponding vertex $y'\in V$. We then set $\coveringcluster(x)=C'$, where $C'\in \cset$ is the cluster corresponding to $C$.

Lastly, suppose we are given a query 
	$\spquery(C',x,y)$, where $C'\in \cset'$, and $x,y\in V(C)$. We then run query $\spquery(C,x',y')$ in the data structure maintained by the algorithm for the \recdynnc problem, obtaining a path $P$ in graph $H$ that connects $x'$ to $y'$ and has length at most $3\alpha D$, for $\alpha=(\log m)^{2^{O(1/\eps)}}$. By deleting the first and the last vertex on $P$, and by suppressing some vertices, we obtain  a path $P'$ in graph $G$, connecting $x$ to $y$, whose length is at most $D\cdot (\log m)^{2^{O(1/\eps)}}$. The running time of this algorithm is $O(|E(P)|)$. The fact that we can support queries $\spquery(C',x,y)$, together with the above discussion, proves that the cluster set $\cset'$ that we maintain is a weak $(D,3\alpha D)$-neighborhood cover for graph $G$.
	Moreover, from \Cref{thm: main final dynamic NC algorithm}, the total update time of the algorithm is $O\left (m^{1+O(\eps)}\cdot (\log m)^{O(1/\eps^2)}\right )$, and we are also guaranteed that for every vertex $v\in V(G)$, the total number of clusters in $\cset'$ to which vertex $v$ ever belonged is at most $m^{O(1/\log\log m)}$.

\subsubsection{Completing the proof of \Cref{thm: main}}
\label{subsubsec: first thm proof}
By using standard scaling and rounding of edge lengths, at the cost of losing a factor $2$ in the approximation ratio, we can assume  that all edge lengths are integers between $1$ and $L$. 
We define $\lambda=\ceil{\log{L}}$ distance thresholds $D_1,\ldots,D_{\lambda}$, where for $1\leq i\leq \lambda$, $D_i=2^i$. 

For all $1\leq i\leq \lambda$, we apply the algorithm from \Cref{thm: NC} to graph $G$, for distance threshold $D_i$, the input parameter $\eps$, and approximation factor $\alpha=O\left ((\log m)^{2^{O(1/\eps)}}\right )$. We denote the neighborhood cover that the algorithm maintains by $\cset_i$ and the corresponding data structure by $\dset_i$. Note that the total update time of this algorithm, over all distance scales $D_i$ is $O\left (m^{1+O(\eps)}\cdot (\log m)^{O(1/\eps^2)}\cdot \log L\right )$, as required.
We now show algorithms for processing $\distquery$ and $\shortestpathquery$ queries.

Given a query $\distquery(x,y)$, for a pair $x,y\in V(G)$ of vertices, we perform a binary search on an integer $1\leq i\leq \lambda$, such that, if $C=\coveringcluster(x)$ is a cluster in $\cset_i$ that covers $x$, then $y\in V(C)$, but, if $C'=\coveringcluster(x)$ is the cluster of $\cset_{i-1}$ covering $x$, then $y\not \in C'$. In order to do so, we consider the current guess $i'$ on the integer $i$, find the cluster $C=\coveringcluster(x)$ in data structure $\dset_{i'}$ in time $O(1)$, and check whether $y\in V(C)$ in time $O(\log m)$. If this is the case, then in our next guess we increase the index $i'$; otherwise we decrease it. Once the desired index $i$ is found, we return the value $2D_i$. We claim that $D_i/2\leq \dist_G(x,y)\leq \alpha D_i$. Indeed, from the definition of weak $(D,\alpha D)$-neighborhood cover, since $x$ and $y$ lie in the same cluster of $\cset_i$, $\dist_G(x,y)\leq \alpha D_i$. Moreover, since $y$ does not lie in the cluster $C'=\coveringcluster(x)$ in $\cset_{i-1}$, we get that $y\not\in B_G(x,D_{i-1})$, and so $\dist_G(x,y)>D_{i-1}= D_i/2$. Since we perform a binary search over $\lambda=\ceil{\log L}$ values $i$, the running time of the algorithm for responding to the query is $O(\log m\log\log L)$.

Consider now a query $\shortestpathquery(x,y)$ for a pair $x,y\in V(G)$ of vertices. We start by computing an index $i$ exactly as in $\distquery(x,y)$, so that $D_i/2\leq \dist_G(x,y)\leq \alpha D_i$, together with the cluster  $C=\coveringcluster(x)$ in $\cset_i$, so that $y\in C$ holds. This can be done in time $O(\log m\log\log L)$ as before. We then run query $\spquery(C,x,y)$ in data structure $\dset_i$, to compute a path $P$ in graph $G$, connecting $x$ to $y$, of length at most $\alpha\cdot D_i\leq 2\alpha \dist_G(x,y)\leq (\log m)^{2^{O(1/\eps)}}\cdot \dist_G(x,y)$, in time $O(|E(P)|)$.

Therefore, in order to prove Theorems \ref{thm: main} and \ref{thm: NC}, it is enough to prove \Cref{thm: main final dynamic NC algorithm}.

\subsubsection{Proof of \Cref{thm: main final dynamic NC algorithm}}

In this subsection we prove \Cref{thm: main final dynamic NC algorithm} using the algorithm from \Cref{thm: main dynamic NC algorithm}.
We can assume that graph $H$  has no isolated vertices, as all such vertices can be ignored (e.g. each such vertex can be placed in a separate cluster).
	For simplicity, we denote by $n$ and $m$ the number of vertices and edges in the initial graph $H$. Notice that $m\leq 2W$, and $n\leq 3W$ must hold.

	We start by showing that, at the cost of losing a factor $2$ in the approximation ratio, we can assume that $D=2n$, and that all edge lengths are integers between $1$ and $D$. 
	We show this using standard arguments. Recall that, since $\iset$ is a valid input structure, all edges in $H$ have lengths at most $D$. Next, we set the length of each  edge $e$ to be $\ell'(e)=\ceil{n\ell(e)/D}$.
	
	For every pair $x,y$ of vertices, let $\dist'(x,y)$ denote the distance
	between $x$ and $y$ with respect to the new edge length values.
	Notice that for every pair $x,y$ of vertices, $\frac{n}{D}\cdot \dist(x,y)\leq\dist'(x,y)\leq\frac{n}{D}\cdot \dist(x,y)+n$,
	since the shortest $x$-$y$ path contains at most $n$ edges. 
	
Therefore, if $\dist(x,y)\leq D$, then $\dist'(x,y)\leq 2n$. Moreover, if $P$ is an $x$-$y$ path with $\ell'(P)\leq \alpha n$, then $\ell(P)\leq \alpha D$ must hold.
It is now enough to solve the problem on graph $H$ with the new edge weights $\ell'(e)$ for $e\in E(H)$, and distance bound $D'=2n$. 
Therefore, we assume from now on that $D=2n$, and that all edge lengths are integers between $1$ and $D$.

The main idea of the proof is to apply the algorithm from \Cref{thm: main dynamic NC algorithm} recursively, for smaller and smaller distance bounds. Specifically, we prove the following lemma by induction.
	
\begin{lemma}\label{lem: inductive dynamic NC algorithm}
	There is a universal constant $\tilde c$, and a deterministic algorithm for the \recdynNC problem, that,  given a valid input structure $\iset=\left(H=(V,U,E),\set{\ell(e)}_{e\in E(H)},D \right )$ undergoing a sequence of edge-deletion and isolated vertex-deletion operations, with dynamic degree bound $2$, and
  a parameter $0<\eps<1$,
such that $D \leq 6W^{\eps i}$ holds for some integer $i$, where $W$ is the number of regular vertices in $H$ at the beginning of the algorithm, achieves approximation factor $\alpha_i=(\log W)^{\tilde c i\cdot 2^{\tilde c/\eps}}$, and  has total updtate time at most $\left (\tilde c^i\cdot W^{1+\tilde c\eps}\cdot (\log W)^{\tilde c/\eps^2}\right )$. Moreover, the algorithm ensures that for every regular vertex $v\in V$, the total number of clusters in the weak neighborhood cover $\cset$ that the algorithm maintains, to which vertex $v$ ever belongs over the course of the algorithm, is bounded by $W^{O(1/\log\log W)}$.
\end{lemma}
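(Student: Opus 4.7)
The plan is to prove this lemma by induction on $i$. For the base case $i = 1$ we have $D \leq 6W^{\eps}$ and $\mu = 2$, so we apply Theorem \ref{thm: main dynamic NC algorithm} directly to $\iset$ with the given parameter $\eps$. The resulting total update time is $O\bigl(W^{1+O(\eps)} \cdot 2^{2+O(\eps)} \cdot D^3 \cdot (\log W)^{O(1/\eps^2)}\bigr) \leq O\bigl(W^{1+O(\eps)} (\log W)^{O(1/\eps^2)}\bigr)$, the approximation factor is $(\log W)^{2^{O(1/\eps)}}$, and the number of clusters to which any regular vertex ever belongs is $W^{O(1/\log\log W)}$; choosing $\tilde c$ large enough absorbs all hidden constants, matching the statement for $i=1$.

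For the inductive step, suppose the lemma holds for $i-1$ and consider $\iset$ with $D\leq 6W^{\eps i}$. Set $D' = \lceil D/W^{\eps}\rceil$, so $D' \leq 7 W^{\eps(i-1)}$, and invoke the inductive algorithm on the valid input structure $\iset' = (H, \{\ell(e)\}, D')$ to maintain a weak $(D', \alpha_{i-1}D')$-neighborhood cover $\cset'$ of $H$ in which each regular vertex belongs to at most $W^{O(1/\log\log W)}$ clusters. From $H$ and $\cset'$ I build a contracted graph $H^*$: discard edges of length above $D$, round every remaining edge length up to the next multiple of $D'$ (at most doubling any path length $\geq D'$), and for every cluster $C' \in \cset'$ introduce a supernode $u(C')$ with edges $(v, u(C'))$ of length $D'$ for each regular $v \in V(C')$. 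Rescaling lengths by $1/D'$ makes $H^*$ a valid input structure with integral lengths and distance threshold $\tilde D = O(D/D') = O(W^{\eps})$. Crucially, the allowed changes to $\cset'$ translate cleanly into valid update operations on $H^*$: a $\delvertex(S,v)$ causes the deletion of the edge $(v,u(C'))$ in $H^*$; a $\csplit(S,S')$ creating a child cluster $\hat C \subseteq C'$ triggers a supernode-splitting operation on $u(C')$ in $H^*$ whose edge set is $\{(v,u(C')) : v \in V(\hat C) \cap V\}$; original-edge deletions in $H$ are mirrored in $H^*$; and the inductive hypothesis gives dynamic degree bound $\mu^* \leq 2 + W^{O(1/\log\log W)} \leq W^{O(\eps)}$ for $H^*$, where the last inequality uses the implicit assumption $\eps = \Omega(1/\log\log W)$ that is inherited from the outer call.

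I then apply Theorem \ref{thm: main dynamic NC algorithm} to $H^*$ with the rescaled threshold $\tilde D$, parameter $\eps$, and dynamic degree bound $\mu^*$, obtaining a strong $(\tilde D, \alpha_1 \tilde D)$-neighborhood cover $\tilde\cset$ of $H^*$. Two distortion claims make this useful: for any regular $v,v' \in V$, (a) any $H$-path of length $\leq D$ can be chopped into $\leq \lceil D/D'\rceil$ segments of length $\leq D'$, each of which lies inside some cluster of $\cset'$ and can therefore be replaced in $H^*$ by a length-$2D'$ detour through the corresponding supernode, so $\dist_{H^*}(v,v') \leq 2\dist_H(v,v') + O(D')$; and (b) any use of a supernode edge $(v,u(C'))$ in an $H^*$-path corresponds, in $H$, to a detour of length at most $\alpha_{i-1}D'$ inside $C'$, so $\dist_H(v,v') \leq O(\alpha_{i-1})\cdot \dist_{H^*}(v,v')$. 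Defining $\cset$ by taking, for each $\tilde C \in \tilde\cset$, the subgraph of $H$ induced by the regular vertices of $\tilde C$, claim (a) gives the covering property and claim (b) gives the diameter bound, so $\cset$ is a weak $(D, \alpha_i D)$-neighborhood cover with $\alpha_i = O(\alpha_1 \alpha_{i-1}) = (\log W)^{\tilde c i \cdot 2^{\tilde c/\eps}}$ for $\tilde c$ sufficiently large. The $W^{O(1/\log\log W)}$ bound on clusters per vertex inherits directly from $\tilde\cset$ via Theorem \ref{thm: main dynamic NC algorithm}, the allowed updates on $\tilde\cset$ translate one-to-one to allowed updates on $\cset$, and $\spquery$ in $\cset$ is implemented by running $\spquery$ in $\tilde\cset$ and expanding each supernode hop $v\to u(C')\to v'$ into a concrete $v$-to-$v'$ path of length at most $\alpha_{i-1}D'$ inside $C'$ supplied by the recursive data structure.

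For the total update time, the inductive call contributes at most $\tilde c^{i-1}\cdot W^{1+\tilde c\eps}\cdot (\log W)^{\tilde c/\eps^2}$, and the base-case call on $H^*$ contributes $O\bigl(W^{1+O(\eps)} \cdot (\mu^*)^{2+O(\eps)} \cdot \tilde D^{3} \cdot (\log W)^{O(1/\eps^2)}\bigr) \leq O\bigl(W^{1+O(\eps)}(\log W)^{O(1/\eps^2)}\bigr)$ since $\mu^* \leq W^{O(\eps)}$ and $\tilde D \leq O(W^{\eps})$; adding these and taking $\tilde c$ large enough gives $\tilde c^i\cdot W^{1+\tilde c\eps}\cdot (\log W)^{\tilde c/\eps^2}$. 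The main obstacle is the bookkeeping in the middle step: verifying that every allowed change to $\cset'$ genuinely becomes a valid update on $H^*$ (in particular handling supernode-splitting correctly), and that the dynamic degree bound on $H^*$ remains $W^{O(\eps)}$ throughout, which in turn relies essentially on the sparseness guarantee built into Theorem \ref{thm: main dynamic NC algorithm} together with the standing assumption $\eps = \Omega(1/\log\log W)$.
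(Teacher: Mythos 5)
Your proposal follows essentially the same inductive strategy as the paper: delete ``long'' edges, recurse at scale $D'$, contract clusters of the recursive cover into supernodes to form a coarsened graph at scale roughly $W^\eps$, apply \Cref{thm: main dynamic NC algorithm} at the top of this two-level hierarchy, and translate back. That said, there are two genuine issues and one imprecision that a full proof must repair.

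First, you invoke the inductive hypothesis on $\iset' = (H, \{\ell(e)\}, D')$ and call it a ``valid input structure,'' but a valid input structure requires \emph{every} edge length to be at most the distance threshold. Since $H$ may contain edges of length between $D'$ and $D$, $\iset'$ is not valid and the inductive hypothesis does not apply to it. The fix is to first define $H'$ to be $H$ with all edges of length greater than $D'$ deleted, and recurse on $H'$ (mirroring each edge or isolated-vertex deletion of $H$ into $H'$ whenever it concerns a surviving edge or vertex). This is exactly what the paper does, and it is not merely cosmetic: only after removing long edges does the threshold-$D'$ cover actually encode the $D'$-neighborhood structure you then need when chopping a length-$\leq D$ path into $D'$-segments.

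Second, your argument silently requires $\eps = \Omega(1/\log\log W)$: you use it to bound $\mu^* \leq W^{O(\eps)}$, and \Cref{thm: main dynamic NC algorithm} itself only applies in that regime. But the lemma as stated allows any $\eps\in(0,1)$. You flag this as ``inherited from the outer call,'' but the lemma itself carries no such hypothesis, so the proof is incomplete. The paper closes this by observing that when $\eps$ is that small the stated approximation factor $\alpha_i = (\log W)^{\tilde c\,i\cdot 2^{\tilde c/\eps}}$ already exceeds $W^3$, so maintaining a decremental spanning forest and letting $\cset$ be the set of connected components is a trivially valid solution; only in the complementary regime does it run the induction. You should do the same. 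Finally, a smaller imprecision: when you pull $\cset$ back from $\tilde\cset$, the cluster corresponding to $\tilde C$ should be $H[V(\tilde C)\cap V(H)]$ --- that is, keep both the regular vertices \emph{and} the original supernodes of $H$ lying in $\tilde C$. Restricting to regular vertices only gives an edgeless induced subgraph (since $H$ is bipartite between $V$ and $U$), which is not what is intended and does not match the paper's construction.
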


Notice that the above lemma completes the proof of \Cref{thm: main final dynamic NC algorithm}, by setting $i=\ceil{1/\eps}$. The approximation factor achieved by the algorithm is at most $(\log W)^{2\tilde c\cdot 2^{\tilde c/\eps}/\eps}=(\log W)^{2^{O(1/\eps)}}$, and its running time is  $O\left (W^{1+O(\eps)}\cdot (\log W)^{O(1/\eps^2)}\right )$.
The proof of the lemma is somewhat technical and is deferred to Section \ref{sec:recursive composition} of Appendix. We provide here a short intuitive overview of the proof. The proof is by induction on $i$, where the base case, with $i=1$, follows immediately by invoking the algorithm from \Cref{thm: main dynamic NC algorithm}. In order to prove the lemma for some integer $i> 1$, we select a threshold $D'\approx 6W^{\eps(i-1)}$; we say that an edge $e\in E$ is long, if its length is greater than $D'$, and it is short otherwise. Let $H'$ be the graph obtained from $H$ after all long edges are deleted from it. We can then use the induction hypothesis in order to maintain a data structure $\dset(H')$, solving the \recdynnc problem on graph $H'$, with distance bound $\Theta(D')$, and approximation factor $\alpha_{i-1}=(\log W)^{\tilde c (i-1)\cdot 2^{\tilde c/\eps}}$. Let $\cset'$ be the collection of clusters maintained by the algorithm. Next, we define another graph $\hat H$, as follows. Initially, we start with $\hat H=H$, except that we increase all edge lengths to become integral multiples of $D'$; this does not distort the lengths of long edges by much, but it may increase lengths of short edges significantly. Additionally, for every cluster $C\in \cset'$, we add a supernode $u(C)$ to graph $\hat H$, that connects to every regular vertex $v\in V(C)$ with an edge of length $D'$. Let $\hat H^0$ be the initial graph $\hat H$, before any updates to graph $H$. We show that we can use the data structure $\dset(H')$, and the corresponding dynamic neighborhood cover $\cset'$, in order to produce a sequence of valid update operations for graph $\hat H$, so that, if the resulting sequence of update operations is applied to the initial graph $\hat H^0$, then the resulting dynamic graph is precisely $\hat H$. We also show that distances are approximately preserved in $\hat H$: that is, if $v,v'$ are two regular vertices in graph $H$ that lie at distance at most $D$ in $H$, then the distance between $v$ and $v'$ in $\hat H$ is also bounded by $O(D)$. Additionally, we show an algorithm that, given a path $P$ in graph $\hat H$, connecting any pair $v,v'$ of its regular vertices, produces a path $P'$ in graph $H$, connecting the same pair of vertices, such that the length of $P'$ is close to the length of $P$. Lastly, we scale all edge lengths in graph $\hat H$ down by factor $D'$, and use the algorithm from \Cref{thm: main dynamic NC algorithm} in order to maintain a neighborhood cover $\hat \cset'$ in the resulting graph $\hat H'$, with distance threshold $\hat D'=D/D'\leq W^{O(\eps)}$. We exploit this data structure (that we denote by $\dset(\hat H')$), and the fact that $V(H)\subseteq V(\hat H')$, in order to maintain neighborhood cover $\cset$ for graph $H$, as follows. Consider the collection of vertex sets $\hat \uset'=\set{V(C)\mid C\in \hat\cset'}$ that is maintained by data structure $\dset(\hat H')$.  For every vertex set $S\in \hat \uset'$, we define a set $S'=S\cap V(H)$. We then let $\cset=\set{H[S']\mid S\in \hat \uset'}$ be the neighborhood cover for graph $H$. In other words, we maintain the same set of clusters as $\hat \cset'$, except that we ignore vertices that do not lie in the graph $H$. This completes the high-level description of the proof of \Cref{lem: inductive dynamic NC algorithm}. 
A formal proof appears in  Section \ref{sec:recursive composition} of Appendix.

The remainder of this paper focuses on the proof of \Cref{thm: main dynamic NC algorithm}. In subsequent sections, we gradually introduce several new technical tools, which are used in order to define faster and faster algorithms for the \recdynnc problem, building up to the proof of  \Cref{thm: main dynamic NC algorithm}.

\section{First Set of Tools: Procedure \proccut, Initial Neighborhood-Cover Decomposition, and Reduction to \maintaincluster Problem}
\label{sec: proccut}

In this section we introduce one of the central tools that we use in order to maintain neighborhood cover --
Procedure \proccut. Intuitively,
suppose we are given a valid input structure $\iset=\left(H,\set{\ell(e)}_{e\in E(H)},D\right )$ that undergoes a sequence of valid update operations with dynamic degree bound $\mu$, and our goal is to solve the \recdynnc problem on it, for some approximation factor $\alpha$.
We set vertex weights in the usual way: the weight of every regular vertex is $1$, and the weight of each supernode is $0$. We assume that we are given some parameter $W\geq W^0(C)$, where $W^0(C)$ is the weight of all vertices in the initial graph $C$. 
We will maintain a neighborhood cover $\cset$ of $H$, starting from $\cset=\set{H}$.
Given some subgraph $C\subseteq H$ (that may lie in the neighborhood cover $\cset$), whose diameter is greater than $D'=\Theta(D\log^4W)$, the procedure produces two new clusters $C',C''\subseteq C$. It outputs  cluster $C'$ explicitly, by listing all its vertices and edges, while cluster $C''$ is output implicitly, by listing all vertices and edges of $C\setminus C''$. Cluster $C'$ is then added to the neighborhood cover $\cset$, and cluster $C\in \cset$ is updated so that $C=C''$ holds. We say that cluster $C'$ was \emph{split off of $C$}. In order to ensure that $\cset$ remains a valid neighborhood cover, we cannot guarantee that the two clusters $C'$ and $C''$ are disjoint, but we will ensure that they have a small overlap. At the beginning of our algorithm for the \recdynNC problem, we will compute an initial neighborhood cover $\cset$ via Algorithm \computenc, that is described later in this section, which iteratively employs Procedure \proccut on the input graph $H$. As the algorithm progresses and edges are deleted from $H$, the diameters of some clusters $C\in \cset$ may become large. Intuitively, whenever that happens, we would like to further decompose the cluster $C$ by employing Procedure \proccut again. In practice, we will define another problem, \maintaincluster, whose goal is to maintain the cluster $C\in \cset$ and to support queries $\spquery(C,v,v')$: given two {\bf regular} vertices $v,v'\in V(C)$, return a path $P$ in $C$, of length at most $\alpha\cdot D$ connecting $v$ to $v'$, in time $O(|E(P)|)$. The algorithm for the \maintaincluster problem on cluster $C$ may, at any time, raise a flag $F_C$ to indicate that it can no longer support $\spquery$ queries because the diameter of $C$ became too large. When flag $F_C$ is raised, the algorithm for the \maintaincluster problem needs to supply two vertices $x,y\in V(C)$ with $\dist_C(x,y)>\Omega(D\log^4 W)$. We will then employ Procedure \proccut on graph $C$, possibly multiple times, in order to update the cluster $C$, so that its diameter falls back below $\alpha D$. As part of this process, new clusters will be added to the neighborhood cover $\cset$.
As mentioned already, the clusters in $\cset$ may not be disjoint. However, we will ensure that every regular vertex lies in only a small number of such clusters. It is this requirement that makes the description and the analysis of Procedure \proccut somewhat challenging.



The remainder of this section is organized as follows. We start with some basic definitions and notation in \Cref{subsec: def and notation}. 
We describe Procedure \proccut and provide its analysis in \Cref{subsec: proccut definition}. Then we describe and analyze Procedure \initnc for computing an initial Neighborhood Cover, in \Cref{subsec: initial NC}. In \Cref{subsec: algorithmic framework for NC}, we provide a general algorithmic framework for solving the \recdynNC problem that exploits both Procedure \initnc and Procedure \proccut. The framework assumes that we are given an oracle that flags clusters $C$ whose diameter become too large, and provides witnesses for this, in the form of a pair $x,y$ of vertices of $C$ whose distance in $C$ is large. In \Cref{subsec: maintain cluster problem} we define a new problem, called \maintaincluster. An algorithm for this problem will be used in order to implement the oracle in the above-mentioned algorithmic framework, and will also be responsible for supporting $\spquery$ queries within the clusters. We also state our main result for the \maintaincluster problem -- namely, a fast dynamic algorithm for it, and complete the proof of
\Cref{thm: main dynamic NC algorithm} using it in \Cref{subsec: maintain cluster problem}. The remainder of the paper then focuses on desiginig a fast algorithm for the \maintaincluster problem.
As a first step towards this goal, in \Cref{subsec: alg slow and simple} we provide a slow and simple algorithm for the \maintaincluster problem, called \algslow, that we will use as a recursion base.

\subsection{Basic Definition, Parameters and Notation}
\label{subsec: def and notation}

Throughout this section, we assume that we are given a valid input structure $\iset=\left(H,\set{\ell(e)}_{e\in E(H)},D\right )$, with $H=(V,U,E)$, that undergoes a sequence $\Sigma=(\sigma_1,\sigma_2,\ldots)$ of valid update operations with dynamic degree bound $\mu$. 
We define vertex weights in the usual way: the weight of every regular vertex is $1$, and the weight of every supernode is $0$. We denote by $W$ the total weight of all vertices at the beginning of the algorithm. 
 We assume that we are given a target approximation factor $\alpha\geq \Omega(\log^4 W)$. Our goal is to solve the \recdynnc problem on $\iset$ with approximation factor $\alpha$. Therefore, we will maintain a collection $\cset$ of vertex-induced subgraphs of $H$, that we call clusters. At the beginning of the algorithm, we set $\cset=\set{H^0}$, where $H^0$ is the initial graph $H$ (but we will may make some additional updates to this initial set of clusters before processing the first update $\sigma_1\in\Sigma$). 

We use the following parameters: $r=2\ceil{\frac{\log W}{\log\log W}}$, and $\gamma=2^r=W^{O(1/\log\log W)}$. As mentioned above, we aim to construct and maintain a strong $(D,\alpha D)$-neighborhood cover $\cset$. We will ensure that for every regular vertex $v\in V(H)$, the number of clusters in $\cset$ to which $v$ ever belongs  is at most $\gamma$. 

Given a regular vertex $v\in V(H)$, we denote by $n_v$ the number of clusters in the current cluster set $\cset$ containing $v$.
In order to bound the total number of clusters to which a regular vertex belongs, we use the notion of \emph{vertex budgets} that we define next.

\begin{definition}[Vertex Budgets]
	Let $x\in V(H)$ be a vertex, and let $C\in \cset$ be a cluster that contains $x$. Then the \emph{budget of $x$ for cluster $C$} is $\beta_C(x)=\left (1+\frac{\log W(C)}{\log^2W}\right )\cdot w(x)$. The total budget of $x$ is $\beta(x)=\sum_{\stackrel{C\in \cset:}{x\in V(C)}}\beta_C(x)$. For a subset $X\subseteq V(H)$ of vertices, we denote by $\beta(X)=\sum_{x\in X}\beta(x)$.
\end{definition}

Note that for a supernode $u\in U$, $\beta(u)=0$, since $w(u)=0$, and for a regular vertex $v$, for every cluster $C$ containing $v$, $w(v)\leq \beta_C(v)\leq (1+1/\log W)w(v)$, and overall, $w(v)n_v\leq \beta(v)\leq (1+1/\log W)w(v)n_v$.

Assume now that we are given some collection $\cset$ of clusters of $H$, and let $C\in \cset$ be any such cluster. As graph $H$ undergoes valid update operations, cluster $C$ is updated accordingly. Specifically, consider any update opertion $\sigma_t\in \Sigma$. If $\sigma_t$ is a deletion of an edge $e$, and $e\in E(C)$, then we delete $e$ from $C$ as well. If $\sigma_t$ is a deletion of an isolated vertex $x$, and $x\in V(C)$, then we delete $x$ from $C$ as well. 
Assume now that $\sigma_t$ is a supernode splitting operation, applied to a supernode $u$, and a set $E'\subseteq \delta_H(u)$ of its incident edges. If $u\not\in C$, then we do not need to update the cluster $C$. Otherwise, we let $E'_C=E'\cap E(C)$. We then update the cluster $C$ by performing a supernode-splitting operation in it, for vertex $u$, with edge set $E'_C$. 
Note that such update operations may not introduce new regular vertices into $C$. We then immediately obtain the following observation.

\begin{observation}\label{obs: no change in budges due to update operations}
	If an edge-deletion or a supernode-splitting update operation is performed in graph $H$, then for every vertex $x\in V(H)$, budget $\beta(x)$ remains unchanged, and for every cluster $C$ containing $x$, $\beta_C(x)$ remains unchanged. Additionally, for every cluster $C$, $\sum_{x\in V(C)}\beta_C(x)$ remains unchanged, and the total budget of all vertices $\sum_{x\in V(H)}\beta(x)$ remains unchanged. If an isolated vertex deletion update operation is performed in graph $H$, with the deleted vertex $y$, then $\sum_{x\in V(H)}\beta(x)$ decreases by at least $\beta(y)$; for every cluster $C'\in \cset$ with $y\not \in C'$, the budget $\sum_{x\in V(C)}\beta_C(x)$ does not change, and for every cluster $C'\in \cset$ with $y\in C'$, $\sum_{x\in V(C)}\beta_C(x)$  decreases by at least $\beta_C(y)$.
\end{observation}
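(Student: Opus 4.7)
The proof is a direct calculation based on the formula $\beta_C(x)=(1+\log W(C)/\log^2 W)\cdot w(x)$, so the plan is to handle each of the three update types in turn and in each case track exactly how $W(C)$ and $w(x)$ change. Nothing deep is going on; the only content is the algebra in the isolated-vertex-deletion case.

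First I would dispatch the two easy cases. For an edge deletion, no vertex is added or removed from any cluster $C$, so neither $w(x)$ nor $W(C)=\sum_{x\in V(C)}w(x)$ changes, hence every $\beta_C(x)$ is literally the same expression as before; summing gives $\sum_{x\in V(C)}\beta_C(x)$ unchanged for each $C$, and then $\beta(x)$ and $\sum_{x\in V(H)}\beta(x)$ are unchanged. For a supernode-splitting operation on $u$ with edge set $E'$, a new supernode $u'$ is created and, in each cluster $C\ni u$ with $E'_C\neq\emptyset$, the vertex $u'$ (together with copies of the edges in $E'_C$) is inserted into $C$. Since $w(u')=0$ by the convention that supernodes have weight zero, $W(C)$ is unchanged in every cluster; therefore $\beta_C(x)$ is unchanged for every old vertex $x\in V(C)$, and $\beta_C(u')=0$. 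All the claimed sums are again preserved.

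The only substantive case is isolated vertex deletion of $y$. For any cluster $C'\in\cset$ with $y\notin C'$, neither $V(C')$ nor $W(C')$ changes, so the sum $\sum_{x\in V(C')}\beta_{C'}(x)$ is preserved. For a cluster $C\in\cset$ with $y\in C$, write $W=W(C)$ and $w=w(y)$ before the deletion, and set $a=1+\frac{\log W}{\log^2 W}$, $b=1+\frac{\log(W-w)}{\log^2 W}$. Before the deletion, $\sum_{x\in V(C)}\beta_C(x)=aW$; after the deletion the cluster weight becomes $W-w$ and the sum is $b(W-w)$. The decrease is
\begin{equation*}
aW-b(W-w)\;=\;(a-b)W+bw.
\end{equation*}
Since $\log$ is monotone we have $a\geq b$, and since $y$ is an actual vertex of $C$ we have $W\geq w$, so $(a-b)(W-w)\geq 0$, i.e.\ $(a-b)W+bw\geq aw=\beta_C(y)$. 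This gives the per-cluster decrease bound. Summing over clusters $C\ni y$ yields that $\sum_{x\in V(H)}\beta(x)$ decreases by at least $\sum_{C\ni y}\beta_C(y)=\beta(y)$, as claimed.

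I do not expect any real obstacle: the observation is essentially bookkeeping, and the only nontrivial line is the inequality $(a-b)W+bw\geq aw$, which is just $(a-b)(W-w)\geq 0$. The supernode-splitting case deserves a moment of care to note that the newly inserted supernode contributes zero weight, which is precisely why budgets are preserved; this is the one place where the choice $w(u)=0$ for supernodes is used in an essential way.
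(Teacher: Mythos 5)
Your proof is correct and is exactly the direct bookkeeping argument the paper leaves implicit (the observation is stated as immediate, with no proof supplied). One small caveat: when you shadow the global parameter $W$ (the initial total weight of $H$) with $W(C)$ and write $a=1+\frac{\log W}{\log^2 W}$, the $\log^2 W$ in the denominator must still refer to the fixed global $W$, not to $W(C)$; with that understood, the reduction to $(a-b)(W-w)\geq 0$ is precisely what is needed, and the supernode-splitting case correctly hinges on the convention $w(u')=0$.
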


Recall that $r=2\ceil{\frac{\log W}{\log\log W}}$.
The regular vertices in $V$ are partitioned into classes $S_0,\ldots,S_r$, as follows.
\begin{definition}[Vertex Classes]
	For a regular vertex $v\in V$ and an integer $0\leq j<r$, we say that $v$ belongs to class $S_j$, iff $2^j\leq n_v<2^{j+1}$. If $n_v\geq 2^r$, then we say that $v$ belongs to class $S_r$. For all $1\leq j\leq r$, we denote $S_{\geq j}=S_j\cup\cdots\cup S_r$.
\end{definition}
	


\subsection{Procedure \proccut}
\label{subsec: proccut definition}


The input to Procedure \proccut is a cluster $C\in \cset$, some vertex $x\in V(C)$, and a distance threshold $D$ (which is identical to the input distance parameter $D$). The procedure runs Dijkstra's algorithm (or weighted BFS) from vertex $x$ in graph $C$, up to a certain depth $D^*$, that will be determined later. 

Recall that Dijkstra's algorithm maintains a set $S$ of ``discovered'' vertices, where at the beginning $S=\set{x}$. Throughout the algorithm, for every vertex $y\in S$, we maintain the distance $\dist_C(x,y)$, and a neighbor vertex $a_y$ of $y$ that does not lie in $S$, and minimizes the length of the edge $(y,a_y)$. In every iteration, we select a vertex $y\in S$, for which $\dist_C(x,y)+\ell(y,a_y)$ is minimized, and add vertex $a_y$ to $S$. We are then guaranteed that $\dist_C(x,a_y)=\dist_C(x,y)+\ell(y,a_y)$. 
Assume that we are given, for every vertex $y\in V(C)$, a list $\lambda(y)$ of its neighbors $a$, sorted according to the length $\ell(a,y)$ of the corresponding edge, from smallest to largest. Then Dijkstra's algorithm can be implemented so that, if $S_i$ is the set $S$ after the $i$th iteration, then the total running time of the algorithm up to and including iteration $i$ is $O(E(S_i))$. In order to do so, we maintain, for every vertex $y\in V(G)$, a pointer $p_y$ to the vertex $a_y$ on the list $\lambda(y)$. We also maintain a heap of vertices in set $\set{a_y\mid y\in S}$, whose key is $\dist_G(x,y)+\ell(y,a_y)$. In every iteration, we select a vertex $a=a_y$ from the top of the heap, add it to $S$, and then advance the pointer $p_y$ until the first vertex that does not lie in $S$ is encountered (if vertex $a$ that was added to $S$ serves as vertex $a_y$ for several vertices $y\in S$, we advance the pointer $p_y$ for each such vertex $y$). We also initialize pointer $p_a$. 

For all $i\geq 0$, we denote by $L_i$ the set of all vertices of $C$ that lie at distance $2(i-1)D+1$ to $2iD$ from $x$ in $C$. In other words:

\[L_i=B_C(x,2iD)\setminus B_C(x,2(i-1)D).   \]

We refer to the vertices of $L_i$ as \emph{layer $i$ of the BFS}. We denote by $W_i=\sum_{y\in L_i}w(y)$ the total weight of all vertices in $L_i$. 
Recall that we have denoted by $W$ the total weight of all vertices of $H$ at the beginning of the algorithm. The following definition is crucial for the description of Procedure \proccut.

\begin{definition}[Eligible Layer]
For an integer $i>1$, we say that layer $L_i$ of the BFS is \emph{eligible} iff the following conditions hold:
	
	\begin{properties}{C}
		\item $\sum_{i'\leq i}\tW_{i'}\leq \tW(C)/2$; \label{cut condition 1: half the weight}
		\item $\tW_i\leq  \left(\sum_{i'<i}\tW_{i'}\right )/(64\log^2W)$; and  \label{cut condition 2: small weight}
\item for all $1\leq j\leq r$: 
	$\tW(L_i\cap S_{\geq j}) \leq \left(\sum_{i'<i}\tW( L_{i'}\cap S_{\geq j})\right )/\left (64\log^2W\right )$.  \label{cut condition 3: levels}
	\end{properties}

\end{definition}

We need the following claim, whose proof uses standard arguments and appears in Section \ref{subsec: proof of proccut eligible layer} of Appendix.
\begin{claim}\label{claim: eligible layer}
	If the total weight of all vertices in $B_C (x, 256D\log^4 W )$ is at most $\tW(C)/2$, then there is some eligible layer $L_i$ with $1<i<  128\log^4W$.
\end{claim}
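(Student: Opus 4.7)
The plan is to argue by contradiction. Suppose that no layer $L_i$ with $1 < i < 128\log^4 W$ is eligible. The hypothesis $\tW(B_C(x,256D\log^4 W)) \le \tW(C)/2$ ensures that $\sum_{i' \le i}\tW_{i'} \le \tW(C)/2$ for every $i$ in that range (since $2iD < 256D\log^4 W$), so condition (C1) is automatically satisfied. Therefore every layer in the range must violate either (C2) or condition (C3) for at least one level $j\in\{1,\ldots,r\}$, giving $r+1$ possible ``categories'' of failure. I will bound the number of violating layers in each category by a standard ball-growing potential argument, and then conclude by pigeonhole that the total number of violating layers is strictly less than the number of available layers, a contradiction.

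For (C2) I would use the potential $A_i = \sum_{i'<i}\tW_{i'}$. A violation at layer $L_i$ means $\tW_i > A_i/(64\log^2 W)$, so $A_{i+1} \ge A_i\cdot(1 + 1/(64\log^2 W))$. Because every regular vertex has weight $1$ and every supernode has weight $0$, once $A$ becomes positive it is at least $1$, and (C1) keeps $A \le \tW(C) \le W$ throughout the range. Taking logarithms and using $\ln(1+x) \ge x/2$ for small $x$, the number of violations is at most $O(\log^3 W)$. The identical argument, applied to the restricted potentials $A_i^{(j)} = \sum_{i'<i}\tW(L_{i'} \cap S_{\ge j})$, bounds the number of violations of (C3) at each fixed level $j$ by $O(\log^3 W)$ as well.

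The one subtlety I expect to address carefully is the boundary case in which the running potential is still $0$ at the moment of a violation: in that case the multiplicative growth inequality has not yet ``engaged.'' However, this can occur at most once per category (namely, when the first nonempty layer for that category is encountered), so it only adds an additive $O(r)$ to the total count. Summing over all $r+1$ categories yields at most $O(r\log^3 W) = O(\log^4 W / \log\log W)$ violating layers, using $r = 2\lceil \log W / \log\log W\rceil$. For $W$ large enough this is strictly less than $128\log^4 W - 2$, so by pigeonhole some layer $L_i$ with $1 < i < 128\log^4 W$ violates none of (C2) and (C3). Combined with the automatic satisfaction of (C1), this layer is eligible, contradicting the assumption. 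The main obstacle is merely the zero-potential edge case just noted; all the rest is a routine ball-growing counting argument.
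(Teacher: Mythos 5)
Your proposal is correct and follows essentially the same argument as the paper's proof: proceed by contradiction, note that (C1) is automatic from the hypothesis, classify each ineligible layer by which of (C2) or (C3)$_j$ it violates (giving $r+1$ categories), bound the number of violators per category by $O(\log^3 W)$ via a multiplicative ball-growing potential, and conclude by pigeonhole. Your treatment of the zero-potential boundary case is sound and corresponds to the paper's observation that the relevant cumulative weight is at least $1$ at the first violating layer by integrality of regular-vertex weights.
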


We are now ready to describe the algorithm for \proccut.
The input to the procedure is a cluster $C\in \cset$, a vertex $x\in V(C)$, and a distance threshold $D>0$. We assume that we are given, for every vertex $y\in V(C)$, a list $\lambda(y)$ of its neighbors $a$ in $C$, sorted according to the length $\ell(a,y)$ of the corresponding edge, from smallest to largest.
 The procedure runs Dijkstra's algorithm from the input vertex $x$ in graph $C$, until it encounters the first layer $L_i$, such that either $\sum_{i'\leq i}\tW_{i'}> \tW(C)/2$, or $L_i$ is an eligible layer. In the former case, the algorithm outputs FAIL. In the latter case, it computes two new clusters: cluster $C'$ is the subgraph of $H$ induced by vertex set $L_1\cup\cdots\cup L_{i}$. Cluster $C''$ is obtained from cluster $C$ by deleting all vertices in $L_1\cup \cdots L_{i-1}$ from it. 
The algorithm outputs cluster $C'$ explicitly, by listing all its vertics and edges, and it outputs cluster $C''$ implicitly, by listing the edges and the vertices of $C\setminus C''$.
Notice that vertices of $L_i$, and edges that connect them, belong to both $C'$ and $C''$.
The following observation immediately follows from Claim \ref{claim: eligible layer} and the definition of a layer.

\begin{observation}\label{obs: diam of cut off cluster}
	If Procedure $\proccut(C,x,D)$ does not return FAIL, then it produces two clusters $C',C''$, with $W(C')\leq W(C)/2$, and $\diam(C')\leq 512D\log^4W$. Moreover, $V(C')\subseteq B_C(x,256D\log^4W)$.
\end{observation}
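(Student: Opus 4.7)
The proof proposal will unpack the three conclusions by analyzing the layer index $i$ at which Procedure \proccut halts. Since the procedure did not output FAIL, by the stopping rule it must have halted at an eligible layer $L_i$. Property~\ref{cut condition 1: half the weight} in the definition of eligibility gives $\sum_{i'\le i}\tW_{i'}\le \tW(C)/2$; combined with the fact that $V(C')=L_1\cup\cdots\cup L_i$, this yields $W(C')\le W(C)/2$, establishing the first conclusion.

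For the remaining two conclusions, the plan is to show that $i\le 128\log^4W$, and I would do this by contradiction. Suppose instead that $i>128\log^4W$. Then, by definition of the layers, $B_C(x,256D\log^4W)=L_1\cup\cdots\cup L_{128\log^4W}\subseteq V(C')$, so the total weight of vertices in $B_C(x,256D\log^4W)$ is at most $W(C')\le W(C)/2$. Invoking Claim~\ref{claim: eligible layer} under this hypothesis yields an eligible layer $L_{i^*}$ with $1<i^*<128\log^4W$. But then $i^*<i$, contradicting the fact that the procedure stops at the \emph{first} eligible layer it encounters; note that since for every $j<i$ the procedure did not fail at $j$, we automatically have $\sum_{j'\le j}\tW_{j'}\le\tW(C)/2$, so nothing prevented the procedure from examining $L_{i^*}$.

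Given $i\le 128\log^4W$, every vertex of $V(C')$ lies within distance $2iD\le 256D\log^4W$ of $x$ in $C$, proving $V(C')\subseteq B_C(x,256D\log^4W)$. To turn this radius bound into a diameter bound inside $C'$, I would note that for any vertex $v\in V(C')$, any shortest $x$-$v$ path in $C$ has all vertices at distances at most $\dist_C(x,v)\le 2iD$, so they lie in $L_1\cup\cdots\cup L_i$; hence the path is entirely contained in the vertex-induced subgraph $C'$. Consequently $\dist_{C'}(x,v)=\dist_C(x,v)\le 256D\log^4W$, and the triangle inequality applied to paths through $x$ yields $\diam(C')\le 512D\log^4W$.

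The only non-routine step is the contradiction argument pinning down $i\le 128\log^4W$; once Claim~\ref{claim: eligible layer} is available, the chain of implications is short. Everything else is direct bookkeeping from the description of Procedure \proccut and the standard layered structure of shortest-path trees produced by Dijkstra's algorithm.
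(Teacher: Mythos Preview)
Your proof is correct and follows exactly the approach the paper intends: it states only that the observation ``immediately follows from Claim~\ref{claim: eligible layer} and the definition of a layer,'' and your argument is precisely the natural unpacking of that sentence. One cosmetic remark: writing $L_{128\log^4 W}$ is slightly loose since that index need not be an integer, but your actual logic---that $i>128\log^4 W$ implies $B_C(x,256D\log^4 W)\subseteq B_C(x,2iD)=V(C')$, hence has weight at most $W(C)/2$, triggering Claim~\ref{claim: eligible layer}---is unaffected.
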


Next, we establish that the two new clusters $C',C''$ contain a $D$-neighborhood every vertex of $C$ in the following claim, whose proof appears in Section \ref{subsec: proof of maintain neighborhood cover} of Appendix.

\begin{claim}\label{claim: cut gives valid neighborhood cover}
Assume that Procedure $\proccut(C,x,D)$ produces two clusters $C'$ and $C''$. Then for every vertex $y\in V(C)$, either $B_C(y,D)\subseteq V(C')$ or $B_C(y,D)\subseteq V(C'')$ must hold.
	\end{claim}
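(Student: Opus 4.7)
The proof will be a direct case analysis based on which BFS layer contains the vertex $y$. The key geometric observation is that each layer $L_j$ was constructed with ``width'' $2D$ (it contains all vertices at distance strictly between $2(j-1)D$ and $2jD$ from $x$ in $C$), whereas the set $B_C(y,D)$ has ``radius'' $D$. By the triangle inequality, the distances from $x$ to vertices in $B_C(y,D)$ lie in an interval of width at most $2D$, so this ball can straddle at most one boundary between consecutive layers. That is exactly the slack needed to classify each ball as lying entirely on one side of the ``cut'' defined by the procedure.

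First I would recall that $V(C')=L_1\cup\cdots\cup L_i$ and $V(C'')=V(C)\setminus (L_1\cup\cdots\cup L_{i-1})$, so a vertex $z\in V(C)$ lies in $V(C')$ iff $\dist_C(x,z)\le 2iD$, and lies in $V(C'')$ iff $\dist_C(x,z)>2(i-1)D$. Fix any $y\in V(C)$, let $j$ be the index with $y\in L_j$, and take any $z\in B_C(y,D)$. Since $B_C(y,D)\subseteq V(C)$, the quantity $\dist_C(x,z)$ is well-defined, and by the triangle inequality it lies in the interval $[\dist_C(x,y)-D,\;\dist_C(x,y)+D]$.

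The argument splits into three cases. If $j\le i-1$, then $\dist_C(x,y)\le 2(i-1)D$, so $\dist_C(x,z)\le 2(i-1)D+D=(2i-1)D<2iD$, hence $z\in V(C')$; this shows $B_C(y,D)\subseteq V(C')$. If $j\ge i+1$, then $\dist_C(x,y)\ge 2iD+1$, so $\dist_C(x,z)\ge 2iD+1-D=2(i-1)D+(D+1)>2(i-1)D$, hence $z\in V(C'')$; this shows $B_C(y,D)\subseteq V(C'')$. Finally, suppose $j=i$. If $\dist_C(x,y)\le (2i-1)D$, then $\dist_C(x,z)\le (2i-1)D+D=2iD$, so $B_C(y,D)\subseteq V(C')$; otherwise $\dist_C(x,y)\ge (2i-1)D+1$, and $\dist_C(x,z)\ge (2i-1)D+1-D=2(i-1)D+1>2(i-1)D$, so $B_C(y,D)\subseteq V(C'')$.

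I do not expect any real obstacle here: the layering was designed precisely to make this case analysis work, and no appeal to the eligibility conditions \ref{cut condition 1: half the weight}--\ref{cut condition 3: levels} is needed for the covering statement. The only points requiring a little care are (i) that we are measuring distances inside $C$ (which is consistent because Dijkstra in the procedure was run in $C$ and both $C',C''$ are vertex-induced subgraphs of $C$), and (ii) using that edge lengths are positive integers so that ``$\dist_C(x,y)>2(i-1)D$'' actually gives the needed $+1$ slack in the third case. With these bookkeeping details in place the claim follows.
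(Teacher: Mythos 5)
Your proof is correct and takes essentially the same route as the paper's: case analysis on $\dist_C(x,y)$ relative to the threshold $(2i-1)D$, combined with the triangle inequality and the observation that $V(C')=B_C(x,2iD)$ while $V(C'')=V(C)\setminus B_C(x,2(i-1)D)$. The paper collapses your three cases into two (it does not bother separating $j\le i-1$ and $j\ge i+1$ from the boundary case $j=i$), but the argument is identical.
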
 

We view Procedure \proccut as creating a new cluster $C'$ that is split off of the cluster $C$; recall that $\tW(C')\leq \tW(C)/2$ must hold. We then update the cluster $C$, by setting $C= C''$.
Notice that the running time of the procedure is $O(|E(C')|)\leq O(W(C')\mu)$. 

Lastly, we prove the following lemma, that is central to bounding the number of clusters that a regular vertex may belong to.
The proof appears in \Cref{subsec: proving bound on budgets} of Appendix.

\begin{lemma}\label{lem: change in budget after proccut}
	Assume that Procedure $\proccut(C,x,D)$, when applied to a cluster $C\in \cset$, produced two clusters $C'$, $C''$, and assume that cluster set $\cset$ was updated accordingly. Denote by $\beta$ and $\beta'$ the total budget of all vertices in $V(H)$ at the beginning and at the end of the procedure, respectively. For all $1\leq j\leq r$, let $\beta_{\geq j}$ and $\beta'_{\geq j}$ denote the total budget of all vertices in $S_{\geq j}$ at the beginning and at the end of the procedure, respectively, and let $\tW'_j$ be the total weight of new regular vertices that join class $S_j$ at the end of the procedure. Then $\beta'\leq \beta$, and moreover, for all $1\leq j\leq r$, $\beta_{\geq j}'\leq \beta_{\geq j}+2^{j}(1+1/\log W) \tW'_j$.
\end{lemma}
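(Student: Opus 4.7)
The plan is to decompose the change in budgets into contributions from the three groups of vertices in $V(C)$: the inner layers $L_1\cup\cdots\cup L_{i-1}$, the boundary layer $L_i$, and the outer layers $L_{i+1}\cup\cdots$. Vertices outside $V(C)$ are unaffected, since no cluster other than $C$ is modified. For $y\in L_1\cup\cdots\cup L_{i-1}$, the vertex $y$ is removed from $C$ (renamed $C''$) and inserted into the new cluster $C'$, so $n_y$ is unchanged and the total budget change equals $\beta_{C'}(y)-\beta_C(y)=(\log(W(C')/W(C))/\log^2 W)\cdot w(y)$, which is negative and at most $-w(y)/\log^2 W$ in absolute value by condition C1 ($W(C')\leq W(C)/2$). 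For $y\in L_{i+1}\cup\cdots$, $y$ remains only in $C''$, whose weight did not grow, so the budget is non-increasing. For $y\in L_i$, $y$ lies in both $C'$ and $C''$, so $n_y$ increases by exactly one, and a direct computation gives
\[
\beta'(y)-\beta(y)=w(y)\left(1+\frac{\log\bigl(W(C')\,W(C'')/W(C)\bigr)}{\log^2 W}\right)\leq (1+1/\log W)\,w(y),
\]
using $W(C')\leq W(C)/2$ together with $W(C'')\leq W(C)\leq W$.

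For the first inequality $\beta'\leq\beta$, I will sum these contributions over $V(C)$. The total increase from $L_i$ is at most $(1+1/\log W)\,W_i$, while the total decrease from $L_1\cup\cdots\cup L_{i-1}$ is at least $(\sum_{i'<i}W_{i'})/\log^2 W$. Condition C2 gives $W_i\leq (\sum_{i'<i}W_{i'})/(64\log^2 W)$, which ensures that the decrease dominates the increase (with lots of room to spare), yielding $\beta'\leq\beta$.

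For the class-indexed bound, I will partition regular vertices by their membership in $S_{\geq j}$ before and after the procedure. Since $n_y$ can only stay fixed or increase by one, no vertex leaves $S_{\geq j}$. A vertex newly in $S_{\geq j}$ must lie in $L_i$ and must previously have had $n_y=2^j-1$ (placing it in $S_{j-1}$), moving to $n_y^{\mathrm{new}}=2^j$; by definition the total weight of such vertices is $\tW'_j$, and each contributes at most $\beta'(y)\leq (1+1/\log W)\cdot w(y)\cdot 2^j$ to $\beta'_{\geq j}$, producing the stated $(1+1/\log W)\cdot 2^j\cdot\tW'_j$ term. Vertices already in $S_{\geq j}$ and outside $L_i$ contribute a non-positive change to $\beta'_{\geq j}-\beta_{\geq j}$, while vertices already in $S_{\geq j}$ and inside $L_i$ contribute at most $(1+1/\log W)\,w(y)$ each, summing to at most $(1+1/\log W)\,\tW(L_i\cap S_{\geq j})$.

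The main obstacle is to absorb this last term, and the key idea is that the budget \emph{decrease} from vertices of $S_{\geq j}$ lying in the inner layers compensates for it. Summing the per-vertex decrease $\geq w(y)/\log^2 W$ over $y\in (L_1\cup\cdots\cup L_{i-1})\cap S_{\geq j}$ yields a decrease of at least $\tW((L_1\cup\cdots\cup L_{i-1})\cap S_{\geq j})/\log^2 W$. Applying condition C3 with the class-restricted layer weights gives
\[
\frac{\tW\bigl((L_1\cup\cdots\cup L_{i-1})\cap S_{\geq j}\bigr)}{\log^2 W}\geq 64\cdot\tW(L_i\cap S_{\geq j}),
\]
so this class-restricted decrease dominates the $(1+1/\log W)\,\tW(L_i\cap S_{\geq j})$ increase. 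Combining all three groups, $\beta'_{\geq j}-\beta_{\geq j}\leq (1+1/\log W)\cdot 2^j\cdot\tW'_j$, as desired.
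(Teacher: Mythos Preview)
Your proposal is correct and follows essentially the same approach as the paper's proof: both decompose the budget change across the three regions (inner layers, boundary layer $L_i$, outer layers), use Condition~\ref{cut condition 1: half the weight} to get a per-vertex decrease of at least $w(y)/\log^2 W$ on the inner layers, and then invoke Conditions~\ref{cut condition 2: small weight} and~\ref{cut condition 3: levels} to show that this decrease absorbs the increase coming from the duplicated vertices in $L_i$. The only cosmetic difference is that the paper presents the class-indexed bound as a two-step argument (first freeze the classes and show $\beta'_{\geq j}\le\beta_{\geq j}$ for the frozen classes, then account for the vertices that newly enter $S_j$), whereas you partition directly by before/after membership in $S_{\geq j}$; the underlying accounting is identical.
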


\paragraph{Modified \proccut.}
So far we have described Procedure \proccut, whose input is a cluster $C$, a vertex $x$, and a distance parameter $D$. Recall that, if the total weight of all vertices in $B_C (x, 256D\log^4 W)$ is greater than $\tW(C)/2$, the procedure may return FAIL. We will sometimes use the procedure slightly differently. The input to this modified procedure, that we denote by $\proccut'$, is a cluster $C$, a distance bound $D$, and two vertices $x,y\in V(C)$ with $\dist_C(x,y)\geq 1024 D\log^4W$. As before, we assume that we are given, for every vertex $a\in V(C)$, a list $\lambda(a)$ of its neigbhors $b$, sorted by the length of the corresponding edge $(a,b)$.
We run the procedure \proccut simultaneously from vertex $x$ and from vertex $y$ in graph $C$. In other words, we run Dijkstra's algorithm from both vertices simultaneously, so that the number of edges that the two algorithms discover at each time step (or, equivalently, the time they invest), remain within a constant factor from each other.
Once an eligible layer is reached by either of the procedures, we terminate both algorithms (since $B_C (x, 256D\log^4 W )\cap B_C (y, 256D\log^4 W )=\emptyset$, this is guaranteed to happen, from \Cref{claim: eligible layer}, with a layer $L_i$ whose index $1<i<  128\log^4W$). Assume w.l.o.g. that $\proccut(C,x)$ has reached an eligible layer $L_i$ before $\proccut(C,y)$. Let $C',C''$ be the two resulting clusters, so that $x\in C'$, $V(C')=L_1\cup\cdots\cup L_i$, while $y\in C''$, with $V(C'')=V(C)\setminus (L_1\cup\cdots\cup L_{i-1})$, and $W(C')\leq W(C)/2$. 
We then let $C''$ be the cluster obtained from $C$ by deleting the vertices of $L_1\cup\cdots\cup L_{i-1}$. The outcome of the procedure $\proccut'(C,x,y,D)$ is the pair $C',C''$ of clusters, where, as before, cluster $C'$ is returned explicitly, by listing all of its vertices and edges, and cluster $C''$ is returned implicitly, by listing all vertices and edges of $C\setminus C''$.
We summarize the properties of Procedure $\proccut'$ in the next claim.

\begin{claim}\label{claim: proccut modified}
	Procedure $\proccut'(C,x,y,D)$, given a pair $x,y\in V(C)$ of vertices with $\dist_C(x,y)\geq 1024 D\log^4W$ produces two subgraphs $C',C''$ of $C$, with $x\in C'\setminus C''$ and $y\in C''\setminus C'$, or the other way around. Moreover, $\diam(C')\leq 512D\log^4W$, $W(C')\leq W(C)/2$, and for every vertex $z\in V(C)$, either $B_C(z,D)\subseteq V(C')$ or $B_C(z,D)\subseteq V(C'')$. The running time of the algorithm is $O(|E(C')|)$, provided it is given, for every vertex  $a\in V(C)$, a list $\lambda(a)$ of its neigbhors $b$ in $C$, sorted by the length of the corresponding edge $(a,b)$.
\end{claim}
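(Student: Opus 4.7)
The plan is to reduce everything to the guarantees already established for the one-sided Procedure \proccut, exploiting only the fact that $\dist_C(x,y) \geq 1024 D \log^4 W$ forces one of the two balls $B_C(x, 256 D \log^4 W)$ and $B_C(y, 256 D \log^4 W)$ to be ``light''. Concretely, by the triangle inequality these two balls are vertex-disjoint (any common vertex would certify $\dist_C(x,y) \leq 512 D \log^4 W$), so at most one of them can contain vertices of total weight exceeding $W(C)/2$; for the other endpoint, say $x$, Claim \ref{claim: eligible layer} guarantees the existence of an eligible layer $L_i$ with $1 < i < 128 \log^4 W$ in the BFS rooted at $x$.

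Next I would describe the interleaving carefully. The two Dijkstra explorations from $x$ and from $y$ are run in lockstep so that the numbers of edges relaxed by the two executions stay within a constant factor. As soon as one execution (say the one from $x$) completes a layer that is eligible according to its own BFS, both are halted and the two clusters $C', C''$ are produced exactly as in the one-sided Procedure \proccut applied to $(C,x,D)$: $C'$ is the subgraph induced by $L_1 \cup \cdots \cup L_i$ and $C''$ is obtained from $C$ by deleting $L_1 \cup \cdots \cup L_{i-1}$. Because at least one side is light, the halting event is reached before either BFS exceeds depth $256 D \log^4 W$, so the algorithm does terminate.

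All the structural claims now transfer for free. Observation \ref{obs: diam of cut off cluster} applied to the winning side gives $\diam(C') \leq 512 D \log^4 W$, $W(C') \leq W(C)/2$, and $V(C') \subseteq B_C(x, 256 D \log^4 W)$; combining the last containment with the disjointness of the two balls yields $y \notin V(C')$, and since $V(C'') = V(C) \setminus (L_1 \cup \cdots \cup L_{i-1})$, we have $x \in V(C') \setminus V(C'')$ and $y \in V(C'') \setminus V(C')$. The $D$-neighborhood property ``$B_C(z,D) \subseteq V(C')$ or $B_C(z,D) \subseteq V(C'')$'' for every $z \in V(C)$ is exactly the statement of Claim \ref{claim: cut gives valid neighborhood cover} applied to the winning BFS, so no new argument is needed.

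The only step that requires a little care, and the one I would flag as the main (minor) obstacle, is the running time bound $O(|E(C')|)$. Using the sorted neighbor lists $\lambda(\cdot)$ supplied by the caller, the standard incremental Dijkstra implementation discussed before the claim runs in time proportional to the number of edges scanned; thus the work done by the winning execution up to the moment it identifies the eligible layer $L_i$ is $O(|E(L_1 \cup \cdots \cup L_i)|) = O(|E(C')|)$. The lockstep interleaving ensures that the losing execution has performed at most a constant factor more work, hence also $O(|E(C')|)$ time, so the total running time is $O(|E(C')|)$ as required. The one subtlety to verify is that halting at the first side to declare eligibility is sound: since both sides are running the identical $\proccut$ subroutine and we do not care which of $x,y$ ends up in the small cluster, the choice of winner can be arbitrary, and the interleaving constant only enters the running time bound, not the correctness.
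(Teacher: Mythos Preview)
Your proposal is correct and matches the paper's approach: the paper does not give an explicit proof of this claim, but rather describes the procedure and asserts its properties, leaving the verification implicit. Your write-up faithfully fills in those details---disjointness of the two balls via the triangle inequality, termination via Claim~\ref{claim: eligible layer} applied to the light side, the structural conclusions via Observation~\ref{obs: diam of cut off cluster} and Claim~\ref{claim: cut gives valid neighborhood cover}, and the running time via the lockstep interleaving---exactly as the paper intends.
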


We note that the budget analysis from \Cref{lem: change in budget after proccut} continues to hold for $\proccut'$ as well, as we can equivalently view $\proccut'$ as running $\proccut$ from one of the vertices, $x$ or $y$.

\subsection{Computing the Initial Neighborhood Cover -- Procedure \initnc}
\label{subsec: initial NC}


We now describe an algorithm, that we refer to as \initnc, for computing an initial neighborhood cover. We assume that we are given a valid input structure  $\iset=\left(H=(V,U,E),\set{\ell(e)}_{e\in E}, D\right )$, where the degree of every regular vertex is at most $\mu$, and we denote $D'=1024D\log^4W$. Our algorithm will compute a strong $(D,D')$-neighborhood cover $\cset$ for $V(H)$ in $H$.

The algorithm maintains a cluster $C^*\subseteq H$, where at the beginning, $C^*=H$. For every vertex $y\in V(C^*)$, it maintains a list $\lambda(y)$ of neighbors $a$ of $y$ in $C^*$, sorted by the length of the corresponding edge $(a,y)$.It also maintains a collection $\cset$ of clusters, that is initialized into $\emptyset$. We will ensure that every cluster in $\cset$ has diameter at most $D'$. Lastly, the algorithm maintains a set $U^*\subseteq V(C^*)$ of potential centers, that is initially set to $V(H)$.  We initialize the lists $\lambda(y)$ for every vertex $y\in V(H)$ at the beginning. The algorithm performs iterations, as long as $U^*\neq \emptyset$.

In order to perform an iteration, we select an arbitrary vertex $x\in U^*$, and execute $\proccut(C^*,x)$. Assume first that procedure does not return FAIL, and instead computes two clusters $C',C''$ (recall that $C''$ is only returned implicitly, by listing the vertices and edges of $C\setminus C''$). In this case, we say that the iteration is \emph{good}. Note that we are guaranteed, from \Cref{obs: diam of cut off cluster}, that the diameter of $C'$ is bounded by $D'/2$. We then add $C'$ to $\cset$, set $C^*=C''$ (by deleting the edges and vertices of $C\setminus C''$ from $C^*$), and delete from $U^*$ all vertices that no longer belong to $C^*$. We also update the lists $\lambda(y)$ of vertices $y\in V(C^*)$ by deleting the neighbors of $y$ that no longer lie in $C^*$, finishing the current iteration.

Assume now that Procedure \proccut returned FAIL. In this case, we say that the iteration is \emph{bad}. We are then guaranteed, from \Cref{claim: eligible layer}, that the total weight of vertices in $B_{C^*}(x,256D\log^4W)$ is at least $W(C^*)/2$.
We initialize an ES-tree data structure $\tau$, rooted at vertex $x$, up to depth  $\hat D=512D\log^4W$. Throughout the remainder of the algorithm, we denote by $B'=B_{C^*}(x,256D\log^4W)$, and we denote by  $B''=B_{C^*}(x,512D\log^4W)$ -- the set of all vertices that belong to $\tau$. Note that both vertex sets can be maintained by the algorithm that maintains the tree $\tau$.

In the remainder of the algorithm, we will always let $U^*$ contain all vertices of $V(C^*)\setminus B''$. Therefore, initially, we delete from $U^*$ all vertices of $B''$. Whenever some vertex $a$ leaves the set $B''$, we will add it back to set $U^*$. We then denote $x^*=x$, and continue to the next iteration. From this time onward, the vertices in $B'$ will never be deleted from $C^*$, since for every vertex $y\in U^*$, $B'\cap B_{C^*}(y,256D\log^4W)=\emptyset$ currently holds, and therefore will continue to hold as vertices and edges are deleted from $C^*$. Notice that, as vertices and edges are deleted from $C^*$, over the course of the \initnc algorithm, $W(C^*)$ may decrease, but $W(B')$ will not decrease, and so $W(B')\geq W(C^*)/2$ will continue to hold for the remainder of the algorithm.
Therefore, after a single bad iteration, for the remainder of the algorithm, for every vertex $y\in U^*$, $W(B_{C^*}(y,256  D\log^4W))\leq W(C^*)/2$ must hold, since $W(B')\geq W(C^*)/2$, and $B'\cap B_{C^*}(y,256D\log^4W)=\emptyset$. It follows that at most one iteration of the algorithm may be bad. 

The algorithm terminates when $U^*=\emptyset$ holds. If $C^*\neq \emptyset$ at the time of the termination, then $V(C^*)\subseteq B''$ must hold. We then add $C^*$ as the last cluster to $\cset$. We call $C^*$ a \emph{distinguished cluster} of $\cset$. Note that for every cluster $C\in \cset\setminus \set{C^*}$, $W(C)\leq W(H)/2$ holds. It may be sometimes convenient to think of the algorithm $\initnc$ as iteratively splitting the clusters of $\cset\setminus\set{C^*}$ off of cluster $C=H$, and so we may view $C^*$ as the updated version of the cluster $C$ at the end of the algorithm, instead of viewing it as a newly created cluster.

From \Cref{obs: diam of cut off cluster} and \Cref{claim: cut gives valid neighborhood cover}, it is immediate to verify that the collection $\cset$ of clusters that we obtain at the end of the algorithm is a strong $(D,D')$-neighborhood cover for $H$.  

The running time for a bad iteration is at most $|E(H)|$, while the running time of a good iteration that creates a cluster $C'$ is bounded by $|E(C')|$. Additionally, the running time required for maintaining the ES-tree $\tau$ is $O(|E(H)|\cdot D\cdot \poly\log |E(H)|)$.
Therefore, the total running time of the algorithm is bounded by $O(|E(H)|\cdot D\cdot \poly\log |E(H)|+\sum_{C\in \cset}|E(C)|)\leq O(W(H)\cdot D\cdot \poly\log (W(H)\mu)+\sum_{C\in \cset}W(C))\cdot \mu$. Since, for every vertex $x\in V(H)$ and cluster $C\in \cset$ containing $x$, $\beta_C(x)\geq w(x)$ holds, we get that  $\sum_{C\in \cset}W(C)\leq \sum_{x\in V(H)}\beta(x)$. Lastly, since, from \Cref{lem: change in budget after proccut}, the total budget of all vertices $\sum_{x\in V(H)}\beta(x)$ does not increase over the course of the algorithm, and since, at the beginning of the algorithm, $\sum_{x\in V(H)}\beta(x)\leq 2W(H)$ held, we get that the total running time is bounded by $O(W(H)\cdot\mu\cdot D\cdot\poly\log(W(H)\mu))\leq O(N^0(H)\cdot\mu \cdot D\cdot \poly\log(N^0(H)\mu) )$, where $N^0(H)$ is the number of regular vertices in $H$.
Therefore, we have proved the following lemma.

\begin{lemma}\label{lem: initialize neighborhood cover}
	Algorithm \initnc described above correctly computes a strong $(D,D')$-neighborhood cover $\cset$ for $H$, for $D'=1024D\log^4W$, where $W=N^0(H)$ is the number of regular vertices in $H$. The algorithm is deterministic, with running time $O(|E(H)|\cdot D\cdot \poly\log |E(H)|+\sum_{C\in \cset}|E(C)|)\leq O(N^0(H)\cdot D\cdot \mu\cdot \poly\log(N^0(H)\mu))$. 
\end{lemma}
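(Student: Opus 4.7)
The plan is to establish correctness (covering property and diameter bound), show that at most one iteration is bad, and then analyze the running time, relying on \Cref{obs: diam of cut off cluster}, \Cref{claim: eligible layer}, \Cref{claim: cut gives valid neighborhood cover}, and \Cref{lem: change in budget after proccut}.

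For the covering property, I would argue by induction that the following invariant holds after every iteration: for every $y \in V(H)$, either (i) there is some $C \in \cset$ with $B_H(y,D) \subseteq V(C)$, or (ii) $y \in V(C^*)$ and $B_H(y,D) \subseteq V(C^*)$. The base case is trivial since $C^* = H$ initially, and bad iterations change nothing. For a good iteration on $C^*_{\text{old}}$ with center $x$ producing $(C', C'')$, the key preliminary observation is that when $B_H(y, D) \subseteq V(C^*_{\text{old}})$, any vertex $w$ on an $H$-shortest $y$-to-$z$ path with $z \in B_H(y, D)$ satisfies $\dist_H(y, w) \le D$ by the triangle inequality, hence lies in $V(C^*_{\text{old}})$; therefore the shortest $H$-path stays in $C^*_{\text{old}}$, giving the identity $B_H(y, D) = B_{C^*_{\text{old}}}(y, D)$. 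Now \Cref{claim: cut gives valid neighborhood cover} forces either $B_{C^*_{\text{old}}}(y, D) \subseteq V(C')$ (so $C'$ now covers $y$, and (i) holds) or $B_{C^*_{\text{old}}}(y, D) \subseteq V(C'')$ (so $y \in V(C'')$ since $y \in B_H(y,D)$, and (ii) is preserved with the new $C^* = C''$). At termination we add $C^*$ to $\cset$, so every vertex is covered.

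Next I would show at-most-one bad iteration and the diameter bound simultaneously. If the iteration at $x^* = x$ is bad, then the contrapositive of \Cref{claim: eligible layer} forces the heavy core $B' = B_{C^*}(x^*, 256D\log^4 W)$ to have total weight exceeding $W(C^*)/2$. After the bad iteration, the algorithm restricts $U^*$ to $V(C^*) \setminus B''$ where $B'' = B_{C^*}(x^*, 512D\log^4 W)$, and, as observed in the algorithm description, vertices of $B'$ never leave $C^*$ again. Any future center $y \in U^*$ therefore has $B_{C^*}(y, 256D\log^4 W)$ disjoint from $B'$, so its weight is at most $W(C^*) - W(B') \le W(C^*)/2$, and \Cref{claim: eligible layer} guarantees the next iteration is good. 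For the strong diameter bound, \Cref{obs: diam of cut off cluster} gives $\diam(C') \le 512D\log^4 W = D'/2$ for every good-iteration cluster, while the ES-tree $\tau$ rooted at $x^*$ with depth threshold $\hat D = 512D\log^4 W$ maintains, at every moment, $\dist_{C^*}(x^*, u) \le \hat D$ for every $u \in V(C^*) \subseteq B''$; hence the final distinguished cluster has diameter at most $2\hat D = D'$.

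For the running time, a good iteration producing $C'$ costs $O(|E(C')|)$ because the pre-sorted adjacency lists $\lambda(\cdot)$ allow Dijkstra's algorithm to be implemented so that all work is proportional to the explored subgraph. The single bad iteration costs $O(|E(H)|)$; maintaining $\tau$ under subsequent edge/vertex deletions costs $O(|E(H)| \cdot \hat D \cdot \polylog)$ by standard Even-Shiloach. To bound $\sum_{C' \in \cset} |E(C')|$ I would use that $H$ is bipartite with weight only on regular vertices and dynamic degree bound $\mu$, so $|E(C')| \le \mu \cdot N(C')$, where $N(C')$ counts regular vertices of $C'$. Summing, $\sum_{C' \in \cset} N(C') \le \sum_{y \in V(H)} \beta(y)$, which by \Cref{lem: change in budget after proccut} is monotone non-increasing across \proccut calls and starts at most $(1 + 1/\log W) W \le 2 N^0(H)$. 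Combining all pieces yields $O(N^0(H) \cdot \mu \cdot D \cdot \polylog(N^0(H)\mu))$. The step I expect to require the most care is the invariant argument for covering, specifically the transfer from the guarantee $B_{C^*}(y, D) \subseteq V(C')$ provided by \Cref{claim: cut gives valid neighborhood cover} to the needed $B_H(y, D) \subseteq V(C')$; this transfer relies on the identity $B_{C^*}(y, D) = B_H(y, D)$, which itself requires the inductive invariant $B_H(y, D) \subseteq V(C^*)$ for uncovered vertices, so the invariant and its preservation must be woven together carefully.
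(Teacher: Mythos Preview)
Your proposal is correct and follows essentially the same approach as the paper: the paper's argument is woven into the algorithm description in \Cref{subsec: initial NC}, invoking \Cref{obs: diam of cut off cluster} and \Cref{claim: cut gives valid neighborhood cover} for correctness, the heavy-ball argument for at-most-one bad iteration, and the budget bound from \Cref{lem: change in budget after proccut} for the running time. Your write-up is in fact more explicit than the paper's on the covering invariant—the paper simply declares it ``immediate to verify,'' whereas you spell out the inductive invariant and the passage from $B_{C^*}(y,D)$ to $B_H(y,D)$ via the fact that $C^*$ stays a vertex-induced subgraph of $H$.
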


Note that Procedure \initnc can be viewed as a series of applications of Procedure \proccut. We will sometimes use this view, which will, for example, allow us to use the results from \Cref{lem: change in budget after proccut} to bound the budgets of vertices.


\subsection{Algorithmic Framework for Maintaining Neighborhood Cover}
\label{subsec: algorithmic framework for NC}

We now describe a general algorithmic framework for maintaining a strong neighborhood cover in a given valid input structure that undergoes a sequence of valid update operations. This framework will be used in the proof of \Cref{thm: main dynamic NC algorithm} in several different ways. We denote the algorithm described in this subsection by \algmaintainNC.

We assume that we are given a valid input structure $\iset=\left(H,\set{\ell(e)}_{e\in E(H)},D\right )$, that undergoes a sequence of valid update operations, with dynamic degree bound $\mu$.
The algorithm maintains a collection $\cset$ of subgraphs of $H$ that we refer as clusters, with the guarantee that for every regular vertex $v\in V(H)$, there is a cluster $C\in \cset$ with $B_H(v,D)\subseteq V(C)$. 
The algorithm maintains, for every regular vertex $v\in V(H)$, a cluster $C=\coveringcluster(v)$, with $B_H(v,D)\subseteq V(C)$. It also maintains, for every vertex $x\in V(H)$, a list $\clusterlist(x)$ of all clusters in $\cset$ containing $x$, and for every edge $e\in E(H)$, a list $\clusterlist(e)$ of all clusters in $\cset$ containing $e$.
Lastly, for every cluster $C\in \cset$ and vertex $x\in V(C)$, it maintains a list $\lambda_C(x)$ of all neighbors $a$ of $x$ in $C$, sorted by the length of the corresponding edge $(a,x)$.

As before, we set the weight of every regular vertex to be $1$, and of every supernode to be $0$. We denote by $N^0(H)$ the number of regular vertices in $H$ at the beginning of the algorithm, and by $W=N^0(H)$ the weight of all vertices of $H$ at the beginning of the algorithm.

At the beginning of the algorithm, before any update operations from $\Sigma$ are processed, we apply Procedure \initnc to graph $H$, and we add to $\cset$ the resulting collection of clusters, that form a strong $(D,1024D\log^4W)$-neighborhood cover of $H$. We also initialize all data structures $\coveringcluster(v)$ for regular vertices $v\in V(H)$,  $\clusterlist(x)$ for vertices $x\in V(H)$, and $\clusterlist(e)$ for edges $e\in E(H)$. For every cluster $C\in \cset$ and vertex $x\in V(C)$, we initialize the list $\lambda_C(x)$ of neighbors of $x$ in $C$.

 As graph $H$ undergoes a sequence $\Sigma$ of valid update operations, we update every cluster $C\in \cset$ accordingly. Specifically, consider any update operation $\sigma_t\in \Sigma$. If $\sigma_t$ is a deletion of an edge $e$, then for every cluster $C\in \clusterlist(e)$, we delete $e$ from $C$ as well. If $\sigma_t$ is a deletion of an isolated vertex $x$, then for every cluster $C\in \clusterlist(x)$, we delete $x$ from $C$ as well. 
 Assume now that $\sigma_t$ is a supernode splitting operation, applied to a supernode $u$, and a set $E'\subseteq \delta_H(u)$ of its incident edges. For every cluster $C\in \clusterlist(u)$, we let $E'_C=E'\cap E(C)$. If $E'_C\neq \emptyset$, then we update the cluster $C$ by performing a supernode-splitting operation in it, for vertex $u$, with edge set $E'_C$. 
We then add $C$ to $\clusterlist(u')$ of the newly created supernode $u'$, and also initialize $\clusterlist(e)$ for every edge $e$ that was just added to $H$.
In order to implement these updates efficiently, we process the edges of $E'$ one-by-one. Let $e=(u,v)$ be an edge of $E'$ that is currently processed. We then consider every cluster $C\in \clusterlist(e)$ one-by-one. For each such cluster $C$, we mark $C$ as a cluster on which the supernode splitting operation needs to be executed (if it has not been marked yet), and add the edge $e$ to the set $E'_C$ (if set $E'_C$ is not yet initialized, we set $E'_C=\set{e}$). Once all edges in $E'$ are processed, we perform a supernode splitting operation on each marked cluster $C$, using the edge set $E'_C$. After each update operation, we update the lists $\lambda_C(x)$ of all relevant vertices $x\in V(C)$.

Note that the total processing time of the update operation $\sigma_t$ is asymptotically bounded by the total number of edges deleted from the clusters in $\cset$, or inserted into the clusters in $\cset$, and the total number of vertices deleted from the clusters in $\cset$, while processing $\sigma_t$.

Additionally, we assume that we are given an oracle, that, at any time, may raise a flag $F_C$ for a cluster $C\in \cset$. When the oracle raises flag $F_C$, it needs to provide a pair  $x,y\in V(C)$ of vertices of $C$ with $\dist_C(x,y)\geq 1024 D\log^4 W$. The algorithm then runs $\proccut'(C,x,y,D)$, obtaining two clusters $C',C''$. The procedure outputs cluster $C'$ explicitly, and it outputs cluster $C''$ implicitly, by listing the edges and the vertices of $C\setminus C''$. Assume w.l.o.g. that Procedure $\proccut'$ found an eligible layer $L_i$ when running Dijktra's algorithm from vertex $x$. Then for every regular vertex $v\in V(H)$ with $\coveringcluster(v)=C$, if $v\in B_C(x,2iD-D)$, we set $\coveringcluster(v)=C'$; otherwise, we are guaranteed that $B_H(v,D)\subseteq V(C'')$, and $\coveringcluster(v)$ does not need to be updated. Additionally, for every vertex $x\in V(C')$, and for every edge $e\in E(C')$, we add $C'$ to $\clusterlist(x)$ or to $\clusterlist(e)$, respectively. Similarly, for every edge $e$ that was deleted from $C$, and for every vertex $x$ that was deleted from $C$, we delete $C$ from $\clusterlist(x)$ or from $\clusterlist(e)$, respectively. 
For every vertex $x\in V(C')$, we initialize the neighbor list $\lambda_{C'}(x)$, and for every edge $e$ that was deleted from $C$, we update the neighbor lists $\lambda(x)$ of its endpoint(s) that remain in $C$. 
Note that all these updates can be made in time $O(|E(C')|)$.
We say that cluster $C'$ was \emph{split off of $C$}. Once the update is completed, flag $F_C$ is lowered. No update operations from $\Sigma$ are processed when flag $F_C$ is up for any cluster $C$.

This completes the description of Algorithm \algmaintainNC. Note that for each cluster $C\in \cset$, from the moment $C$ is added to $\cset$, it undergoes a sequence of updates that are valid update operations for the corresponding valid input structure $\iset_C=\left(C,\set{\ell(e)}_{e\in E(C)},D\right )$. Some of these update operations mirror the update operations from $\Sigma$ that graph $H$ undergoes, and some update operations (edge deletions and isolated vertex deletions) arise from splitting clusters off of $C$. Next, we analyze some properties of \algmaintainNC that will be useful for us later.



\paragraph{Bounding the running time.}
Recall that we denote by $W=W(H^0)$ the total weight of all vertices of $H$ at the beginning of the algorithm.
For every cluster $C\in \cset$, we denote by $W^0(C)$ the total weight of all vertices of $C$ at the time when $C$ is added to $\cset$. We also denote by $\beta^0(C)$ the sum of the budgets $\beta_C(v)$ of all vertices $v\in V(C)$ at the moment when cluster $C$ is created. From the definition of budgets, $W^0(C)\leq \beta^0(C)$ for every cluster $C$.
Recall that, from the time that $C$ is added to $\cset$, $W(C)$ may only decrease. Notice that, once $C$ is added to $\cset$, it remains there until the end of the algorithm, though it is possible that all edges and all vertices are deleted from  $C$, and it becomes empty. 
We denote by $\cset'$ the set of all clusters that ever belonged to $\cset$ (or equivalently, it is the set of all clusters lying in $\cset$ at the end of the algorithm.)
We start with the following claim.

\begin{claim}\label{claim: bound total weight}
	$\sum_{C\in \cset'}W^0(C)\leq O(W\log W)$.
\end{claim}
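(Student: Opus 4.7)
The plan is to organize the clusters of $\cset'$ into a rooted tree that encodes the genealogy of splits, and then bound the sum of birth-weights level by level. Introduce a virtual root $H^*$ corresponding to the initial graph $H^0$, with $W^0(H^*) := W$. For any other cluster $C \in \cset'$, declare its \emph{parent} to be the cluster from which $C$ was split off: if $C$ was produced by a later invocation of \proccut$'$ applied to some $P \in \cset$, then $P$ is the parent of $C$; and if $C$ was produced during \initnc (either as a split-off cluster or as the distinguished cluster $C^*$), declare its parent to be $H^*$. By \Cref{obs: diam of cut off cluster} and \Cref{claim: proccut modified}, every non-distinguished child $C'$ of a parent $P \in \cset'$ satisfies $W^0(C') \leq W(P_{\text{at the split time}})/2 \leq W^0(P)/2$, so birth weights halve at every level below level $1$, and hence the tree has depth at most $\log_2 W + O(1)$.

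The key step is the per-node accounting: for every internal node $P$ of this tree,
\[
\sum_{C \text{ a child of } P} W^0(C) \;\leq\; (1+\eta)\, W^0(P), \qquad \eta := \frac{1}{64\log^2 W}.
\]
To prove this, fix $P$ and consider the successive \proccut-splits performed on it. A single split on current weight $W(P)$ produces a child with birth weight $W(L_1)+\cdots+W(L_i)$ and reduces $W(P)$ by $W(L_1)+\cdots+W(L_{i-1})$; eligibility condition~\ref{cut condition 2: small weight} gives $W(L_i) \leq (W(L_1)+\cdots+W(L_{i-1}))/(64\log^2 W)$, so the child's birth weight is at most $(1+\eta)$ times the weight removed from $P$. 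Summing over all splits of $P$ yields $\sum_{\text{children}} W^0 \leq (1+\eta)\,(W^0(P) - W^{\text{final}}(P)) \leq (1+\eta)\,W^0(P)$. For the virtual root $H^*$ the same per-split estimate applies to the \initnc splits; the only extra child is the distinguished cluster $C^*$, whose birth weight equals the final weight of $C^*$ at the end of \initnc, so that the total telescopes to at most $(1+\eta)\,W = (1+\eta)\,W^0(H^*)$.

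With these two ingredients a straightforward induction yields $\sum_{C \text{ at tree level } d} W^0(C) \leq (1+\eta)^d\,W$ for every $d \geq 0$. Since the tree depth is $O(\log W)$ and $(1+\eta)^{O(\log W)} \leq \exp(O(1/\log W)) = O(1)$ for $\eta = 1/(64\log^2 W)$, summing over the $O(\log W)$ levels gives
\[
\sum_{C \in \cset'} W^0(C) \;\leq\; \sum_{d=0}^{O(\log W)} (1+\eta)^d\, W \;\leq\; O(W \log W),
\]
as desired. The main subtlety is to ensure that the multiplicative slack $(1+\eta)$ introduced at each level by the shared boundary layer $L_i$ between the two sibling clusters $C'$ and $C''$ does not accumulate to more than $O(1)$ across the $\Theta(\log W)$ levels of the tree; this is precisely why the eligibility condition~\ref{cut condition 2: small weight} is tuned to $1/(64\log^2 W)$ rather than a milder bound such as $1/\log W$.
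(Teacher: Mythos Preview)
Your proof is correct. Both you and the paper organize $\cset'$ into the same ``split genealogy'' tree and use the halving guarantee from Condition~\ref{cut condition 1: half the weight} to bound its depth by $O(\log W)$; the difference lies only in how the per-node inequality is obtained. The paper invokes the budget machinery: since by \Cref{lem: change in budget after proccut} a single application of \proccut\ never increases the total budget, one gets the clean telescoping $\sum_{\text{children of }P}\beta^0(C_i)\leq \beta^0(P)$ with no slack, and then uses $W^0(C)\leq \beta^0(C)\leq 2W^0(C)$ to pass back to weights. You instead work with raw weights and extract the slack $(1+\eta)$, $\eta=1/(64\log^2 W)$, directly from eligibility Condition~\ref{cut condition 2: small weight}, and then observe that $(1+\eta)^{O(\log W)}=O(1)$, so the slack does not accumulate. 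In effect you are inlining the part of the budget argument that is relevant here: the factor $1+\log W(C)/\log^2 W$ baked into the definition of $\beta_C$ is precisely the device the paper uses to absorb the overlap layer $L_i$, and your explicit $(1+\eta)$ plays the same role. Your route is more self-contained for this claim; the paper's route reuses a lemma it needs anyway for the later bound on the number of clusters containing a given vertex.
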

\begin{proof}
	From the above discussion, for every cluster $C\in \cset'$, 	$W^0(C)\leq \beta^0(C)$. Moreover, it is easy to see that, if we denote by $\beta^0$ the sum of the budgets of all vertices $x\in V(H)$ at the beginning of the algorithm, then $\beta^0\leq 2W$. Therefore, it is enough to prove that $\sum_{C\in \cset'}\beta^0(C)\leq O(\beta^0\log W)$.
	
	We construct a partitioning tree $\tau$, whose vertex set is $\set{v(C)\mid C\in \cset'}$. The root of the tree is the vertex $v(H)$, corresponding to the original graph $H$. Consider now some vertex $v(C)$ of the tree, where $C$ is some cluster. Let $C_1,C_2,\ldots,C_q$ be all clusters that were split off of $C$ over the course of the algorithm. Recall that, from Condition \ref{cut condition 1: half the weight}, we are guaranteed that for all $1\leq i \leq q$, $W(C_i)\leq W(C)/2$ holds when $C_i$ is split off of $C$, and so $W^0(C_i)\leq W^0(C)/2$. We add edges connecting each of the vertices $v(C_1),\ldots,v(C_q)$ to $v(C)$, and these vertices become children of the vertex $v(C)$ in the tree. We need the following observation. 
	
	\begin{observation}\label{obs: weights do not grow}
		Let $v(C)$ be a vertex in the tree $\tau$, and let $v(C_1),\ldots,v(C_q)$ be its child vertices. Then $\sum_{i=1}^q\beta^0(C_i)\leq \beta^0(C)$.
	\end{observation}
	
	Assume first that the observation is correct. Since, for every child vertex $v(C_i)$ of vertex $v(C)$, $W^0(C_i)\leq W(C)/2\leq W^0(C)/2$ holds,  the depth of the tree $\tau$ is bounded by $\ceil{\log W}$. Moreover, if we denote, for $1\leq i\leq \ceil{\log W}$, by $\cset_i\subseteq \cset'$ the set of all clusters $C$ such that the distance from $v(C)$ to the root of $\tau$ is exactly $i$, then, from \Cref{obs: weights do not grow}, $\sum_{C\in \cset_i}\beta^0(C)\leq \beta^0(H)$. Therefore, $\sum_{C\in \cset'}W^0(C)\leq \sum_{C\in \cset'}\beta^0(C)\leq O(\beta^0(H)\log W)\leq O(W\log W)$. In order to complete the proof of \Cref{claim: bound total weight}, it is now enough to prove \Cref{obs: weights do not grow}.

	\begin{proofof}{\Cref{obs: weights do not grow}}
		We assume that the clusters $C_1,C_2,\ldots,C_q$ where split off of $C$ in this order. Consider the iteration of the algorithm when cluster $C_i$ was split off of cluster $C$, by applying Procedure \proccut to cluster $C$. Let $C''$ denote the cluster $C$ at the end of this procedure, and let $C$ denote  the same cluster at the beginning of this procedure. From \Cref{lem: change in budget after proccut}, the total budget of all vertices in $H$ did not increase as the result of applying \proccut to cluster $C$. If we denote $\beta$ the sum of budgets of all vertices of $H$ before the procedure is applied to cluster $C$, and by $\beta'$ the sum of budgets of all vertices after the procedure is applied, then:
		
		\[\beta'-\beta=\sum_{v\in C_i}\beta_{C_i}(v)+\sum_{v\in C''}\beta_{C''}(v)-\sum_{v\in C}\beta_C(v).\]
		
		Since $\beta'\leq \beta$, we get that $\sum_{v\in C_i}\beta_{C_i}(v)+\sum_{v\in C''}\beta_{C''}(v)\leq \sum_{v\in C}\beta_C(v)$. By applying this argument iteratively to all clusters $C_1,\ldots,C_q$, and recalling that, from \Cref{obs: no change in budges due to update operations}, update operation cannot increase the total budget of vertices in a cluster, we get that $\sum_{i=1}^q\beta^0(C_i)\leq \beta^0(C)$.
	\end{proofof}
\end{proof}

We obtain the following immediate corollary of \Cref{claim: bound total weight}, bounding the running time of Algorithm \algmaintainNC.

\begin{corollary}\label{cor: running time of algmaintainNC}
	The running time of \algmaintainNC is $O(W\cdot \mu \cdot D\cdot \poly\log (W\mu))$.
\end{corollary}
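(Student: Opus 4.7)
My plan is to decompose the total running time of \algmaintainNC into three buckets and bound each separately, then invoke \Cref{claim: bound total weight} as the main tool to combine them.

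First I would account for the initial invocation of Procedure \initnc at the start of the algorithm. By \Cref{lem: initialize neighborhood cover}, this costs $O(N^0(H)\cdot D\cdot \mu\cdot \poly\log(N^0(H)\mu))=O(W\cdot D\cdot \mu\cdot \poly\log(W\mu))$, which already matches the target bound. The role of the rest of the argument is simply to show that all subsequent work is subsumed by this term.

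Next I would bound the work spent on the oracle-triggered calls to $\proccut'$ (together with the associated bookkeeping: writing the new cluster $C'$ explicitly, initializing $\lambda_{C'}$ for each $x\in V(C')$, inserting $C'$ into $\clusterlist(x)$ for $x\in V(C')$ and $\clusterlist(e)$ for $e\in E(C')$, and rewiring $\coveringcluster(v)$ for regular vertices $v$ that need to move to $C'$). By \Cref{claim: proccut modified}, one such invocation producing $C'$ runs in time $O(|E(C')|)$, and all of the bookkeeping listed above is also $O(|E(C')|\cdot \poly\log(W\mu))$. Since every regular vertex incident to an edge of $C'$ has dynamic degree at most $\mu$, we have $|E(C')|\le W^0(C')\cdot \mu$. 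Summing over every cluster $C'$ that ever appears in $\cset$, \Cref{claim: bound total weight} gives $\sum_{C'\in \cset'} W^0(C')\le O(W\log W)$, and so the total contribution of this bucket is $O(W\cdot \mu\cdot \poly\log(W\mu))$.

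Finally I would bound the work spent propagating each update $\sigma_t\in \Sigma$ into the clusters of $\cset$. For an edge deletion the work is $O(|\clusterlist(e)|\cdot \poly\log(W\mu))$; for an isolated vertex deletion it is $O(|\clusterlist(x)|\cdot \poly\log(W\mu))$; for a supernode splitting on $u$ with edge set $E'$ it is $O(\sum_{e\in E'}|\clusterlist(e)|\cdot \poly\log(W\mu))$, since we can iterate over $e\in E'$ and charge each unit of work to an edge appearing in some cluster. In all three cases, each unit of work can be charged to a distinct edge-occurrence inside some cluster $C$. Because regular-vertex sets inside a cluster may only shrink, and because every regular vertex of $C$ has dynamic degree at most $\mu$ in $H$ (hence in $C$), the total number of edge-occurrences ever present in $C$ is at most $W^0(C)\cdot \mu$. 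Summing over $C\in \cset'$ and again invoking \Cref{claim: bound total weight}, this bucket contributes $O(W\cdot \mu\cdot \poly\log(W\mu))$.

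The only remaining obstacle, and the one worth being explicit about, is making sure the bookkeeping for supernode splitting is not undercounted: a splitting operation at a supernode $u$ must be executed in every cluster $C\in \clusterlist(u)$ with $E(C)\cap \delta_H(u)\ne\emptyset$, but, thanks to the implementation that iterates over $E'$ and uses $\clusterlist(e)$ rather than $\clusterlist(u)$, each unit of this work really is chargeable to a single edge-occurrence in a single cluster, so the above accounting goes through. Adding the three buckets yields the claimed bound $O(W\cdot \mu\cdot D\cdot \poly\log(W\mu))$.
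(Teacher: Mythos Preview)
Your proposal is correct and follows essentially the same argument as the paper: account separately for the cost of \initnc via \Cref{lem: initialize neighborhood cover}, then bound all remaining work per cluster by $O(W^0(C)\cdot\mu)$ (creation via $\proccut'$ plus propagated updates, each charged to edge-occurrences in $C$) and sum using \Cref{claim: bound total weight}. The paper phrases the per-cluster bound via a single quantity $m(C)$, the total number of edges ever present in $C$, but the content is the same as your two buckets combined.
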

\begin{proof}
Recall that, from \Cref{lem: initialize neighborhood cover}, the running time of Algorithm \initnc is $O\left (W  \mu D\poly\log(W\mu) \right )$. 
For every cluster $C\in \cset$, let $m(C)$ be the total number of edges that ever belonged to cluster $C$ over the course of the algorithm. 
Clearly, $m(C)\leq W^0(C)\cdot \mu$. The time spent on creating the cluster $C$ for the first time (by splitting it off of some other cluster $C^*$, including the time needed to update $C^*$) is bounded by $O(m(C))$. Subsequently, the time needed to process all update operations on $C$ is bounded by the total number of edges that are either deleted from $C$ or inserted into $C$, and the total number of vertices deleted from $C$. As the number of supernodes that are ever present in $C$ is bounded by $m(C)$, we get that the total update time spent on processing all update operations is bounded by $O(m(C))$. The total running time of the algorithm, excluding the time needed to run \initnc, is then bounded by: $\sum_{C\in \cset'}O(m(C))\leq \sum_{C\in \cset'}O(W^0(C)\cdot \mu)\leq O(W\mu\log W)$.
\end{proof}



\paragraph{Bounding number of copies of each vertex.}
Next, we bound the number of clusters in $\cset'$ that may contain a regular vertex $v$ of $H$, in the following theorem.

\begin{theorem}\label{thm: bound number of copies}
	For every regular vertex $v\in V(H)$, the total number of clusters $C\in \cset'$, such that $v$ lied in $C$ at any time during the algorithm's execution is at most $W^{O(1/\log\log W)}$.
\end{theorem}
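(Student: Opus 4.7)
Let $A_j$ denote the total weight of regular vertices that have ever entered class $S_{\geq j}$ during the execution. Since $n_v$ is non-decreasing as long as $v$ is alive (a \proccut operation can only increase $n_v$ — by one, when $v\in L_i$ — while other valid updates either leave $n_v$ unchanged or delete $v$), each vertex enters $S_{\geq j}$ at most once, so $A_j$ is well-defined. My goal is to show by induction on $j$ that
\[
A_j \;\leq\; \frac{W}{(16\log^2 W)^{j-1}}, \qquad 1\le j\le r.
\]
The base case $j=1$ is trivial since $A_1\le W$. Once this is established, plugging $r=2\lceil \log W/\log\log W\rceil$ gives $(16\log^2 W)^{r-1}\ge W^2$, so $A_r<1$; since every regular vertex has weight at least $1$, no regular vertex ever reaches $S_r$, i.e.\ $n_v<2^r=\gamma$ at all times.

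\paragraph{Inductive step.}
For the inductive step I will combine three tools. First, the budget lemma (\Cref{lem: change in budget after proccut}) applied to every \proccut invocation and summed telescopically yields
\[
\beta_{\geq j}^{\max}\;\leq\;2^{j}(1+\tfrac{1}{\log W})\,A_j.
\]
Second, Condition~\ref{cut condition 3: levels} of an eligible layer gives, for every single \proccut on a cluster $C$,
\[
\tW'_{j+1}\;\leq\;\tW(L_i\cap S_{\geq j})\;\leq\;\tW(L_{<i}\cap S_{\geq j})/(64\log^2 W),
\]
where $L_{<i}$ is precisely the set of vertices \emph{moved out} of $C$ in that \proccut. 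Third, \Cref{claim: bound total weight} bounds the total work $\sum_{C\in\cset'}W^0(C)\le O(W\log W)$, which in turn bounds the number of times any vertex is moved out or duplicated. Summing Condition~3 over all \proccut invocations gives
\[
A_{j+1}\;\leq\;\frac{1}{64\log^2 W}\sum_{\text{proc}}\tW(L_{<i}\cap S_{\geq j}),
\]
and the main technical content of the inductive step is to show that the right-hand sum is at most $O(\log W)\cdot A_j$. I expect this to follow from a charging argument: each move-out of a vertex $v\in S_{\geq j}$ from a cluster $C$ corresponds to $v$ descending one level in the partition tree of \Cref{claim: bound total weight} (depth $\le\log W$), and the total branching of $v$'s subtree at each level is controlled by the budget invariant $\sum_u n_u w(u)\le W$ together with the fact that $v\in S_{\geq j}$ contributes at least $2^j w(v)$ to $\beta_{\geq j}$.

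\paragraph{From $n_v$ to $n^*_v$ and the main obstacle.}
Having bounded $n_v<\gamma$ for all regular $v$ at all times, I still need to bound $n^*_v$, the \emph{total} number of clusters ever containing $v$, which can exceed $n_v$ since move-outs (i.e.\ $v\in L_{<i}$) add new clusters to the history of $v$ without increasing $n_v$. I plan to argue via the partition tree: $v$'s ``ever-in'' clusters form a subtree rooted at $H$ of depth $\le\log W$, and at every node $C$ of this subtree the branching in $v$'s subtree is at most $n_v^{\max}+1<\gamma+1$ (duplication children plus at most one move-out child, because once $v$ leaves $C$ it can never re-enter). A straightforward level-by-level count then gives $n^*_v\le \gamma\cdot\log W=W^{O(1/\log\log W)}$, which is exactly the bound in the theorem (the $\log W$ factor is absorbed into the $O(\cdot)$ in the exponent).

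The hard part will be the amortized bound $\sum_{\text{proc}}\tW(L_{<i}\cap S_{\geq j})\le O(\log W)\cdot A_j$, since the trivial bound using $|\cset'|\le O(W\log W)$ is far too weak; making the charging work crisply requires pairing each move-out with a suitable potential decrease (using $W(C')\le W(C)/2$ so that $\alpha(C')<\alpha(C)-1/\log^2 W$), and this is where most of the care is needed.
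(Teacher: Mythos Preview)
Your inductive framework and the reduction from $n^*_v$ to $n_v^{\max}$ are sound, but the heart of the inductive step---the claim that $\sum_{\text{proc}}\tW(L_{<i}\cap S_{\geq j})\le O(\log W)\cdot A_j$---is false, and the $2^j$ factor it hides is fatal to your scheme. Consider a single vertex $v$ of weight $1$ that enters $S_{\geq j}$ with exactly $2^j$ copies, and suppose each copy is subsequently moved out once (with no further duplications). Then $A_j=1$ while the move-out sum is $2^j$. More generally, the number of move-outs of copies of $v$ while $v\in S_{\geq j}$ is at most $n_v^{\max}\cdot\log W$, and telescoping the refined budget inequality (the one hidden inside the proof of \Cref{lem: change in budget after proccut}, which actually gives a \emph{decrease} of roughly $\tW(L_{<i}\cap S_{\geq j})/\log^2W$ in Step~1) yields only
$\sum_{\text{proc}}\tW(L_{<i}\cap S_{\geq j})\le O(2^j\log W)\cdot A_j$.
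Plugging this back gives $A_{j+1}\le 2^{j}A_j/O(\log W)$, and the recursion $A_{j+1}\lesssim 2^jA_j/\log W$ blows up long before $j=r$.

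The paper avoids this entirely by never invoking Condition~\ref{cut condition 3: levels} a second time. It uses only \Cref{lem: change in budget after proccut} and compares the \emph{upper} bound $\beta_{\geq j}\le 2^j(1+1/\log W)\,W(S_{\geq j})$ (which you already have) against the trivial \emph{lower} bound $\beta_{\geq j}\ge 2^j\,W(S_{\geq j})+2^j\,W(S_{\geq j+1})$, coming from the fact that each $v\in S_{\geq j+1}$ contributes at least $2^{j+1}w(v)$. The two $2^j$'s cancel, giving $W(S_{\geq j+1})\le W(S_{\geq j})/\log W$ directly---a one-line inductive step with decay ratio $\log W$. Your detour through the move-out sum reintroduces a $2^j$ with nothing to cancel it; the missing ingredient is precisely this lower bound on $\beta_{\geq j}$, and once you use it you no longer need the move-out sum at all.

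A minor point on your last paragraph: the branching bound ``at every node $\le\gamma+1$'' over depth $\log W$ would give $\gamma^{\log W}$, not $\gamma\cdot\log W$. The correct count (which the paper uses) is the one you implicitly reach anyway: at most $\gamma$ copies are ever created, and each copy is moved at most $\log W$ times, so $n^*_v\le O(\gamma\log W)$.
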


\begin{proof}
	For the sake of the proof, it is convenient to disregard update operations when isolated vertices are deleted from $H$; we will simply assume that such vertices remain in graph $H$ and in every cluster $C$ to which they belonged at the time of deletion; we will not add such vertices to any new cluster. This is done in order to avoid the reduction in $W(H)$ (and in the budget of all vertices) following isolated vertex deletion.

Algorithm \algmaintainNC can be equivalently described as follows. We start with $\cset=\set{H}$, and then perform iterations. In ever iteration, we apply Procedure \proccut to a cluster $C\in \cset$, with a vertex $x\in V(C)$ and distance bound $D$, obtaining two clusters $C',C''$. We then add $C'$ to $\cset$ and replace $C$ with $C''$ in $\cset$.
(We also perform edge-deletion and supernode-splitting operations, but these operations do not affect vertex weights or budgets so they are immaterial to this discussion).	
	
	Consider now some application of the \proccut operation to some cluster $C\in \cset$, in which a new cluster $C'$ is split off of $C$; we denote by $C''$ the cluster $C$ at the end of this operation. Recall that $W^0(C')\leq W(C)/2\leq W^0(C)/2$ must hold. If a regular vertex $v\in V(C)$ lies in $C'$ at the end of this operation, but it does not lie in $C''$, then we say that this copy of $v$ was \emph{moved from cluster $C$ to cluster $C'$}. If vertex $v$ lies in both $C'$ and $C''$, then we say that \emph{a new copy of $v$ was created} -- the copy that lies in $C'$. Since, whenever vertex $v$ is moved from cluster $C$ to cluster $C'$, $W^0(C')\leq W^0(C)/2$ must hold, a copy of $v$ may only be moved from one cluster to another at most $O(\log W)$ times. Therefore, it is now enough to bound the total number of copies of a vertex that the algorithm may ever create. Since we do not delete isolated vertices, once a new copy of a vertex $v$ is created, it is never deleted, and may only move from cluster to cluster. In particular, if a regular vertex $v$ is added to set $S_{\geq j}$, then it remains in set $S_{\geq j}$ until the end of the algorithm. Recall that vertex $v$ belongs to class $S_r$ iff at least $2^r$ copies of vertex $v$ exist in $\cset$. We show in the next claim that at the end of the algorithm, $S_r=\emptyset$. It then follows that for any vertex $v$ that every belonged to the graph $H$, the total number of copies of $v$ that may be created over the course of the algorithm is bounded by $2^r$. Since a copy of $v$ may be moved at most $O(\log W)$ times, the total number of clusters in $\cset'$ to which $v$ ever belonged over the course of the algorithm is bounded by $O(2^r\log W)\leq W^{O(1/\log\log W)}$, since $r=2\ceil{\log W/\log\log W}$. The following claim will then finish the proof of \Cref{thm: bound number of copies}.

	\begin{claim}\label{claim: last class empty}
		Throughout the algorithm, $S_r=\emptyset$ holds.
	\end{claim}
	\begin{proof}
		We prove by induction that, for all $1\leq j\leq r$, $\tW(S_{\geq j})\leq W/\log^j W$ holds over the course of the entire algorithm. Assume first that this is indeed the case. Then we get that $\tW(S_r)\leq W/\log^rW<1$, since $r=2\ceil{\log W/\log\log W}$.
		Since the weight of every regular vertex is $1$, it follows that $S_r=\emptyset$. Therefore, it is now enough to prove the following claim.
		
		\begin{claim}
			For all $1\leq j\leq r$, $\tW(S_{\geq j})\leq W/\log^j W$ holds over the course of the entire algorithm.
		\end{claim}
		\begin{proof}
			Consider some index $1\leq j\leq r$, and some regular vertex $v\in V(H)$. Vertex $v$ may  only be added to $S_{\geq j}$,  if $v\in S_{j-1}$, and Procedure \proccut creates an additional copy of vertex $v$, so that the number of copies of $v$ becomes $2^j$. 
			From the moment a vertex is added to $S_{\geq j}$, it remains there until the end of the algorithm. Therefore, it is enough to prove that, at the end of the algorithm,  $\tW(S_{\geq j})\leq W/\log^j W$ holds.

			
			
			
			We prove this by induction on $j$. We start with the base case, where $j=1$. Notice that, from the definition of vertex budgets, for every regular vertex $v\in V\setminus S_{\geq 1}$, $w(v)\leq \beta(v)\leq w(v)(1+1/\log W)$ holds throughout the algorithm. At the beginning algorithm, the total budget of all regular vertices, $\beta=\sum_{v\in V}\beta(v)\leq \sum_{v\in V}w(v)(1+1/\log W)=W(1+1/\log W)$ holds (recall that supernodes all have weight $0$ and budget $0$). Since we assume that isolated vertices are not deleted from $H$, throughout the algorithm, $\sum_{v\in V}w(v)=W$ holds. Assume for contradiction that $\sum_{v\in S_{\geq 1}}w(v)> W/\log W$ holds at the end of the algorithm. 
			Observe that, at the end of the algorithm, a regular vertex $v\in V\setminus S_{\geq 1}$ has budget $\beta(v)\geq w(v)$, while a vertex $v\in S_{\geq 1}$ has budget at least $2w(v)$. Therefore, the total budget of all vertices at the end of the algorithm is at least:
			
			\[\sum_{v\in V}w(v)+\sum_{v\in S_{\geq 1}}w(v)>W(1+1/\log W).  \]
			
			This is a contradiction since the total budget of all vertices may not increase over the course of the algorithm.

			Assume now that for some $1\leq j<r$, $\sum_{v\in S_{\geq j}}w(v)\leq W/\log^j W$ holds at the end of the algorithm. We will now prove that this inequality holds for $j+1$. 
			
			Note that every vertex $v$ that belongs to $S_{\geq j}$ at the end of the algorithm has budget at least $2^j w(v)$, while a vertex $v$ that belongs to $S_{\geq j+1}$ has budget at least $2^{j+1}w(v)$ at the end of the algorithm. Assume for contradiction that at the end of the algorithm, $\sum_{v\in S_{\geq j+1}}w(v)> W/\log^{j+1} W$ holds. Then, at the end of the algorithm:
			
			\begin{equation}\label{eq: beta bounds}
			 \sum_{v\in S_{\geq j}}\beta(v)\geq 2^j\sum_{v\in S_{\geq j}}w(v)+2^{j}\sum_{v\in S_{\geq j+1}}w(v)> 2^j\sum_{v\in S_{\geq j}}w(v)+2^{j}W/\log^{j+1}W.
			 \end{equation}

Consider now changes to the value $\beta_{\geq j}=\sum_{v\in S_{\geq j}}\beta(v)$ over the course of the algorithm. When the algorithm starts, $\beta_{\geq j}=0$. From \Cref{lem: change in budget after proccut},  the value $\beta_{\geq j}$ may only increase when a vertex $v$ is added to $S_{\geq j}$. In such a case, the increase in $\beta_{\geq j}$ is bounded by $2^{j}(1+1/\log W)w(v)$. Update operations in the input sequence $\Sigma$ and other updates due to application of \proccut may not increase $\beta_{\geq j}$. Therefore, at the end of the algorithm:
			
			\[\begin{split}
			\beta_{\geq j}&\leq 2^{j}(1+1/\log W)\cdot \sum_{v\in S_{\geq j}}w(v)\\
			&\leq 2^j\sum_{v\in S_{\geq j}}w(v) + 2^j \sum_{v\in S_{\geq j}}w(v)/\log W\\
			&\leq 2^j\sum_{v\in S_{\geq j}}w(v) + 2^j W/\log^{j+1}W,
			\end{split} \]
			
			from the induction hypothesis, contradicting Inequality \ref{eq: beta bounds}.
		\end{proof}
	\end{proof}
\end{proof}

\subsection{\maintaincluster Problem and a Reduction from \recdynnc to \maintaincluster}
\label{subsec: maintain cluster problem}

In this subsection, we define a new problem, that we call \maintaincluster. Intuitively, this problem will be used in order to simulate the oracle needed for \algmaintainNC. Additionally, an algorithm for this problem will need to support $\spquery(C,v,v')$ queries as required in the \recdynNC problem. We state our main result for the \maintaincluster problem -- a fast deterministic algorithm for solving this problem, and show that this new algorithm, combined with Algorithm \algmaintainNC presented in the previous subsection, provides an algorithm for the \recdynnc problem, leading to the proof of \Cref{thm: main dynamic NC algorithm}.
We start with the definition of the \maintaincluster problem.

\subsubsection{\maintaincluster Problem}
Intuitively, the input to the \maintaincluster problem is a valid input structure $\iset=\left(C,\set{\ell(e)}_{e\in E(C)},D\right )$, where  $C$ is a connected subgraph of the original graph $H$.
Graph $C$ undergoes a sequence of valid update operations, with dynamic degree bound $\mu$. We assume that we are given as input a parameter $\hat W\geq N^0(C)\mu$, where $N^0(C)$ is the initial number of regular vertices in $C$. We are required to support queries 
$\spquery(C,v,v')$, in which, given a pair $v,v'\in V(C)$ of regular vertices of $C$, we need to return a path of length at most $\alpha D$ connecting them in $C$, where $\alpha$ is the desired approximation factor. However, the algorithm is allowed to raise a flag $F_C$ at any time. When flag $F_C$ is raised, the algorithm is required to supply a pair 
$v,v'$ of regular vertices of $C$, with $\dist_C(v,v')>1024D\log^4\hat W$.  The algorithm then receives, as part of its input update sequence $\Sigma$, a sequence $\Sigma'$ of valid update operations, at the end of which either $x$ or $y$ are deleted from $C$. We called sequence $\Sigma'$ a \emph{flag lowering sequence}.
Once the flag lowering sequence $\Sigma'$ is processed, flag $F_C$ is lowered. However, the algorithm may raise the flag $F_C$ again immediately, as long  as it provides a new pair $\hat v,\hat v'$ of vertices with $\dist_C(\hat v,\hat v')>1024D\log^4\hat W$. We emphasize that we view the resulting flag lowering sequences $\Sigma'$ as part of the input sequence $\Sigma$ of valid update operations, and so the restriction that the total number of edges incident to a regular vertex of $C$ over the course of the update sequence is bounded by $\mu$ continues to hold.
Queries $\spquery$ may only be asked when flag $F_C$ is down.
We also emphasize that the initial cluster $C$ that serves as input to the \maintaincluster problem may have an arbitrarily large diameter, and so the algorithm for the \maintaincluster problem  may repeatedly raise the flag $F_C$, until it is able to support $\spquery(C,v,v')$ (that intuitively means that the diameter of $C$ has fallen under $\alpha D$).
 We now provide a formal definition of the problem.

\begin{definition}[\maintaincluster problem] The input to the \maintaincluster problem is a valid input structure $\iset=\left(C,\set{\ell(e)}_{e\in E(C)},D\right )$, where $C$ is a connected graph. 
Graph $C$ undergoes an online sequence $\Sigma$ of valid update operations, and we are given its dynamic degree bound $\mu$. Additionally, we are given the desired approximation factor $\alpha$ and a parameter $\hat W\geq N^0(C)\cdot \mu$, where $N^0(C)$ is the number of regular vertices of $C$ at the beginning of the algorithm. 
The algorithm must  support queries $\spquery(C,v,v')$: given a pair $v,v'\in V(C)$ of regular vertices of $C$, return a path $P$ of length at most $\alpha D$ connecting them in $C$, in time $O(|E(P)|)$. The algorithm may, at any time, raise a flag $F_C$, at which time it must  supply a pair  $\hat v,\hat v'$ of regular vertices of $C$, with $\dist_C(\hat v,\hat v')>1024D\log^4\hat W$.
 Once flag $F_C$ is raised, the algorithm will obtain, as part of its input update sequence $\Sigma$, a sequence $\Sigma'$ of valid update operations called flag-lowering sequence, at the end of which 
 either $\hat v$ or $\hat v'$ are deleted from $C$. Flag $F_C$ is lowered after these updates are processed by the algorithm. Queries $\spquery$ may only be asked when flag $F_C$ is down.
\end{definition}

Our main result for the \maintaincluster problem is summarized in the following theorem.

\begin{theorem}\label{thm: main maintain cluster algorithm}
	There is a deterministic algorithm for the \maintaincluster problem, that,  on input $\iset=\left(C,\set{\ell(e)}_{e\in E(C)},D \right )$, that undergoes a sequence of valid update operations with dynamic degree bound $\mu$, and parameters $c/\log \log N^0(C)<\eps<1$ for some large enough constant $c$, and $\hat W\geq N^0(C)\mu$, where $N^0(C)$ is the number of regular vertices in $C$ at the beginning of the algorithm, achieves approximation factor $\alpha=(\log \hat W)^{2^{O(1/\eps)}}$, and  has total update time $O\left (N^0(C)\cdot \mu^2\cdot D^3\cdot \hat W^{O(\eps)}\cdot (\log \hat W)^{O(1/\eps^2)}\right )$. 
\end{theorem}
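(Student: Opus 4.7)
The plan is to design the algorithm for the \maintaincluster problem in phases, where at the start of each phase we apply the pseudocut-expander construction (from \Cref{thm: compute pseudocut and expander}) to the current cluster $C$ to obtain a $(D',\rho)$-balanced pseudocut $E'\subseteq E(C)$ for $D'=\Theta(D\poly\log \hat W)$ and balance parameter $\rho=\hat W^{\eps}$, together with an expander $X$ on vertex set $\{v_e\mid e\in E''\}$ for some large $E''\subseteq E'$, and a low-congestion embedding of $X$ into $C$ via short paths. A phase will terminate once too many edges of $E''$ are deleted (so that the expander can no longer be pruned), at which point we recompute everything from scratch. The crucial observation, coming from the definition of a pseudocut, is that at the end of a phase every connected $D'$-neighborhood in $C\setminus E'$ has at most $\tW(C)/\rho$ regular vertices, so the total weight of $C$ is guaranteed to drop by a factor $\rho=\hat W^{\eps}$ between consecutive restarts; this bounds the number of phases by $O(1/\eps)$.

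Inside a single phase we distinguish two regimes by the size of the pseudocut. If $|E'|\geq (\tW(C))^{1-\eps}$ (the \emph{large-cut} regime), I would run the pure expander-based algorithm sketched in the introduction: maintain an \EST of depth $\Theta(D\poly\log\hat W)$ rooted at an auxiliary source connected to every endpoint of $E''$, which detects when some vertex of $C$ becomes far from $E''$ (raising $F_C$ with an appropriate witness pair), and in parallel maintain the expander $X$ with its embedding using \Cref{thm: expander pruning} together with the deterministic expander-\APSP algorithm (\Cref{thm: expander APSP}). Path queries $\spquery(C,v,v')$ concatenate an \EST-path from $v$ to its nearest edge of $E''$, an expander-path inside $X$ between the two endpoint vertices (mapped back through the embedding), and a symmetric path from the other side to $v'$. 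In the \emph{small-cut} regime $|E'|<(\tW(C))^{1-\eps}$, I would instead invoke the recursive approach: apply the algorithm from \Cref{thm: main dynamic NC algorithm} (i.e.\ the \recdynnc algorithm we are in the middle of proving, on a strictly smaller instance) to $H':=C\setminus E'$ at every distance scale $D_i=2^i$ for $1\le i\le \lceil\log D\rceil$, recursively invoke the \maintaincluster algorithm on each of the resulting clusters (each of weight at most $\tW(C)/\rho=(\tW(C))^{1-\eps}$), assemble the contracted graph $\hat H$ from \Cref{sec: conracted graph}, and recursively apply the \maintaincluster algorithm to $\hat H$, again a strictly smaller instance since $|S|\le 2|E'|\le (\tW(C))^{1-\eps}$. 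A shallow \EST in $C$ rooted at an auxiliary source attached to the regular endpoints $S$ of $E'$ then certifies that every vertex is near $S$, while the recursive \maintaincluster data structure on $\hat H$ certifies that $S$ is internally close; together they let us answer $\spquery$, raise $F_C$, and provide a witness pair.

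For the flag-lowering mechanics, whenever $F_C$ is raised with witnesses $x,y$, we pass them up to the caller, and the subsequent valid updates in $\Sigma'$ simply filter into all of the above data structures in the usual way (edge deletions into \EST, expander, and the recursive neighborhood covers; the supernode-splitting updates in $\hat H$ are exactly those induced by new clusters appearing in the recursive covers, and these are of the form prescribed in \Cref{sec: valid inputs}). The approximation factor is analyzed layer-by-layer: the expander contributes an $O(\poly\log\hat W/\phi)$ factor, the contracted graph loses an additive $D\poly\log\hat W$ per edge used, and the two levels of recursion (in $D$ inside \recdynnc, in weight inside \maintaincluster) multiply the approximation. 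Since each recursion reduces the relevant size parameter by a factor $\hat W^{\eps}$, there are $O(1/\eps)$ levels, yielding the $\alpha=(\log\hat W)^{2^{O(1/\eps)}}$ bound. For the running time, the bottleneck in the large-cut regime is the expander maintenance and the \EST, both $\widetilde O(|E(C)|\cdot D)\le\widetilde O(\tW(C)\mu D)$ per phase; in the small-cut regime, the recursive cost on $\hat H$ inherits the same form but on a strictly smaller graph, so a geometric-sum argument across the $O(1/\eps)$ levels gives the claimed $\hat W^{O(\eps)}$ overhead.

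The hard part will be the accounting across phases and across the two recursions simultaneously, while guaranteeing that the dynamic degree bounds in the recursive sub-instances stay controlled (this is what makes \Cref{thm: bound number of copies} indispensable: without the $\hat W^{O(1/\log\log\hat W)}$ bound on how often a regular vertex appears in the maintained covers $\cset_i$, the supernode-splitting updates on $\hat H$ would blow up its effective size). Carefully coordinating the restart schedule with the pseudocut's balance parameter $\rho$, so that each restart gives a weight reduction large enough to offset the $\poly(D)$ and $\hat W^{O(\eps)}$ overheads per phase, and then showing that the resulting recurrence $T(\hat W)\le \hat W^{O(\eps)}\cdot T(\hat W^{1-\eps})+\widetilde O(\tW(C)\mu^2 D^3)$ solves to the stated bound, is where the proof requires the most care.
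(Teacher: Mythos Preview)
Your overall architecture matches the paper's closely (pseudocut plus expander in the large-cut regime; contracted graph with recursive \maintaincluster in the small-cut regime; an \EST over $S$ to witness far pairs), but there is a genuine gap in the phase accounting.

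Your sentence ``at the end of a phase every connected $D'$-neighborhood in $C\setminus E'$ has at most $\tW(C)/\rho$ regular vertices, so the total weight of $C$ is guaranteed to drop by a factor $\rho$ between consecutive restarts; this bounds the number of phases by $O(1/\eps)$'' conflates two different events. The pseudocut property is a statement about the fixed subgraph $C\setminus E'$, not about $C$ after updates; it holds the moment $E'$ is computed and says nothing about $W(C)$ itself. A large-cut phase terminates when the expander can no longer be pruned, which happens after roughly $\phi^*|E''|/(\eta\log\hat W)$ edge deletions --- at that point very few edges of $E'$ need have been deleted from $C$, and $W(C)$ may be essentially unchanged. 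So you cannot get $O(1/\eps)$ large-cut phases this way. The paper instead charges each large-cut phase against $\Omega(W(C)/(D\cdot \hat W^{O(\eps)}\cdot(\log\hat W)^{O(1/\eps)}))$ units of update cost, and since the total cost is $O(N^0(C)\mu)$ (\Cref{obs: total cost of all updates}), the number of such rebuilds is $O(\mu D\hat W^{O(\eps)}(\log\hat W)^{O(1/\eps)})$ (\Cref{obs: number of good type-3 iterations}). This is the origin of the $\mu^2 D^3$ factor in the final bound; your recurrence as stated would not produce it.

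Structurally, the paper separates the two regimes rather than interleaving them phase by phase. The entire large-cut behaviour is packaged into \algmaintaingoodcluster (\Cref{thm: alg for maintain good cluster}), which internally rebuilds the expander many times and only terminates when it produces a \emph{bad witness} (a small pseudocut). The main proof then does induction on a level parameter $z$ with $N^0(C)\le \hat W^{\eps z/2}$, and runs four non-repeating phases: (1) \algmaintaingoodcluster until a bad witness; (2) the contracted-graph machinery you describe; (3) once $\hat E=\emptyset$, repeatedly raise $F_C$ to force $W(C)$ below $O(N^0(C)/\rho)$ using the pseudocut property (this is where the weight reduction you wanted actually happens); (4) recurse at level $z-1$. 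One further detail you are missing in Phase~2: the paper does not work with $C\setminus E'$ directly but with a modified graph $\tilde H$ that replaces each edge of $\hat E$ by a fake edge to a fake degree-$1$ regular vertex. The reason is exactly the degree issue you flag, but \Cref{thm: bound number of copies} alone does not fix it --- that theorem bounds cluster membership only for \emph{regular} vertices, whereas an edge of $\hat E$ has a supernode endpoint that could lie in unboundedly many clusters. The fake-vertex construction is what forces both endpoints of every pseudocut edge to be regular for the purposes of the neighborhood covers, and hence keeps the dynamic degree bound of $\hat H$ at $\mu\cdot W^{O(1/\log\log W)}$.
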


Next, we show that  \Cref{thm: main dynamic NC algorithm} follows from \Cref{thm: main maintain cluster algorithm}.

\subsubsection{Completing the Proof of \Cref{thm: main dynamic NC algorithm}}

Recall that we are given  a valid input structure $\iset=\left(H=(V,U,E),\set{\ell(e)}_{e\in E(H)},D \right )$, undergoing a sequence $\Sigma$ of valid update operations with dynamic degree bound $\mu$, and a desired approximation factor $\alpha$. As before, we set the weight of every regular vertex to $1$, the weight of every supernode to $0$, and we denote by $W=N^0(H)$ the weight of all vertices of $H$ at the beginning of the algorithm. We are also given a precision parameter $c/\log\log W<\eps<1$, and we use a new parameter $\hat W=W\cdot \mu$.

We employ Algorithm \algmaintainNC from \Cref{subsec: algorithmic framework for NC} on the input $\iset$, and the sequence $\Sigma$ of valid update operations. 
Recall that the algorithm maintains a set $\cset$ of clusters. Whenever Algorithm \algmaintainNC adds a new cluster $C$ to $\cset$, we initialize the algorithm for the \maintaincluster problem from \Cref{thm: main maintain cluster algorithm} on cluster $C$, with the same approximation factor $\alpha$, distance bound $D$, and parameter $\hat W$ as defined above.
We denote by $N^0(C)=W^0(C)$ the number of regular vertices of $C$ (or equivalently, the total weight of all vertices of $C$), when $C$ is added to $\cset$.
 This algorithm will serve two purposes: first, it will process queries $\spquery(C,v,v')$ for cluster $C$. Additionally, it will serve as an oracle that is used by Algorithm \algmaintainNC in order to update the clusters.

Recall that Algorithm \algmaintainNC starts with $\cset=\set{H}$. The only changes that Algorithm \algmaintainNC may perform to the clusters of $\cset$ are the following:

\begin{itemize}
	\item  delete an edge from $C$ (when the corresponding edge is deleted from $H$);
	\item delete an isolated vertex from $C$ (when the corresponding vertex is deleted from $H$);
	\item add a new supernode $u'$ together with adjacent edges when a supernode $u\in C$ undergoes supernode splitting update in $H$; and
	\item split a cluster $C'$ off of $C$.
\end{itemize} 

(We view the initial execution of the Algorithm \initnc on the original graph $H$ as a series of cluster splitting operations).
In the last operation, cluster $C'$ is a vertex-induced subgraph of $C$. The operation can be implemented by first adding $C'$ to $\cset$, and then deleting edges and vertices from $C$ as necessary.
Therefore, if we denote by $\uset=\set{V(C)\mid C\in \cset}$, then set $\uset$ only undergoes allowed changes. Algorithm  \algmaintainNC takes care of maintaining the data structures $\coveringcluster(v)$ for every regular vertex $v\in V(H)$, and data structures $\clusterlist(x)$ for vertices $x\in V(H)$ and $\clusterlist(e)$ for edges $e\in E(H)$.
In order to respond to  a query $\spquery(C,v,v')$, we simply run this query in the data structure for the \maintaincluster problem on cluster $C$, which must return  a path $P$ of length at most $\alpha D$ connecting $v$ to $v'$ in $C$, in time $O(|E(P)|)$. The fact that the data structures for the \maintaincluster problem on the clusters in $\cset$ can support such queries proves that at any time, $\cset$ is a strong $(D,\alpha\cdot D)$-neighborhood cover for the regular vertices in graph $H$.

Note that, from \Cref{thm: bound number of copies},
	for every regular vertex $v\in V(H)$, the total number of clusters in $\cset$ to which vertex $v$ ever belonged over the course of the algorithm is bounded by $W^{O(1/\log\log W)}$.

It now remains to bound the total running time of the algorithm. From \Cref{cor: running time of algmaintainNC},
	the running time of \algmaintainNC is $O(W\cdot D\cdot  \mu \cdot \poly\log (W\mu))$.

For every cluster $C\in \cset$, the running time of the algorithm for the \maintaincluster problem on $C$ is at most 
$O\left (W^0(C)\cdot \mu^2\cdot D^3\cdot \hat W^{O(\eps)}\cdot (\log \hat W)^{O(1/\eps^2)}\right )$.
 Lastly, if we denote by $\cset'$ the set of all clusters that were ever added to $\cset$, then, from 
\Cref{claim: bound total weight}, $\sum_{C\in \cset'}W^0(C)\leq O(W\log W)$. Therefore, the total running time of algorithms $\maintaincluster(C)$ for all clusters $C\in \cset$ is bounded by:

\[ 
\begin{split}
&O\left ( \left (\sum_{C\in \cset'}W^0(C)\right ) \cdot \mu^2\cdot D^3\cdot \hat W^{O(\eps)}\cdot (\log \hat W)^{O(1/\eps^2)}\right )\\
&\quad\quad\quad\quad\quad\leq   O\left ( (N^0(H))^{1+O(\eps)} \cdot \mu^{2+O(\eps)} \cdot D^3\cdot (\log (N^0(H)\mu))^{O(1/\eps^2)}\right ),
\end{split}
 \]

 as required.

The remainder of this paper focuses on the proof of \Cref{thm: main maintain cluster algorithm}. As we have shown, the proof of \Cref{thm: main dynamic NC algorithm}, and hence of \Cref{thm: main final dynamic NC algorithm}, and  Theorems \ref{thm: main} and \ref{thm: NC} follow from it.
We start with a slow and simple algorithm for the \maintaincluster problem, that will be used by our final recursive algorithm for the \maintaincluster problem as a recursion base.

\subsection{A Slow and Simple Algorithm for the \maintaincluster Problem}\label{subsec: alg slow and simple}

In this subsection, we present a straightforward algorithm, that we refer to as \algslow, for the \maintaincluster problem. Recall that in this problem, we are given a valid input structure $\iset=\left(C,\set{\ell(e)}_{e\in E(C)}, D\right )$, where $C$ is a connected graph that undergoes a sequence $\Sigma$ of valid update operations with dynamic degree bound $\mu$. 
We denote by $N^0(C)$ the number of regular vertices in $C$ at the beginning of the algorithm. We are also given a parameter $\hat W\geq N^0(C)\cdot \mu$. The goal is to  to support queries $\spquery(C,v,v')$: given a pair $v,v'\in V(C)$ of regular vertices of $C$, return a path $P$ of length at most $\alpha D$ connecting them in $C$, in time $O(|E(P)|)$, where $\alpha$ is the desired approximation factor. The algorithm may raise a flag $F_C$ at any time; it is then required to provide 
 supply a pair  $\hat v,\hat v'$ of regular vertices of $C$, with $\dist_C(\hat v,\hat v')>1024D\log^4\hat W$.
Once flag $F_C$ is raised, the algorithm will obtain, as part of its input update sequence, a sequence $\Sigma'$ of update operations, at the end of which 
either $\hat v$ or $\hat v'$ are deleted from $C$. Flag $F_C$ is lowered after these updates are processed by the algorithm. Queries $\spquery$ may only be asked when flag $F_C$ is down. Our  slow and simple algorithm for the \maintaincluster problem is summarized in the following theorem.

\begin{theorem}\label{thm: slow simple}
	There is a deterministic algorithm for the \maintaincluster problem, that we call \algslow, that achieves approximation factor $\alpha=1024\log^4\hat W$ and total update time: 
	$$O ((N^0(C))^2\mu D\poly\log \hat W).$$
\end{theorem}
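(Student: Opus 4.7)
The plan is to prove the theorem by maintaining, for every regular vertex $v$ that currently lies in $C$, a generalized Even--Shiloach tree $\tau_v$ rooted at $v$ with depth threshold $D^{*} = \alpha D = 1024 D \log^{4} \hat W$, as supplied by \Cref{thm: ES-tree}. When a regular vertex $v$ is deleted from $C$ (which happens at most once per vertex, since valid update operations never create new regular vertices), the tree $\tau_v$ is discarded. In particular, every tree we ever maintain is rooted at one of the $N^{0}(C)$ initial regular vertices of $C$, and each individual tree is a fully legal instance of \Cref{thm: ES-tree} because $C$ undergoes only valid update operations with dynamic degree bound $\mu$, and every edge length is bounded by $D \leq D^{*}$.

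The protocol is straightforward. After every valid update to $C$, I would check whether there exists a pair of regular vertices $v, v' \in V(C)$ with $v' \notin \tau_v$; by symmetry of undirected distances, this is equivalent to $\dist_C(v,v') > D^{*} = 1024 D \log^{4} \hat W$. If such a pair is found, the flag $F_C$ is raised immediately with $(v, v')$ as witnesses, which satisfies the required strict distance lower bound since $\tau_v$ certifies every vertex whose distance from $v$ is $\leq D^{*}$. Once the flag is down, every pair of currently-regular vertices of $C$ lies in each other's trees, so a query $\spquery(C, v, v')$ is answered by reading off the unique $v$--$v'$ path in $\tau_v$: by \Cref{thm: ES-tree} this takes $O(|E(P)|)$ time and yields a path of length at most $D^{*} = \alpha D$. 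When the subsequent flag-lowering sequence ends, one of $v, v'$ has been removed from $C$, so the corresponding tree is discarded and the algorithm resumes monitoring; this guarantees that every flag is eventually lowered.

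For the running-time bound, \Cref{thm: ES-tree} says that a single generalized \EST with dynamic degree bound $\mu$ and depth threshold $D^{*}$ has total update time $\widetilde{O}\bigl(N^{0}(C)\cdot \mu \cdot D^{*}\bigr) = O\bigl(N^{0}(C)\cdot \mu \cdot D \cdot \poly\log \hat W\bigr)$. We maintain at most $N^{0}(C)$ such trees, one per initial regular vertex, giving a cumulative cost of $O\bigl((N^{0}(C))^{2}\cdot \mu \cdot D \cdot \poly\log \hat W\bigr)$, which matches the claim.

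The only real bookkeeping point is to detect far-apart pairs without blowing up the running time. I would handle this by keeping, for each $\tau_v$, a counter $\mathrm{bad}(v)$ equal to the number of regular vertices currently in $C$ that are currently outside $\tau_v$, and raise the flag whenever any counter becomes positive. Each change to $\mathrm{bad}(v)$ is triggered either by a regular vertex entering/leaving $\tau_v$ (an event the generalized \EST already reports during its own update) or by a regular-vertex deletion from $C$ (at most $N^{0}(C)$ events, each touching at most $N^{0}(C)$ counters). All such events are either charged to the total update time of the \EST's or contribute at most $O((N^{0}(C))^{2})$ work overall, so the bookkeeping fits within the stated bound. I do not expect any genuine obstacle in this proof; the only item that merits a careful sentence is verifying that the witnesses produced are regular vertices at strictly more than $1024 D \log^{4} \hat W$ distance, which is automatic from the fact that we only root trees at and iterate over regular vertices, and from the strict distance certification of \Cref{thm: ES-tree}.
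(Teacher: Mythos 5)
Your proposal is correct and essentially identical to the paper's own proof: the paper also maintains a generalized Even--Shiloach tree $\tau(v)$ rooted at every regular vertex $v$ with depth threshold $\alpha D$, raises the flag $F_C$ whenever some regular vertex $v'$ falls outside some $\tau(v)$ (certifying $\dist_C(v,v')>1024D\log^4\hat W$), and answers queries by retracing the tree path; your counter $\mathrm{bad}(v)$ plays the role of the paper's list $L(v)$ of regular vertices outside $\tau(v)$, and the running-time analysis (at most $N^0(C)$ trees, each costing $\otilde(N^0(C)\mu D^*)$) is the same.
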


\begin{proof}
	The algorithm simply maintains, for every regular vertex $v\in V(C)$, the generalized \EST data structure $\tau(v)$ rooted at $v$, with depth threshold $\alpha D$, using the algorithm from \Cref{thm: ES-tree}. For each such tree $\tau(v)$, it also maintains a list $L(v)$ of all regular vertices of $C$ that do not lie in the tree $\tau(v)$. 
	Lastly, for every vertex $x\in V(C)$, we maintain a pointer from $x$ to its location in each of the trees $\tau(v)$ that contain $x$, and similarly for every edge $e\in E(C)$, we maintain a pointer from $e$ to its location in each of the trees $\tau(v)$ containing $e$.
	It is immediate to generalize the algorithm from \Cref{thm: ES-tree} to maintain these additional data structures.

	Since the running time of the algorithm from \Cref{thm: ES-tree} is $\otilde(N^0(C)\cdot \mu \cdot \alpha D)=\otilde(N^0(C)\cdot \mu \cdot D\log^4\hat W)$, the total update time needed to maintain all these data structures is $O((N^0(C))^2\cdot \mu \cdot D\cdot \poly\log \hat W)$.

	Whenever, for any regular vertex $v$, $L(v)\neq \emptyset$ holds, we raise the flag $F_C$, with the corresponding pair of vertices being $v, v'$, where $v'$ is any vertex of $L(v)$. We are then guaranteed that $\dist_C(v,v')>1024D\log^4\hat W$.
	
	(We note that it is possible that the diameter of the initial cluster $C$ is greater than $\alpha D$. In this case, immediately after initializing our data structures, we will discover that for some regular vertices $v$, $L(v)\neq \emptyset$ holds. We will then raise the flag $F_C$ immediately, and will continue raising it until $L(v)=\emptyset$ holds for all regular vertices $v$, and so the diameter of $C$ falls below $\alpha D$. Only then do we start processing the updates from the input sequence and responding to queries).
	
	Recall that a query $\spquery(C,v,v')$ may only be asked when flag $F_C$ is down, so for every regular vertex $v''\in V(C)$, $L(v'')=\emptyset$. 
	In particular, $v'\in \tau(v)$ must hold. Given such a query, we retrace the path $P$ connecting $v'$ to $v$ in the tree $\tau(v)$, and then return this path. Clearly, the running time of this algorithm is $O(|E(P)|)$. Since the depth of the tree $\tau(v)$ is $\alpha D$, the length of path $P$ is at most $\alpha D$, as required.
\end{proof}


We will use \algslow as a subroutine in order to solve the \maintaincluster problem on clusters that are small, e.g. clusters $C$ with $W^0(C)<\hat W^{O(\eps)}$. 
Additionally, for some settings of the parameters, Algorithm \algslow immediately provides the desired running time.  We summarize these 
settings in the following observation that follows immediately from \Cref{thm: slow simple}.

\begin{observation}\label{obs: easy param settings for maintaincluster}
	Consider an input to the \maintaincluster  problem, consisting of a valid input structure $\iset=\left(C,\set{\ell(e)}_{e\in E(C)},D\right )$ that undergoes an online sequence $\Sigma$ of valid update operations with dynamic degree bound $\mu$, and a parameter  $\hat W\geq N^0(C)\cdot \mu$. 
	Then:
	
	\begin{itemize}
		\item if $D> \hat W$, then the running time of \algslow is $O (N^0(C)\mu D^2\poly\log \hat W)$;
		\item if $\hat W<2^{O(1/\eps^2)}$, then  the running time of \algslow is $O(N^0(C)\mu D\cdot 2^{O(1/\eps^2)}\poly\log \hat W)\leq O\left (N^0(C)\mu D(\log \hat W)^{O(1/\eps^2)}\right )$.
	\end{itemize}
\end{observation}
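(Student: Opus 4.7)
The plan is to derive both bounds directly from the general running-time bound $O((N^0(C))^2 \mu D \poly\log \hat W)$ established in \Cref{thm: slow simple}. The only leverage I need is the assumed relation $\hat W\geq N^0(C)\cdot\mu\geq N^0(C)$, which bounds $N^0(C)$ in terms of $\hat W$, together with whichever parameter relation is assumed in the given case. The observation then reduces to simple arithmetic absorbing one factor of $N^0(C)$ into either $D$ or $2^{O(1/\eps^2)}$.

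For the first case, I would argue as follows. Since $\hat W\geq N^0(C)$ and we are assuming $D>\hat W$, we get $N^0(C)\leq \hat W<D$. Consequently
\[
(N^0(C))^2\cdot \mu\cdot D\cdot \poly\log\hat W \;\leq\; N^0(C)\cdot D\cdot \mu\cdot D\cdot \poly\log\hat W \;=\; O\!\left(N^0(C)\,\mu\, D^2\,\poly\log\hat W\right),
\]
which is the claimed bound. No additional argument or structural property of \algslow is required, only the monotonicity estimate on $N^0(C)$.

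For the second case, the assumption $\hat W<2^{O(1/\eps^2)}$ combined with $N^0(C)\leq\hat W$ gives $N^0(C)\leq 2^{O(1/\eps^2)}$, so one factor of $N^0(C)$ in the bound of \Cref{thm: slow simple} can be replaced by $2^{O(1/\eps^2)}$, yielding
\[
O\!\left(N^0(C)\,\mu\, D\cdot 2^{O(1/\eps^2)}\,\poly\log\hat W\right).
\]
To obtain the final form stated in the observation, I would use $2^{O(1/\eps^2)}\leq (\log\hat W)^{O(1/\eps^2)}$, which holds provided $\log\hat W\geq 2$; this mild assumption can be absorbed into the constants (or handled trivially when $\hat W$ is smaller than any absolute constant, since then the whole bound is dominated by $O(N^0(C)\mu D\,\poly\log\hat W)$ already).

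Since the proof is essentially a direct consequence of \Cref{thm: slow simple} together with the defining inequality $\hat W\geq N^0(C)\mu$, there is no real obstacle beyond identifying the right factor to trade. The only subtlety worth spelling out in the final writeup is the passage from $2^{O(1/\eps^2)}$ to $(\log\hat W)^{O(1/\eps^2)}$, and this is standard.
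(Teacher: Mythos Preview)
Your proposal is correct and matches the paper's intent: the paper simply states that the observation ``follows immediately from \Cref{thm: slow simple}'' without giving further details, and your argument is exactly the natural expansion of that remark, trading one factor of $N^0(C)\leq \hat W$ for $D$ or $2^{O(1/\eps^2)}$ as appropriate. Your handling of the passage from $2^{O(1/\eps^2)}$ to $(\log\hat W)^{O(1/\eps^2)}$ is also correct and more explicit than anything the paper writes.
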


\section{Second Main Tool: Balanced Pseudocuts, Expanders, and Their Embeddings}\label{sec: pseudocuts}
Throughout this section, we assume that we are given a valid input structure $\iset=\left(C,\set{\ell(e)}_{e\in E(C)}, D\right )$, where $C$ is a connected graph, with weights $w(v)\geq 1$ on regular vertices $v\in V(C)$ and $w(u)=0$ on supernodes $u\in V(C)$. In this section we develop some tools that will be used in order to design a more efficient algorithm for the \maintaincluster problem. 

\subsection{Balanced Pseudocuts}

Intuitively, following a rather standard definition, for a given balance parameter $\rho$, a \emph{balanced multicut} in a vertex-weighted graph $C$ is a subset $E'\subseteq E(C)$ of edges of $C$, such that every connected component of $C\setminus E'$ has weight at most $W(C)/\rho$. We use instead a notion of \emph{balanced pseudocuts}, that can be viewed  as a relaxation of the notion of a balanced multicut.
The advantage of using this relaxed notion is that, as we later show, we can obtain an algorithm that computes a balanced pseudocut, and embeds an expander $X$, whose vertex set corresponds to a large subset of the edges of the pseudocut, into the graph $C$ with low congestion. We start by formally defining balanced pseudocuts (that we call pseudocuts for brevity). 

\begin{definition}[Pseudocut]
	Suppose we are given  a valid input structure  $\iset=\left(C,\set{\ell(e)}_{e\in E(C)},D\right )$ with weights $w(x)\geq 0$ for vertices $x\in V(C)$, together with another distance bound $\hat D>D$ and a parameter $\rho>1$. A subset $\hat E\subseteq E(C)$ of edges of $C$ is a $(\hat D,\rho)$-\emph{pseudocut}, iff for every vertex $x\in V(C)$, the total weight of all vertices in $B_{C\setminus \hat E}(x,\hat D)$ is at most $\frac{W(C)}{\rho}$. 
\end{definition}

One useful feature of a $(\hat D,\rho)$-pseudocut $\hat E$ is that, if we compute a Neighborhood Cover in graph $C\setminus \hat E$, with distance bound $\hat D/(1024\log^4W)$, using, for example, Procedure \initnc, then the weight of each resulting cluster will be significantly lower than $W(C)$. Standard balanced multicuts mentioned above provide this property as well. As already mentioned, an advantage of pseudocuts is that, as we show below, we can efficiently compute a pseudocut $\hat E$ in graph $C$, together with an expander $X$, whose vertex set $V(X)=\set{t_e\mid e\in \hat E^*}$ corresponds to a large subset $\hat E^*\subseteq \hat E$ of edges of the pseudocut, and an embedding of $X$ into $C$ with low congestion, such that every edge of $X$ is embedded into a short path. Intuitively, if the size of the pseudocut $\hat E$ is sufficiently large, then an algorithm for the \maintaincluster problem can proceed as follows: we maintain an \EST whose root is a new vertex $s$, that connects to every endpoint of every edge $e\in \hat E$ with $t_e\in V(X)$. This allows us to ensure that every vertex of $C$ is sufficiently close to some edge $e\in \hat E$ with $t_e\in V(X)$, and to easily detect when this property no longer holds, so flag $F_C$ can be raised. Additionally, we employ known algorithms for APSP in expander graphs, in order to ensure that the endpoints of all edges in $\set{e\in \hat E\mid t_e\in V(X)}$ remain close to each other. Lastly, as graph $C$ undergoes update operations, we can maintain the expander $X$ and its embedding into $C$ using the ``expander pruning'' algorithm from \Cref{thm: expander pruning}. This algorithm ensures that, if $|\hat E|$ is sufficiently large, then expander $X$ can be maintained over a long enough sequence of update operations of the input structure $\iset$. Once expander $X$ can no longer be maintained (that is, a long enough sequence of update operations has occurred since $X$ was created), we recompute the pseudocut and the corresponding expander, together with its embedding, from scratch. This is precisely the approach that we employ, if the size of the pseudocut that we compute is large enough. However, if $|\hat E|$ is small, then, using this approach, we may need to recompute the expander and the pseudocut too often, resulting in high running time. We discuss our approach for dealing with this situation later. We now proceed to define the main tools that we use.

\subsection{Expanders and Their Embeddings}

Recall that a graph $X$ is a $\phi$-expander, for a parameter $0<\phi<1$, iff for every partition $(A,B)$ of $V(X)$, $|E_X(A,B)|\geq \phi\min\set{|A|,|B|}$ holds.

\begin{definition}[Embeddings of Graphs]
Let $G$, $X$ be two graphs with $V(X)\subseteq V(G)$. An \emph{embedding} of $X$ into $C$ is a collection $\pset=\set{P(e)\mid e\in E(X)}$ of paths in graph $G$, such that, for every edge $e=(x,y)\in E(X)$, path $P(e)$ connects vertex $x$ to vertex $y$. The \emph{congestion} of the embedding is the maximum, over all edges $e'\in E(G)$, of the number of paths in $\pset$ containing $e'$. 
\end{definition}

Assume now that we are given a subset $\hat E\subseteq E(C)$ of edges of the graph $C$. We let $C_{|\hat E}$ be the graph obtained from $C$ by subdividing every edge $e=(u,v)\in \hat E$ with a vertex $t_e$. We set the lengths of the new edges $\ell(u,t_e)=\ell(v,t_e)=\ell(u,v)$.

\begin{definition}[Expander over an Edge Subset]
Given a subset $\hat E\subseteq E(C)$ of edges of the graph $C$, and another graph $X$, we say that $X$ is defined over the set $\hat E$ of edges if $V(X)=\set{t_e\mid e\in \hat E}$. 
\end{definition}

\subsection{Computing a Pseudocut, the Corresponding Expander, and its Embedding}

The following theorem is a central tool that we use to design efficient algorithms for the \maintaincluster problem.

\begin{theorem}\label{thm: compute pseudocut and expander}
	There is a deterministic algorithm, that we call \algpseudo, whose input is  a valid input structure $\iset=\left(C,\set{\ell(e)}_{e\in E(C)}, D\right )$,  where $C$ is a connected graph, with arbitrary weights $w(x)\geq 0$ for vertices $x\in V(C)$, and parameters  $0<\epsx<1$, $\hat W\geq \max\set{W(C),|E(C)|, 2^{\Omega(1/\epsx^2)}}$, and  $\hat D>D$,  such that, for every vertex $x\in V(C)$, $w(x)\leq \frac{W(C)}{4\hW^{\epsx}}$ holds. The algorithm 
	computes a $(\hat D,\rho)$-pseudocut $\hat E$ in $C$, with $\rho=\hat W^{\eps}$, and a $\phi^*$-expander $X$  defined over an edge set $\hat E^*\subseteq \hat E$, with $|\hat E^*|\geq \Omega(|\hat E|/\hW^{2\epsx})$, such that the maximum vertex degree in $X$ is $O(\log \hW)$, and $ \phi^*=1/(\log \hW)^{O(1/\epsx)} $. The algorithm also computes an embedding $\pset$ of $X$ into $C_{|\hat E^*}$ with congestion at most $\eta=\hat D\cdot \hat W^{\eps} (\log \hat W)^{O(1/\eps)}$, such that the length of every path in $\pset$ is at most $ O(\hat D\log^2\hat W)$. The running time of the algorithm is  $O\left (|E(C)|\cdot \hat D^2\cdot \hat W^{O(\epsx)}\cdot  (\log \hat W)^{O(1/\epsx^2)} \right )$.
\end{theorem}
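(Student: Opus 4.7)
The plan is to split the construction into two phases: first compute the pseudocut $\hat E$, and then build the expander $X$ on a large subset of its edges together with a low-congestion, short-path embedding into $C_{|\hat E^*}$.

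For the first phase, I would compute $\hat E$ via iterative ball growing in the Leighton--Rao style. Start with $\hat E = \emptyset$. While there exists a vertex $x$ with $W\bigl(B_{C\setminus \hat E}(x,\hat D)\bigr) > W(C)/\rho$, I would grow a ball around $x$: the standard volume-doubling argument guarantees that within $O(\hat D)$ ``length layers'' there is a radius $r \leq \hat D$ for which the boundary edge set $\delta(B_{C\setminus \hat E}(x,r))$ has cardinality at most $O(\log \hat W / \hat D)$ times the weight inside the ball. Adding these boundary edges to $\hat E$ and repeating, the total size of $\hat E$ stays $O(W(C)\log \hat W / \hat D)\cdot \text{polylog}$, and the procedure terminates once no such heavy ball remains, which is exactly the defining property of a $(\hat D,\rho)$-pseudocut. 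Truncating the ball growing at $O(\hat D\log \hat W)$ layers is what gives the short embedding paths later, since every endpoint of an edge of $\hat E$ lies within distance $O(\hat D\log \hat W)$ of some common ``center'' inside a cluster.

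For the second phase, I would set $T=\{t_e \mid e\in \hat E\}$ in $C_{|\hat E}$ and run a deterministic cut-matching game to build $X$ on a subset of $T$. In each of $O(\log^2 \hat W)$ iterations I maintain a partial expander $X^{(i)}$ on the currently surviving terminals, and either (i) find a bipartition of the surviving terminals into roughly balanced halves $(A,B)$ with few edges across in $X^{(i)}$, in which case I use approximate multicommodity flow between $A$ and $B$ inside $C$ to either route an embeddable matching (which I add as new edges of $X^{(i+1)}$) or to expose a sparse cut in $C$ (which allows me to discard from $\hat E$ the terminals lying on one side, shrinking $T$ by at most an $\hat W^{-O(\eps)}$ factor over all iterations); or (ii) certify that $X^{(i)}$ is already an expander. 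The flow routing is length-bounded at $O(\hat D\log^2 \hat W)$ so the resulting embedding paths are short, and the congestion is controlled by the depth of the flow computation together with the $\hat D\cdot \hat W^\eps$ capacity that balanced cuts allow us to push through $C$. Keeping only the terminals that survive all iterations gives $\hat E^*$ with $|\hat E^*|\geq |\hat E|/\hat W^{2\eps}$, and the final $X$ on $\hat E^*$ is a $\phi^*$-expander for $\phi^*=1/(\log \hat W)^{O(1/\eps)}$.

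The main obstacle is doing all of this \emph{deterministically} while keeping the running time near-linear in $|E(C)|$ (up to the stated $\hat D^2 \cdot \hat W^{O(\eps)}\cdot (\log \hat W)^{O(1/\eps^2)}$ overhead). The randomized cut-matching game of Khandekar--Rao--Vazirani gives $\phi^*=1/\text{polylog}$ but relies on random partitions; to derandomize, I would invoke the recursive deterministic expander/flow framework of~\cite{detbalanced} (the same one that underlies the decremental APSP-on-expanders algorithm), which replaces randomness by a $(1/\eps)$-deep recursion and therefore converts $\text{polylog}$ expansion into the weaker $(\log \hat W)^{O(1/\eps)}$ bound stated in the theorem. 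The edge-degree upper bound $w(x)\leq W(C)/(4\hat W^\eps)$ is exactly what the cut-matching analysis needs to guarantee that every balanced cut of $T$ has sufficient weight on both sides to be usable. A secondary issue is that my ball-growing procedure in Phase 1 must be implemented via truncated Dijkstra from dynamically chosen sources in order to charge the total running time against $|E(C)|\cdot \hat D$; this is standard but requires carefully reusing partial shortest-path trees across ball-growing steps, and accounts for the extra $\hat D$ factor in the running time bound.
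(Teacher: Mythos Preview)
Your two-phase plan has a genuine gap: the pseudocut and the expander cannot be computed independently. The claim implicit in Phase 2 --- that given \emph{any} valid pseudocut $\hat E$, one can build the required expander over some $\hat E^*\subseteq\hat E$ with $|\hat E^*|\ge|\hat E|/\hat W^{2\eps}$ --- is false. Take $C$ a path on $n$ vertices, unit lengths, $\hat D=2$, $\hat W=n$, $\eps=0.1$; then $\hat E=E(C)$ is trivially a valid pseudocut and you would need $|\hat E^*|\gtrsim n^{0.8}$. But any $\phi^*$-expander on $\hat E^*$ has diameter $(\log n)^{O(1/\eps)}$, and once embedded via paths of length $O(\hat D\log^2\hat W)=O(\log^2 n)$ all of $\hat E^*$ must lie in a sub-path of length $(\log n)^{O(1/\eps)}$, which cannot contain $n^{0.8}$ edges. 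Your ``discard the terminals on one side of the sparse cut'' step offers no mechanism to preserve the pseudocut property and no argument for the asserted $\hat W^{-O(\eps)}$ total loss bound.

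The paper instead interleaves the two computations. It starts from the trivially valid pseudocut $\hat E_0$ consisting of all edges crossing a greedy weight-balanced partition of $V(C)$ into $4\rho$ groups (this is where the hypothesis $w(x)\le W(C)/(4\hat W^\eps)$ is actually used), and then repeatedly applies one subroutine: either embed an expander over $\Omega(|\hat E|/\hat W^{2\eps})$ edges of the current $\hat E$ and stop, or return a strictly smaller valid pseudocut $\hat E'$. For the second branch, the cut-matching game is run not once but recursively, until it produces $h\ge\rho$ pairwise $(>4\hat D)$-separated terminal subsets $T_1,\ldots,T_h$ together with a small separating set $E''$; since their $2\hat D$-balls $B_1,\ldots,B_h$ in $C\setminus E''$ are disjoint, some $B_i$ has weight $\le W(C)/\rho$, and the new pseudocut is $\hat E'=(\hat E\setminus\hat E_i)\cup E''$. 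The point you are missing is the verification that $\hat E'$ is still a $(\hat D,\rho)$-pseudocut: any $\hat D$-ball in $C\setminus\hat E'$ either avoids $Z_i$, in which case it coincides with a ball in $C\setminus\hat E$ and is controlled by the old pseudocut, or meets $Z_i$, in which case it lies inside the light ball $B_i$. (Separately, your Phase 1 is also broken as stated: Leighton--Rao ball growing bounds the diameter of the resulting pieces, not the weight of radius-$\hat D$ balls. In a clique of weight exceeding $W(C)/\rho$, removing boundary edges disconnects the clique but leaves every radius-$1$ ball inside it still heavy, and the procedure stalls with an invalid $\hat E$.)
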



We emphasize that the expander $X$ that we construct is unweighted. Therefore, when we talk about the length of a path $P$ in the expander $X$, we refer to the number of edges on $P$. 

\begin{proofof}{\Cref{thm: compute pseudocut and expander}}
	The following theorem is key to the proof of \Cref{thm: compute pseudocut and expander}.

\begin{theorem}\label{lem: smaller pseudocut or expander}
	There is a deterministic algorithm, whose input consists of:
	
	\begin{itemize} 
		\item  a valid input structure $\iset=\left(C,\set{\ell(e)}_{e\in E(C)}, D\right )$,  where $C$ is a connected graph;
		\item arbitrary weights $w(x)\geq 0$ for vertices $x\in V(C)$;
		\item parameters $0< \epsx< 1$, $\hat W\geq \max\set{W(C),|E(C)|,2^{\Omega(1/\epsx^2)}}$, and  $\hat D>D$; and
		\item a $(\hat D,\rho)$-pseudocut $\hat E$  for $C$ of cardinality $k$, with $\rho=\hat W^{\eps}$. 
	\end{itemize}

The algorithm computes one of the following:
	
	\begin{itemize}
		\item either a  $\phi^*$-expander $X$ defined over a subset $\hat E^*\subseteq \hat E$ of at least $\Omega(k/\hat W^{2\epsx})$ edges, with $\phi^*=1/(\log \hat W)^{O(1/\epsx)}$, of maximum vertex degree $O(\log \hat W)$, together with an embedding $\pset$ of $X$ into $C_{|\hat E^*}$ with congestion at most $\hat D\cdot \hat W^{\eps} (\log \hat W)^{O(1/\eps)}$, such that the length of every path in $\pset$ is at most $ O(\hat D\log^2\hat W)$; or
		
		\item another $(\hat D,\rho)$-pseudocut $\hat E'$ in $C$, with $|\hat E'|\leq |\hat E|\left (1-\frac{1}{\hat W^{\eps} (\log \hat W)^{O(1/\eps)} }\right )$.
	\end{itemize}  
	The running time of the algorithm is $O\left (|E(C)|\cdot  \hat D^2\cdot \hat W^{O(\eps)}(\log \hat W)^{O(1/\eps^2)}\right )$.
\end{theorem}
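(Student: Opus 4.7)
The plan is to prove this dichotomy via a deterministic variant of the KRV cut-matching game, played on the set $T = \{t_e : e \in \hat E\}$ of subdivision vertices inside $C_{|\hat E}$. The game attempts to embed a $\phi^*$-expander $X$ with $V(X) \subseteq T$ into $C_{|\hat E}$ using short paths of length $O(\hat D \log^2 \hat W)$ and bounded total congestion. If the game succeeds we have produced the required expander, while if it fails in some round the failure witness is converted into the smaller pseudocut $\hat E'$.

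For the ``game succeeds'' branch, I would run $r = \Theta(\log \hat W)$ rounds of cut-matching. In each round, a deterministic balanced-partition subroutine (in the spirit of \cite{detbalanced}) produces a near-balanced bipartition $(A_i, B_i)$ of the currently alive subset $T_i \subseteq T$ with respect to the current $X_i$. The matching player then routes a perfect matching between $A_i$ and $B_i$ in $C_{|\hat E}$ via short-path multicommodity flow with length bound $O(\hat D \log \hat W)$ and per-round congestion budget $\hat D \hat W^{\eps} (\log \hat W)^{O(1/\eps)}/r$. The short-path flow would be computed deterministically by BFS-layered augmenting paths on the $O(\hat D \log \hat W)$-truncated ball graph; this is where the $\hat D^2$ factor in the running time arises, through repeated BFS on $|E(C)|$-edge subgraphs, while the $(\log \hat W)^{O(1/\eps^2)}$ factor enters through a recursive cut-matching composition tuned by $\eps$. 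After $r$ rounds, any vertex of $T$ that was unsuccessfully matched at any point is pruned; the standard cut-matching analysis then shows that the surviving set $\hat E^{*} \subseteq \hat E$ has $|\hat E^{*}| \geq \Omega(k/\hat W^{2\eps})$, that the union of the $r$ matchings forms a $\phi^{*}$-expander on $\{t_e : e \in \hat E^{*}\}$ with $\phi^{*} = 1/(\log \hat W)^{O(1/\eps)}$, and that the routing paths serve directly as the required embedding $\pset$.

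For the ``game fails'' branch, the failure of some round's short-path matching routing yields, by LP duality, a set $E^{*} \subseteq E(C_{|\hat E})$ with $|E^{*}| \leq k/\tau$, where $\tau = \hat W^{\eps}(\log \hat W)^{O(1/\eps)}$, together with a subset $T^{\flat} \subseteq T$ of size $\Omega(k)$ that is separated from $T \setminus T^{\flat}$ by $E^{*}$ at distance $O(\hat D \log \hat W)$ in $C_{|\hat E}$. The new pseudocut is then constructed as $\hat E' = (\hat E \setminus \hat E_{T^{\flat}}) \cup (E^{*} \cap E(C))$, where $\hat E_{T^{\flat}} = \{e \in \hat E : t_e \in T^{\flat}\}$, giving $|\hat E'| \leq |\hat E| - |\hat E_{T^{\flat}}| + |E^{*}| \leq k(1 - 1/\tau)$ as required. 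The $(\hat D, \rho)$-pseudocut property of $\hat E'$ would be verified by a ball-growing argument: for any vertex $x$ its $\hat D$-ball in $C \setminus \hat E'$ must be contained either in the $T^{\flat}$-side or in the complementary side of the cut $E^{*}$ (since $E^{*}$ provides $\hat D$-separation in $C_{|\hat E}$), and on each side either the preserved $\hat E$-edges or $E^{*}$ itself continues to witness the original weight bound $W(C)/\rho$.

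The hard part will be making the final ball-weight argument rigorous: naively removing edges from $\hat E$ can enlarge some $\hat D$-ball in $C \setminus \hat E'$, so the proof must carefully exploit that $E^{*}$ provides $\hat D$-distance separation inside $C_{|\hat E}$, and that this separation distance matches the pseudocut's target distance. A secondary challenge is obtaining the short-path multicommodity flow deterministically within the claimed running time, which will likely require recursive application of balanced-cut primitives parametrized by $\eps$ — explaining both the $\hat W^{O(\eps)}$ and the $(\log \hat W)^{O(1/\eps^2)}$ factors. A final subtlety is tuning the cut-matching game so that the simultaneously required expander parameters ($\phi^{*} = 1/(\log \hat W)^{O(1/\eps)}$, maximum degree $O(\log \hat W)$, embedding path length $O(\hat D \log^2 \hat W)$, and congestion $\hat D \hat W^{\eps}(\log \hat W)^{O(1/\eps)}$) are all achieved by the same single construction.
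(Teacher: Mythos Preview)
Your high-level architecture matches the paper's: run a deterministic cut-matching game on the terminal set $T=\{t_e:e\in\hat E\}$ inside $C_{|\hat E}$, and either embed an expander or use the cut witness to shrink the pseudocut. The expander branch is essentially right. The genuine gap is in the ``game fails'' branch, specifically in verifying that $\hat E'=(\hat E\setminus\hat E_{T^\flat})\cup(E^*\cap E(C))$ remains a $(\hat D,\rho)$-pseudocut.

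Your argument says that for $x$ on the $T^\flat$-side, the ball $B_{C\setminus\hat E'}(x,\hat D)$ is confined to that side by $E^*$, and that ``the preserved $\hat E$-edges or $E^*$ itself'' then witness the weight bound $W(C)/\rho$. But on the $T^\flat$-side you have \emph{restored} all edges of $\hat E_{T^\flat}$ (they are no longer in $\hat E'$), and nothing bounds the total weight of that side: it could easily be $\Omega(W(C))$. A single bipartition coming out of one failed matching round gives you distance separation, not a weight guarantee, and $E^*$ only separates the two sides---it does nothing to limit ball weight \emph{within} a side. The paper resolves this by iterating the expander-or-cut primitive until the terminal set is partitioned into $h=\lceil\hat W^\eps\rceil$ subsets $T_1,\dots,T_h$ whose $2\hat D$-balls $B_1,\dots,B_h$ in $C\setminus E''$ are pairwise disjoint; by pigeonhole some $B_i$ has weight at most $W(C)/h\le W(C)/\rho$, and it is \emph{that} specific set $\hat E_i$ which is removed. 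The two-case pseudocut verification then goes through cleanly: if $B_{C\setminus\hat E'}(x,\hat D)$ meets the endpoint set $Z_i$ of $\hat E_i$, it is contained in $B_i$ (weight $\le W(C)/\rho$); otherwise no edge of $\hat E_i$ touches the ball, so $B_{C\setminus\hat E'}(x,\hat D)\subseteq B_{C\setminus\hat E}(x,\hat D)$ and the original pseudocut applies. This many-set partition is also exactly why the theorem only promises a shrinkage factor of $1-\Theta\bigl(1/(\hat W^\eps(\log\hat W)^{O(1/\eps)})\bigr)$: each $T_i$ has size only $\approx k/(\hat W^\eps(\log\hat W)^{O(1/\eps)})$, so removing one buys precisely that much. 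Your claimed removal of $|T^\flat|=\Omega(k)$ edges, if it were compatible with the pseudocut property, would yield a far larger shrinkage than the theorem asserts---a sign that the argument cannot be right as stated.
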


The proof of \Cref{thm: compute pseudocut and expander} easily follows from \Cref{lem: smaller pseudocut or expander}.
Throughout, we set $\rho=\hat W^{\eps}$. 
Recall that for every vertex $x\in V(C)$, $w(x)\leq \frac{W(C)}{4\hat W^{\epsx}}\leq \frac{W(C)}{4\rho}$ holds.
 We start with an initial $(\hat D,\rho)$-pseudocut $\hat E_0$, that is constructed as follows. We construct a partition $S_1,\ldots,S_r$ of $V(C)$, where $r=4\rho$, and for all $1\leq i\leq r$, $W(S_i)\leq W(C)/\rho$ via a simple greedy algorithm: start with $S_1=S_2=\cdots=S_r=\emptyset$, and then process the vertices of $C$ in the order of their weights from highest to lowest, breaking ties arbitrarily. When vertex $x$ is processed, we add it to the set $S_i$, whose current weight is the smallest (if several sets have the same weight, choose one arbitrarily). It is immediate to verify that, when the algorithm terminates, $\max_{1\leq i<j\leq r}|W(S_i)-W(S_j)|\leq 2\max_{x\in V(C)}w(x)\leq \frac{W(C)}{2\rho}$. Since $r=4\rho$, there must be some set $S_i$ with $W(S_i)\leq W(C)/(4\rho)$. We conclude that for every set $S_j$, $W(S_j)\leq W(C)/\rho$. We then let $\hat E_0$ contain all edges whose endpoints lie in different sets $S_1,\ldots,S_r$. It is immediate to verify that $\hat E_0$ is a $(\hat D,\rho)$-pseudocut (in fact it is a valid $(D^*,\rho)$-pseudocut for any parameter $D^*<\infty$).

We then perform iterations. The input to the $i$th iteration is the current $(\hat D,\rho)$-pseudocut $\hat E_{i-1}$, where the input to the first iteration is the pseudocut $\hat E_0$ that we have just computed. We apply the algorithm from \Cref{lem: smaller pseudocut or expander} to the same input $\iset$, and $(\hat D,\rho)$-pseudocut $\hat E_{i-1}$. If the outcome of the algorithm is another $(\hat D,\rho)$-pseudocut $\hat E'$ with $|\hat E'|\leq |\hat E_{i-1}|\left (1-\frac{1}{\hat W^{\eps} (\log \hat W)^{O(1/\eps)} }\right )$, then we set $\hat E_i=\hat E'$, and continue to the next iteration. Otherwise, the algorithm must return 
	a  $\phi^*$-expander $X$ defined over an edge set $\hat E^*\subseteq \hat E_{i-1}$, for $\phi^*=1/(\log \hat W)^{O(1/\epsx)}$, with $|\hat E^*|\geq \Omega(|\hat E_{i-1}|/\hat W^{2\epsx})$, such that the maximum vertex degree in $X$ is at most $O(\log \hat W)$. The algorithm also computes an embedding $\pset$ of $X$ into $C$ with congestion at most $\hat D\cdot \hat W^{\eps} (\log \hat W)^{O(1/\eps)}$, such that the length of every path in $\pset$ is at most $ O(\hat D\log^2\hat W)$. We then terminate the algorithm, and return $\hat E_{i-1}, X$ and $\pset$ as its outcome.
	
	Since we are guaranteed that, for all $i$,  $|\hat E_i|\leq |\hat E_{i-1}|  \left (1-\frac{1}{\hat W^{\eps} (\log \hat W)^{O(1/\eps)} }\right )$, and $|E(C)|\leq \hat W$, the number of iterations is bounded by $\hat W^{\eps} \cdot (\log \hat W)^{O(1/\eps)}$. Since the running time of each iteration is $O\left (|E(C)|\cdot  \hat D^2\cdot \hat W^{O(\eps)}(\log \hat W)^{O(1/\eps^2)}\right )$,
	the total running time of the algorithm is bounded by:
	
	\[O\left (|E(C)|\cdot  \hat D^2\cdot \hat W^{O(\eps)}\cdot (\log \hat W)^{O(1/\eps^2)}\right ).\]

	In order to complete the proof of \Cref{thm: compute pseudocut and expander}, it is now enough to prove \Cref{lem: smaller pseudocut or expander}. The proof is somewhat technical, and we provide it in Section \ref{subsec: pseudocut or expander} of Appendix.
	We provide here a high-level overview of the proof. A key part of the proof is an algorithm that either computes the desired expander $X$ and its embedding, or it computes a collection $E_1,\ldots,E_h$ of disjoint subsets of edges from $\hat E$, such that the cardinality of each subset is at least $k'\geq k/(\hat W^{O(\epsx)}(\log \hat W)^{O(1/\eps)})$ and $h\geq \rho$. The algorithm also computes a subset $E'$ of at most $k'/2$ edges of $C$, such that, in graph $C\setminus E'$, for every pair $E_i,E_j$ of distinct subsets of edges, the length of the shortest path connecting an edge of $E_i$ to an edge of $E_j$ is at least $10\hat D$. This is sufficient for us in order to define the new pseudocut, $\hat E'=(\hat E\setminus E_i)\cup E'$, where set $E_i$ is carefully chosen to ensure that $\hat E'$ is indeed a valid pseudocut. Intuitively, $E_i$ is chosen so that the total weight of all vertices in the ball of radius $2\hat D$ around $E_i$ in graph $C\setminus E'$ is the smallest.
	
	The algorithm mentioned above, that either computes the expander $X$ with its embedding, or the desired collection $E_1,\ldots,E_h$ of subsets of $\hat E$, proceeds by iteratively applying the cut-matching game of \cite{KRV,KhandekarKOV2007cut} to graph $C_{|\hat E}$. Using standard techniques, such an algorithm can be used in order to either compute an expander $X$ over a large enough subset $\hat E^*\subseteq \hat E$ of edges and to embed it into $C_{|\hat E}$, or to compute two large enough subsets $\hat E_1,\hat E_2$ of edges of $\hat E$, together with another relatively small subset $E'$ of edges of $C$, such that, in $C\setminus E'$, every path connecting an endpoint of an edge of $\hat E_1$ to an endpoint of an edge of $\hat E_2$ is sufficiently long. We start by applying this algorithm to the initial edge set $\hat E$ in graph $C_{|\hat E}$. If the algorithm returns the desired expander and its embedding, then we are done. Otherwise, it returns two subsets $\hat E_1,\hat E_2$ of edges of $\hat E$, and edge set $E'$. We then recursively apply the same algorithm to both $\hat E_1$ and $\hat E_2$, instead of $\hat E$. Again, if the outcome is an expander and its embedding into $C_{|\hat E}$ then we are done; otherwise, we obtain sufficiently large subsets $\hat E^1_1,\hat E_1^2$ of $\hat E_1$, and  $\hat E^1_2,\hat E_2^2$ of $\hat E_2$. We then continue applying the same algorithm to each of the four resulting subsets of $\hat E$. This process continues until we either compute an expander $X$ and its embedding into $C_{|\hat E}$, or the number of subsets of $\hat E$ reaches $h= \Omega(\rho)$. Throughout the algorithm, we maintain a subset $E'$ of edges of $C$, such that, in graph $C\setminus E'$, the distance between every pair of edge sets $\hat E_i,\hat E_j$ that we maintain is at least $\Omega(\hat D)$. Once the number of disjoint subsets of $\hat E$ that we maintain reaches $h$, if we did not terminate the algorithm with the desired expander $X$ and its embedding, we return the resulting subsets $\hat E_1,\ldots,\hat E_h$ of $\hat E$, together with the edge set $E'$ that the algorithm maintained.
	 The complete proof of \Cref{lem: smaller pseudocut or expander} appears in Section \ref{subsec: pseudocut or expander} of Appendix.
\end{proofof}

\subsection{APSP in Expanders}

It is well known (see \Cref{obs: short paths in exp}), that, if an $n$-vertex graph $X$ is a $\phi$-expander, and maximum vertex degree in $X$ is bounded by $\Delta$, then for any pair $x,y$ of vertices in $X$, there is a path connecting them of length at most $O (\Delta\log n/\phi)$. We need an algorithm that supports short-path queries on an expander, as it undergoes edge deletions. In each such query, given a pair $x,y$ of vertices of $X$, we need to return a path connecting $x$ to $y$ in $X$ of length comparable to $O (\Delta\log n/\phi)$. As expander $X$ undergoes edge deletions, we will also prune some vertices out of it to ensure that it remains an expander. Such an algorithm was provided in \cite{APSP-old}, who used techniques similar to those in \cite{detbalanced}. However, we need a slightly different tradeoff between the approximation factor and the running time from that in \cite{APSP-old}, and we  summarize it in the following theorem. The proof is practically identical to that of \cite{APSP-old} and is deferred to Section \ref{subsec: expander APSP proof} of Appendix.

\begin{theorem} (Analogue of Theorem 3.13 in \cite{APSP-old})
	\label{thm: expander APSP} There is a deterministic algorithm that, given
	an $n$-vertex (unweighted) graph $X$ that is a $\phi$-expander with maximum vertex degree at  most $\Delta$, and a parameter $0<\eps<1$, such that $\phi\leq 1/2^{\Omega(1/\eps)}$, with graph $X$ undergoing a sequence at most $\phi|E(X)|/(20\Delta)$ edge deletions,
	maintains a vertex set $S\subseteq V(X)$, such that, for every $t>0$, after $t$ edges are deleted from $X$, $|S|\leq O(t\Delta/\phi)$ holds. 
	Vertex set $S$ is incremental, so vertices may join it but they may not leave it.
	The algorithm also supports queries $\exspquery$: given a pair of vertices $x,y\in V(X)\setminus S$, return an $x$-$y$  path $P$ in $X\setminus S$ of length at most $O\left( \Delta^2(\log n)^{O(1/\eps^2)}/\phi\right )$, with query time $O(|E(P)|)$. 
	The total update time of the algorithm is $O\left (\frac{n^{1+O(\eps)}\Delta^7(\log n)^{O(1/\eps^2)}}{\phi^5}\right ) $.
\end{theorem}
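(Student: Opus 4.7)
The proof follows the scheme of Theorem 3.13 in \cite{APSP-old} with a re-parameterization that turns its fixed choice of $\eps$ into a free parameter. Two ingredients do all the work: expander pruning (\Cref{thm: expander pruning}) for maintaining $S$, and a recursive hierarchy of embedded smaller expanders for answering $\exspquery$ in time proportional to the returned path's length. First, apply \Cref{thm: expander pruning} to $X$ directly: since the deletion budget $k\leq \phi|E(X)|/(20\Delta)$ lies below the threshold there, we obtain an incremental $S$ with $|S|\leq 8t\Delta/\phi=O(t\Delta/\phi)$ after $t$ deletions, and $X\setminus S$ remains a $\phi/(6\Delta)$-expander. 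The update time of this step is $\tilde O(k\Delta^2/\phi^2)$, which fits comfortably inside the final bound.

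To support the queries with query time $O(|E(P)|)$, I would maintain a recursive hierarchy of depth $h=\Theta(1/\eps)$. At each level, apply the cut-matching game (equivalently, the unweighted specialization of \Cref{thm: compute pseudocut and expander}) to embed a smaller $\phi_Y$-expander $Y$ of size $n^{1-\Theta(\eps)}$ into the current graph, with each $Y$-edge routed along a short path of length $O((\log n)/\phi)$ and bounded congestion, and then recurse on $Y$. At every level, additionally maintain, for each vertex outside $V(Y)$, a precomputed short path to an anchor in $V(Y)$ (say through a shallow \EST or a star of short paths). A query $\exspquery(x,y)$ is answered by walking $x$ and $y$ to their anchors in $V(Y)$, calling the recursive query on $Y$, and expanding each returned $Y$-edge along its stored embedded path. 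Every step is deterministic pointer-chasing, so the total query time is $O(|E(P)|)$. Each recursion level multiplies the path length by roughly $(\log n)^{O(1/\eps)}$ (combining \Cref{obs: short paths in exp} with the embedding-path length guaranteed by \Cref{thm: compute pseudocut and expander}), so after $O(1/\eps)$ levels one obtains the stated approximation $\Delta^2(\log n)^{O(1/\eps^2)}/\phi$.

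The main obstacle is that each deletion in $X$ can corrupt embedded $Y$-edges at every level of the hierarchy. The standard fix, which I would follow, is to rebuild a level from scratch once its own expander-pruning budget is exhausted, and amortize the rebuilds against the deletions that triggered them. Two checks are needed: the deletion load at each recursive level must stay within the threshold that \Cref{thm: expander pruning} requires for the embedded expander $Y$ (this is where the $\Delta/\phi$ factor in the input budget is consumed), and the compound congestion and expansion losses across the $\Theta(1/\eps)$ levels account for the $1/\phi^5$ and $\Delta^7$ dependencies in the final update time $O(n^{1+O(\eps)}\Delta^7(\log n)^{O(1/\eps^2)}/\phi^5)$. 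Matching the exact exponents of $\phi$ and $\Delta$ in the statement is the most delicate bookkeeping, but no new idea beyond \cite{APSP-old} is required.
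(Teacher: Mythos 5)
Your proposal follows essentially the same route as the paper: run \Cref{thm: expander pruning} on $X$ to maintain $S$, build a hierarchy of $\Theta(1/\eps)$ nested expanders of geometrically shrinking size, embed each level into the one above via the cut-matching game with short low-congestion paths, maintain \EST{}s from each level's vertex set to anchor queries, recurse, and rebuild a level once its pruning budget is spent. The one small correction is that the embedding step should not invoke \Cref{thm: compute pseudocut and expander} (whose input and output are tailored to pseudocut edge sets, not arbitrary terminal subsets); the paper instead proves a dedicated lemma (\Cref{lem:embed}) that embeds a $\hat\phi$-expander over any $k$-subset $T\subseteq V(X)$ using the cut-matching game with the simpler matching player from \cite{detbalanced}, which gives exactly the path-length $O(\Delta\log n/\phi)$ and congestion $O(\Delta^2\log^3 n/\phi^2)$ guarantees that feed into the stated exponents.
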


\subsection{Maintaining the Expander}

Suppose we are given some $\phi$-expander $X$ that is defined over some subset $\hat E^*$ of edges of $C$, with $|\hat E^*|=k$, and an embedding $\pset$ of $X$ into $C_{|\hat E^*}$, with congestion at most $\eta$. 
Next, we show that this expander can ``withstand'' a long sequence of valid update operations. In order to do so, we need to define a \emph{cost} of each valid update operation.

\begin{definition}[Update operation costs]
The costs of valid update operations are defined as follows. The cost of an edge-deletion is $1$, and the cost of every other valid update operation is $0$.
\end{definition}

The following observation is immediate.

\begin{observation}\label{obs: total cost of all updates}
	Let $\iset=\left(C,\set{\ell(e)}_{e\in E(C)},D\right )$ be a valid input structure, undergoing a sequence $\Sigma$ of valid update operations, with dynamic degree bound $\mu$. Then the total cost of all update operations in $\Sigma$ is at most $O(N^0(C)\mu)$, where $N^0(C)$ is the number of the regular vertices in the initial graph $C$.
\end{observation}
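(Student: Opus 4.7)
The plan is straightforward once we unpack the definitions. Since by definition only edge-deletion operations carry non-zero cost (and each such operation has cost $1$), the total cost of the sequence $\Sigma$ is exactly the number of edge-deletion updates it contains. Each edge of $C$ can be deleted at most once, so it suffices to upper-bound the total number of edges that are ever present in $C$ over the course of the algorithm, including edges that are inserted as a consequence of supernode-splitting operations.

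To bound this quantity, I will use the fact that $C$ is bipartite, with $V(C)$ partitioned into regular vertices and supernodes, and every edge of $C$ has exactly one regular vertex as an endpoint. Consequently, the total number of edges that are ever present in $C$ over the execution of $\Sigma$ equals
\[
\sum_{v \in V} \bigl|\{e \in \delta_C(v) : e \text{ is ever present in } C\}\bigr|,
\]
where $V$ is the set of regular vertices of $C$ (this sum counts each edge exactly once since each edge is incident to exactly one regular vertex). By the definition of the dynamic degree bound $\mu$, each regular vertex $v$ has at most $\mu$ edges incident to it over the entire update sequence.

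Finally, recall that no valid update operation may create a new regular vertex (supernode-splitting only creates a new supernode, while the other two operations only delete vertices), so every regular vertex that ever appears in $C$ is already in the initial graph $C^0$. Therefore the number of regular vertices that contribute to the above sum is at most $N^0(C)$, and the total number of edges ever present in $C$ is at most $N^0(C) \cdot \mu$. This bounds the total cost of $\Sigma$ by $N^0(C) \cdot \mu = O(N^0(C)\mu)$, as claimed. There is no main obstacle here; the observation is essentially a direct rewriting of the dynamic degree bound in terms of edge deletions.
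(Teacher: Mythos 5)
Your proof is correct and follows essentially the same argument the paper gives (in a single sentence): the total cost equals the number of edge deletions, each edge is deleted at most once, and the dynamic degree bound together with the fact that every edge is incident to exactly one regular vertex yields the $N^0(C)\mu$ bound on the number of edges ever present in $C$. You merely spell out the steps that the paper leaves implicit.
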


The observation follows from the fact that the total number of edges that are ever present in $C$ is bounded by $N^0(C)\mu$.
Lastly, we use the following theorem in order to maintain the expander $X$ computed by \Cref{thm: compute pseudocut and expander}.

\begin{theorem}\label{thm: maintain expander}
	There is a deterministic algorithm, that we call \algmaintainexp, that receives as input the following:
	
	\begin{itemize}
		\item a valid input structure $\iset=\left(C,\set{\ell(e)}_{e\in E(C)},D\right )$, where $C$ is a connected graph with maximum vertex degree at most $\mu$;
		
		\item parameters $0< \epsx< 1$, $\hat W\geq \max\set{N^0(C),\mu,|E(C)|,2^{\Omega(1/\epsx^2)}}$, and  $\hat D>D$, where $N^0(C)$ is the number of regular vertices in $C$ at the beginning of the algorithm;
		
		\item a $\phi^*$-expander $X$, for $\phi^*=1/(\log \hat W)^{O(1/\epsx)}$, defined over an edge set $\hat E^*\subseteq E(C)$ of cardinality $k$, such that the maximum vertex degree in $X$ is at most $ O(\log \hat W)$;
		
		\item an embedding $\pset$ of $X$ into $C_{|\hat E^*}$ with congestion $\eta=\hat D\cdot \hat W^{\eps} (\log \hat W)^{O(1/\eps)}$, where each path in $\pset$ has length at most $O(\hat D\log^2\hat W)$; and
		
		\item an online sequence $\Sigma$ of valid update operations of total cost at most $\frac{\phi^* k}{c\eta\log \hat W}$ for some large enough constant $c$, with dynamic degree bound $\mu$.
	\end{itemize} 
	 The algorithm maintains a non-empty subgraph $X'\subseteq X$, and supports  expander-short-path queries: given a pair of vertices $x,y\in V(X')$, return an $x$-$y$  path $P$ in graph $C_{|\hat E^*}$, of length at most $O\left(\hat D\cdot (\log \hat W)^{O(1/\epsx^2)}\right )$, with query time $O(|E(P)|)$. 
	Graph $X'$ is decremental, so vertices and edges may be deleted from it, but they may never be added to it.
	The total update time of the algorithm is bounded by:  $$O\left (N^0(C)\cdot \mu\cdot \hat D^2\cdot\hat W^{O(\epsx)}\cdot (\log \hat W)^{O(1/\epsx^2)}\right ) .$$
\end{theorem}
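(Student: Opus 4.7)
The plan is to forward every destructive event occurring in $C$ to the decremental expander-APSP data structure of Theorem~\ref{thm: expander APSP} applied to $X$ (with the same parameter $\eps$ and parameters $\phi=\phi^*$, $\Delta=O(\log\hat W)$), and to translate the expander paths it returns into paths in $C_{|\hat E^*}$ via the stored embedding. At initialization we invoke the APSP algorithm on $X$, and we preprocess $\pset$ by building, for every edge $f$ of $C_{|\hat E^*}$, a doubly-linked list $L(f)\subseteq E(X)$ of expander edges whose embedding path traverses $f$, together with back-pointers so that an entire path $P(e)$ can be excised from all of its lists in time $O(|E(P(e))|)$. The total size of these lists is $\sum_{e\in E(X)}|E(P(e))|\leq |E(X)|\cdot O(\hat D\log^2\hat W)\leq O(k\hat D\log^3\hat W)$, and since $k\leq|E(C)|\leq N^0(C)\mu$, this preprocessing fits within the claimed time bound.

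When an input update arrives, supernode-splitting and isolated-vertex deletions do nothing to $X$ or $\pset$ (the former only inserts edges, the latter only acts on edgeless vertices). When an edge $e\in E(C)$ is deleted, we compute the set of invalidated expander edges: this is $L(e)$ if $e\notin\hat E^*$, and $L(u,t_e)\cup L(v,t_e)$ if $e=(u,v)\in\hat E^*$ (since in the latter case deleting $e$ from $C$ corresponds to removing both subdivided edges from $C_{|\hat E^*}$, and in particular every edge of $X$ incident to $t_e$). Each invalidated expander edge is fed as a deletion to the APSP data structure and then removed from the lists $L(\cdot)$ it belongs to via the back-pointers. Let $S$ be the pruned set maintained by the APSP data structure; we define $X'$ as the subgraph of the current (post-deletion) expander induced by $V(X)\setminus S$. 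To answer an $\exspquery$ for $x,y\in V(X')$ we invoke the APSP query, obtaining in time $O(|E(P_X)|)$ an $x$-$y$ path $P_X$ in the current expander avoiding $S$ of length at most $O(\Delta^2(\log k)^{O(1/\eps^2)}/\phi^*)=(\log\hat W)^{O(1/\eps^2)}$ edges; concatenating the still-valid embedding paths $P(e)$ for $e\in E(P_X)$ yields a walk from $x$ to $y$ in $C_{|\hat E^*}$ of length at most $O(\hat D(\log\hat W)^{O(1/\eps^2)})$, which we shortcut to a simple path in linear additional time.

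The critical step is the charging argument certifying that the deletions remain within the budget of Theorem~\ref{thm: expander APSP}. By the congestion bound, each unit-cost edge deletion in $\Sigma$ triggers at most $2\eta$ expander deletions, so the total fed to the APSP structure over $\Sigma$ is at most $2\eta\cdot \phi^*k/(c\eta\log\hat W)=O(\phi^*k/(c\log\hat W))$. Every vertex of a $\phi^*$-expander has positive degree and hence degree $\geq 1$, so $|E(X)|\geq k/2$; combined with $\Delta=O(\log\hat W)$, the APSP budget is $\phi^*|E(X)|/(20\Delta)=\Omega(\phi^*k/\log\hat W)$, which for $c$ a sufficiently large constant strictly dominates the deletion count. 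The same bound yields $|S|=O(t\Delta/\phi^*)\leq O(k/c)\leq k/2$ for large $c$, so $X'$ is non-empty. The total update time is dominated by the APSP algorithm, $O(k^{1+O(\eps)}\Delta^7(\log k)^{O(1/\eps^2)}/(\phi^*)^5)=O(\hat W^{1+O(\eps)}(\log\hat W)^{O(1/\eps^2)})$, plus the preprocessing $O(k\hat D\log^3\hat W)$; using $k\leq N^0(C)\mu$ to rewrite $k^{1+O(\eps)}$ as $N^0(C)\mu\cdot \hat W^{O(\eps)}$ yields the final bound $O(N^0(C)\mu\hat D^2\hat W^{O(\eps)}(\log\hat W)^{O(1/\eps^2)})$. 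The main obstacle is exactly this balance between the congestion $\eta$, the expansion $\phi^*$, and the allowed cost of $\Sigma$: any slackness in the relation between these quantities either violates the APSP budget or degrades the final running time. Everything else is a clean plug-in of the APSP and embedding machinery, with the edge-to-path incidence lists $L(\cdot)$ serving as the only nontrivial bookkeeping.
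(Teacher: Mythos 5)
Your proposal is correct and follows essentially the same approach as the paper: maintain edge-to-embedding-path incidence lists, feed invalidated expander edges to the decremental expander-APSP data structure of Theorem~\ref{thm: expander APSP}, and translate returned expander paths via the stored embedding, with the same charging argument (deletion count $\leq 2\eta\cdot\text{cost}$ versus budget $\phi^*|E(X)|/(20\Delta)$) and the same pruning bound guaranteeing $X'\neq\emptyset$. The only cosmetic deviation is that you shortcut the concatenated walk to a simple path, which the paper does not do and the theorem does not require; omitting this avoids any worry about the query time exceeding $O(|E(P)|)$ when the simple path is much shorter than the walk.
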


Note that, by substituting the parameters $\phi^*$ and $\eta$, Algorithm  \algmaintainexp  can maintain the graph $X'$ over the course of a sequence of valid update operations of total cost at most $\Omega \left ( \frac{k}{\hat D\cdot \hat W^{O(\epsx)\cdot }(\log \hat W)^{O(1/\epsx)}  }\right )$.

\begin{proof}
	Note that, since graph $C$ is connected, $k\leq |E(C)|\leq N^0(C)\mu$ must hold.
	We use the algorithm from \Cref{thm: expander APSP} in order to maintain the expander $X$ under edge deletions, with parameters $\eps$, $\phi=\phi^*$, and $\Delta=O(\log \hat W)$. 
	For every edge $e\in E(C)$, we maintain a list $L(e)$ of all edges $e'\in E(X)$ with $e\in P(e')$, where $P(e')\in \pset$ is the path embedding $e'$. (If edge $e$ lies in $\hat E^*$, then it is split into two edges by vertex $t_e$ in graph $C_{|\hat E^*}$; we then include in $L(e)$ all edges of $X$ whose embedding contains either one of the resulting edges.) As each such list $L(e)$ contains at most $\eta=\hat D\cdot \hat W^{\eps} (\log \hat W)^{O(1/\eps)})$ edges, initializing all lists $L(e)$ takes time $O(|E(X)|\cdot \eta)\leq O(|E(C)|\cdot \eta^2)\leq O(N^0(C)\mu  \hat D^2 \hat W^{O(\epsx)}\cdot (\log \hat W)^{O(1/\eps)})$.
	
	We handle update operations performed on cluster $C$ as follows. First, if an isolated vertex is deleted from $C$, or a supernode splitting operation is executed, then we do nothing. Assume now that an edge $e$ is deleted from $C$. Then for every edge $e'\in L(e)$, we delete $e'$ from the graph $X$, and update the data structure from \Cref{thm: expander APSP}  accordingly. Note that, if the cost of an update operation is $a$, then the number of edges deleted from $X$ as the result of this operation is at most $2\eta a$. Therefore, if we perform a sequence of valid update operations in graph $C$, whose total cost is at most $\phi^* k/(c\eta\log\hat W)$, this will result in a sequence of edge deletions to the data structure from \Cref{thm: expander APSP}, whose length is at most $2\phi^*k/(c\log \hat W) \leq O\left(\phi^*|E(X)|/(c\Delta)\right)$, where $\Delta$ is the maximum vertex degree in $X$. 
	By letting $c$ be a large enough constant, we can ensure that this is bounded by $\phi^*|E(X)|/(20\Delta)$.
	The graph $X'$ that we maintain is defined as follows. Consider some time $t$ in the algorithm, after $t$ updates to graph $C$ were executed. Let $M(t)$ denote the number of edge deletions from graph $X$ that were executed so far, and let $E_t$ be the corresponding set of edges. Let $S_t$ be the set $S$ that the algorithm from \Cref{thm: expander APSP} maintains, after the edges in $E_t$ are deleted from $X$. Then graph $X'$ at time $t$ is defined to be $(X\setminus E_t)\setminus S_t$. Observe that \Cref{thm: expander APSP} ensures that graph $X'$ non-empty, since for all $t$, $|S_t|\leq O(|E_t|\Delta /\phi^*)\leq O(\phi^*k\Delta/(c\phi^*\log \hat W)\leq O(k\Delta/(c\log \hat W))$; letting $c$ be a large enough constant, we can ensure that this number is below $k/2$. The total update time of the algorithm from \Cref{thm: expander APSP} is bounded by:
	$O\left (\frac{k^{1+O(\epsx)}\Delta^7(\log k)^{O(1/\epsx^2)}}{(\phi^*)^5}\right )\leq O\left ((N^0(C)\mu)^{1+O(\epsx)}(\log \hat W)^{O(1/\epsx^2)}\right )\leq O\left (N^0(C)\mu\hat W^{O(\epsx)}(\log \hat W)^{O(1/\epsx^2)}\right ) $.
	
	Consider now expander-short-path query, in which we are given a pair of vertices $x,y\in V(X')$, and we need to return an $x$-$y$  path $P$ in graph $C_{|\hat E^*}$, of length at most $O\left(\hat D \cdot (\log \hat W)^{O(1/\epsx^2)}\right )$, with query time $O(|E(P)|)$. We run query $\expquery(x,y)$ in graph $X'$, using the algorithm from \Cref{thm: expander APSP}, obtaining an $x$-$y$ path $Q$ in $X'$, of length at most 
	$O\left( \Delta^2(\log k)^{O(1/\epsx^2)}/(\phi^*)^5\right )\leq O\left((\log \hat W)^{O(1/\epsx^2)}\right ) $, in time $O(|E(Q)|)$. Next, we use, for each edge $e\in E(Q)$, the embedding path $P(e)\in \pset$, whose length is $O(\hat D\log^2\hat W)$, to replace the edge $e$ on the path $Q$. As the result, we obtain a path $Q'$ in graph $C_{|\hat E^*}$ connecting $x$ to $y$, of length $O\left(\hat D\cdot (\log \hat W)^{O(1/\epsx^2)}\right )$. The query time of the algorithm is  $O(|E(Q')|)$.
\end{proof}
\subsection{A Faster Algorithm for the \maintaincluster Problem}
The tools that we have developed so far are already sufficient in order to obtain a faster algorithm for the \maintaincluster problem. Since we do not use this algorithm in our final result, we only provide its sketch here, starting with some intuition.

Intuitively, an algorithm for the \maintaincluster problem,  on a given input structure $\iset=\left(C,\set{\ell(e)}_{e\in E(C)}, D\right )$ undergoing a sequence of valid update operations with dynamic degree bound $\mu$ can proceed as follows. We set the weight of every regular vertex to be $1$, and of every supernode to $0$. We then apply Algorithm \algpseudo to compute a $(\hat D,\rho)$-pseudocut $\hat E$, for $\hat D=\Theta(D\log^4\hat W)$, and $\rho=\hat W^{\eps}$, where $\hat W=W\mu$, and $W$ is the given input parameter bounding the total number of regular vertices in the initial graph $C$. We also compute the corresponding expander $X$ defined over a large subset $\hat E^*\subseteq \hat E$ of edges, and its embedding $\pset$ into $C_{|\hat E^*}$. We then initialize Algorithm \algmaintainexp on the expander $X$, that will maintain a subgraph $X'\subseteq X$. Additionally, we construct an \EST in the graph $C_{|E^*}$, whose root vertex is a new vertex $s$, that connects with an edge to every vertex of $V(X')$. The depth threshold of the \EST is $2^{13} D\log^4 \hat W$. The \EST will allow us to ensure that every vertex of $C$ remains close to one of the vertices of $V(X')$; whenever this is not the case, this data structure allows us to detect it and to provide a violating pair $v,v'$ of regular vertices with $\dist_{C}(v,v')>1024 D\log^4 \hat W$. We are also guaranteed by \Cref{thm: maintain expander}, that every pair of vertices in the expander $X$ has a short path connecting them in $C_{|E^*}$. This ensures that, for every pair $e,e'\in \hat E^*$ of edges, there is a short path connecting their endpoints in $C$. These data structures can also easily support queries $\spquery$ for pairs of vertices in cluster $C$, that are required from an algorithm for the \maintaincluster problem. As observed already, this data structure can withstand a sequence of update operations to cluster $C$, whose cost is  $\Omega \left ( \frac{|\hat E^*|}{\hat D\cdot \hat W^{O(\epsx)}\cdot (\log \hat W)^{O(1/\epsx)}  }\right )$, and the total update time required to maintain the data structure is $O\left (N^0(C)\cdot \mu\cdot \hat D^2\cdot \hat W^{O(\epsx)}\cdot (\log \hat W)^{O(1/\epsx^2)}\right ) $.

Once a sequence of update operations of cost $\Theta \left ( \frac{|\hat E^*|}{\hat D\cdot \hat W^{O(\epsx)}\cdot (\log \hat W)^{O(1/\epsx)}  }\right )$ is performed in cluster $C$, we discard the current data structures, run the algorithm \algpseudo again, and recompute the data structures from scratch. We call the execution of the algorithm between a pair of successive calls to \algpseudo a single phase. If we are lucky, and every time that Algorithm \algpseudo is called, we obtain a pseudocut $\hat E$, whose cardinality is at least $|E(C)|/\hat W^{\epsx}$, then the number of phases will be bounded by $O\left (\hat W^{O(\epsx)}\mu D(\log \hat W)^{O(1/\epsx)}\right ) $. The total update time of this algorithm will then be at most $$O\left (N^0(C)\mu^2D^3\hat W^{O(\epsx)}(\log \hat W)^{O(1/\epsx^2)}\right ) .$$ However, we may not be so lucky, and it is possible that at some point Algorithm \algpseudo may return a pseudocut $\hat E$ of small cardinality.

One straightforward way to deal with this issue is to initialize an \EST in graph $C$, rooted at each regular vertex that serves as an endpoint of an edge in $\hat E$; the depth each tree is $2^{13} D\log^4\hat W$. Once the root of a tree is deleted from $C$, we say that the tree is destroyed. As long as not all such trees are destroyed, we can implement an algorithm for the \maintaincluster problem, similarly to \algslow. Once every tree is destroyed, we are guaranteed that all edges of $\hat E$ are deleted from $C$. From the properties of the pseudocut, and since we guarantee that the diameter of $C$ remains below $2^{13}D\log^4\hat W<\hat D$, this means that the total number of all regular vertices in $C$ has decreased by at least factor $\rho$. We can then restart the algorithm outlined above for the \maintaincluster problem in $C$ from scratch. This approach provides a faster algorithm than \algslow, since we no longer have an \EST data structure rooted at every single regular vertex of $C$, and since $|\hat E|$ is relatively small. In fact, in order to optimize the running time of the algorithm, we should use a threshold of $\sqrt{N^0(C)}$ for $|\hat E|$: as long as $|\hat E|\geq \sqrt{N^0(C)}$, we use the expander-based approach, and once $|\hat E|$ falls below this value, we use the \EST's; this will result in running time that is roughly $(N^0(C)\mu)^{1.5+O(\eps)}D^3(\log (W\mu))^{O(1/\eps)}$ for the \maintaincluster problem. One advantage of this approach is that it avoids recursion. 
We roughly estimate that, using this algorithm instead of \Cref{thm: main maintain cluster algorithm} would result in an algorithm for \APSP with total update time approximately $O\left (m^{1.5+O(\eps)}(\log m)^{O(1/\eps^2)}\right )$, and approximation factor $(\log m)^{O(1/\eps)}$.
This already gives a somewhat fast algorithm for sparse graphs, but this running time is still much higher than our desired bound.


\section{Third Main Tool: Good Clusters}
\label{sec: pivot decomposition}

Throughout this section, we assume that we are given an input to the \maintaincluster problem, that consists of a valid input structure $\iset=\left(C,\set{\ell(e)}_{e\in E(C)},D \right )$, where $C$ is a connected graph undergoing a sequence $\Sigma$ of valid update operations with a dynamic degree bound $\mu$. We refer to $D$ as \emph{target distance threshold for $C$}. 
We are also given a parameter $\hat W\geq N^0(C)\cdot \mu$, where $N^0(C)$ is the number of regular vertices of $C$ at the beginning of the algorithm, and a precision parameter $0<\eps<1$. 
We also use a parameter $\hat D=2^{30}D\log^{10} \hat W$.

Throughout this section, we assume that $D\leq \hat W$, and $\hat W\geq 2^{\Omega(1/\eps^2)}$, since otherwise, from \Cref{obs: easy param settings for maintaincluster}, algorithm \algslow provides the desired bounds on the running time, with approximation factor $O(\log^4\hat W)$.
We define vertex weights as usual: the weight of each regular vertex is $1$, and the weight of each supernode is $0$. We denote by $W(C)$ the total weight of all vertices in $C$.
 We are then guaranteed that, throughout the execution of the 
update sequence $\Sigma$, $\hat W\geq \max\set{W(C)\mu,|E(C)|, 2^{\Omega(1/\epsx^2)}}$ holds.

Throughout the remainder of this section, we also use the following parameters from \Cref{thm: compute pseudocut and expander}: $\rho=\hat W^{\eps}$, $ \phi^*=1/(\log \hat W)^{O(1/\epsx)}$, and $\eta=
\hat D\cdot \hat W^{\eps} (\log \hat W)^{O(1/\eps)}$.
For convenience, all these parameters are summarized in  \Cref{sec: params}.

In this section, we will define a new problem, called \maintaingoodcluster, which is an easier version of the \maintaincluster problem. In this new problem, we will only need to maintain the cluster $C$ and support the $\spquery(C,v,v')$ queries as long as the cluster is ``good'' (we define what it means below). Once the cluster is no longer good, the algorithm can raise a flag $F^b_C$ to indicate that, and then the algorithm terminates. However, when flag $F^b_C$ is raised, the algorithm is required to provide a ``bad witness'' for the fact that $C$ is no longer a good cluster, which is a small $(\hat D,\rho)$-pseudocut of $C$. 

We start by defining good clusters and bad witnesses in \Cref{subsec: good clusters}. 
We then define the \maintaingoodcluster problem and provide an algorithm for it in \Cref{subsec: maintain good cluster problem}.
All main parameters that are used in this section are summarized in \Cref{sec: params}.

\subsection{Good Clusters and Witnesses}
\label{subsec: good clusters}

Assume that we are given a valid input structure $\iset=\left(C,\set{\ell(e)}_{e\in E(C)},D\right )$, parameters $\mu$, $0<\eps<1$, and $\hat W\geq N^0(C)\cdot \mu$, that we use in order to define parameters $\hat D$, $\rho$,  $\phi^*$ and $\eta$, as described above (see also \Cref{sec: params}).

\begin{definition}[Good Clusters and Witnesses]
We say that $C$ is a \emph{type-1 good cluster} iff $W(C)\leq \hat W^{10\eps}$. We say that $C$ is a \emph{type-2 good cluster} iff there is a collection $\hat E^*\subseteq E(C)$ of edges of $C$ of cardinality $k\geq \Omega(W(C)/\hat W^{3\epsx})$, and a $\phi^*$-expander $X$ defined over the edges of $\hat E^*$, with maximum vertex degree at most $O(\log \hat W)$, together with an embedding $\pset$ of $X$ into $C_{|\hat E^*}$ with congestion at most $\eta$, such that the length of every path in $\pset$ is at most $O(\hat D\log^2\hat W)$. We call $(\hat E^*,X,\pset)$ a \emph{good witness} for $C$.
\end{definition}

We will informally refer to a cluster that is type-1 or type-2 good as a good cluster. Note that a cluster may be a type-2 good cluster, but this does not guarantee that we can efficiently find a good witness for it; generally when we establish that a cluster is a type-2 good cluster, we will also provide a good witness for it.  

\begin{definition}[Bad Witness]
	At any point during the execution of the valid update sequence $\Sigma$ on $C$, we say that a set $\hat E$ of edges is a \emph{bad witness} for $C$ iff $|\hat E|<W(C)/\hat W^{\epsx}$, and $\hat E$ is a $(\hat D,\rho)$-pseudocut for $C$, where, as we defined before, $\hat D=2^{30}D\log^{10} \hat W$, and $\rho=\hat W^{\eps}$.
\end{definition}

(Notice that it is possible that a cluster is a good cluster and yet it has a bad witness).
We need the following observation that easily follows from \Cref{thm: compute pseudocut and expander}.

\begin{observation}\label{obs: good cluster or bad witness}
	There is a deterministic algorithm, that, 	given  a valid input structure \newline $\iset=\left(C,\set{\ell(e)}_{e\in E(C)}, D\right )$,  where $C$ is a connected graph, with weights $w(v)=1$ on its regular vertices and $w(u)=0$ on supernodes, together with parameters  $0<\epsx<1$ and $\hat W\geq \max\set{W(C),|E(C)|, 2^{\Omega(1/\epsx^2)}}$, either (i) correctly establishes that $C$ is a type-1 good cluster; or (ii) correctly establishes that $C$ is a type-2 good cluster and produces a good witness for it; or (iii) computes a bad witness $\hat E$ for $C$. The running time of the algorithm is $O\left (|E(C)|\cdot \hat D^2\cdot \hat W^{O(\epsx)}\cdot  (\log \hat W)^{O(1/\epsx^2)} \right )$.
\end{observation}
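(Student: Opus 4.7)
The plan is to give a three-way case split driven by a single invocation of Algorithm \algpseudo from \Cref{thm: compute pseudocut and expander}. First, I would test the trivial condition for type-1 goodness: compute $W(C)=|V(C)\cap V|$ (which is at most $|E(C)|$ by the standing assumption that every supernode has degree $\geq 1$), and if $W(C)\leq \hat W^{10\eps}$ we are already done, outputting (i). Note that this initial test can be done in time $O(|V(C)|)\leq O(|E(C)|)$.

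Assuming $W(C)>\hat W^{10\eps}$, I would next verify that the hypothesis of \Cref{thm: compute pseudocut and expander} applies: each regular vertex has weight $1$, each supernode has weight $0$, and since $\hat W\geq 2^{\Omega(1/\eps^2)}$ we have $\hat W^{10\eps}\geq 4\hat W^{\eps}$, so $w(x)\leq 1\leq W(C)/(4\hat W^{\eps})$ holds for every $x\in V(C)$. I can therefore invoke \algpseudo on $(\iset,\eps,\hat W,\hat D)$, which returns a $(\hat D,\rho)$-pseudocut $\hat E$ with $\rho=\hat W^{\eps}$, a $\phi^*$-expander $X$ over an edge subset $\hat E^*\subseteq \hat E$ with $|\hat E^*|\geq \Omega(|\hat E|/\hat W^{2\eps})$ and vertex degree $O(\log \hat W)$, and an embedding $\pset$ of $X$ into $C_{|\hat E^*}$ whose paths have length $O(\hat D\log^2\hat W)$ and congestion $\eta=\hat D\cdot \hat W^{\eps}(\log\hat W)^{O(1/\eps)}$. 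The running time of this call is the claimed $O\bigl(|E(C)|\cdot \hat D^2\cdot \hat W^{O(\eps)}\cdot (\log\hat W)^{O(1/\eps^2)}\bigr)$.

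Now I would branch on the size of the returned pseudocut. If $|\hat E|<W(C)/\hat W^{\eps}$, the pair $\hat E$ already satisfies the definition of a bad witness, and I output it as option (iii). Otherwise $|\hat E|\geq W(C)/\hat W^{\eps}$, and combined with $|\hat E^*|\geq \Omega(|\hat E|/\hat W^{2\eps})$ this yields
\[
|\hat E^*|\;\geq\;\Omega\!\left(\frac{W(C)}{\hat W^{3\eps}}\right),
\]
matching exactly the quantitative requirement in the definition of a type-2 good cluster. All other quantitative requirements on the good witness (namely the expansion $\phi^*$, the degree bound $O(\log\hat W)$, the embedding length $O(\hat D\log^2\hat W)$, and the congestion $\eta$) are the parameters that were fixed globally in \Cref{sec: pseudocuts} and returned by \algpseudo, so $(\hat E^*,X,\pset)$ is a valid good witness and we output (ii).

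Correctness is essentially bookkeeping: the only nontrivial points are (a) checking that the hypothesis $w(x)\leq W(C)/(4\hat W^{\eps})$ of \Cref{thm: compute pseudocut and expander} is implied by having passed the type-1 test, which uses $\hat W\geq 2^{\Omega(1/\eps^2)}$, and (b) verifying that the parameters $\hat D,\rho,\phi^*,\eta$ used in the definitions of good/bad witnesses coincide with those produced by \algpseudo, which holds by construction in \Cref{sec: pseudocuts}. The running time is dominated by the single call to \algpseudo and matches the statement, so there is no genuine obstacle; the step that requires the most care is simply aligning the cardinality thresholds ($W(C)/\hat W^{\eps}$ for bad witnesses versus $\Omega(W(C)/\hat W^{3\eps})$ for good witnesses) with the $\Omega(|\hat E|/\hat W^{2\eps})$ guarantee on $|\hat E^*|$ so that the cases are exhaustive.
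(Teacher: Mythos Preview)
Your proposal is correct and follows essentially the same approach as the paper's proof: check the type-1 condition directly, then invoke \algpseudo and branch on whether the returned pseudocut $\hat E$ has cardinality below $W(C)/\hat W^{\eps}$ (bad witness) or not (good witness via $|\hat E^*|\geq\Omega(|\hat E|/\hat W^{2\eps})\geq\Omega(W(C)/\hat W^{3\eps})$). If anything, you are slightly more explicit than the paper in justifying the precondition $w(x)\leq W(C)/(4\hat W^{\eps})$ via $\hat W\geq 2^{\Omega(1/\eps^2)}$.
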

\begin{proof}
	Checking whether $C$ is a type-1  good cluster can be easily done in time $O(|E(C)|)$. 
	Assume now that $C$ is not a type-1 good cluster. Since the weight of every regular vertex is $1$, and $W(C)\geq \hat W^{10\eps}$, we get that for every vertex $x\in V(C)$, $w(x)\leq \frac{W(C)}{4\hW^{\epsx}}$. We apply  Algorithm \algpseudo from \Cref{thm: compute pseudocut and expander} to the cluster $C$. Consider the $(\hat D,\rho)$-pseudo-cut $\hat E$ that that the algorithm computes. If $|\hat E|<W(C)/\hat W^{\epsx}$, then $\hat E$ is a bad witness for $C$. Otherwise, the algorithm computes an edge set  $\hat E^*\subseteq \hat E$, with $|\hat E^*|\geq \Omega(|\hat E|/\hW^{2\epsx})\geq \Omega(W(C)/\hat W^{3\epsx})$, and a 
	$\phi^*$-expander $X$  defined over the edge set $\hat E^*$, such that the maximum vertex degree in $X$ is $O(\log \hat W)$. The algorithm also computes an embedding $\pset$ of $X$ into $C_{|\hat E^*}$ with congestion at most $\eta$, such that the length of every path in $\pset$ is at most $O(\hat D\log^2\hat W)$. Clearly, $(\hat E^*,X,\pset)$ is a good witness for $C$. Lastly, the running time of \algpseudo is $O\left (|E(C)|\cdot \hat D^2\cdot \hat W^{O(\epsx)}\cdot  (\log \hat W)^{O(1/\epsx^2)} \right )$.
\end{proof}

\subsection{\maintaingoodcluster Problem and an Algorithm for It}\label{subsec: maintain good cluster problem}
In this subsection we define the \maintaingoodcluster problem, that is identical to the \maintaincluster problem, except that now the algorithm may raise flag $F_C^b$ at any point, to indicate that cluster $C$ is not a good cluster. When raising flag $F_C^b$, the algorithm must also supply a bad witness for $C$. The algorithm then terminates. 

\begin{definition}[\maintaingoodcluster Problem]
	In the \maintaingoodcluster problem, we are given a valid input structure $\iset=\left(C,\set{\ell(e)}_{e\in E(C)},D\right )$, together with weights $w(v)= 1$ on regular vertices $v\in V(C)$, and weights $w(u)=0$ on supernodes $u\in V(C)$, and parameters $\mu\geq 1$, $0<\eps<1$, and $\hat W\geq \max\set{N^0(C)\mu, 2^{\Omega(1/\epsx^2)}}$. Graph $C$ undergoes a sequence $\Sigma$ of valid update operations with dynamic degree bound $\mu$.
	The goal of the algorithm is to  support queries $\spquery(C,v,v')$: given a pair $v,v'\in V(C)$ of regular vertices of $C$, return a path $P$ of length at most
	 $O\left(D\cdot (\log \hat W)^{O(1/\eps^2)}\right )$ connecting them in $C$, in time $O(|E(P)|)$. Additionally, the algorithm may, at any time, raise a flag $F_C$. When the algorithm raises flag $F_C$, it must  supply a pair  $\hat v,\hat v'$ of regular vertices of $C$, with $\dist_C(\hat v,\hat v')>1024D\log^4\hat W$.
	 Once flag $F_C$ is raised, the algorithm will obtain, as part of its input update sequence, a sequence $\Sigma'$ of update operations called flag-lowering sequence, at the end of which 
	 either $\hat v$ or $\hat v'$ are deleted from $C$. Flag $F_C$ is lowered after these updates are processed by the algorithm. Lastly, the algorithm may raise, at any time, flag $F_C^b$. When flag $F_C^b$ is raised, the algorithm needs to supply a bad witness $\hat E$ for $C$. Once the flag $F_C^b$ is raised, the algorithm terminates.  Queries $\spquery$ may only be asked when flags $F_C,F_C^b$ are down.
\end{definition}

In the following theorem we provide an efficient algorithm for the \maintaingoodcluster problem.

\begin{theorem}\label{thm: alg for maintain good cluster}
	There is a deterministic algorithm, that we call \algmaintaingoodcluster for the \newline \maintaingoodcluster problem, whose total update time, on input cluster $C$ is:

$$O\left (N^0(C)\cdot \mu^2\cdot D^3\cdot\hat W^{O(\epsx)}\cdot (\log \hat W)^{O(1/\epsx^2)}\right ).$$	
	
%
\end{theorem}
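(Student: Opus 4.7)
The plan is to execute the algorithm in phases, where at the start of every phase I would invoke the algorithm of \Cref{obs: good cluster or bad witness} on the current cluster $C$, which runs in time $O(|E(C)|\cdot \hat D^2\cdot \hat W^{O(\eps)}(\log\hat W)^{O(1/\eps^2)})$. This invocation returns one of three outcomes, which I would handle as follows. If it produces a bad witness $\hat E$, I immediately raise flag $F^b_C$, supply $\hat E$, and terminate. If it certifies $C$ as type-1 good (so $W(C)\leq \hat W^{10\eps}$), I switch over to running \algslow from \Cref{thm: slow simple} on the current $C$ for the remainder of the algorithm; by the type-1 bound its total update time is $O((W(C))^2\cdot \mu\cdot D\cdot \poly\log\hat W)\leq O(N^0(C)\cdot \hat W^{O(\eps)}\cdot\mu\cdot D\cdot\poly\log\hat W)$, which is absorbed in the target bound, and its approximation $1024\log^4\hat W$ satisfies the required query length $O(D(\log\hat W)^{O(1/\eps^2)})$; flag $F_C$ is raised inside \algslow exactly when some vertex list $L(v)$ becomes non-empty.

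The interesting case is when the outcome is a good witness $(\hat E^*,X,\pset)$ for a type-2 cluster. I would install two data structures for the phase. The first is an instance of \algmaintainexp (\Cref{thm: maintain expander}) on $(C,\hat E^*,X,\pset)$, which lets me answer $\expquery$ between any two currently surviving vertices of the maintained subgraph $X'\subseteq X$ by returning a path in $C_{|\hat E^*}$ of length $O(\hat D\cdot(\log\hat W)^{O(1/\eps^2)})=O(D\cdot(\log\hat W)^{O(1/\eps^2)})$. The second is a generalized \EST (\Cref{thm: ES-tree}) rooted at an auxiliary source $s^*$ that is joined by a length-$1$ edge to each endpoint in $C$ of every edge $e\in\hat E^*$ with $t_e\in V(X')$, with depth parameter $D^*=2^{13}D\log^4\hat W$. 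Whenever \algmaintainexp prunes a vertex $t_e$ from $X'$, I delete the corresponding length-$1$ edges from the \EST. To respond to $\spquery(C,v,v')$ I use the \EST to extract short paths from $v$ and from $v'$ to anchor endpoints $v_a,v'_a$ of currently ``live'' edges of $\hat E^*$, then concatenate with the $\expquery(v_a,v'_a)$ path given by \algmaintainexp; the resulting path has length $O(D\log^4\hat W)+O(D(\log\hat W)^{O(1/\eps^2)})=O(D(\log\hat W)^{O(1/\eps^2)})$ and is produced in time proportional to its edge count. Flag $F_C$ is raised the moment the \EST discovers a regular vertex $\hat v$ at distance greater than $D^*/2$ from $s^*$; pairing $\hat v$ with any anchor $\hat v'$ gives $\dist_C(\hat v,\hat v')>1024D\log^4\hat W$ as required. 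Flag-lowering sequences are simply fed into both the \EST and \algmaintainexp and the phase continues.

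A phase terminates as soon as the total cost of updates processed by \algmaintainexp since the phase began reaches its budget $\Theta(\phi^*|\hat E^*|/(\eta\log\hat W))$, after which all phase data structures are discarded and a new phase is launched. By the good-witness guarantee $|\hat E^*|\geq \Omega(W_i/\hat W^{3\eps})$ at the phase start (where $W_i=W(C)$ at that moment), so each phase absorbs at least $T_i\geq \Omega\bigl(W_i/(\hat W^{O(\eps)}\cdot\hat D\cdot(\log\hat W)^{O(1/\eps)})\bigr)$ edge-deletion updates. The total cost of a phase -- witness recomputation $O(W_i\mu\hat D^2\hat W^{O(\eps)}(\log\hat W)^{O(1/\eps^2)})$, \algmaintainexp $O(W_i\mu\hat D^2\hat W^{O(\eps)}(\log\hat W)^{O(1/\eps^2)})$, and \EST $\tilde O(W_i\mu D\log^4\hat W)$ -- is $O(W_i\mu\hat D^2\hat W^{O(\eps)}(\log\hat W)^{O(1/\eps^2)})$, amortizing to $O(\mu\hat D^3\hat W^{O(\eps)}(\log\hat W)^{O(1/\eps^2)})$ per edge deletion in the phase. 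Summing over the $O(N^0(C)\mu)$ edge deletions during the entire algorithm and substituting $\hat D=2^{30}D\log^{10}\hat W$ yields the target bound $O(N^0(C)\cdot\mu^2\cdot D^3\cdot\hat W^{O(\eps)}\cdot(\log\hat W)^{O(1/\eps^2)})$.

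The main obstacle I anticipate is keeping the amortized accounting airtight: in particular, the maintenance cost of \algmaintainexp in \Cref{thm: maintain expander} is stated in terms of $N^0(C)$ of the input it receives, so I must ensure that each phase's \algmaintainexp is freshly initialized with the \emph{current} cluster (so its ``$N^0(C)$'' is $W_i$, not the global initial weight), and I must carefully identify what counts as one unit of update-cost for the amortization to match the budget $T_i$. A secondary subtlety is synchronizing the decremental pruning of $X'$ with the \EST's source connectivity: every time $\algmaintainexp$ removes a vertex from $X'$ I must delete the corresponding source edges from the \EST, relying on the generalized update support of \Cref{thm: ES-tree}, and at every query I must confirm that at least one anchor endpoint survives within range, which follows from the $F_C$ mechanism above so long as flag $F_C$ is down at query time.
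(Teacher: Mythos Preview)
Your proposal is correct and follows essentially the same approach as the paper: phase-based execution, invoking \Cref{obs: good cluster or bad witness} at each phase start, handling the three cases identically (terminate on bad witness, fall back to \algslow on type-1, and for type-2 combine \algmaintainexp with a source-rooted generalized \EST whose source connects to the live anchors of $\hat E^*$), with phase termination tied to the cost budget $\Theta(\phi^*|\hat E^*|/(\eta\log\hat W))$. The only notable differences are cosmetic: the paper roots the \EST in $C_{|\hat E^*}$ with the source attached to the $t_e$ vertices (rather than to edge endpoints in $C$) and uses depth $\hat D+1$ (rather than $2^{13}D\log^4\hat W$); and for the running-time bound the paper explicitly counts the number of type-2 phases (\Cref{obs: number of good type-3 iterations}) and multiplies by the per-phase cost, whereas you amortize per edge deletion and sum over the $O(N^0(C)\mu)$ total deletions---both routes yield the same bound.
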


\begin{proof}
	We define the parameters $\hat D$, $\rho$, $\eta$, $\phi^*$ exactly as before (see also \Cref{sec: params}).
	We partition the algorithm's execution into phases. We denote by $W_j(C)$ the weight of all vertices of $C$ at the beginning of the $j$th phase. At the beginning of the $j$th phase, we run the algorithm from \Cref{obs: good cluster or bad witness} in order to check whether $C$ is a good cluster. 
	Recall that the running time of this algorithm is:
$$O\left (|E(C)|\cdot \hat D^2\cdot \hat W^{O(\epsx)}\cdot  (\log \hat W)^{O(1/\epsx^2)} \right )\leq O\left (N^0(C)\cdot \mu \cdot D^2\cdot \hat W^{O(\epsx)}\cdot  (\log \hat W)^{O(1/\epsx^2)} \right ).$$

	
	We then consider three cases.
	
\paragraph{Case 1.}
	The first case happens if the algorithm computes a bad witness $\hat E$ for $C$. In this case, we say that the phase is a bad phase. We then raise the flag $F^b_C$, with the bad witness $\hat E$, and the algorithm terminates. 
	
\paragraph{Case 2.}
	The second case happens if the algorithm establishes that $C$ is a type-1 good cluster, that is, $W_j(C)\leq \hat W^{10\eps}$. In this case, the current phase becomes the last phase of the algorithm. From now on, we run Algorithm \algslow from \Cref{thm: slow simple} on cluster $C$. Recall that the algorithm solves the \maintaincluster problem for $C$, and, when flag $F_C$ is not raised, it supports queries $\spquery(C,v,v')$ queries, where, given a pair $v,v'\in V(C)$ of regular vertices of $C$, it returns a path $P$ of length at most $1024D\log^4\hat W$ connecting them  in $C$, in time $O(|E(P)|)$. When flag $F_C$ is raised, the algorithm produces a pair $\hat v,\hat v'$ of regular vertices of $C$, with $\dist_C(\hat v,\hat v')>1024D\log^4\hat W$  The total update time of this algorithm is:

	$$O (W_j(C))^2\mu D\poly\log \hat W)\leq O (N^0(C)\cdot \mu \cdot D\cdot \hat W^{O(\eps)}\cdot \poly\log \hat W).$$

	 If Case 2 happens, then we say that the phase is type-1 good. From the above discussion, at most one type-1 good phase may happen over the course of the algorithm.

\paragraph{Case 3.}
If Case 3 happens, then the algorithm from  \Cref{obs: good cluster or bad witness} establishes that $C$ is a type-2 good cluster, and produces a good witness $(\hat E^*,X,\pset)$ for $C$. Recall that $\hat E^*$ is a set of edges of $C$, whose cardinality is denoted by $k$, and we are guaranteed that $k\geq \Omega(W_j(C)/\hat W^{3\epsx})$. Additionally, $X$ is a $\phi^*$-expander, defined over the edges of $\hat E^*$, with maximum vertex degree at most $O(\log \hat W)$. Lastly, $\pset$ is an embedding of $X$ into $C_{|\hat E^*}$ with congestion at most $\eta$, such that the length of every path in $\pset$ is at most $O(\hat D\log^2\hat W)$.

If Case 3 happens, then we say that the current phase is a type-2 good phase. In this case, the phase continues until the total cost of update operations in the input sequence $\Sigma$ performed over the course of the phase reaches 
$\Theta \left (\frac{\phi^*k}{\eta \log \hat W}\right )=\Theta \left ( \frac{k}{\hat D\cdot \hat W^{O(\epsx)\cdot }(\log \hat W)^{O(1/\epsx)}  }\right ) = \Theta \left ( \frac{k}{D\cdot \hat W^{O(\epsx)\cdot }(\log \hat W)^{O(1/\epsx)}  }\right ) $.
 Since $k\geq \Omega(W_j(C)/\hat W^{3\epsx})$, the total cost of update operations over the course of a type-2 good phase is then at least $\Omega \left ( \frac{W_j(C)}{D\cdot \hat W^{O(\epsx)\cdot }(\log \hat W)^{O(1/\epsx)}  }\right ) $.

We need the following simple observation:

\begin{observation}\label{obs: number of good type-3 iterations}
	The total number of type-2 good phases over the course of the algorithm is at most $O\left( \mu\cdot  D\cdot \hat W^{O(\epsx)}\cdot (\log \hat W)^{O(1/\epsx)} \right )$.
\end{observation}

\begin{proof}
	Recall that a single type-2 good phase	is executed as long as the total cost of update operations performed over the course of the phase does not exceed $\Omega \left ( \frac{W(C)}{D\cdot \hat W^{O(\epsx)\cdot }(\log \hat W)^{O(1/\epsx)}  }\right )$, where $W(C)$ is the weight of all vertices of $C$ at the beginning of the phase. Recall also that, from \Cref{obs: total cost of all updates}, the total cost of all update operations performed on the cluster $C$ over the course of the algorithm is at most $O(W^0(C)\mu)$. 
	
	Consider some integer $i$, and let $t_i$ be the first time during the algorithm's execution, when $W(C)$ fell below $2^i$. Then the total cost of all operations performed on the cluster $C$ from time $t_i$ onward is at most $O(2^i\mu)$ (this is because we can consider the cluster $C$ at time $t$ as a ``fresh'' cluster that undergoes the same sequence of update operations, and apply \Cref{obs: total cost of all updates} to that cluster). Let $t_{i+1}$ be the first time when $W(C)$ falls below $2^{i+1}$, and let $T$ be the time period between $t_i$ and $t_{i+1}$. Let $W'\leq 2^i$ be the total weight of all vertices of $C$ at time $t_i$. Then the total number of type-2 good phases during the time period $T$ is bounded by: 
	
	\[O\left(\frac{W'\mu}{W'/( D\cdot \hat W^{O(\epsx)\cdot }(\log \hat W)^{O(1/\epsx)} ) }\right )\leq 
	O\left( \mu\cdot  D\cdot \hat W^{O(\epsx)\cdot }(\log \hat W)^{O(1/\epsx)} \right ).\]
	
	Since we only need to consider integers $i$ between $\log W^0(C)\leq \log \hat W$ and $1$, we get that the total number of type-2 good phases  over the course of the algorithm is at most $O\left( \mu\cdot  D\cdot \hat W^{O(\epsx)}\cdot (\log \hat W)^{O(1/\epsx)} \right )$.
\end{proof}

The algorithm for the type-2 good phase proceeds as follows. First, we use Algorithm \algmaintainexp from \Cref{thm: maintain expander}, on $C$, in order to maintain a non-empty subgraph $X'\subseteq X$ over the course of the phase. Recall that the update time of the algorithm is $O\left (W_j(C)\cdot \mu\cdot \hat D^2\cdot\hat W^{O(\epsx)}\cdot (\log \hat W)^{O(1/\epsx^2)}\right )\leq O\left (N^0(C)\cdot \mu\cdot D^2\cdot\hat W^{O(\epsx)}\cdot (\log \hat W)^{O(1/\epsx^2)}\right )$, and the algorithm supports expander-short-path queries: given a pair of vertices $x,y\in V(X')$, return an $x$-$y$  path $P$ in graph $C_{|\hat E^*}$, of length at most $O\left(\hat D\cdot (\log \hat W)^{O(1/\epsx^2)}\right )\leq O\left(D\cdot (\log \hat W)^{O(1/\eps^2)}\right )$, with query time $O(|E(P)|)$.

Additionally, we maintain a graph $C'$, that is obtained from the graph $C_{|\hat E^*}$, by adding a source vertex $s$, that connects, with an edge of length $1$, to every vertex $t_e$ lying in the current expander $X'$. 
We initialize  a generalized \EST data structure $\tau$ from \Cref{thm: ES-tree} in graph $C'$, whose root is vertex $s$, with depth threshold $\hat D+1$. We will also maintain a list $L$ of all regular vertices of $C$ that do not lie in the tree $\tau$. 
Lastly, for every vertex $x\in V(C)\setminus L$, we maintain a pointer from $x$ to its location in the tree $\tau$.
As before, it is immediate to generalize the algorithm from \Cref{thm: ES-tree} to maintain these additional data structures. 
The total update time of the \EST data structure is at most $O\left (W_j(C)\cdot \mu\cdot D \cdot \poly\log \hat W\right )\leq O\left (N^0(C)\cdot \mu\cdot D\cdot  \poly\log \hat W\right )$. 

Whenever the list $L$ becomes non-empty, we raise the flag $F_C$, with the corresponding pair of vertices being $v,v'$, where $v'$ is any vertex in list $L$, and $v$ is any regular vertex that serves as an endpoint of an edge $e\in \hat E$ with $t_e\in V(X')$. Notice that, since $v'\in L$, we are guaranteed that $\dist_{C_{|\hat E^*}}(v',t_e)\geq \hat D$. Since the length of the edge $e$ is at most $D$, while $\hat D=2^{30}D\log^{10} \hat W$, we are guaranteed that $\dist_C(v,v')\geq 1024D\log^4\hat W$, as required.

Recall that a query $\spquery(a,b)$ may only be asked when flag $F_C$ is not raised, so $L=\emptyset$. Given such a query, where $a,b\in V(C)$ are regular vertices of $C$, we use the tree $\tau$ to compute two paths: path $P'$ connecting $a$ to some vertex $x'\in X'$, and path $P''$ connecting $b$ to some vertex $x''\in X'$; recall that the lengths of both paths are at most  $\hat D=2^{30}D\log^{10} \hat W$, and they can be computed in time $O(|E(P')|+O(|E(P'')|)$. Next, we run query expander-short-path-query between $x'$ and $x''$ in the data structure maintained by \algmaintainexp, to compute a path $Q$ in graph $C_{|\hat E^*}$, that connects $x'$ to $x''$, of length at most $O\left(D\cdot (\log \hat W)^{O(1/\eps^2)}\right )$, with query time $O(|E(Q)|)$. The final path $P$ connecting $a$ to $b$ in $C$ is obtained by concatenating $P',Q$ and $P''$, and suppressing the vertices of $\set{t_e\mid e\in \hat E^*}$ as needed. We are then guaranteed that the length of $P$ is at most $ O\left(D\cdot (\log \hat W)^{O(1/\eps^2)}\right )$, and, from the above discussion, query time is $O(|E(P)|)$.
 
Altogether, the total update time of the algorithm over the course of a single type-2 good phase is at most $O\left (N^0(C)\cdot \mu\cdot D^2\cdot\hat W^{O(\epsx)}\cdot (\log \hat W)^{O(1/\epsx^2)}\right )$. Since, from \Cref{obs: number of good type-3 iterations}, there are at most $O\left( \mu\cdot  D\cdot \hat W^{O(\epsx)}\cdot (\log \hat W)^{O(1/\epsx)} \right )$ type-2 good phases in the algorithm, the total update time spent on all type-2 good phases is bounded by:

$$O\left (N^0(C)\cdot \mu^2\cdot D^3\cdot\hat W^{O(\epsx)}\cdot (\log \hat W)^{O(1/\epsx^2)}\right ).$$

\paragraph{Final Running Time Analysis.}
To summarize, the algorithm may have at most one type-1 good phase, and its running time over the course of such a phase is at most 
$O (N^0(C)\cdot \mu \cdot D\cdot \hat W^{O(\eps)}\cdot \poly\log \hat W)$. The total update time spent on all type-2 good phases is bounded by: $$O\left (N^0(C)\cdot \mu^2\cdot D^3\cdot\hat W^{O(\epsx)}\cdot (\log \hat W)^{O(1/\epsx^2)}\right ),$$ 

and the number of type-2 good phases is at most  $O\left( \mu\cdot  D\cdot \hat W^{O(\epsx)}\cdot (\log \hat W)^{O(1/\epsx)} \right )$.
Additionally, there is at most one bad phase over the course of the algorithm, and, at the beginning of each phase, the algorithm spends
at most $O\left (N^0(C)\cdot \mu \cdot D^2\cdot \hat W^{O(\epsx)}\cdot  (\log \hat W)^{O(1/\epsx^2)} \right )$ in order to run the algorithm from  \Cref{obs: good cluster or bad witness}.

Therefore, the total running time of the algorithm is bounded by:

$$O\left (N^0(C)\cdot \mu^2\cdot D^3\cdot\hat W^{O(\epsx)}\cdot (\log \hat W)^{O(1/\epsx^2)}\right ).$$

\end{proof}

\section{Fourth Main Tool: Contracted Graph}

\label{sec: conracted graph}

In this section we develop the fourth and the last main tool that we use in our final algorithm for the \maintaincluster problem -- contracted graph. Throughout this section, we assume that we are given an input to the \maintaincluster problem, that consists of 
a valid input structure $\iset=\left(C,\set{\ell(e)}_{e\in E(C)},D\right )$, where $C$ is a connected graph. 
Graph $C$ undergoes a sequence $\Sigma$ of valid update operations with a dynamic degree bound $\mu$. Recall that we are guaranteed that for every regular vertex $v\in V(C)$, the total number of edges incident to $v$ that are ever present in $C$ is bounded by $\mu$.
Additionally, we are given a precision parameter $\eps$, and a parameter $\hat W\geq N^0(C)\cdot \mu$, where $N^0(C)$ is the number of regular vertices of $C$ at the beginning of the algorithm. 
Throughout this section, we assume that $\hat W\geq 2^{\Omega(1/\epsx^2)}$, so $\hat W\geq \max\set{N^0(C)\mu, 2^{\Omega(1/\epsx^2)}}$ holds.
Recall that our goal is to  support queries $\spquery(C,v,v')$: given a pair $v,v'\in V(C)$ of regular vertices, return a path $P$ of length at most $\alpha D$, connecting them in $C$, in time $O(|E(P)|)$, where $\alpha$ is the approximation factor that the algorithm achieves. 
The algorithm may, at any time, raise a flag $F_C$, at which time it must  supply a pair  $\hat v,\hat v'$ of regular vertices of $C$, with $\dist_C(\hat v,\hat v')>1024D\log^4\hat W$.
Once flag $F_C$ is raised, the algorithm will obtain, as part of its input update sequence, a sequence $\Sigma'$ of update operations called flag-lowering sequence, at the end of which 
either $\hat v$ or $\hat v'$ are deleted from $C$.
We use parameters $\hat D=2^{30}D\log^{10} \hat W$, $\rho=\hat W^{\eps}$ from \Cref{sec: pivot decomposition}, and we define a new parameter $\lambda=\ceil{\log(2048D\log^4\hat W)}$.

We start with some intuition. Our algorithm for the \maintaincluster problem starts by running Algorithm \algmaintaingoodcluster from \Cref{thm: alg for maintain good cluster} on the input $\iset$, with the same parameters $\hat W$ and $\eps$. 
As long as the algorithm does not raise the flag $F^b_{C}$, it can support queries $\spquery(C,v,v')$, and it provides all required guarantees for the \maintaincluster problem, with approximation factor  $(\log \hat W)^{O(1/\eps^2)}$. Once the algorithm raises the flag $F^b_{C}$, it terminates, and we need to continue maintaining the cluster $C$ by other means. Recall that, when the flag $F^b_{C}$ is raised, the algorithm produces a bad witness $\hat E$ for $C$, where $\hat E$ is a  $(\hat D,\rho)$-pseudocut of cardinality at most $N^0(C)/\hat W^{\epsx}$.
We denote by $S$ the set of all regular vertices that serve as endpoints of the edges in $\hat E$; clearly, $|S|\leq N^0(C)/\hat W^{\eps}$.
At this point, we can initialize an \EST $\tau$ in the graph that is obtained from $C$, by adding a source vertex $s$ to it, that connects to every vertex of $S$ with a length-$1$ edge, and has depth $\Theta(\hat D)$. This tree will allow us to ensure that, as long as some edge of $\hat E$ remains in $C$, every vertex of $C$ is sufficiently close to at least one vertex in $S$. Once all edges of $\hat E$ are deleted from $C$, we will start the whole algorithm from scratch. 
When the algorithm starts from scratch, we will ensure that the diameter of $C$ is below $\hat D$, by repeatedly raising the flag $F^C$ as needed. From the definition of a pseudocut, we are then guaranteed that the total number of regular vertices in $C$ has decreased by at least factor $\rho$, so we can afford this in terms of the running time. As long as $\hat E\neq \emptyset$, the \EST $\tau$ ensures that every vertex of $C$ is close to some vertex of $S$. But we also need an additional data structure to ensure that vertices of $S$ remain close to each other. The goal of this section is to design such a data structure.

We now provide an intuitive description of the data structure. Consider the graph $\tilde H=C\setminus \hat E$. Assume that, for all $1\leq i\leq \lambda$, we maintain a strong $(2^i,\alpha\cdot 2^i)$-neighborhood cover $\cset_i$ for graph $\tilde H$, for some approximation factor $\alpha\geq  2048 \log^4\hat W$. We will exploit the fact that $\hat E$ is a $(\hat D,\rho)$-pseudocut in order to ensure that, for all $1\leq i\leq \lambda$, for every cluster $C'\in \cset_i$, $N(C')\leq N^0(C)/\rho$ always holds (here, $N(C')$ is the number of regular vertices in $C'$). Additionally, we will maintain a \emph{contracted graph} $\hat H$. The vertex set $\hat H$ consists of a set $S$ of regular vertices -- the regular vertices that serve as endpoints of the edges in $\hat E$, and of a set $\set{u(C')\mid C'\in \bigcup_i\cset_i}$ of supernodes, representing the clusters in the neighborhood covers $\cset_1,\ldots,\cset_{\lambda}$. For every regular vertex $s\in S$ and super-node $u(C')$, we add an edge $e=(s,u(C'))$ to the graph if either $s\in V(C')$, or there is an edge $e'=(s,u')\in \hat E$, with $u'\in C'$. The length of the edge $e$ is $2^i+\ell(e')$, where $i$ is the index for which $C'\in \cset_i$. One can show that, in the contracted graph $\hat H$, all distances between the vertices of $S$ are approximately equal to those in the original cluster $C$. Moreover, since $|S|\leq N^0(C)/\hat W^{\eps}$, the graph is, in a sense, significantly ``smaller'' than $C$. We exploit this fact in order to solve the \maintaincluster problem recursively in graph $\hat H$. This will allow us to ensure that the vertices of $S$ remain close to each other in $C$, and, once this is no longer the case, to raise the flag $F_C$, and to provide a pair of violating vertices as required. As the neighborhood covers $\cset_1,\ldots,\cset_{\lambda}$ evolve, the resulting changes to their clusters will lead to changes to the contracted graph $\hat H$. However, all such changes can be implemented via valid update operations of graph $\hat H$. 

This leaves open the question of how to maintain the neighborhood covers $\cset_i$; equivalently, we need to solve the \recdynnc problem in graph $\tilde H=C\setminus \hat E$, for every distance threshold in $\set{2^i\mid 1\leq i\leq \lambda}$. As we have shown already, in order to obtain an algorithm for the \recdynnc problem, it is sufficient to obtain an algoritm for the \maintaincluster problem. Since all clusters in the neighborhood covers $\cset_i$ for which we will need to solve the \maintaincluster problem will have a significantly smaller number of regular vertices than $N^0(C)$, we can maintain the neighborhood covers $\cset_i$ by applying the algorithm for the \maintaincluster problem recursively to these much smaller clusters, that arise over the course of the algorithm for the \recdynnc problem.

This is precisely the approach that we use, except for a small caveat. Since we solve the \maintaincluster problem in the contracted graph $\hat H$ recursively, we need to ensure that it has a small dynamic degree bound. Consider some edge $e=(u,v)\in \hat E$, where $v$ is a regular vertex, and $u$ is a supernode. In the construction of graph $\hat H$ that we have described above, for every cluster $C\in \bigcup_i\cset_i$ that contains $u$, we will add the edge $(v,u(C))$ to $\hat H$. The analysis from \Cref{subsec: algorithmic framework for NC} only guarantees that every {\bf regular} vertex of $\tilde H$ belongs to a small number of clusters in $\bigcup_i\cset_i$, but we cannot guarantee it for supernodes (intuitively, this is because each supernode has weight $0$; as supernodes may be added to the graph via supernode splitting, letting the weight of a supernode be non-zero would result in total weight of vertices in $H$ increasing as well as decreasing, which would make the analysis much more difficult). We overcome this difficulty by defining the graph $\tilde H$ slightly differently: we start, as before, with $\tilde H=C\setminus \hat E$. But then we insert, for every edge $e=(u,v)\in \hat E$, where $v$ is a regular vertex, a new \emph{fake regular vertex} $v^F(e)$, and a new \emph{fake edge} $e^F=(u,v^F(e))$. We view vertex $v^F(e)$ as a \emph{fake copy} of the vertex $v$, and we view $e^F$ as a fake copy of the edge $e$. Notice that it is possible that, for a regular vertex $v\in S$, we have introduced several fake copies -- one copy for every edge of $\hat E$ incident to $v$. In the contracted graph $\hat H$, we connect a vertex $v\in S$ to a supernode $u(C')$ whenever cluster $C'\in \bigcup_i\cset_i$ contains either $v$ or any of its fake copies. Since we are guaranteed that the total number of edges incident to $v$ in graph $C$ always remains below $\mu$, the number of fake copies of $v$ in $\tilde H$ is also bounded by $\mu$. As the algorithm from  \Cref{subsec: algorithmic framework for NC} ensures that every regular vertex of $\tilde H$ belongs to a small number of clusters in each set $\cset_i$ for $1\leq i\leq \lambda$, the dynamic degree bound for graph $\hat H$ will also be small (but somewhat higher than $\mu$).

The remainder of this section is organized as follows. We define the modified graph $\tilde H$ in \Cref{subsec: modified}. We then define the corresponding contracted graph in \Cref{subsec: contracted}. Lastly, we prove that the distances  are approximately preserved in the contracted graph in \Cref{subsec: distance preservation}.

\subsection{Modified Graph $\tilde H$}
\label{subsec: modified}

Throughout, we assume that we are given  
a valid input structure $\iset=\left(C,\set{\ell(e)}_{e\in E(C)},D\right )$, where $C$ is a connected graph, that undergoes an online sequence $\Sigma$ of valid update operations with dynamic degree bound $\mu$. We also assume that we are given parameters $0<\eps<1$ and $\hat W\geq N^0(C)\mu$ as before. 
We use the pareameters $\hat D=2^{30}D\log^{10} \hat W$, $\lambda=\ceil{\log(2048D\log^4\hat W)}$, and $\rho=\hat W^{\eps}$, that are defined exactly as before.

Lastly, we assume that, at the beginning of the algorithm, we are given a bad witness $\hat E$ for cluster $C$, so $\hat E$ is a $(\hat D,\beta)$-pseudocut, with $|\hat E|<N^0(C)/\hat W^{\epsx}$. 
We denote this initial edge set $\hat E$ by $\hat E^0$, and we let $S$ be the set of all regular vertices that serve as endpoints of edges in $\hat E^0$, so $|S|<N^0(C)/\hat W^{\epsx}$. As cluster $C$ undergoes a sequence of valid update operations, edge set $\hat E$ evolves as follows. First, if an edge $e\in \hat E$ is deleted from $C$ as part of the input update operation sequence, we delete $e$ from $\hat E$. Additionally, if a supernode splitting operation is performed on a supernode $u\in V(C)$, with the corresponding edge set $E'$, then for every edge $e=(u,v)\in \hat E\cap E'$, we add the new copy $e'=(u',v)$ of the edge $e$ that was created by the supernode splitting operation to set $\hat E$. 
Therefore, throughout the algorithm, an edge may leave $\hat E$ only if it is deleted from $C$, and an edge may join $\hat E$ only if it is a newly created copy of an edge that already lies in $\hat E$, where the copy is created via the supernode splitting operation. 
Set $S$ of vertices always contains all regular vertices that serve as endpoints of the edges in the current set $\hat E$. It is immediate to verify that vertices may leave $S$, but they may never join $S$.

We now define the modified graph $\tilde H$, which is a dynamic graph. 
Initially, before any updates are applied to graph $C$, we define the initial graph $\tilde H^0$ as follows. We start with $\tilde H^0=C\setminus \hat E^0$. Next, we process every edge $e=(v,u)\in \hat E^0$, where $v\in S$ is a regular vertex, one-by-one. When edge $e=(v,u)$ is processed, we insert a new regular vertex $v^F(e)$ into $\tilde H^0$, that we call a \emph{fake regular vertex}, and a \emph{fake copy of vertex $v$}. Additionally, we insert a new edge $e^F=(v^F(e),u)$ into $\tilde H^0$, whose length is $\ell(e)$. We say that $v^F(e)$ is a \emph{fake edge}, and that it is a \emph{fake replacement edge} for edge $e$. This completes the definition of the initial modified graph $\tilde H^0$. Observe that, for every edge $e\in \hat E$, there is a unique fake replacement edge for $e$, and for every regular vertex $s\in S$, we may have created a number of fake copies of $s$. In the remainder of the algorithm, we never create any new fake vertices, but we may create new fake edges, as new edges join set $\hat E$. We will ensure that the following invariant holds throughout the algorithm:

\begin{properties}{I}
\item  For every edge $e=(u,v)\in \hat E$, there is a unique fake replacement edge $e^F$ in $\tilde H$, whose one endpoint is $u$ and the other endpoint is a fake copy of $v$; edge $e$ itself does not lie in $\tilde H$. \label{inv: fake edges}
\end{properties}

For each vertex $s\in S$, let $R(s)$ be the set of all fake copies of vertex $s$. Since no new fake vertices are ever created after the initialization step, vertices may leave set $R(s)$, but they may never join it. Since the degree of every vertex in $C$ is at most $\mu$, $|R(s)|\leq \mu$ must hold for every vertex $s\in S$.

Next, we describe the updates to the graph $\tilde H$, as cluster $C$ undergoes the sequence $\Sigma$ of update operations. We denote by $\tilde H^t$ the graph $\tilde H$ just before the $(t+1)$th update operation $\sigma_{t+1}\in \Sigma$ is executed in graph $C$. The updated graph $\tilde H^{t+1}$ is defined as follows. If $\sigma_{t+1}$ is the deletion of an isolated vertex $x$, then we also delete $x$ from $\tilde H^t$ (where it must be an isolated vertex as well). If $\sigma_{t+1}$ is the deletion of an edge $e$, then we proceed as follows. First, if $e\not\in \hat E$, then $e\in E(\tilde H)$ must hold. We delete $e$ from $\tilde H^{t}$ as well. If $e\in \hat E$, then we let $e^F$ be the unique fake replacement edge of $e$. We delete $e^F$ from $\hat E$ and from $\tilde H^t$. If the endpoint $v^F$ of $e^F$ that is a fake regular vertex becomes an isolated vertex in $\tilde H$ as the result of this edge deletion, then we delete $v^F$ from $\tilde H^t$ as well. If the endpoint $v\in S$ of $e$ that is a regular vertex has no other edges of $\hat E$ that are incident to it, we delete $v$ from $S$. Note that we have defined a sequence of valid update operations in graph $\tilde H$ corresponding to the update $\sigma_{t+1}$, whose length is $O(1)$ (in addition to possible updates to edge set $\hat E$ and vertex set $S$).

Lastly, assume that $\sigma_{t+1}$ is a supernode-splitting operation, for some supernode $u$, with edge set $E'$. In this case, we define a new edge set $E''$, as follows. We process each edge $e\in E'$ one-by-one. If $e\not\in \hat E$, then we add $e$ to $E''$. Otherwise, we add to $E''$ the unique fake replacement edge $e^F$ of $e$. We then perform the supernode-splitting operation in graph $\tilde H^t$, with supernode $u$ and edge set $E''$. We also update edge set $\hat E$, as described above. 
Consider an edge $e=(u,v)\in E'\cap \hat E$. Recall that, following the supernode splitting operation in graph $C$, we have created a copy $e'=(u',v)$ of $e$ in graph $C$, and edge $e'$ was added to the set $\hat E$. Let $e^F=(u,v^F)$ be the fake replacement edge of edge $e$ in $\tilde H$. Then the supernode splitting operation of $u$ in $\tilde H$ has created a new edge $(e')^F=(u',v^F)$, corresponding to the edge $e^F$. We view edge $(e')^F$ as the fake replacement edge of the edge $e'$.
This completes the definition of the graph $\tilde H^{t+1}$. Notice that, if Invariant \ref{inv: fake edges} held for $\tilde H^t$, then it also holds for $\tilde H^{t+1}$.
The following claim is now immediate.

\begin{claim}\label{claim: get update sequence for transformed graph} There is a deterministic algorithm that, given an online valid update sequence $\Sigma=(\sigma_1,\sigma_2,\ldots)$ for $C$ produces, at each time $t>0$, a sequence $\tilde \Sigma_t$ of valid update operations for graph $\tilde H$, such that $\tilde \Sigma_t$ contains a constant number of valid update operations, and the length of the description of each operation, as well as the time required to compute $\tilde \Sigma_t$, is asymptotically bounded by the length of the description of $\sigma_t$.
Moreover, for all $t\geq 0$, the graph that is obtained from $\tilde H^0$ by applying the sequence $\tilde \Sigma_1\circ\tilde \Sigma_2\circ\cdots\circ\tilde \Sigma^t$ to it is precisely $\tilde H^{t+1}$. Additionally, if the dynamic graph $C$ has dynamic degree bound $\mu$, then the dynamic graph $\tilde H$ has dynamic degree bound $\mu$ as well. The algorithm also maintains the dynamic set $\hat E$ of edges, and the dynamic set $S$ of vertices.
\end{claim}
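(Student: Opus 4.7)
The proof is essentially a verification that the algorithm described in the paragraphs preceding the claim does what is promised; the algorithm itself is given explicitly, so the plan is to formalize it and then check each of the four assertions in turn. My first step is to state the algorithm succinctly: upon receiving update $\sigma_{t+1}$, inspect its type (edge deletion, isolated vertex deletion, or supernode splitting), and output the sequence $\tilde\Sigma_{t+1}$ exactly as described in the text — for an edge deletion of $e\not\in\hat E$ output a single edge deletion of $e$ in $\tilde H$; for a deletion of $e\in\hat E$ output the deletion of its unique fake replacement $e^F$ (plus, conditionally, the isolated-vertex deletion of the fake endpoint $v^F$); for a supernode splitting at $u$ with edge set $E'$ output a single supernode splitting at $u$ whose edge set $E''$ is obtained by replacing each $e\in E'\cap\hat E$ by $e^F$ and leaving all other edges unchanged. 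Along with this, the algorithm updates $\hat E$ and $S$ by inserting new copies generated by supernode splitting into $\hat E$ (and updating $S$ accordingly) and by removing deleted edges/vertices. To do this efficiently, I will maintain, as auxiliary data, a pointer from every $e\in\hat E$ to its unique fake replacement $e^F$, and a pointer in the reverse direction; this is initialized in time $O(|\hat E^0|)$ and takes $O(1)$ time to maintain per update.

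The second step is to verify Invariant~\ref{inv: fake edges} inductively. At time $0$ the invariant holds by construction of $\tilde H^0$. For the inductive step, the only operations that can affect the invariant are edge deletions in $\hat E$ (where we simultaneously remove $e$ from $\hat E$ and $e^F$ from $\tilde H$, preserving the bijection) and supernode splittings (where for each $e=(u,v)\in E'\cap\hat E$ with fake replacement $e^F=(u,v^F)$, the $C$-side creates a copy $e'=(u',v)$ which joins $\hat E$, while the $\tilde H$-side splitting of $u$ with edge set $E''\ni e^F$ creates exactly one new edge $(e')^F=(u',v^F)$, which we declare the fake replacement of $e'$). Thus the bijection between $\hat E$ and the fake replacement edges is preserved, edges of $\hat E$ never lie in $\tilde H$, and each $e=(u,v)\in\hat E$ has its fake replacement incident to $u$ and to a fake copy of $v$.

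The third step is to verify correctness in the sense that $\tilde\Sigma_{t+1}$ applied to $\tilde H^t$ yields $\tilde H^{t+1}$ as defined in the text. Edge deletions and isolated vertex deletions are immediate. For supernode splitting, I need to check that the set $E''$ I produce is exactly the set of edges in $\tilde H^t$ that are incident to $u$ and correspond to edges of $E'$ in $C$; this follows from the invariant, since the edges of $E'\setminus\hat E$ lie in $\tilde H^t$ unchanged and the edges of $E'\cap\hat E$ have their fake replacements incident to $u$ in $\tilde H^t$. I will also observe that $|\tilde\Sigma_{t+1}|=O(1)$ (at most three elementary operations) and that its description is produced in time linear in $|\sigma_{t+1}|$, using the auxiliary pointers to look up fake replacements in $O(1)$ per edge.

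The last step, and the place where a little care is required, is to verify that the dynamic degree bound for $\tilde H$ is at most $\mu$. For a regular vertex $v\in V(C)\cap V(\tilde H)$, every edge ever incident to $v$ in $\tilde H$ is either an edge of $C\setminus\hat E$ incident to $v$, so it is bounded by the total number of edges ever incident to $v$ in $C$, which is at most $\mu$. For a fake regular vertex $v^F(e_0)$ created at initialization from $e_0=(u_0,v)\in\hat E^0$, I will argue that the edges ever incident to $v^F(e_0)$ in $\tilde H$ are in bijection with the descendants of $e_0$ in $\hat E$ (i.e.\ edges obtained from $e_0$ by a sequence of supernode splittings on the supernode endpoint): by the invariant, each such descendant $e'$ has its fake replacement incident to $v^F(e_0)$, and conversely every edge incident to $v^F(e_0)$ at any time is such a fake replacement. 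Since all such descendants are distinct edges incident to $v$ in $C$, their total number is bounded by the number of edges ever incident to $v$ in $C$, which is at most $\mu$. This is the only subtle point in the proof; once it is established, all four claims are proved and the maintenance of $\hat E$ and $S$ follows directly from the algorithm's description.
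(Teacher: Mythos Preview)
Your proposal is correct and follows the same approach as the paper: the paper states this claim as ``immediate'' from the explicit construction of $\tilde H^{t+1}$ from $\tilde H^t$ given in the preceding paragraphs, and you are simply spelling out that verification carefully. The one place where you add genuine content beyond the paper is the dynamic degree bound for fake regular vertices, where your bijection between the edges ever incident to $v^F(e_0)$ and the descendants of $e_0$ in $\hat E$ (all of which are distinct edges of $C$ incident to $v$) is exactly the right argument; the paper leaves this implicit.
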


.

\subsection{Contracted Graph $\hat H$}
\label{subsec: contracted}

A contracted graph $\hat H$ is a dynamic graph that is associated with the modified graph $\tilde H$, and the set $\hat E$ of edges of $C$.  
We consider some time point $t\geq 0$, and the corresponding modified graph $\tilde H=\tilde H^t$, and edge set $\hat E=\hat E^t$. We assume that we are given, for all $1\leq i\leq \lambda$, a collection $\cset_i$ of subgraphs of $\tilde H$, such that $\cset_i$ is a strong $(2^i,\alpha\cdot 2^i)$ neighborhood cover for the set of regular vertices in graph $\tilde H$, where $\alpha\geq 2048\log^4\hat W$ is some given approximation factor.
We denote $\cset=\bigcup_{1\leq i\leq \lambda}\cset_i$.

The contracted graph $\hat H^t$ at time $t$ is defined as follows. Its vertex set consists of two subsets: set $S$ of regular vertices (as before, this set contains all regular vertices of $C$ that serve as endpoints of edges in $\hat E^t$), and set $\set{u(C')\mid C'\in \cset}$ of supernodes. The edge set is defined as follows. For every regular vertex $s\in S$, and supernode $u(C')$, we add an edge $(s,u(C'))$ iff cluster $C'$ contains at least one vertex in $R(s)\cup \set{s}$ (recall that $R(s)$ is the set of all fake copies of $s$). The length of the edge is $2^i$, where $i$ is the index for which $C'\in \cset_i$. 

Intuitively, we will maintain the neighborhood covers $\cset_i$, for $1\leq i\leq \lambda$ using the algorithmic framework from \Cref{subsec: algorithmic framework for NC}. Once the collection $\cset_i$ of clusters is initialized using algorithm \initnc, it will only undergo allowed updates, which include: \delvertex (delete a vertex from a cluster in $\cset_i$); \addsupernode (add a supernode to a cluster in $\cset_i$ to reflect a supernode-splitting operation); and \csplit (create a new cluster, that is a sub-graph of an existing cluster in $\cset_i$). Once graph $\hat H$ is initialized, all subsequent changes to $\hat H$, due to the changes in the collections $\set{\cset_i}_{i=1}^{\lambda}$ of clusters can be realized via standard update operations. Therefore, we can view the graph $\hat H$ as an instance of the \maintaincluster problem that undergoes a sequence of valid update operations.

The framework from \Cref{subsec: algorithmic framework for NC} guarantees (see \Cref{thm: bound number of copies}), that for all $1\leq i\leq \lambda$, every regular vertex of $\tilde H$ lies in at most $W^{O(1/\log\log W)}$ clusters in $\cset_i$ over the course of the algorithm, where $W$ is the total number of regular vertices in $\tilde H^0$; it is easy to verify that $W\leq N^0(C)+|\hat E|\leq 2N^0(C)$. Moreover, as observed already, for every vertex $s\in S$, $|R(s)|\leq \mu$. Therefore, for every vertex $s\in S$, there may be at most $\mu\cdot W^{O(1/\log\log W)}$ clusters in $\bigcup_i\cset_i$ that ever contained a vertex of $R(s)\cup \set{s}$, and so the total number of edges that are ever incident to $s$ in $\hat H$ is bounded by  $\mu\cdot W^{O(1/\log\log W)}$. We can then use $\mu'=\mu\cdot W^{O(1/\log\log W)}\leq \mu\cdot (N^0(C))^{O(1/\log\log N^0(C))}$ as the dynamic degree bound for the graph $\hat H$. Lastly, observe that $N^0(\hat H)\leq |\hat E^0|\leq N^0(C)/\hat W^{\eps}$. We now turn to prove that the distances between the vertices of $S$ are approximately preserved in $\hat H$.
\subsection{Distance Preservation in the Contracted Graph}
\label{subsec: distance preservation}

\begin{claim}\label{claim: distances preserved in contracted}
	At every time $t\geq 0$, for every pair $s,s'\in S$ of vertices, if $\dist_{C}(s,s')\leq 2048 D\log^4\hat W$, then $\dist_{\hat H}(s,s')\leq 4\dist_C(s,s')$.
\end{claim}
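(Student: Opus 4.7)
The plan is to take a shortest $s$--$s'$ path $P$ in $C$, of length $d=\dist_C(s,s')\le 2048D\log^4\hat W$, and walk through it, converting each segment into two edges of $\hat H$ using the neighborhood covers $\cset_i$. Enumerate the vertices of $S$ along $P$ in order as $s=s_0,s_1,\dots,s_k=s'$ and let $P_j$ be the sub-path of $P$ from $s_{j-1}$ to $s_j$, of length $d_j$, so that $\sum_j d_j=d$.

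The key structural observation I would establish first is that no \emph{interior} edge of $P_j$ can lie in $\hat E$. Indeed, $C$ is bipartite between regular vertices and supernodes, so every edge of $\hat E$ has its regular endpoint in $S$; if an interior edge of $P_j$ lay in $\hat E$, one of the interior regular vertices of $P_j$ would be in $S$, contradicting the choice of the $s_j$'s as \emph{all} the $S$-visits along $P$. So each $P_j$ uses edges of $\hat E$ only possibly as its first and/or last edge.

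Next I would lift $P_j$ to a walk $\tilde P_j$ in $\tilde H$ of the same length $d_j$, by replacing its first (resp.\ last) edge by its unique fake replacement edge whenever it belongs to $\hat E$; Invariant~I1 guarantees the existence of that fake edge in the current $\tilde H$. Let $v_j^{\mathrm{start}}\in R(s_{j-1})\cup\{s_{j-1}\}$ and $v_j^{\mathrm{end}}\in R(s_j)\cup\{s_j\}$ be the endpoints of $\tilde P_j$; both are regular vertices of $\tilde H$. In particular $\dist_{\tilde H}(v_j^{\mathrm{start}},v_j^{\mathrm{end}})\le d_j$. Pick the scale $i_j=\lceil \log d_j\rceil$; since $C$ is bipartite and $s_{j-1},s_j$ are on the same side, $P_j$ has at least two edges so $d_j\ge 2$, giving $1\le i_j\le \lambda$ by the choice of $\lambda=\lceil\log(2048D\log^4\hat W)\rceil$, and also $2^{i_j}\le 2d_j$. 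Because $\cset_{i_j}$ is a strong $(2^{i_j},\alpha\cdot 2^{i_j})$-neighborhood cover for the regular vertices of $\tilde H$, there is a cluster $C'\in\cset_{i_j}$ with $B_{\tilde H}(v_j^{\mathrm{start}},2^{i_j})\subseteq V(C')$, and since $\dist_{\tilde H}(v_j^{\mathrm{start}},v_j^{\mathrm{end}})\le d_j\le 2^{i_j}$, we also have $v_j^{\mathrm{end}}\in V(C')$.

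By the definition of $\hat H$, the presence of a vertex of $R(s_{j-1})\cup\{s_{j-1}\}$ (namely $v_j^{\mathrm{start}}$) and of $R(s_j)\cup\{s_j\}$ (namely $v_j^{\mathrm{end}}$) inside the same cluster $C'\in\cset_{i_j}$ means that both $(s_{j-1},u(C'))$ and $(s_j,u(C'))$ are edges of $\hat H$, each of length $2^{i_j}$. Hence $s_{j-1}\to u(C')\to s_j$ is a walk in $\hat H$ of length $2\cdot 2^{i_j}\le 4 d_j$. Concatenating these walks for $j=1,\dots,k$ produces an $s$--$s'$ walk in $\hat H$ of length at most $4\sum_j d_j=4\,\dist_C(s,s')$, which proves the claim. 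The main point requiring care -- and the only real obstacle -- is the endpoint bookkeeping when lifting $P_j$ to $\tilde P_j$: one must verify that $v_j^{\mathrm{start}}$ and $v_j^{\mathrm{end}}$ really are regular vertices of the \emph{current} $\tilde H$ (which follows from Invariant~I1 together with the fact that $P$ only uses edges of the current $C$) and that they correctly induce the edges of $\hat H$ required to make the construction go through.
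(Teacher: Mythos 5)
Your proof is correct and takes essentially the same route as the paper's: decompose $P$ into segments whose interior edges avoid $\hat E$, lift each segment to $\tilde H$ via the fake replacement edges, locate each lifted segment in a single cluster of $\cset_{i_j}$ at the appropriate scale, and replace the segment with two $\hat H$-edges of total length at most $4d_j$. The only (harmless) divergence is that you split $P$ at the $S$-vertices lying on $P$ rather than deleting the $\hat E$-edges of $P$ as the paper does; your version avoids the paper's degenerate single-vertex subpaths and also lets you argue $d_j \ge 2$ directly from bipartiteness, so $i_j\ge 1$ is automatic.
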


\begin{proof}
	Consider any time point $t\geq 0$, and let $\hat H$ be the contracted graph at time $t$.
	Let $s,s'\in S$ be a pair of vertices with $\dist_{C}(s,s')\leq 2048 D\log^4\hat W$,
	 Let $P$ be the shortest $s$-$s'$ path in graph $C$, and let $\ell^*\leq 2048 D\log^4\hat W$ be its length. Let $e_1,\ldots,e_q$ be the edges of $\hat E$ lying on $P$, indexed in the order of their appearance. We denote $s_0=s$, $s_{q+1}=s'$, and, for $1\leq j\leq q$, we let $s_j\in S$ be the endpoint of $e_j$ that is a regular vertex. Let $P_1,\ldots,P_{q+1}$ be the sub-paths of $P$ obtained by deleting the edges $e_1,\ldots,e_q$ from $P$, indexed in the order of their appearance on $P$. Notice that a path $P_j$ may consist of a single vertex $\hat s\in S$. Consider now the graph obtained from $P$ by first deleting the edges $e_1,\ldots,e_q$ from it, and then adding, for all $1\leq j\leq q$, the unique fake replacement edge $e_j^F$ of edge $e_j$ to the graph. Then the resulting graph is a subgraph of $\tilde H$, and it is a collection of $q+1$ paths $P_1',\ldots,P_{q+1}'$, where for all $1\leq j\leq q$, path $P_j'$ is obtained from $P_j$ by possibly appending one fake edge at the beginning, and possibly appending one fake edge at the end of it. 
	 (We note that it is possible that $s_j$ is an endpoint of both $e_{j}$ and $e_{j+1}$, and that the fake copies $e^F_{j},e^F_{j+1}$ share an endpoint $s_j^F$; in this case, we view $P'_{j-1}$ as terminating at vertex $s_j^F$, path $P'_j$ as consisting of only the vertex $s_j$, and path $P'_{j+1}$ as originating at $s_j^F$.)
	 Notice that for all $1\leq j\leq q$, either the last endpoint of $P_j'$ is $s_j$ and the first endpoint of $P_{j+1}'$ is a fake copy of $s_j$, or vice versa. Notice also that $P_1',\ldots,P_{q+1}'$ are paths in graph $\tilde H$, and that $\sum_j\ell_{\tilde H}(P_j')=\ell^*$. For all $1\leq j\leq q+1$, we denote $\ell_j=\ell_{\tilde H}(P'_j)$.
	
	Consider now some index $1\leq j\leq q+1$. Let $i_j$ be the smallest integer, so that $\ell_j\leq 2^{i_j}$. Since $\ell_j\leq \ell^*\leq 2048D\log^4\hat W$, and $\lambda=\ceil{\log(2048D\log^4\hat W)}$, $i_j\leq \lambda$ holds. Consider now cluster set $\cset_{i_j}$. Since this cluster set is a strong $(2^{i_j},\alpha\cdot 2^{i_j})$-neighborhood cover for graph $\tilde H$, there is some cluster $C_j\in \cset_{i_j}$ that contains the two endpoints of path $P_j$. Therefore, there is an edge $(s_{j-1},u(C_j))$ and an edge $(u(C_j),s_j)$ in graph $\hat H$. We let $\hat P_j$ denote the path in graph $\hat H$, that is the union of these two edges. As the length of each of the two edges is $2^{i_j}$, the length of path $\hat P_j$ is $2\cdot 2^{i_j}\leq 4\ell_j$.
	By concatenating the paths $\hat P_1,\ldots,\hat P_q$, we obtain a path in graph $\hat H$ connecting $s$ to $s'$, of length at most $4\sum_{j=1}^{q+1}\ell_j\leq 4\dist_C(s,s')$.
\end{proof}

It is easy to verify that the opposite direction is also true: $\dist_{\hat H}(s,s')\leq O(\alpha\cdot \dist_C(s,s'))$ for any pair $s,s'$ of vertices in $S$. We do not use this claim directly, so we do not prove it formally.

\section{Putting Everything Together: Proof of \Cref{thm: main maintain cluster algorithm}}
\label{sec: final proof}
In this section we combine all tools that we have developed so far, in order to obtain a proof of \Cref{thm: main maintain cluster algorithm}. In order to do so, we prove the following theorem by induction:

\begin{theorem}\label{thm: induction for maintain cluster algorithm}
	There is a universal constant $c>1$, and deterministic algorithm for the \maintaincluster problem, that,  given a valid input structure $\iset=\left(C,\set{\ell(e)}_{e\in E(C)},D \right )$, where $C$ is a connected graph, undergoing a sequence of valid update operations with dynamic degree bound $\mu\geq 1$, and parameters $c/\log \log ((N^0(C))^{2/z})<\eps<1$, and $\hat W\geq N^0(C)\mu$, where $N^0(C)$ is the number of regular vertices in $C$ at the beginning of the algorithm, such that $N^0(C)\leq \hat W^{\eps z/2}$, for some integer $1\leq z\leq \ceil{2/\eps}$, achieves approximation factor $\alpha_z=(\log \hat W)^{2c\cdot 2^{2z}/\eps^2}$ and  has total update time at most: 
	$$4^{cz}\cdot N^0(C)\cdot \mu^2\cdot D^3\cdot\hat W^{c\eps}\cdot  (\log \hat W)^{c/\epsx^2+cz}$$. 
\end{theorem}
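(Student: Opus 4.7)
\medskip
\noindent\textbf{Proof Proposal.} The proof proceeds by induction on $z$. For the base case $z=1$, we have $N^0(C) \leq \hat W^{\eps/2}$, and we simply invoke \algslow from \Cref{thm: slow simple}. Its total update time is $O((N^0(C))^2 \mu D \poly\log \hat W) \leq O(N^0(C)\cdot \hat W^{\eps/2} \cdot \mu D \poly\log \hat W)$, which fits into the required bound for a sufficiently large constant $c$, and its approximation factor $1024\log^4 \hat W$ is at most $\alpha_1 = (\log \hat W)^{8c/\eps^2}$.

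For the inductive step with $z > 1$, I first run \algmaintaingoodcluster from \Cref{thm: alg for maintain good cluster} on $C$. As long as flag $F_C^b$ is not raised, this algorithm solves the \maintaincluster problem on $C$ with approximation factor $(\log \hat W)^{O(1/\eps^2)} \leq \alpha_z$ and update time $O(N^0(C)\mu^2 D^3 \hat W^{O(\eps)}(\log \hat W)^{O(1/\eps^2)})$, which is absorbed into the claimed bound. When $F_C^b$ is raised, it produces a bad witness $\hat E$, namely a $(\hat D, \rho)$-pseudocut of size at most $N^0(C)/\hat W^{\eps}$. I then construct the modified graph $\tilde H$ and the vertex set $S$ of regular endpoints of $\hat E$ as in \Cref{subsec: modified}, noting that $|S| \leq N^0(C)/\hat W^{\eps} \leq \hat W^{\eps(z-2)/2}$.

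Next, I apply the framework from \Cref{subsec: algorithmic framework for NC} to solve the \recdynnc problem on $\tilde H$ at each of the $\lambda = O(\log(D\log^4 \hat W))$ distance scales $2^i$ for $1 \leq i \leq \lambda$, maintaining neighborhood covers $\cset_i$. Whenever the framework adds a new cluster $C'$ to some $\cset_i$, I solve the \maintaincluster problem on $C'$ by applying the theorem inductively with parameter $z-1$: this is valid since the pseudocut property guarantees $N^0(C') \leq N^0(C)/\rho \leq \hat W^{\eps(z-2)/2} \leq \hat W^{\eps(z-1)/2}$, and the target distance threshold is $2^i \leq O(D\log^4 \hat W)$. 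Using these neighborhood covers I build and maintain the contracted graph $\hat H$ as in \Cref{subsec: contracted}; by \Cref{claim: get update sequence for transformed graph} and the discussion following the definition of $\hat H$, the changes to $\hat H$ can be expressed as valid update operations with dynamic degree bound $\mu' = \mu \cdot N^0(C)^{O(1/\log\log N^0(C))}$. I then apply the inductive hypothesis with $z-1$ on $\hat H$ (valid since $N^0(\hat H) \leq |S| \leq \hat W^{\eps(z-1)/2}$), with distance threshold $D' = O(D\log^4 \hat W)$. Finally, I maintain a generalized \EST (from \Cref{thm: ES-tree}) in $C$ augmented by a source $s$ adjacent to every vertex of $S$ with unit-length edges, at depth $\Theta(\hat D)$. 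To answer $\spquery(C,v,v')$ I concatenate an \EST path from $v$ to some $s \in S$, a path in $\hat H$ from $s$ to some $s' \in S$ returned by the recursive algorithm (lifted back to a path in $C$ using $\spquery$ on the clusters $C' \in \cset_i$ corresponding to edges of the $\hat H$-path), and an \EST path from $s'$ to $v'$. Flag $F_C$ is raised whenever either the \EST detects a regular vertex farther than $\hat D$ from $S$, or the recursive algorithm on $\hat H$ raises its own flag; in both cases \Cref{claim: distances preserved in contracted} (and its contrapositive via the \EST) supplies the needed witness pair. Once all of $\hat E$ is deleted from $C$, I restart the whole algorithm from scratch on the updated $C$; the pseudocut property ensures $N^0(C)$ has dropped by a factor of at least $\rho = \hat W^{\eps}$ per restart, so only $O(z/\eps)$ restarts ever happen.

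The approximation factor composes multiplicatively: edges of $\hat H$ represent paths of length $\alpha_{z-1} \cdot 2^i$ in $\tilde H$, the recursive algorithm on $\hat H$ returns paths of length $\alpha_{z-1} \cdot \dist_{\hat H}$, and \Cref{claim: distances preserved in contracted} gives a factor $4$ distortion; hence the total approximation is $O(\alpha_{z-1}^2 \poly\log \hat W) \leq (\log \hat W)^{2c\cdot 2^{2z}/\eps^2} = \alpha_z$. The running time is the sum of the costs at the current level ($\algmaintaingoodcluster$, the \EST, and maintaining $\hat H$ itself, all bounded by the target expression) plus the recursive costs. Summing the inductive bound over all clusters ever added to $\bigcup_i \cset_i$ via \Cref{claim: bound total weight} (each level-$(z-1)$ call is charged $\mu^2 D^3$ with a tilde factor $N^0(C')$ that sums to $\tilde O(N^0(C))$), and the single level-$(z-1)$ call on $\hat H$, multiplies the previous-level constant by a factor of at most $4^c$, yielding the claimed $4^{cz}$ dependence. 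The main obstacle will be bookkeeping the interaction between the three levels of flag-raising (outer \EST, recursive algorithm on $\hat H$, and recursive algorithms on the $\cset_i$ clusters) so that flag-lowering sequences remain consistent across levels and the dynamic degree bound $\mu'$ on $\hat H$ stays under control, while absorbing the $\mu' = \mu \cdot (N^0(C))^{O(1/\log\log N^0(C))}$ blow-up into the $\hat W^{c\eps}$ factor using the constraint $\eps > c/\log\log((N^0(C))^{2/z})$.
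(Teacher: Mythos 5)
Your proposal follows the paper's high-level architecture closely: induction on $z$, run \algmaintaingoodcluster until $F_C^b$ is raised, build the modified graph $\tilde H$ and contracted graph $\hat H$, recurse on $\hat H$ and on the clusters in the neighborhood covers $\cset_i$, and answer queries by concatenating \EST paths with a lifted $\hat H$-path. The approximation accounting and running-time accounting are sketched in roughly the right way. However, there is a genuine gap at the transition when $\hat E$ becomes empty.

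You write: ``Once all of $\hat E$ is deleted from $C$, I restart the whole algorithm from scratch on the updated $C$; the pseudocut property ensures $N^0(C)$ has dropped by a factor of at least $\rho = \hat W^{\eps}$ per restart.'' This inference is not valid. The pseudocut property only guarantees (cf.\ Observation~\ref{obs: few regular vertices in balls}) that for every vertex $v$, the ball $B_{C}(v,\hat D)$ contains at most $N^0(C)/\rho$ regular vertices once all of $\hat E$ has been removed. It does \emph{not} say that the whole cluster $C$ now has at most $N^0(C)/\rho$ regular vertices: the cluster may still have a large diameter and therefore a large number of regular vertices at the moment $\hat E$ empties out, and nothing in the data structures you maintain forces the cluster's regular-vertex count to have shrunk. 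The paper fills this gap with a dedicated phase (Phase~3, built around Lemma~\ref{lem: step 1}): it runs a ball-growing decomposition of the current cluster into many mutually far sets of at most $N^0(C)/\rho$ regular vertices each, then repeatedly raises $F_C$ with far-apart pairs drawn from distinct sets so that the flag-lowering sequences actually delete most of $C$, and only terminates the phase when $N(C)\leq 4N^0(C)/\rho$. Only \emph{after} this explicit shrinking can the inductive call with parameter $z-1$ (Phase~4) be made. Your proposal needs an analogous explicit shrinking step; without it, the bound $N^0(C)\leq\hat W^{\eps(z-1)/2}$ that your restart (or recursive call) requires is simply not available. The ``restart with the same $z$'' alternative has the same problem and additionally would need a separate argument that the costs of $O(z/\eps)$ restarts telescope into the claimed $4^{cz}$ factor; the paper avoids both issues by inducting with $z-1$ in Phase~4 after the shrinking phase.

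A secondary, less serious point: you take $z=1$ as the base case, while the paper uses $z\leq 10$, precisely to avoid degeneracies in the induction step for small $z$ (e.g.\ for $z=2$ your own bound forces $|S|\leq 1$, making Phase~2 vacuous in a way that requires a separate argument). Using a slightly larger base case, as the paper does, sidesteps this.
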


\Cref{thm: main maintain cluster algorithm} follows immediately from \Cref{thm: induction for maintain cluster algorithm}, by setting $z=\ceil{2/\eps}$. The resulting algorithm achieves approximation factor at most $(\log \hat W)^{2c\cdot 2^{6/\eps}/\eps^2}=(\log \hat W)^{2^{O(1/\eps)}}$, and it has total update time at most:
$N^0(C)\cdot \mu^2\cdot D^3\cdot\hat W^{O(\eps)}\cdot  (\log \hat W)^{O(1/\eps^2)}$.

The remainder of this section is dedicated to the proof of \Cref{thm: induction for maintain cluster algorithm}.
Throughout the proof, we assume that $c$ is a large enough constant.

The proof is by induction on $z$. The base case is when $z\leq 10$, and so $N^0(C)\leq \hat W^{5\eps}$. In this case, we use Algorithm \algslow from \Cref{thm: slow simple}, that achieves approximation factor $O((\log \hat W)^4)\leq (\log \hat W)^{c}$ (assuming that $c$ is a large enough constant), and has total update time $O((N^0(C))^2\mu D\poly\log \hat W)\leq O(N^0(C)\mu D\hat W^{5\eps}\poly\log \hat W)$.

From now on we assume that we are given some integer $z>10$, and that the theorem holds for all integers smaller than $z$. 
From \Cref{obs: easy param settings for maintaincluster},  if $D> \hat W$, or $\hat W<2^{O(1/\eps^2)}$, then  the running time of \algslow is at most: 

$$O(N^0(C)\mu D^2\cdot 2^{O(1/\eps^2)}\poly\log \hat W)\leq O\left (N^0(C)\mu D^2(\log \hat W)^{O(1/\eps^2)}\right )<4^c\cdot N^0(C)\cdot \mu^2\cdot D^3\cdot\hat W^{c\eps}\cdot  (\log \hat W)^{c/\epsx^2},$$

 and it achieves approximation factor $O((\log \hat W)^4)\leq (\log \hat W)^{c}$ .  
Therefore, we assume from now on that $D<\hat W$, and that $\hat W\geq 2^{\Omega(1/\eps^2)}$. In particular, $\hat W\geq \max\set{N^0(C)\mu,|E(C)|, 2^{\Omega(1/\epsx)}}$ holds throughout the algorithm.
The algorithm consists of four phases, that we now describe.


\subsection{Phase 1}
In this phase, 	we run the algorithm \algmaintaingoodcluster from \Cref{thm: alg for maintain good cluster} for the \maintaingoodcluster problem on $\iset,\Sigma$, with the same parameters $\eps$, $\hat W$. Recall that, as long as the flag $F^b_C$ is not raised, the algorithm essentially solves the \maintaincluster problem on cluster $C$, with approximation factor  ${(\log \hat W)^{O(1/\eps^2)}}\leq \alpha_z$. The running time of this algorithm is bounded by:

 $${O\left (N^0(C)\cdot \mu^2\cdot D^3\cdot\hat W^{O(\epsx)}\cdot (\log \hat W)^{O(1/\epsx^2)}\right )}
\leq \frac{c}{4}\cdot N^0(C)\cdot \mu^2\cdot D^3\cdot\hat W^{c\eps}\cdot  (\log \hat W)^{c/\epsx^2+cz}.$$ Once the algorithm raises flag $F^b_C$, the first phase terminates.
Recall that, once flag $F^b_C$ is raised Algorithm \maintaingoodcluster supplies a bad witness $\hat E$ for $C$, where $\hat E$ is $(\hat D,\rho)$-pseudocut $\hat E$, with $|\hat E|<N^0(C)/\hat W^{\epsx}$; as before, $\hat D=2^{30}D\log^{10} \hat W$, and $\rho=\hat W^{\eps}$.

\subsection{Phase 2}

Our algorithm for Phase 2 follows the outline and the tools developed in \Cref{sec: conracted graph}, and consists of four parts (or four different algorithms) that we run in parallel. The first algorithm, that we refer to as $\alg_1$, constructs and maintains the modified graph $\tilde H$, and maintains the edge set $\hat E$, using the cluster $C$, the initial bad witness $\hat E^0$ for $C$, and the update sequence $\Sigma$ for $C$. This algorithm starts by constructing an initial graph $\tilde H$, and then produces an online update sequence $\tilde \Sigma$ for it, based on the update sequence $\Sigma$ for $C$. The second algorithm, that we call $\alg_2$, maintains the neighborhood covers $\cset_1,\ldots,\cset_{\lambda}$ for graph $\tilde H$, and it also maintains the contracted graph $\hat H$. This algorithm, after initializing the contracted graph $\hat H$, uses the update sequence $\tilde \Sigma$ for $\tilde H$ and the neighborhood covers $\cset_1,\ldots,\cset_{\lambda}$ of graph $\tilde H$, in order to produce an online update sequence $\hat \Sigma$ for graph $\hat H$, that allows us to maintain the graph $\hat H$. The third algorithm, that we refer to as $\alg_3$, simply runs the algorithm for the $\maintaincluster$ problem from the induction hypothesis on graph $\hat H$. Lastly, the fourth algorithm, called $\alg_4$, maintains an \EST in graph $C$, that is rooted at the vertices of $S$. We also provide an algorithm for supporting queries $\spquery(C,v,v')$ in graph $C$, using the data structures that are maintained these algorithms. We now describe each of these algorithms in turn.

\subsubsection{Maintaining the Modified Graph -- Algorithm $\alg_1$}

We initialize the set $S$ of regular vertices of $C$ to contain all regular vertices that are endpoints of the edges of $\hat E^0$. We also initialize the modified graph $\tilde H$. 
We then use the algorithm from \Cref{claim: get update sequence for transformed graph}, that, given the online update sequence $\Sigma=(\sigma_1,\sigma_2,\ldots)$ for $C$, produces, at each time $t>0$, a sequence $\tilde \Sigma_t$ of valid update operations for graph $\tilde H$, such that $\tilde \Sigma_t$ contains a constant number of valid update operations, and the length of the description of each operation, as well as the time required to compute $\tilde \Sigma_t$, is asymptotically bounded by the length of the description of $\sigma_t$. As shown in \Cref{claim: get update sequence for transformed graph}, this algorithm ensures that the graph obtained at time $t$ is precisely $\tilde H^t$. The algorithm also maintains the dynamic set $\hat E$ of edges and the set $S$ of vertices. Lastly, recall that the dynamic degree bound for the dynamic graph $\tilde H$ is $\mu$, and that the initial number of regular vertices in $\tilde H$, $N^0(H)\leq 2N^0(C)$.
We denote this algorithm by $\alg_1$. The running time of the algorithm is bounded asymptotically by the total length of the description of the update sequence $\Sigma$, which is in turn bounded $O(|E^*|)$, where $E^*$ is the total number of edges that are ever present in graph $C$, so in particular, $|E^*|\leq O(N^0(C)\mu)$. Therefore, the total update time of algorithm $\alg_1$ is $O(N^0(C)\mu)$. 

\subsubsection{Maintaining the Neighborhood Covers of $\tilde H$ and the Contracted Graph -- Algorithm $\alg_2$}

As before, we let $\lambda=\ceil{\log(2048D\log^4\hat W)}$. We now fix an index $1\leq i\leq \lambda$, and describe an algorithm for maintaining a strong $(2^i,\alpha_{z-1}\cdot 2^i)$-neighborhood cover $\cset_i$ of the regular vertices in the modified graph $\tilde H$, as it undergoes a sequence $\tilde \Sigma$ of valid update operations. The algorithm follows the framework from \Cref{subsec: algorithmic framework for NC}.  
For every vertex $x\in V(\tilde H)$, we maintain a list  $\clusterlist_i(x)$ of all clusters $C'\in \cset_i$ containing $x$, and similarly, for every edge $e\in E(\tilde H)$, we maintain a list $\clusterlist_i(e)$ of all clusters $C'\in \cset_i$ containing $e$.
For every cluster $C'\in \cset_i$, we denote by $N^0(C')$ the number of regular vertices that $C'$ contained when it was first added to $\cset_i$. 
We now describe the algorithm for maintaining the neighborhood cover $\cset_i$ in more detail.

\paragraph{Algorithm for Maintaining $\cset_i$.}
For convenience, we denote $W=N^0(\tilde H)$; recall that $W\leq 2N^0(C)$. We also denote $D_i=2^i$.
At the beginning of the algorithm, before any update operations from $\tilde \Sigma$ are processed, we apply Procedure \initnc from \Cref{subsec: initial NC} to graph $\tilde H$, with distance parameter $2^i$, and we add to $\cset_i$ the resulting collection of clusters, that, from \Cref{lem: initialize neighborhood cover}, form a strong $(D_i,1024D_i\log^4W)$-neighborhood cover of the set of regular vertices of $\tilde H$. The running time of the algorithm, from  \Cref{lem: initialize neighborhood cover}, is $O(N^0(\tilde H)\cdot D_i\cdot \mu\cdot \poly\log(N^0(\tilde H)\mu)$.
We need the following observation regarding this initial set $\cset_i$ of clusters.

\begin{observation}\label{obs: smaller weight}
	For every cluster $C'\in \cset_i$, $N^0(C')\leq 2N^0(C)/\hat W^{\eps}\leq \hat W^{\eps(z-1)/2}$.
\end{observation}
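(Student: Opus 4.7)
The plan is to bound $N^0(C')$ by combining the diameter guarantee of \initnc with the pseudocut property of $\hat E^0$, exploiting the fact that in $\tilde H^0$ every fake regular vertex is a leaf.

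First, by \Cref{lem: initialize neighborhood cover}, every cluster $C' \in \cset_i$ has diameter at most $1024\, D_i \log^4 W$ in $\tilde H^0$. Since $D_i = 2^i \le 2^\lambda \le 4096\, D \log^4 \hat W$ and $W = N^0(\tilde H) \le 2 N^0(C) \le \hat W$, a routine calculation gives $\mathrm{diam}(C') \le 2^{22} D \log^8 \hat W \le 2^{30} D \log^{10} \hat W = \hat D$. Next, partition the regular vertices of $C'$ into \emph{fake copies} and \emph{original} vertices of $V(C)$. The total number of fake regular vertices anywhere in $\tilde H^0$ equals $|\hat E^0| \le N^0(C)/\hat W^{\eps}$, so it suffices to bound the number of non-fake regular vertices of $C'$ by essentially the same quantity.

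If $C'$ contains no non-fake regular vertex, we are done; otherwise pick one such $x \in V(C') \cap V(C)$. The crucial observation is that in $\tilde H^0$, i.e.\ \emph{before} any operation from $\tilde\Sigma$ is processed, each fake vertex $v^F(e)$ has degree exactly $1$: it is incident only to the single fake replacement edge $e^F$ introduced at initialization, and additional fake edges incident to $v^F(e)$ are created only by subsequent supernode-splitting operations. Consequently, no shortest path in $\tilde H^0$ between two non-fake vertices can traverse a fake vertex; equivalently, distances in $\tilde H^0$ between pairs of non-fake vertices agree with their distances in the subgraph obtained from $\tilde H^0$ by removing all fake vertices and edges, which is exactly $C \setminus \hat E^0$. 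Therefore every non-fake regular vertex of $C'$ lies in $B_{C \setminus \hat E^0}(x, \hat D)$.

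Since $\hat E^0$ is a $(\hat D, \rho)$-pseudocut for $C$ with $\rho = \hat W^{\eps}$, this ball contains at most $N^0(C)/\hat W^{\eps}$ regular vertices of $C$. Combining this with the bound on the number of fake vertices gives $N^0(C') \le 2 N^0(C)/\hat W^{\eps}$. The second inequality then follows from the hypothesis $N^0(C) \le \hat W^{\eps z / 2}$ together with $\hat W^{\eps/2} \ge 2$, which holds because $\hat W \ge 2^{\Omega(1/\eps^2)}$.

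The main subtle point is verifying that fake vertices truly act as leaves in $\tilde H^0$, so that the reduction of distances to $C \setminus \hat E^0$ is valid; this hinges on \initnc being applied to $\tilde H$ \emph{before} any update from $\tilde \Sigma$ has been processed, so that no supernode-splitting has yet introduced additional fake edges. The rest is an immediate application of the pseudocut guarantee already established for $\hat E^0$ and of bookkeeping on fake vertices.
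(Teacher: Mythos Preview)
Your proposal is correct and follows essentially the same approach as the paper's own proof: both bound the diameter of $C'$ by $\hat D$, observe that fake vertices are leaves in $\tilde H^0$ so that non-fake vertices of $C'$ sit inside a ball of radius $\hat D$ in $C\setminus\hat E^0$, apply the pseudocut property, and then add the at most $|\hat E^0|\le N^0(C)/\hat W^{\eps}$ fake vertices. The paper phrases the leaf step as ``delete all fake edges from $C'$ to obtain $C''\subseteq C\setminus\hat E$ with the same diameter,'' while you phrase it as ``shortest paths between non-fake vertices avoid fake leaves''; these are the same observation. One minor wrinkle: your claim $W\le \hat W$ is not literally guaranteed when $\mu=1$ (only $W\le 2N^0(C)\le 2\hat W$), but the slack between $2^{22}$ and $2^{30}$ absorbs this easily, and the paper's own calculation has the same looseness.
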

\begin{proof}
	Let $C'\in \cset_i$ be any cluster. Recall that, for every fake edge $e^F=(u,v^F)$, an endpoint $v^F$ of $e^F$, that is a fake regular vertex, has degree $1$ in the initial graph $\tilde H^0$.  Moreover, the total number of fake regular vertices in $\tilde H^0$ is at most $|\hat E|\leq N^0(C)/\hat W^{\epsx}$. Consider the graph $C''$, that is obtained from $C'$ after all fake edges are deleted from it. Then $C''$ is a subgraph of $C\setminus \hat E$, and it has diameter at most $2^{i+10}\log^4W\leq 2^{11}\cdot 2^\lambda\log^4\hat W<\hat D$, since $\lambda=\ceil{\log(2048D\log^4\hat W)}$, and $\hat D=2^{30}D\log^{10} \hat W$. 
	
	Since $\hat E$ is a $(\hat D,\rho)$-pseudocut in $C$, the number of regular vertices in $C''$ is at most $N^0(C)/\rho=N^0(C)/\hat W^{\eps}$. Therefore, altogether, $N^0(C')\leq 2N^0(C)/\hat W^{\eps}$. Since we have assumed that $N^0(C)\leq \hat W^{\eps z/2}$, we get that $N^0(C')\leq \hat W^{\eps (z-1)/2}$.
\end{proof}

As the algorithm progresses, new clusters may be added to $\cset_i$, but each such new cluster $C''$ will always be a subgraph of a cluster that is currently in $\cset_i$. Since no new regular vertices are ever added to $\tilde H$, this will guarantee that for every cluster $C'$ that is ever added to $\cset_i$, $N^0(C')\leq \hat W^{\eps (z-1)/2}$ always holds. Whenever a cluster $C'$ is added to $\cset_i$, we initialize the algorithm for the \maintaincluster problem from the induction hypothesis on it, using the same parameters $\eps$ and $\hat W$ as before (but note that the distance parameter, $2^i$, may be as large as $2^{\lambda}>D$). The algorithm then achieves approximation factor $\alpha_{z-1}$, and running time 
at most:

$$4^{c(z-1)}\cdot N^0(C')\cdot \mu^2\cdot D_i^3\cdot\hat W^{c\eps}\cdot  (\log \hat W)^{c/\epsx^2+c(z-1)}.$$


As graph $\tilde H$ undergoes the sequence $\tilde \Sigma$ of valid update operations, we update every cluster $C'\in \cset_i$ accordingly, in the same way as in the algorithm from \Cref{subsec: algorithmic framework for NC}. Consider any update operation $\tilde \sigma_t\in \Sigma$. If $\tilde \sigma_t$ is a deletion of an edge $e$, then for every cluster $C'\in \clusterlist_i(e)$, we delete $e$ from $C'$ as well. If $\tilde \sigma_t$ is a deletion of an isolated vertex $x$, then for every cluster $C'\in \clusterlist_i(x)$, we delete $x$ from $C$ as well. 
 Assume now that $\tilde \sigma_t$ is a supernode splitting operation, applied to a supernode $u$, and a set $E'\subseteq \delta_{\tilde H}(u)$ of its incident edges. For every cluster $C'\in \clusterlist(u)$, we let $E'_{C'}=E'\cap E(C')$. We then update the cluster $C'$ by performing a supernode-splitting operation in it, for vertex $u$, with edge set $E'_{C'}$. 
We then add $C'$ to $\clusterlist(u')$ of the newly created supernode $u'$, and also initialize $\clusterlist(e)$ for every edge $e$ that was just added to $\tilde H$.

When an algorithm $\maintaincluster(C')$ raises the flag $F_C$, for any cluster $C'\in \cset_i$, it needs to provide a pair  $v,v'\in V(C')$ of regular vertices of $C'$ with $\dist_{C'}(v,v')>1024\cdot D_i\log^4\hat W\geq 1024 \cdot D_i\cdot \log^4(2W)$. 
We then run Procedure $\proccut'(C',v,v',D_i)$, obtaining a cluster $C''\subseteq C'$, and another cluster $C'''$, which is given implicitly, by listing all edges and vertices of $C'\setminus C'''$. 
We add cluster $C''$ to $\cset_i$, initializing the $\maintaincluster(C'')$ data structure on it,  and update cluster $C'$, by first deleting all edges of $C'\setminus C'''$ from it, and then deleting resulting isolated vertices of $C'\setminus C'''$, so that at the end of this update, $C'=C'''$ holds. The procedure guarantees that either $v$ or $v'$ is deleted from $C'$ at the end of this update sequence.
We also update the data structures $\clusterlist(x),\clusterlist(e)$ for all vertices $x\in C''$ and $x\in C'\setminus C'''$, and for all edges $e\in E(C'')$ and $e\in E(C'\setminus C''')$, exactly like in the algorithm from \Cref{subsec: algorithmic framework for NC}. It is easy to verify that, throughout the algorithm, $\cset_i$ is a strong $(2^i,\alpha_{z-1}2^i)$ neighborhood cover for the set of regular vertices in $\tilde H$. Indeed, the algorithm from \Cref{subsec: algorithmic framework for NC} (and, in particular, Procedure \proccut) guarantee that for every regular vertex $v\in V(\tilde H)$, there is some cluster $C'\in \cset_i$ with $B(v,2^i)\subseteq V(C')$. This property holds immediately after the application of Procedure \initnc. 
Since valid update operations cannot decrease distances between regular vertices  (from \Cref{obs: no dist increase}), this property continues to hold after each update operation (if it held before the update operation).
From the analysis of Procedure \proccut, this property also continues to hold after each application of the procedure (if it held before the application of the procedure). Lastly, the fact that the algorithm for the \maintaincluster problem on each cluster $C'\in \cset_i$ supports queries $\spquery(C',v,v')$ with approximation factor $\alpha_{z-1}$ shows that $\cset_i$ is indeed a strong $(2^i,\alpha_{z-1}\cdot 2^i)$ neighborhood cover for the set of regular vertices in $\tilde H$.

We denote the algorithm that we have described above, that receives as input the modified graph $\tilde H$ and the online update sequence $\tilde \Sigma$ for it, and maintains the neighborhood covers  $\cset_1,\ldots,\cset_{\lambda}$ of $\tilde H$, by $\alg_2$.
If we exclude the update time for the $\maintaincluster$ algorithm on clusters $C'\in \cset_i$, then, from \Cref{cor: running time of algmaintainNC},  the total running time of $\alg_2$ is $O(W D_i \mu  \poly\log(W\mu))\leq O(N^0(C)D_i\mu \poly\log \hat W)$. Additionally, from \Cref{claim: bound total weight}, 	$\sum_{C'\in \cset_i}N^0(C')\leq O(W\log W)\leq O(N^0(C)\log \hat W)$. Therefore, the total running time of all algorithms $\maintaincluster(C')$ for all clusters $C'$ that ever belonged to $\cset_i$ is bounded by:

$$4^{c(z-1)}\cdot N^0(C)\cdot \mu^2\cdot D_i^3\cdot\hat W^{c\eps}\cdot  (\log \hat W)^{c/\epsx^2+c(z-1)+1}.$$

Summing up over all $1\leq i\leq \lambda$, and recalling that $2^{\lambda}\leq 2^{12}D\log^4\hat W$, we get that the total update time of algorithm $\alg_2$ is at most:

$$4^{c(z-1)}\cdot N^0(C)\cdot \mu^2\cdot D^3\cdot\hat W^{c\eps}\cdot  (\log \hat W)^{c/\epsx^2+c(z-1)+14}\leq 4^{c(z-1)}\cdot N^0(C)\cdot \mu^2\cdot D^3\cdot\hat W^{c\eps}\cdot  (\log \hat W)^{c/\epsx^2+cz}.$$

\paragraph{Maintaining the Contracted Graph.}
In addition to the above data structures, we maintain the contracted graph $\hat H$, as follows. At the beginning of the algorithm, once, for all $1\leq i\leq \lambda$, cluster set $\cset_i$ is initialized using the $\initnc$ procedure, we construct an initial graph $\hat H$. Vertex set of $\hat H$ consists of the set $S$ or regular vertices -- all regular vertices that serve as endpoints of the edges in $\hat E^0$, and the set $\set{u(C')\mid C'\in \bigcup_i\cset_i}$ of supernodes. There is an edge $(s,u(C'))$ between a regular vertex $s\in S$ and a supernode $u(C')$, where $C'\in \cset_i$ for $1\leq i\leq \lambda$ iff $C'$ contains $s$ or one of its fake copies. The length of the edge is $2^i$. 

As the algorithm progresses, and clusters in set $\cset=\bigcup_i\cset_i$ undergo changes, we update the contracted graph $\hat H$ by producing a sequence $\hat \Sigma$ of valid update operations for it, as follows. Recall that all changes to clusters in $\cset$ are allowed changes, and fall into one of the following three categories:

\begin{itemize}
	\item $\delvertex(C',x)$: delete vertex $x$ from cluster $C'$. If $x\in S$, or $x$ is a fake copy of some vertex $s\in S$, we may need to delete the edge $(s,u(C'))$ from $\hat H$. We check whether $s$ or any of its fake copies still lie in $C'$, using the $\clusterlist(x)$ data structure for vertices $x\in \set{s}\cup R(s)$, and, if this is not the case, we delete the edge $(s,u(C'))$ from $\hat H$. Since there are at most $\mu$ fake copies of $s$, this check can be performed in $O(\mu)$ time.
	
	\item $\addsupernode$: add a new supernode to an existing cluster $C'\in \cset$; no updates to $\hat H$ are necessary.
	
	\item $\csplit$: given a cluster $C'\in \cset$, and a subgraph $C''\subseteq C'$, add $C''$ to $\cset$. In this case, we perform a supernode split in graph $\hat H$, for supernode $u(C')$. The edge set $E'$ contains, for every vertex $s\in S$ such that $(\set{s}\cup R(S))\cap V(C'')\neq \emptyset$, the edge $(s,u(C'))$. This update can be executed in time $O(|V(C'')|\cdot \mu)$.
\end{itemize}

Additionally, when a vertex $s$ is deleted from the vertex set $S$ (which may only happen when $s$ no longer serves as endpoint of any edge in $\hat E$), we delete all edges incident to $s$ from $\hat H$, and then delete vertex $s$ from $\hat H$. Notice that, since $s$ no longer has any edge incident to it in $\hat E$, all fake copies of vertex $s$ have been deleted from the graph $\tilde H$ already.

It is immediate to verify that the graph $\hat H$ that we maintain via the above valid update operations is identical to the contracted graph defined in \Cref{subsec: contracted}. Recall that $N^0(\hat H)=|S|\leq |\hat E|\leq N^0(C)/\hat W^{\eps}\leq \hat W^{\eps(z-1)/2}$, and, as we established in \Cref{subsec: contracted}, graph $\hat H$ has dynamic degree bound  $\mu'=\mu\cdot W^{O(1/\log\log W)}\leq {\mu \cdot (N^0(C))^{O(1/\log\log N^0(C))}}$ (as before, $W=N^0(\tilde H)\leq 2N^0(C)$).
Let $D'=8 D$ be the distance threshold for the graph $\hat H$.
It is immediate to extend Algorithm $\alg_2$ so that, at every time $t>0$, given an online update sequence $\tilde \Sigma_t$ (that is produced by algorithm $\alg_1$ and corresponds to the update $\sigma_t$ in the input sequence $\Sigma$ for cluster $C$), produces a sequence $\hat \Sigma_t$ of valid update operations for graph $\hat H$, so that, at any time $t>0$, the graph that is obtained from the initial graph $\hat H$ by applying the update sequence $\hat \Sigma_1\circ\cdots\circ \hat \Sigma_t$ to it is precisely $\hat H^t$ -- that is, the contracted graph $\hat H$ at time $t$. This can be done without increasing the asymptotic running time of $\alg_2$.

\subsubsection{Running $\maintaincluster$ on the Contracted Graph -- Algorithm $\alg_3$}

We have now defined a valid input structure $\hat \iset=(\hat H,\set{\ell(e)}_{e\in E(\hat H)},D')$, where $D'=8D$, that undergoes an online sequence $\hat \Sigma$ of valid update operations, with dynamic degree bound $\mu'=\mu\cdot (N^0(C))^{O(1/\log\log N^0(C))}\leq \mu (N^0(C))^{\eps/2}$ (from the assumption that $\eps\geq \Omega(1/\log\log N^0(C))$ in the statement of \Cref{thm: main maintain cluster algorithm}, and the fact that $N^0(\hat H)\leq  N^0(C)/\hat W^{\eps}$).  From the above discussion, $N^0(\hat H)\mu'\leq N^0(C)\mu$, and so
we still get that $\hat W\geq N^0(\hat H)\cdot \mu'$. Additionally, $N^0(\hat H)(\mu')^2\leq N^0(C)\mu^2$ holds.

We apply the algorithm for the \maintaincluster problem from the induction hypothesis to $\hat \iset$, with the same parameters $\hat W$ and $\eps$; we refer to this algorithm as $\alg_3$. Recall that this algorithm achieves approximation factor $\alpha_{z-1}$, and has running time:

$$4^{c(z-1)}\cdot N^0(\hat H)\cdot (\mu')^2\cdot (8D)^3\cdot\hat W^{c\eps}\cdot  (\log \hat W)^{c/\epsx^2+c(z-1)}\leq 4^{c(z-1)}\cdot N^0(C)\cdot \mu^2\cdot D^3\cdot\hat W^{c\eps}\cdot  (\log \hat W)^{c/\epsx^2+cz}.$$


Whenever Algorithm $\alg_3$ raises a flag $F_{\hat H}$, it provides a pair $s,s'\in V(\hat H)$ of regular vertices, with $\dist_{\hat H}(s,s')>1024D'\log^4\hat W=2^{13} D\log^4\hat W$.
We  the raise flag $F_C$, with the vertices $s$ and $s'$. Note that from \Cref{claim: distances preserved in contracted}, $\dist_C(s,s')>1024D\log^4\hat W$ must hold.
We then obtain, as part of the input update sequence $\Sigma$ for cluster $C$, a sequence $\Sigma'$ (the flag lowering sequence) of valid update operations for cluster $C$, at the end of which either $s$ or $s'$ are deleted from $C$. This sequence of update operations is processed by Algorithm $\alg_1$, that produces a sequence $\tilde \Sigma'$ of update operations for the modified graph $\tilde H$. Update sequence $\tilde \Sigma'$ is in turn processed by Algorithm $\alg_2$, that produces a sequence $\hat \Sigma'$ of update operations  for the contracted graph $\hat H$. Lastly, Algorithm $\alg_3$ processes the update sequence $\hat \Sigma'$ on graph $\hat H$. Since either $s$ or $s'$ are deleted from $C$ at the end of sequence $\Sigma'$, we are guaranteed that either $s$ or $s'$ are deleted from $\hat H$ 
at the end of sequence $\hat \Sigma'$. This ensures that, once flag $F_{\hat H}$ is raised, algorithm $\alg_3$ receives, as part of its update sequence, a flag-lowering sequence $\hat \Sigma'$ of valid update operations, at the end of which at least one of $s,s'$ is deleted from $\hat H$, as required.

\subsubsection{The ES-Tree: Algorithm $\alg_4$}

Our last ingerdient is algorithm $\alg_4$, that maintains a generalized \EST $\tau$ from \Cref{thm: ES-tree}, in the graph that is obtained from cluster $C$, by adding a new source vertex $s^*$, that connects to every vertex in $S$ with an edge of length $1$. The tree $\tau$ has depth $2\hat D$. As cluster $C$ undergoes valid update operations, we perform the same valid update operations in the data structure $\tau$. When a vertex $s$ is removed from $S$, we delete the edge $(s,s^*)$ from the tree. Once vertex $s^*$ becomes an isolated vertex in the tree $\tau$, Phase 3 terminates. Observe that the latter can only happen when all vertices are deleted from $S$, which in turn can only happen when $\hat E=\emptyset$ holds.

We maintain a list $L$ of all regular vertices of $C$ that do not lie in the tree $\tau$. Whenever $L\neq \emptyset$, we raise the flag $F_C$, and supply a pair $(s,v)$ of vertices, where $s$ is any vertex in $S$, and $v$ is any vertex in $L$. We are then guaranteed that $\dist_C(s,v)\geq \hat D>1024D\log^4\hat W$.

From \Cref{thm: ES-tree}, the total update time of Algorithm $\alg_4$ is $\otilde(N^0(C)\cdot \mu \cdot \hat D)\leq N^0(C)\cdot \mu\cdot D \cdot (\log \hat W)^c$.

\subsection*{Responding to Queries}
Suppose we are given a query $\spquery(C,v,v')$, where $v$ and $v'$ are regular vertices in graph $C$. 
Recall that flag $F_C$ must be down when the query is received, and so the list $L$ maintained by $\alg_4$ is empty, and flag $F_{\hat H}$ is down as well.
We compute a path $P$ connecting $v$ to $v'$ in $C$, as follows. First, we use the \EST $\tau$ to compute a path $Q$ in graph $C$, connecting $v$ to some vertex $s\in S$, and path $Q'$ in graph $C$, connecting $v'$ to some vertex $s'\in S$. The time required to compute both paths is proportional the number of edges on them. If $s=s'$, then we obtain a path connecting $v$ to $v'$ in $C$ by concatenating the paths $Q$ and $Q'$. As both paths have length at most $2\hat D\leq O(D\log^{10}\hat W)$, we obtain a $v$-$v'$ path of desired length. We assume from now on that $s\neq s'$.

Next, we run query $\spquery(\hat H,s,s')$ in the data structure maintained by Algorithm $\alg_3$. Recall that the distance threshold for the \maintaincluster problem on graph $\hat H$ is $D'=8 D$, and that the algorithm achieves approximation factor $\alpha_{z-1}$. Therefore, in response to this query, we obtain a path $\hat P$ connecting $s$ to $s'$ in graph $\hat H$, whose length is  $\hat \ell\leq \alpha_{z-1}D'=8\alpha_{z-1}D$. 

We transform path $\hat P$ into a path connecting vertex $s$ to vertex $s'$ in graph $C$, as follows. Denote $\hat P=(s_0=s,u_1,s_1,\ldots,u_{r-1},s_r=s')$, where $s_0,\ldots,s_r$ are vertices of $S$, and $u_1,\ldots,u_{r-1}$ are supernodes. Fix an index $1\leq j<r$, and consider the supernode $u_j$; assume that $u_j=u(C_j)$, where $C_j\in \cset_{i_j}$, for some $1\leq i_j\leq \lambda$.

We are then guaranteed that either $s_{j-1}$, or one of its fake copies lie in $C_j$; in the former case, we set $s'_{j-1}=s_{j-1}$, and in the latter case, we let $s'_{j-1}$ be a fake copy of $s_{j-1}$ lying in $C_j$. Similarly, we are guaranteed that either $s_j$ or one of its fake copies lie in $C_j$; in the former case, we set $s'_{j}=s_{j}$, and in the latter case, we let $s'_{j}$ be a fake copy of $s_{j}$. Next, we use query $\spquery(C_j,s'_{j-1},s'_j)$ to the algorithm for solving the \maintaincluster problem in graph $C_j$ that is part of Algorithm $\alg_2$. As the result, we obtain a path $P'_j$, connecting $s'_{j-1}$ to $s'_j$ in $C_j$, of length at most $\alpha_{z-1}\cdot 2^{i_j}$. The time required to compute the path $P'_j$ is $O(|E(P'_j)|)$. For every fake regular vertex $v^F$ on path $P'_j$, we replace $v^F$ with the original regular vertex $v$ (for which $v^F$ is a fake copy). Similarly, every fake edge $e^F$ on $P'_j$ is replaced by its original edge $e\in E(C)$. We then obtain a path $P_j$ in graph $C_j$, connecting $s_{j-1}$ to $s_j$, of length at most $\alpha_{z-1}\cdot 2^{i_j}$. Recall that the lengths of the edges $(s_{j-1},u_j)$ and $(s_j,u_j)$ were $2^{i_j}$. By concatenating the paths $P_1,\ldots,P_{r-1}$, we obtain a path $P'$ in graph $C$, connecting $s$ to $s'$, whose length is at most $\alpha_{z-1}\ell_{\hat H}(\hat P)\leq 8\alpha^2_{z-1}D$. The time that the algorithm spent on computing the path $P'$ is $O(|E(P')|)$. Lastly, we let $P$ be the $v$-$v'$ path in graph $C$, obtained by concatenating paths $Q,P'$ and $Q'$. Then path $P$ has length 
at most:

\[\begin{split} 
8\alpha^2_{z-1}D+O(D\log^{10}\hat W)& \leq 10\alpha^2_{z-1}D
\\ &\leq 10((\log \hat W)^{2c\cdot 2^{2(z-1)}/\eps^2})^2\cdot D\\
&\leq (\log \hat W)^{2c\cdot 2^{2+2(z-1)}/\eps^2}\cdot D\\
&\leq (\log \hat W)^{2c\cdot 2^{2z}/\eps^2}\cdot D\\
&=\alpha_z\cdot D.
\end{split} \]

The total running time for responding to query $\spquery(C,v,v')$ is $O(|E(P)|)$.

\subsubsection{Running Time of Phase 2}

Recall that the running time of Algorithm $\alg_1$ is $O(N^0(C)\mu)$, the running time of Algorithm $\alg_2$ is bounded by  $4^{c(z-1)}\cdot N^0(C)\cdot \mu^2\cdot D^3\cdot\hat W^{c\eps}\cdot  (\log \hat W)^{c/\epsx^2+cz}$, the running time of Algorithm $\alg_3$ is bounded by $4^{c(z-1)}\cdot N^0(C)\cdot \mu^2\cdot D^3\cdot\hat W^{c\eps}\cdot  (\log \hat W)^{c/\epsx^2+cz}$, and the running time of Algorithm $\alg_4$ is bounded by $N^0(C)\cdot \mu\cdot D \cdot (\log \hat W)^c$. Therefore, the total running time of the algorithm for Phase 2 is at most:

\[ 3\cdot 4^{c(z-1)}\cdot N^0(C)\cdot \mu^2\cdot D^3\cdot\hat W^{c\eps}\cdot  (\log \hat W)^{c/\epsx^2+cz}\]

\subsection{Phase 3}
The third phase starts when $S=\emptyset$ holds, and terminates when the number of regular vertices in $C$ falls below $4N^0(C)/\rho$.
During this phase, the algorithm will repeatedly raise flag $F_C$, every time providing a pair $v,v'$ of regular vertices of $C$ with $\dist_C(v,v')>1024D\log^4\hat W$. Therefore, the only updates to cluster $C$ that occur over the course of the phase are due to flag-lowering sequences.

We denote by $C^*$ the cluster $C$ at the beginning of Phase 3. Note that, since Phase 2 has terminated, $\hat E=\emptyset$ now holds. We start with the following observation.

\begin{observation}\label{obs: few regular vertices in balls}
	For every regular vertex $v^*\in V(C^*)$, the number of regular vertices in $B_{C^*}(v^*,\hat D)$ is at most $N^0(C)/\rho$.
\end{observation}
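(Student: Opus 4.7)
\medskip

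The plan is to leverage the initial pseudocut property of $\hat E^0$ (which was obtained at the end of Phase 1) together with the fact that, at the start of Phase 3, every edge that ever belonged to the dynamic set $\hat E$ has been deleted from $C$. Recall that $\hat E^0$ is a $(\hat D,\rho)$-pseudocut for the cluster $C^0$ that existed at the end of Phase 1, so for every vertex $x\in V(C^0)$, the ball $B_{C^0\setminus \hat E^0}(x,\hat D)$ contains at most $N^0(C)/\rho$ regular vertices (using that regular vertices have weight $1$ and supernodes have weight $0$). The main task will be to argue that this bound transfers from $C^0\setminus \hat E^0$ to $C^*$: namely, that the ball of radius $\hat D$ around any regular vertex $v^*$ in $C^*$ can be injectively embedded into the corresponding ball in $C^0\setminus \hat E^0$.

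The key lemma I would prove first is the following structural statement about $\hat E$. Recall the bookkeeping rule: whenever a supernode-splitting operation is applied to a supernode $u$ with edge set $E'$, for every $e=(u,v)\in E'\cap \hat E$ the newly created copy $e'=(u',v)$ is added to $\hat E$. Consequently, once an edge enters $\hat E$, every descendant copy of it produced by subsequent supernode-splittings also enters $\hat E$. Since $\hat E=\emptyset$ at the start of Phase 3, every edge that was ever in $\hat E$ has been deleted from $C$. In particular, I would argue by induction on the chain of copies that if $e'\in E(C^*)$ is a descendant copy of some $e\in E(C^0)$, then $e\notin \hat E^0$, for otherwise $e'$ would at some point have belonged to $\hat E$ and hence been deleted before Phase 3 begins.

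With this lemma in hand, I would fix a regular vertex $v^*\in V(C^*)$ (which, being a regular vertex, necessarily already existed in $C^0$), and consider any regular vertex $u\in B_{C^*}(v^*,\hat D)$, along with a witnessing walk $P$ in $C^*$ of length at most $\hat D$. I would transform $P$ into a walk $P'$ in $C^0\setminus \hat E^0$ of the same length by tracing back each new supernode on $P$ (produced by some chain of supernode-splittings) to its original ancestor supernode in $C^0$, and by replacing each edge $e'$ on $P$ with its original edge $e\in E(C^0)$. The lengths are preserved because supernode-splitting copies preserve edge lengths, and by the key lemma every such original edge $e$ lies in $C^0\setminus \hat E^0$. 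Thus $u\in B_{C^0\setminus \hat E^0}(v^*,\hat D)$, so the map $u\mapsto u$ is an injection from the regular vertices in $B_{C^*}(v^*,\hat D)$ into the regular vertices in $B_{C^0\setminus \hat E^0}(v^*,\hat D)$, the latter of which has cardinality at most $N^0(C)/\rho$ by the pseudocut property.

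The main obstacle, and the point that requires care, is the interplay between supernode-splitting and the pseudocut invariant: supernode-splittings can add new vertices and new edges to $C$, potentially shortening walks or introducing regular vertices whose distance to $v^*$ was not previously bounded. The bookkeeping rule for $\hat E$ under supernode-splittings is precisely what neutralizes this, and the inductive argument on chains of copies above is where this must be made rigorous. Once that is pinned down, the rest of the argument is a straightforward application of \Cref{obs: no dist increase} combined with the definition of a $(\hat D,\rho)$-pseudocut.
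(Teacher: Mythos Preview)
Your proposal is correct and hinges on the same key fact as the paper's proof: the bookkeeping rule for $\hat E$ under supernode-splitting guarantees that any edge of $C^*$ traces back to an ancestor edge in $C^0\setminus\hat E^0$. The paper phrases this forward in time---it tracks the dynamic graph $C'=C\setminus\hat E$ through Phase~2 and checks, operation by operation, that distances between regular vertices in $C'$ never decrease (the supernode-splitting case being the only nontrivial one, handled exactly by the rule you isolate)---whereas you phrase it backward, pulling a walk in $C^*$ back to a walk in $C^0\setminus\hat E^0$ via the ancestor map on edges. These are dual views of the same monotonicity; neither buys anything the other does not, and your inductive ``key lemma'' on descendant copies is precisely the content of the paper's case analysis for the supernode-splitting step.
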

\begin{proof}
	Consider the graph $C'=C\setminus \hat E$, as it evolves over the course of the second phase, and let $B=B_{C'}(v^*,\hat D)$. At the beginning of the second phase, from the definition of a bad witness, the initial ball $B$ contains at most $N^0(C)/\rho$ regular vertices. We claim that no new regular vertices join the set $B$ over the course of the second phase.
	
	In order to prove this, it is enough to consider every update operation $\sigma_t\in \Sigma$ one-by-one, and to show that no new regular vertex joins $B$ as the result of $\sigma_t$. Indeed, if $\sigma_t$ is the deletion of an edge, then this may only increase distances in graph $C'$, so no new vertices join the set $B$. Similarly, if $\sigma_t$ is the deletion of an isolated vertex of $C$, then no new vertices may join $B$. Assume now that $\sigma_t$ is a supernode splitting operation of a supernode $u$, with edge set $E'$. Recall that, as the result of this operation, we add a new supernode $u'$ to $C$. For every edge $e=(u,v)\in E'$, if $e\not\in \hat E$, then edge $e'=(u',v)$, of length $\ell(e')$ is added to $C$. if $e\in \hat E$, then edge $e'=(u',v)$ is added to both $C$ and $\hat E$, so it is not added to $C'$. Therefore, in graph $C'$, we insert a new vertex $u'$, and, for every edge $e=(u,v)\in E'\setminus \hat E$, we add edge $(u',v)$ of length $\ell(e)$ to $C'$. It is easy to see that the distance between any pair of regular vertices in $C'$ may not decrease as the result of this operation, so no new regular vertices may join set $B$.
\end{proof}

Recall that, from \Cref{obs: no dist increase}, valid update operations may not decrease distances between any pair of regular vertices. Therefore, throughout the remainder of the algorithm, as $C^*$ undergoes valid update operations, for every regular vertex $v^*\in V(C^*)$, the number of regular vertices in $B_{C^*}(v^*,\hat D)$ is always bounded by $N^0(C)/\rho$.

The algorithm in Phase $3$ consists of a number of iterations, where in each iteration  we will ensure that the number of regular vertices in $C$ reduces by a constant factor. The iterations continue as long as the number of regular vertices in $C$ is at least $4N^0(C)/\rho$. Each iteration consists of two steps. The first step is summarized in the following lemma.

\begin{lemma}\label{lem: step 1}
	There is a deterministic algorithm, that, given a graph $C'$ obtained from $C^*$ after a sequence of valid update operations, that contains at least $2N^0(C)/\rho$ regular vertices, computes a collection $\bset=\set{T_1,\ldots,T_q}$ of disjoint subsets of regular vertices of $C'$, such that:
	
	\begin{itemize}
		\item for all $1\leq i\leq q$, $|T_i|\leq N^0(C)/\rho$;
		\item for all $1\leq i<i'\leq q$, if $v\in T_i$ and $v'\in T_{i'}$, then $\dist_{C'}(v,v')> 1024D\log^4\hat W$; and
		\item $\sum_{i=1}^q|T_i|\geq N(C')/2$, where $N(C')$ is the number of regular vertices in $C'$. 
	\end{itemize}
The running time of the algorithm is $O(|E(C')|)=O(N^0(C)\mu)$.
\end{lemma}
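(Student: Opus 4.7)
Set $L = 1024D\log^4\hat W$ and recall $\hat D = 2^{30} D \log^{10}\hat W$, so $K := \hat D/L = 2^{20}\log^6\hat W$. Because valid update operations never decrease distances between regular vertices, \Cref{obs: few regular vertices in balls} still applies to $C'$: every regular vertex $v^*$ of $C'$ satisfies $|B_{C'}(v^*,\hat D)\cap V|\le N^0(C)/\rho$, where $V$ denotes the regular-vertex set of $C'$. The plan is a greedy ball-growing algorithm that exploits the gap between $K$ and $\log\hat W$.

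I would maintain a set $U\subseteq V$ of ``active'' regular vertices, initialized to all regular vertices of $C'$. While $U\neq\emptyset$, I pick an arbitrary $v_i\in U$ and run a truncated Dijkstra/BFS from $v_i$ in $C'$, measuring at each radius $kL$ the quantity $n_k := |B_{C'}(v_i,kL)\cap U|$. I apply the following doubling argument: if $n_{k+1}>2n_k$ held for every $k\in\{0,\ldots,K-1\}$, then $n_K>2^K\cdot n_0\ge 2^K$, contradicting $n_K\le|B_{C'}(v_i,\hat D)\cap V|\le N^0(C)/\rho\le\hat W$ (since $K=2^{20}\log^6\hat W\gg\log\hat W$). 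Let $k_i$ be the smallest index with $n_{k_i+1}\le 2n_{k_i}$; I set $r_i := k_iL$, $T_i := B_{C'}(v_i,r_i)\cap U$ (which contains $v_i$, so is nonempty), stop the BFS at radius $r_i+L$, and delete $B_{C'}(v_i,r_i+L)\cap U$ from $U$ before continuing to the next iteration.

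The three required properties then fall out naturally. Size: $T_i\subseteq B_{C'}(v_i,\hat D)\cap V$, so $|T_i|\le N^0(C)/\rho$. Separation: for $j>i$, every $w\in T_j$ belonged to $U$ at the start of iteration $j$, hence was not removed in iteration $i$, i.e.\ $w\notin B_{C'}(v_i,r_i+L)$, so $\dist_{C'}(v_i,w)>r_i+L$; combining with any $u\in T_i\subseteq B_{C'}(v_i,r_i)$ via the triangle inequality gives $\dist_{C'}(w,u)>L$. Totals: at iteration $i$, the number of vertices removed from $U$ is $|B_{C'}(v_i,r_i+L)\cap U|\le 2|T_i|$, so the number ``lost'' to the buffer is at most $|T_i|$; summing over iterations yields $\sum_i|T_i|\ge N(C')/2$.

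The main obstacle will be achieving the running-time bound $O(|E(C')|)$. A naive per-iteration BFS up to radius $r_i+L$ would cost $O(|E(B_{C'}(v_i,r_i+L))|)$, and an edge may lie in multiple such balls across iterations, so a direct sum is too expensive. To avoid this, I plan to realize the whole procedure as a single global BFS exploration whose work on each edge of $C'$ is $O(1)$: once a vertex $u$ has been removed from $U$ in some iteration~$i$, any later BFS reaching $u$ treats $u$ as a ``sink'' that carries pre-computed distances to the frontier of $B_{C'}(v_i,r_i+L)$, so the exploration need not re-traverse the interior of an already-processed ball. A charging argument analogous to the one used for Procedure \initnc in \Cref{lem: initialize neighborhood cover}, which charges each newly-processed edge $O(1)$ work against the shrinking potential $|U|$, should then yield the claimed $O(|E(C')|)$ total running time.
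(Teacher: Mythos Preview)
Your argument for the three bullet points is correct and follows the same ball-growing template as the paper. The gap is entirely in the running time. Your BFS from $v_i$ touches every edge incident to $B_{C'}(v_i,r_i+L)$, but you only remove the \emph{active regular vertices} of that ball from $U$; the vertices and edges remain in $C'$, so a later BFS from some $v_j$ can re-traverse exactly the same region. Your doubling criterion controls only $|U\cap B|$, not the number of edges explored, so there is no decreasing potential to charge the work against. The proposed ``sink'' fix is not an algorithm: knowing $\dist_{C'}(v_i,\cdot)$ on the frontier of an old ball tells you nothing about $\dist_{C'}(v_j,\cdot)$ there, and the frontier itself can be large. The analogy to \initnc\ is also off---that procedure achieves near-linear time precisely because it \emph{deletes} the explored region from its working cluster, which you do not.

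The paper closes this gap with two coupled changes. First, it runs each BFS in a shrinking copy $C''$ of $C'$: after iteration $i$ the inner layers $\Lambda_1\cup\cdots\cup\Lambda_{j-1}$ are deleted from $C''$, so later BFS's run in a strictly smaller graph. The outermost layer $\Lambda_j$ is kept in $C''$ (its regular vertices go to a discard pile $T_0$ with $|T_0|\le\sum_i|T_i|$), and one maintains the invariant that if two surviving vertices are at distance at least $D^*=2L$ in $C''$ then they were already that far apart in $C'$---this is what lets the separation claim survive the deletions. Second, the stopping criterion requires not only that the regular-vertex count stop doubling but also that $|E_j\setminus E_{j-1}|\le|E_{j-1}|$; a doubling argument on both counts gives a good layer within $O(\log\hat W)$ steps. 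The edge condition is exactly what lets you charge the iteration's $O(|E_j|)$ BFS work to the $|E_{j-1}|$ edges about to be deleted from $C''$, yielding $O(|E(C')|)$ in total.
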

\begin{proof}
	The proof uses the standard ball-growing technique. 
	Throughout, we denote $D^*=2048D\log^4\hat W$.
	We start with $\bset=\emptyset$, and $C''=C'$, and then perform iterations, where in the $i$th iteration we add vertex set $T_i$ to $\bset$, and we delete some vertices and edges from $C''$. We also maintain a set $T_0$ of \emph{discarded regular vertices}, that is initialized to $\emptyset$.
	We will ensure that the following invariants hold at the end of each iteration $i$:
	
	\begin{properties}{P}
		\item for all $1\leq i'\leq i$, $|T_{i'}|\leq N^0(C)/\rho$; \label{invariant: small balls}
		\item $|T_0|\leq \sum_{i'=1}^i|T_{i'}|$;  \label{invariant: few discarded regular vertices}
		\item for every pair $v,v'\in V(C'')\setminus T_0$ of regular vertices, if $\dist_{C''}(v,v')\geq D^*$, then $\dist_{C'}(v,v')\geq  D^*$.\label{invariant: distances preserved}
		\item for all $1\leq i'<i''\leq i$, $\dist_{C'}(T_{i'},T_{i''})\geq D^*$; and \label{invariant: balls far from each other}
		\item for all $1\leq i'\leq i$, 
		for all pairs $v\in T_{i'}$, $v'\in V(C'')\setminus T_0$ of regular vertices,
		$\dist_{C'}(v,v')\geq D^*$. \label{invariant: balls far from remaining guys}
	\end{properties}
	
	It is easy to verify that all invariants hold at the beginning of the algorithm.
	The iterations continue as long as there is at least one regular vertex in $V(C'')\setminus T_0$.
	 We now describe the execution of the $i$th iteration. We assume that all invariants hold at the beginning of the iteration.
	
	Let $v_i$ be an arbitrary regular vertex in $V(C'')\setminus T_0$. We run Dijkstra's algorithm (that we sometimes call weighted BFS) from vertex $v_i$ in graph $C''$. For $j\geq 1$, the $j$th layer $\Lambda_j$ of the BFS is the set of all vertices whose distance from $v_i$ is between $2(j-1)D^*$ and $2jD^*$, that is: $\Lambda_j=B_{C''}(v_i,2jD^*)\setminus B_{C''}(v_i,2(j-1)D^*)$. 
	We denote by $N_j$ the number of regular vertices in $\Lambda_j$ that do not lie in $T_0$, and we denote by $E_j$ the set of all edges with both endpoints in $\Lambda_1\cup\cdots\cup \Lambda_j$. For an index $j>1$, we say that layer $\Lambda_j$ is good iff $N_{j}\leq N_1+\cdots+N_{j-1}$, and $|E_{j}\setminus E_{j-1}|\leq |E_{j-1}|$. We need the following simple observation.
	
	\begin{observation}\label{obs: good layer}
		There is an index $1<j\leq 3\log \hat W$, such that layer $\Lambda_j$ is good.
	\end{observation}
\begin{proof}
	Assume otherwise, 
	 We say that layer $\Lambda_j$ is \emph{type-1 bad} if $N_j> N_1+\cdots+N_{j-1}$, and we say that it is \emph{type-2 bad} if  $|E_{j}\setminus E_{j-1}|> |E_{j-1}|$.
	Since we have assumed that there are more than $3\log \hat W$ bad layers, either at least $\ceil{\log \hat W}$ layers are type-1 bad, or at least $\ceil{2\log \hat W}$ layers are type-2 bad. Assume first that the former is true, and let $j_1<j_2<\cdots<j_{\ceil{\log \hat W}}$ be indices of  type-1 bad layers. Then for all $2\leq i\leq {\ceil{\log \hat W}}$, 
	$N_{j_i}>N_1+\cdots+N_{j_i-1}$ must hold. Therefore, $N_{j_{\ceil{\log \hat W}}}>N_1\cdot 2^{\ceil{\log \hat W}}\geq \hat W\geq N^0(C)$, which is impossible.
The proof for the second case, where at least $\ceil{2\log\hat W}$ layers are type-2 bad is almost identical and is omitted here.
\end{proof}

We run Dijkstra's algorithm starting from vertex $v_i$, until we reach the first index $j>1$, such that $\Lambda_j$ is a good layer. We then let $T_i$ contain all regular vertices of $(\Lambda_1\cup\cdots\cup\Lambda_{j-1})\setminus T_0$, and we let $T_0^i$ contain all regular vertices of $\Lambda_j\setminus T_0$.
From the definition of the good layer, $|T_0^i|\leq |T_i|$. We add the set $T_i$ to $\bset$, and we add the vertices of $T_0^j$ to $T_0$.
We also delete from $C''$ all vertices of $\Lambda_1\cup\ldots\cup \Lambda_{j-1}$ and all edges that are incident to them. This completes the description of the $i$th iteration. We now show that all invariants continue to hold.

First, from our discussion, Invariant \ref{invariant: few discarded regular vertices} continues to hold. Also, if Invariants \ref{invariant: balls far from each other} and
\ref{invariant: balls far from remaining guys} held at the beginning of the iteration, then Invariant \ref{invariant: balls far from each other} continues to hold.

We denote by $C''_0$ the cluter $C''$ at the beginning of the iteration, and by $C''_1$ the cluster $C''$ at the end of the iteration.
Next, we estabish that Invariant \ref{invariant: balls far from remaining guys} continues to hold. Consider any pair of vertices $v\in T_i$, $v'\in V(C''_1)\setminus T_0$.
Then $v\in B_{C''_0}(v_i,2(j-1)D^*)$, while $v'\not \in B_{C''_0}(v_i,2jD^*)$. Therefore, $\dist_{C''_0}(v,v')\geq D^*$. From Invariant \ref{invariant: distances preserved}, $\dist_{C'}(v,v')\geq D^*$, and so Invariant  \ref{invariant: balls far from remaining guys} continues to hold.

In order to establish Invariant \ref{invariant: distances preserved},  consider any pair of regular vertices $v,v'\in V(C''_1)\setminus T_0$, with $\dist_{C''_1}(v,v')\geq D^*$. We claim that $\dist_{C''_0}(v,v')\geq D^*$. Indeed, assume otherwise, and let $P$ be any $v$-$v'$ path in graph $C''_0$, whose length is less than $D^*$. Since $P\not\subseteq C''_0$, it must contain at least one vertex that was deleted in the current iteration, that is, a vertex $x\in \Lambda_1\cup\cdots\cup\Lambda_{j-1}$. But from the definition of set $T_0^i$, $v\not\in \Lambda_1\cup\cdots\cup \Lambda_{j}$. Therefore, $\dist_{C''_0}(x,v)\geq 2D^*$ must hold, contradicting the fact that the length of $P$ in $C''_0$ is less than $D^*$. We conclude that $\dist_{C''_0}(v,v')\geq D^*$, and, since Invariant \ref{invariant: distances preserved} held at the beginning of the iteration, we get that $\dist_{C'}(v,v')\geq D^*$, establishing Invariant \ref{invariant: distances preserved}

It now remains to establish Invariant \ref{invariant: small balls}, for which it is enough to show that $|T_i|\leq N^0(C)/\rho$. In order to do so, observe that $T_i\subseteq B_{C''_0}(v_i,8D^*\log \hat W)$. Since  $D^*=2048D\log^4\hat W$, while $\hat D=2^{30}D\log^{10} \hat W$, we get that $T_i\subseteq B_{C''_0}{(v_i,\hat D)}\subseteq B_{C'}(v_i,\hat D)$. From \Cref{obs: few regular vertices in balls}, and since valid update operations may not decrease distances between regular vertices, it then follows that $|T_i|\leq N^0(C)/\rho$. We conclude that all invariants continue to hold at the end of the current iteration.

Note that the running time of iteration $i$ is $O(|E_j|)$, and that all edges of $E_{j-1}$ are deleted from $C''$ at the end of the iteration. Since $|E_j|\leq O(|E_{j-1}|)$, the running time of iteration $i$ is asymptotically bounded by the number of edges deleted from $C''$ in this iteration. Therefore, the total running time of the algorithm is $O(|E(C')|)\leq O(N^0(C)\mu)$. The algorithm terminates once ever regular vertex of $C''$ lies in $T_0$. We then output all vertex sets that currently lie in $\bset$. From the invariants, it is easy to verify that these vertex sets have all required properties.
\end{proof}

We now proceed to describe Phase 3. 
The phase is executed as long as the number of regular vertices in graph $C$ remains at least $4N^0(C)/\rho$, and consists of a number of iterations. Each iteration is executed as follows.
We start by computing a collection $\bset$ of sets of regular vertices, using \Cref{lem: step 1}. Next,
as long as there are two distinct non-empty sets $T',T''\in \bset$ of vertices, we let $v\in T$, $v'\in T''$ be any pair of vertices. We raise flag $F_C$ with the pair $v,v'$ of vertices. 
From \Cref{lem: step 1}, and from the fact that valid update operations may not decrease distances between pairs of regular vertices, we are guaranteed that $\dist_C(v,v')> 1024D\log^4\hat W$. We then obtain a sequence of valid update operations (flag lowering sequence), after which $v$ or $v'$ must be deleted from $C$. For every regular vertex $v''$ that is deleted from $C$ as part of the update sequence, if $v''$ lies in any set $T'''\in \bset$, we delete $v''$ from set $T'''$. The iteration terminates once all but at most one set in $\bset$ become empty.
Recall that at the beginning of the iteration, $C$ contained at least $4N^0(C)/\rho$ regular vertices, and at least half of these vertices lied in the sets of $\bset$. Since the remaining set in $\bset$ may contain at most $N^0(C)/\rho$ regular vertices, at least a quarter of the regular vertices were deleted from $C$ over the course of the current iteration. Therefore, the total number of iterations in Phase 3 is bounded by $O(\log N^0(C))\leq O(\log \hat W)$. Each iteration consists of executing the algorithm from \Cref{lem: step 1}, whose running time is $O(N^0(C)\mu)$, and additional work that can be charged to the total number of edges and vertices deleted from $C$ over the course of the iteration. Therefore, the total running time of the algorithm for Phase 3 is $O(N^0(C)\mu\log \hat W)$.  Once Phase 3 terminates, we are guaranteed that the number of regular vertices in graph $C$ is bounded by $4N^0(C)/\rho\leq
 4\hat W^{\eps z/2}/\hat W^{\eps}\leq \hat W^{\eps(z-1)/2}$. Since flag $F_C$ is repeatedly raised over the course of Phase 3, no queries may be asked over the course of the phase.

\subsection{Phase 4}
Recall that, at the beginning of Phase 4, cluster $C$ contains at most $\hat W^{\eps(z-1)/2}$ regular vertices. We apply the algorithm for the \maintaincluster problem from the induction hypothesis to cluster $C$, until the end of the algorithm. Recall that the algorithm achieves approximation factor $\alpha_{z-1}\leq \alpha_z$, and has running time at most $4^{c(z-1)}\cdot N^0(C)\cdot \mu^2\cdot D^3\cdot\hat W^{c\eps}\cdot  (\log \hat W)^{c/\epsx^2+c(z-1)}$.

\subsection{Final Accounting of the Running Time}

To summarize, the running time of Phase 1 is at most: $$\frac{c}{4}\cdot N^0(C)\cdot \mu^2\cdot D^3\cdot\hat W^{c\eps}\cdot  (\log \hat W)^{c/\epsx^2+cz};$$ the running time of
Phase 2 is at most: $$3\cdot 4^{c(z-1)}\cdot N^0(C)\cdot \mu^2\cdot D^3\cdot\hat W^{c\eps}\cdot  (\log \hat W)^{c/\epsx^2+cz};$$ the running time of Phase 3 is at most: $$O(N^0(C)\mu\log \hat W);$$ and the running time of Phase 4 is at most: $$4^{c(z-1)}\cdot N^0(C)\cdot \mu^2\cdot D^3\cdot\hat W^{c\eps}\cdot  (\log \hat W)^{c/\epsx^2+c(z-1)}.$$

Altogether, the total update time of the algorithm is bounded by:

$$4^{cz}\cdot N^0(C)\cdot \mu^2\cdot D^3\cdot\hat W^{c\eps}\cdot  (\log \hat W)^{c/\epsx^2+cz}.$$

\section{Parameters Used in Sections \ref{sec: pivot decomposition} -- \ref{sec: final proof}}
	\label{sec: params}
	
Input: valid input structure $\iset=\left(C,\set{\ell(e)}_{e\in E(C)},D \right )$, undergoing a sequence of valid update operations with dynamic degree bound $\mu\geq 1$, and parameters $0<\eps<1$, and $\hat W\geq N^0(C)\mu$, where $N^0(C)$ is the number of regular vertices in $C$ at the beginning of the algorithm.

New parameters:

\begin{itemize}
	\item $\hat D=2^{30}D\log^{10} \hat W$ -- distance parameter;
	\item $\rho=\hat W^{\eps}$ -- balance parameter for pseudocut;
	\item $\phi^*=1/(\log \hW)^{O(1/\epsx)}$ -- expansion of the expander $X$;
	\item $\eta=\hat D\cdot \hat W^{\eps} (\log \hat W)^{O(1/\eps)}$ -- bound on the congestion of the embedding $\pset$ of $X$.
	\item $\lambda=\ceil{\log(2048D\log^4\hat W)}$ -- number of distance scales for the neighborhood covers $\cset_1,\ldots,\cset_{\lambda}$.
\end{itemize}

For a cluster $C$, we denote by $W(C)$ the number of regular vertices in it.

\paragraph{Type-1 Good Cluser $C$:} $W(C)\leq \hat W^{10\epsx}$.


\paragraph{Good Witness for a type-2 Good Cluster:} $(\hat E^*,X,\pset)$, where $\hat E^*\subseteq E(C)$ has cardinality $\Omega(W(C)/\hat W^{3\epsx})$, $X$ is a $\phi^*$-expander defined over $\hat E^*$, with maximum vertex degree at most $O(\log \hat W)$, and $\pset$ is an embedding of $X$ into $C_{|\hat E^*}$ with congestion at most $\eta$ and path lengths at most $O(\hat D\log^2\hat W)$.

\paragraph{Bad Witness:}  $(\hat D,\rho)$-pseudocut $\hat E$ for $C$, with $|\hat E|<W(C)/\hat W^{\epsx}$.

\section{Application: Fast Algorithm for \MMF and \MM}
\label{sec: application}

In this section, we provide an algorithm for (unweighted) \MMF and {\sf Minimum} {\sf Multicut}, proving \Cref{thm: MMF and MM}.
Recall that in both problems, the input is an undirected $n$-vertex $m$-edge graph $G=(V,E)$, and a collection $\mset=\set{(s_1,t_1),\ldots,(s_k,t_k)}$ of pairs of its vertices, called demand pairs.  
In the \MMF problem, the goal is to send maximum amount of flow between the demand pairs, such that the total amount of flow traversing any edge is at most $1$. We denote by $\optmcf$ the value of the optimal solution to this problem. 
In the \MM problem, the goal is to select a minimum-cardinality subset $E'\subseteq E(G)$ of edges, such that, for all $1\leq i\leq k$, vertices $s_i$ and $t_i$ lie in different connected components of $G\setminus E'$. We denote by $\optmm$ the value of the optimal solution to \MM. We use the standard primal-dual technique-based algorithm of \cite{GK98, Fleischer00} (see also \cite{Madry10_stoc}).

For all $1\leq i\leq k$, let $\pset_i$ be the set of all paths in $G$ connecting $s_i$ to $t_i$, and let $\pset=\bigcup_i\pset_i$. We assume that graph $G$ is connected (as otherwise we can solve both problems on each of its connected components separately), so in particular $\pset\neq \emptyset$.
Below is the standard LP-relaxation of the \MMF problem (denoted by $\LP_1$), and its dual (denoted by $\LP_2$), which is a relaxation of the \MM problem.

\begin{tabular}[t]{|l|l|}\hline &\\
	$\begin{array}{lll}
	\LP_1&&\\
	\text{Max}&\sum_i\sum_{P\in \pset_i}f(P)&\\
	\text{s.t.}&&\\
	&\sum_{i=1}^k\sum_{\stackrel{P\in\pset_i:}{e\in P}}f(P)\leq
	1&\forall e\in E\\
	&f(P)\geq 0&\forall 1\leq i\leq k, \forall P\in\pset_i\\
	\end{array}$
	&$
	\begin{array}{lll}
	{\LP_2}&&\\
	\text{Min}&\sum_{e\in E}x_e\\
	\text{s.t.}&&\\
	&\sum_{e\in P}x_e\geq 1&\forall 1\leq i\leq k,\forall  P\in\pset_i\\
	&x_e\geq 0&\forall e\in E\\
	&&\\
	\end{array}$\\ &\\ \hline
\end{tabular}

We now show an algorithm that approximately solves both $\LP_1$ and $\LP_2$.
Over the course of the algorithm, we maintain lengths $x_e$ on edges $e\in E$, where at the beginning, for every edge $e\in E$, we set $x_e=1/m$. As the algorithm progresses, we may increase the lengths of the edges. We also set $f(P)=0$ for every path $P\in \pset$. So far we have obtained a feasible solution to $\LP_1$ of value $0$, and a (possibly infeasible) solution to $\LP_2$, of value $1$. The remainder of the algorithm consists of a number of iterations.

Assume for now, that we are given an oracle $\oset$, that, in every iteration, either provides a simple path $P\in \pset$, whose length (with respect to current edge lengths $x_e$) is at most $1$, or certifies that every path $P\in \pset$ has length at least $1/\alpha$, for some approximation factor $\alpha\geq 1$.

The iterations continue as long as the oracle provides a simple path $P\in \pset$ of length at most $1$. The $j$th iteration is executed as follows. Let $P_j\in \pset$ be the path provided by the oracle. Then we set $f(P_j)=1$, and we double the length $x_e$ of every edge $e\in E(P_j)$. Notice that this increases the value of the primal solution by (additive) $1$, and it increases the value of the dual solution by at most (additive) $1$. Therefore, if we denote by $c_1$ the cost of the current solution to $\LP_1$, and by $c_2$ the cost of the current solution to $\LP_2$, then, throughout the algorithm, $c_1\geq c_2-1$ always holds. Since we have assumed that $|\pset|\neq \emptyset$, after the first iteration, $c_1\geq 1$, and so $c_2\leq 2c_1$ holds for the remainder of the algorithm.

The algorithm terminates when the oracle $\oset$ certifies that the length of every path in $\pset$ is at least $1/\alpha$. Note that, by setting $x'_e=x_e\cdot\alpha$, we obtain a feasible solution to $\LP_2$, of value at most $\alpha c_2\leq 2\alpha c_1$.

The flow values $\set{f(P)}_{P\in \pset}$ also provide a solution to $\LP_1$, but that solution may be infeasible, since some edges may carry more than one flow unit.
However, since we set, at the beginning, for every edge $e\in E(G)$, $x_e=1/m$, and since, whenever a path containing $e$ is added to $\pset$, we double the length of the edge $x_e$, it is easy to verify that the total flow that any edge $e\in E(G)$ carries is bounded by $\ceil{\log m}$. Let $f'$ be the multicommodity flow obtained by scaling the flow $f$ down by factor $1/\ceil{\log m}$. Then $f'$ is a feasible fractional solution to \MMF, of value $c_1'=c_1/\ceil{\log m}$. From the above discussion, $c_2\leq 2c_1 \leq 4c_1'\log m$.

Recall that, from LP-duality, $c_1'\leq \optmcf= \opt_{\LP_1}= \opt_{\LP_2}\leq  \alpha c_2$. Therefore, $\optmcf\leq \alpha c_2\leq O(\alpha \log m) c'_1$ holds.
We conclude that we have obtained a solution to the {\sf Maximum Multicommodity} {\sf Flow} problem, of value $\Omega(\optmcf/(\alpha\log m)$.

Additionally, $\optmm\geq \opt_{\LP_2}\geq c_1'\geq  \Omega(c_2/\log m)$. Therefore,
 we have obtained a fractional solution   $\set{x'_e}_{e\in E(G)}$ to $\LP_2$, of value $\alpha c_2\leq O(\alpha\log m) \optmm$. Our last step is to transform this fractional solution to the \MM instance $(G,\mset)$ into an integral one, using the standard ball-growing technique of \cite{LR,GVY}. The resulting deterministic algorithm (that is very similar in nature to our Procedure \proccut, which in fact was inspired by the algorithm of \cite{LR,GVY}), obtains an integral solution to the \MM problem instance $(G,\mset)$, in time $O(|E(G)|)$, of cost $O(\log m)\cdot c$, where $c\leq O(\alpha\log m)\optmm$ is the cost of the fractional solution to $\LP_2$. 

We conclude that the above algorithm provides an $O(\alpha\log m)$-approximate solution for the \newline \MMF problem, and an $O(\alpha\log^2 m)$-approximate solution for \MM, where $\alpha$ is the approximation factor of the oracle $\oset$.

\subsection{Implementing the Oracle}
We now show an algorithm to efficiently implement the oracle $\oset$. 
One difficulty in implementing it via the algorithm from \Cref{thm: NC}  in a straightforward way is that the algorithm from \Cref{thm: NC}, in response to a query $\spquery(C,s_i,t_i)$ may return an $s_i$-$t_i$ path $P$ that is non-simple, and moreover, if we let $P'$ be a simple path obtained from $P$ by removing all cycles, then it is possible that $|E(P)|\gg |E(P')|$. This is a problem because the algorithm spends time $O(|E(P)|)$ in order to process the query, but we will only double the lengths of the edges lying on the path $P'$. This in turn may result in a running time that is too high overall. Ideally, we would like to ensure that, if the algorithm from \Cref{thm: NC} returns an $s_i$-$t_i$ path $P$ that is non-simple, and $P'$ is the corresponding simple path, then $|E(P')|$ is close to $|E(P)|$. We overcome this difficulty as follows. Our algorithm consists of $O(1/\eps)$ phases. Denote $m'=\ceil{2m\log m}$. Let $\alpha^*=O\left ((\log m)^{2^{O(1/\eps)}}\right )$ be the approximation factor that the algorithm from \Cref{thm: NC} achieves on a graph with $m'$ edges. For $j\geq 0$, let $\alpha_j=(\alpha^*)^{2j}$, and let $L_j=m^{j\eps}$. We will ensure that the following invariant holds:

\begin{properties}{I}
	\item For all $j\geq 0$, at the beginning of Phase $(j+1)$, every path $P\in \pset$ whose length is at most $1/\alpha_j$ contains at least $L_j$ edges.\label{inv: short path few edges}
\end{properties}

Notice that the invariant clearly holds at the beginning of the first phase. We now describe the execution of the $(j+1)$th phase, for some $j\geq 0$.

\paragraph{Execution of Phase $(j+1)$, for $j\geq 0$.} 
We construct a graph $G_j$, whose vertex set is $V(G_j)=V(G)$. 
For every edge $e=(v,v')\in E(G)$, and for every integer $1\leq i\leq \ceil{\log m}$, we 
add an edge $e_i=(v,v')$ to $G_j$, of length $\frac{2^i}{m}+\frac{1}{2\alpha^*\cdot \alpha_j\cdot L_{j+1}}$. We call edge $e_i$ \emph{the $i$th copy of $e$}. 
Throughout the algorithm, whenever the length of edge $e$ in graph $G$ doubles, we delete from $G_j$ the lowest-length copy of the edge $e$. This ensures that, if the length of $e$ in $G$ is $2^i/m$, then every copy of $e$ in $G_j$ has length at least  $\frac{2^i}{m}+\frac{1}{2\alpha^*\cdot \alpha_j\cdot L_{j+1}}$.
We the initialize the data structure from \Cref{thm: NC} on this new graph $G_j$, with target distance threshold $D=1/(\alpha^*\cdot \alpha_j)$, and we denote by $\cset$ the weak $(D,\alpha^*\cdot D)$-neighborhood cover of $G_j$ that the algorithm maintains. (Recall that the definition of the \NC problem requires that the length of every edge is at least $1$. In order to achieve this, we need to scale all edge lengths so they become integral, and we need to do the same with the parameter $D$. As this does not change the problem in any way, we ignore this minor technicality).

We mark every demand pair $(s_i,t_i)\in \mset$ as \emph{unexplored}. As the algorithm progresses, we will mark some demand pairs as explored. For each such demand pair $(s_i,t_i)$, we will ensure that the distance, in the current graph $G_j$, between $s_i$ and $t_i$, is at least $1/(\alpha^*\cdot \alpha_j)$. We now describe a single iteration.

If every demand pair is marked as explored, then the phase terminates.
We are then guaranteed that every path in the current graph $G_j$, connecting any demand pair $(s_i,t_i)\in \mset$ has length at least $1/(\alpha^*\cdot \alpha_j)$ in $G_j$. We claim that in this case,  every path $P\in \pset$ whose length is at most $1/\alpha_{j+1}$ (in graph $G$), contains at least $L_{j+1}$ edges. Indeed, assume otherwise, and let $P\in \pset$ be a path connecting some demand pair $(s_i,t_i)\in \mset$, that has length $\ell\leq 1/\alpha_{j+1}$  in graph $G$, and contains fewer than $L_{j+1}$ edges. Let $P'$ be an $s_i$-$t_i$ path in graph $G_j$, obtained by taking, for every edge $e\in E(P)$, a copy that has smallest length. Then the length of path $P'$ in graph $G_j$ is bounded by:

\[\ell+\frac{1}{2\alpha^*\cdot \alpha_j}\leq \frac{1}{\alpha_{j+1}}+\frac{1}{2\alpha^*\cdot \alpha_j}\leq \frac{1}{\alpha_{j}\cdot (\alpha^*)^2}+\frac{1}{2\alpha^*\cdot \alpha_j}< \frac{1}{\alpha^*\cdot \alpha_j},\]

a contradiction to the fact that demand pair $s_i$-$t_i$ is marked as explored.
Therfore, when the phase terminates, Ivariant \ref{inv: short path few edges} holds.  

Assume now that not every demand pair in $\mset$ is marked as explored, and let $(s_i,t_i)\in \mset$ be any demand pair that is not marked as explored. Let $C=\coveringcluster(s_i)$ be the cluster of $\cset$ containing $B_{G_j}(s,D)$, that the algorithm   from \Cref{thm: NC} maintains. We start by checking, in time $O(1)$, whether $t_i\in C$. If this is not the case, then we are guaranteed that $\dist_G(s_i,t_i)>D=1/(\alpha^*\cdot \alpha_j)$. We then mark demand pair $(s_i,t_i)$ as explored, and continue to another unexplored demand pair. Otherwise, if $t_i\in C$, then we run query $\spquery(C,s_i,t_i)$ in the data structure maintained by the algorithm from \Cref{thm: NC}. The algorithm is then guaranteed to return a path connecting $s_i$ to $t_i$ in graph $G_j$, of length at most $1/\alpha_j$. 
We denote this path by $P$. From the way we set the lengths of the edges in graph $G_j$, we are guaranteed that $|E(P)|\leq 2\alpha^*\cdot L_{j+1}$.  Path $P$ immediately gives us the corresponding (possibly non-simple) path $P'$ in graph $G$, whose length is at most $1/\alpha_j$.  Let $P''$ be a simple path that is obtained from $P'$, after removing all cycles from it. Note that we can compute $P''$ in time $O(|E(P)|)$, and the query time $\spquery(C,s_i,t_i)$ also took time $O(|E(P)|)$. Then the length of path $P'$ is bounded by $1/\alpha_j$, and, from Invariant \ref{inv: short path few edges}, path $P''$ contains at least $L_j$ edges. We then return the path $P''$ and terminate the iteration.

The algorithm terminates after $t=\ceil{1/\eps}$ phases, at which time we are guaranteed, from Invariant \ref{inv: short path few edges}, that every path in $\pset$ has length at least $1/\alpha_t$, for $\alpha_t=(\alpha^*)^{O(1/\eps)}=O\left ((\log m)^{2^{O(1/\eps)}}\right )$. We denote $\alpha=\alpha_t$, the approximation factor of the oracle $\oset$. We now analyze the running time of a single phase. 

The time required to maintain the data structure from \Cref{thm: NC} is $O\left (m^{1+O(\eps)}\cdot (\log m)^{O(1/\eps^2)}\right )$. The time needed to process every query $\spquery(C,s_i,t_i)$ is $O(|E(P)|)$, where $P$ is the returned path. Recall that we have established that $P$ contains at most $2\alpha^*\cdot L_{j+1}$ edges, while its corresponding simple path $P''$ contains at least $L_j$ edges. Therefore, $|E(P)|\leq 2\alpha^*\cdot m^{\eps}|E(P'')|$. We charge every edge on path $P''$ for at most $2\alpha^*\cdot m^{\eps}$ edges on path $P$. Since we double the length of every edge on path $P''$ in graph $G$, and since  the length of every edge may only be doubled $O(\log m)$ times, 
an edge of $G$ may be charged at most $O(\log m)$ times over the course of a single phase. Therefore, the total time for processing all queries $\spquery(C,s_i,t_i)$ over the course of the phase, and also for computing the corresponding simple paths, is bounded by $O\left (m^{1+\eps}(\log m)^{2^{O(1/\eps)}}\right )$.
 Lastly, for every demand pair $(s_i,t_i)$, we may spend additional $O(1)$ time in the iteration in which the pair is marked as explored. Therefore, the total running time of 
 a single phase is bounded by $O\left (m^{1+O(\eps)}(\log m)^{2^{O(1/\eps)}}+k\right )$. Since the total number of phases is bounded by $O(1/\eps)$, the total running time of the algorithm implementing the oracle is $O\left (m^{1+O(\eps)}(\log m)^{2^{O(1/\eps)}}+k/\eps\right )$. The time required in order to implement the remainder of the algorithm (that is, updating the flow $f$ and the edge lengths) is subsumed by this running time. Therefore, the total running time of the algorithm is $O\left (m^{1+O(\eps)}(\log m)^{2^{O(1/\eps)}}+k/\eps\right )$.
Since this implementation of the oracle achieves approximation factor $\alpha=O\left ((\log m)^{2^{O(1/\eps)}}\right )$, the final approximation factor that we achieve for both \MMF and \MM is $O\left ((\log m)^{2^{O(1/\eps)}}\right )$.

\section{Acknowledgement}
\label{sec: ack}
We thank Merav Parter for introducing us to the notion of neighborhood covers and to the work of \cite{neighborhood-cover2,neighborhood-cover1}, that was the inspiration for this work.

\newpage
\appendix

\section{Proofs Omitted from \Cref{sec: prelims}}

\subsection{Short Paths in Expanders -- Proof of \Cref{obs: short paths in exp}}
\label{subsec: short paths in expanders}
The proof follows the standard ball-growing technique. For all $i\geq 1$, we denote $S_i=B_G(v,i)$, and $S'_i=B_G(u,i)$. 
Consider now some index $i$, and assume that $|S_i|\leq n/2$. Let $R_i=S_{i+1}\setminus S_i$ be the set of all vertices $v'$ whose distance from $v$ is exactly $i+1$. Then, since graph $G$ is a $\phi$-expander, $|E_G(S_i,R_i)|=|E_G(S_i,V(G)\setminus S_i)|\geq \phi |S_i|$. On the other hand, since every vertex in $R_i$ has degree at most $\Delta$, $|E_G(S_i,R_i)|\leq \Delta|R_i|$. Therefore, $|R_i|\geq \phi |S_i|/\Delta$ must hold. Since $S_{i+1}=S_i\cup R_i$, we get that $S_{i+1}\geq(1+\phi/\Delta)|S_i|$. We conclude that, for $q=\ceil{\Delta\log n/\phi}$, $|S_q|\geq n/2$ must hold. Using similar reasoning, $|S'_q|\geq n/2$ must hold, and so there must be a path connecting $u$ to $v$, whose length is at most $2q+1\leq 8\Delta\log n/\phi$.

\subsection{Expander Pruning -- Proof of \Cref{thm: expander pruning}}
\label{subsec: expander pruning}

For the sake of the proof, we need to define the notions of volumes of vertex sets, and graph conductance. Given a graph $G=(V,E)$, and a subset $S\subseteq V$ of its vertices, the \emph{volume} of $S$ in $G$ is $\vol_G(S)=\sum_{v\in S}\deg_G(v)$. Given a cut $(A,B)$ in graph $G$, its \emph{conductance} is $\psi_G(A,B)=\frac{|E_G(A,B)|}{\min\set{\vol_G(A),\vol_G(B)}}$. The \emph{conductance of a graph $G$}, denoted by $\Psi(G)$, is the minimum conductance of any cut in $G$. The following observation is immediate from the definitions of graph expansion and conductance.

\begin{observation}
	\label{prop:sparsity vs conductance}Let $G=(V,E)$ be a connected
	graph with maximum vertex degree $\Delta$, and let $(A,B)$ be any cut in $G$. Then: 
	
	$$\frac{\phi_{G}(A,B)}{\Delta} \le\psi_{G}(A,B)\le\phi_{G}(A,B).$$
	
	In particular:
		$$\frac{\Phi(G)}{\Delta}\le \Psi(G)\le \Phi(G).$$
	
\end{observation}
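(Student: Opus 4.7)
The proof is a direct computation from the definitions, relying on two elementary bounds relating $\vol_G(S)$ and $|S|$ for a vertex subset $S$.

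First I would observe that for any non-empty $S \subseteq V$, since $G$ is connected (and the existence of the cut $(A,B)$ forces $|V| \ge 2$, hence every vertex has degree at least $1$), we have $|S| \le \vol_G(S) \le \Delta |S|$. Applying this to both $A$ and $B$, and taking minima, gives
\[
\min\{|A|,|B|\} \;\le\; \min\{\vol_G(A),\vol_G(B)\} \;\le\; \Delta \cdot \min\{|A|,|B|\}.
\]

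The two inequalities of the statement then follow immediately by dividing $|E_G(A,B)|$ by these quantities. For the upper bound, the left inequality above gives
\[
\psi_G(A,B) \;=\; \frac{|E_G(A,B)|}{\min\{\vol_G(A),\vol_G(B)\}} \;\le\; \frac{|E_G(A,B)|}{\min\{|A|,|B|\}} \;=\; \phi_G(A,B).
\]
For the lower bound, the right inequality above gives
\[
\psi_G(A,B) \;=\; \frac{|E_G(A,B)|}{\min\{\vol_G(A),\vol_G(B)\}} \;\ge\; \frac{|E_G(A,B)|}{\Delta \cdot \min\{|A|,|B|\}} \;=\; \frac{\phi_G(A,B)}{\Delta}.
\]
The ``In particular'' clause about $\Phi(G)$ and $\Psi(G)$ is then obtained by minimizing over all cuts $(A,B)$.

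There is no real obstacle here; the only minor thing to be careful about is justifying $\vol_G(S) \ge |S|$, which uses that $G$ is connected so every vertex has degree at least $1$ (and that $|V(G)| \ge 2$, which follows from the existence of a non-trivial cut). The entire argument is two lines once the volume bounds are in place.
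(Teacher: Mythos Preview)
Your proof is correct and is exactly the direct argument from the definitions that the paper has in mind; the paper itself does not spell out a proof, stating only that the observation ``is immediate from the definitions of graph expansion and conductance.'' Your handling of the connectivity assumption to get $\vol_G(S)\ge |S|$ is the right justification for the upper inequality.
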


In order to prove \Cref{thm: expander pruning}, we use the following theorem from ~\cite{expander-pruning}

\begin{theorem}[Restatement of Theorem 1.3 in~\cite{expander-pruning}]\label{thm: expander pruning-inner}
	There is a deterministic algorithm, that, given an access to the adjacency list of a graph $G=(V,E)$ with $|E|=m$, a parameter $0<\psi\leq 1$, and a sequence $\Sigma=(e_1,e_2,\ldots,e_{k})$ of $ {k \leq \psi m/10}$ online edge deletions, maintains a vertex set $S\subseteq V$ with the following properties. Let $G_i$ be the graph $G$ after the edges $e_1,\ldots,e_i$ have been deleted from it; let $S_0=\emptyset$ be the set $S$ at the beginning of the algorithm, and for all $0<i\leq k$, let $S_i$ be the set $S$ after the deletion of $e_1,\ldots,e_i$. Then, for all $1\leq i\leq k$:
	
	\begin{itemize}
		\item $S_{i-1}\subseteq S_i$;
		\item $ {\vol_G(S_i)\leq 8i/\psi}$;
		\item $|E(S_i,V\setminus S_i)|\leq 4i$; and
		\item if  $\Psi(G)\geq \psi$, then the conductance of the graph $G_i[V\setminus S_i]$ is at least $\psi/6$.
	\end{itemize}
	
	The total running time of the algorithm is $O(k\log m/\psi^2)$.
\end{theorem}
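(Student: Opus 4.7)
The plan is to largely defer to the algorithm of Saranurak--Wang, but since we are asked for a proof sketch from scratch, I would reproduce their pruning framework, adapting notation to our setting. The high-level idea is to maintain, after each edge deletion $e_i$, a ``pruned'' set $S_i \supseteq S_{i-1}$ such that the remaining graph $G_i[V\setminus S_i]$ continues to have conductance $\Omega(\psi)$. Whenever an edge deletion threatens to create a low-conductance cut in the remaining graph, we detect that cut locally and move its small side into $S_i$. The four invariants to establish are precisely the bulleted properties: monotonicity of $S_i$ (trivial from the construction, since we only ever add to $S$), volume growth bound $\vol_G(S_i)\le 8i/\psi$, boundary bound $|E(S_i,V\setminus S_i)|\le 4i$, and conductance preservation of $G_i[V\setminus S_i]$.

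The core technical device I would use is a local unit-flow / local-cut computation, in the style of Henzinger--Rao--Wang. At step $i$, after deleting $e_i=(u,v)$, inject one unit of excess at each of $u,v$ (restricted to the current expander side), and try to route it out via sink capacities proportional to vertex degrees. If the flow succeeds, no new low-conductance cut has appeared and we set $S_i=S_{i-1}$. If the flow fails, the level cut of the resulting preflow yields a set $A$ with $\psi_G(A, V\setminus (S_{i-1}\cup A))\le \psi/6$ of small volume; we then add $A$ to $S_{i-1}$ to form $S_i$. The accounting for the volume and boundary bounds comes from charging each addition to $S$ to the source excess created by the deleted edges, which is at most $2i$ after $i$ deletions; the factor-$8/\psi$ volume blow-up and the $4i$ boundary bound then follow from the sink-capacity choice and from standard level-cut arguments. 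The conductance guarantee on $G_i[V\setminus S_i]$ follows because whenever a cut of conductance below $\psi/6$ would exist in the remaining graph, the local flow subroutine would detect it and prune its small side.

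The running-time bound $O(k\log m / \psi^2)$ is the main obstacle, and it is the heart of Saranurak--Wang's contribution: one cannot afford to run a global flow after each deletion. The key step is to run the unit-flow with a carefully tuned height bound of $O(\log m/\psi)$, so that each individual invocation runs in time proportional to the amount of pruned volume it produces, times an $O(\log m/\psi)$ overhead. Amortizing over all $k$ deletions, the total work across all flow computations is $O(k\log m/\psi^2)$, which matches the desired bound. The correctness of this amortization relies on the invariant that the total source excess injected up to step $i$ is at most $2i$, so the sum of pruned volumes is bounded by $O(i/\psi)$.

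The hardest part to get right would be the local-flow analysis: verifying that the level-cut extracted from a failed unit-flow indeed has conductance below $\psi/6$ with respect to $G_i[V\setminus S_{i-1}]$, and that iteratively pruning such cuts after each deletion never destroys the expander property of $G_i[V\setminus S_i]$. I would follow the potential-function argument of \cite{expander-pruning}, where the potential tracks both the residual excess of the flow and the accumulated pruned volume, and show that each deletion increases the potential by only $O(1/\psi)$. Translating conductance to expansion at the end is not needed here since the theorem is already phrased in terms of conductance $\psi$; the expansion-based \Cref{thm: expander pruning} would then be a direct corollary via \Cref{prop:sparsity vs conductance}.
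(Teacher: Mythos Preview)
The paper does not prove this statement at all: \Cref{thm: expander pruning-inner} is explicitly a restatement of Theorem~1.3 in \cite{expander-pruning} and is invoked as a black box. The only work the paper does in this subsection is to derive the expansion-based \Cref{thm: expander pruning} from \Cref{thm: expander pruning-inner} via \Cref{prop:sparsity vs conductance}, exactly as you anticipate in your final paragraph. So your proposal goes well beyond what the paper attempts.

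As a sketch of the Saranurak--Wang argument itself, your outline is broadly on the right track: the local unit-flow with bounded height, pruning the small side of the level cut when the flow fails, and amortizing via the total injected excess are indeed the ingredients. One caveat: the actual algorithm does not simply run one flow per deletion and set $S_i=S_{i-1}$ on success; it maintains persistent flow and excess across deletions and may prune repeatedly within a single step, with a more delicate trimming/batching structure to get the $O(k\log m/\psi^2)$ bound. Your description of ``inject one unit at each endpoint and either succeed or prune once'' is a simplification that would not quite yield the stated running time or the precise constants ($8i/\psi$, $4i$, $\psi/6$). But since the paper treats this as a citation, there is nothing here to compare against in the paper's own proof.
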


We are now ready to prove \Cref{thm: expander pruning}. Let $G=(V,E)$ be the input graph, that is a $\phi$-expander, with maximum vertex degree at most $\Delta$. From the definition of expanders, $G$ must be a connected graph, and, from \Cref{prop:sparsity vs conductance}, if we denote by $\psi=\Psi(G)$, then $\psi\geq \phi/\Delta$. From the statement of   \Cref{thm: expander pruning}, the number of edges in the deletion sequence $\Sigma$ is $k\leq \frac{\phi|E|}{10\Delta}\leq \frac{\psi|E|}{10}$, as required. We can now apply the algorithm from \Cref{thm: expander pruning-inner} to graph $G$ and edge deletion sequence $\Sigma$. We are then guaranteed that for all $1\leq i\leq k$, $S_{i-1}\subseteq S_i$, as required. Moreover, we are guaranteed that for each resulting graph $G'_i=G_i[V\setminus S_i]$, the conductance of $G'_i$ is at least $\psi/6$, and so, from \Cref{prop:sparsity vs conductance}, we get that $\Phi(G'_i)\geq \Psi(G'_i)\geq \psi/6\geq \phi/(6\Delta)$. Therefore, graph $G'_i$ is a $\phi/(6\Delta)$-expander. We are also guaranteed that $|E(S_i,V\setminus S_i)|\leq 4i$. Lastly, since graph $G$ is connected, for every vertex set $S$, $|S|\leq \vol_G(S)$ must hold, and so for all $i$, 
$|S_i|\leq \vol_G(S_i)\leq 8i/\psi\leq 8i\Delta/\phi$. The running time of the algorithm is $O(k\log |E|)/\psi^2=\otilde(k\Delta^2/\phi^2)$.

\section{Generalized Even-Shiloach Trees -- Proof of \Cref{thm: ES-tree}}
\label{sec: proof of ES tree thm}

\paragraph{Data Structures.}
We maintain the graph $H$ as an adjacency list: for every vertex $v$, we maintain a linked list of its neighbors. 
Throughout the algorithm, we will maintain the following data structures:

\begin{itemize}
	\item A shortest-path tree $T$ rooted at vertex $s$, that contains all vertices $x\in V(H)$ with $\dist_H(s,x)\leq D^*$. For every such vertex $x$, its correct distance $\lambda (x)=\dist_H(s,x)$ from $s$ is stored together with $x$.
	
	\item For every vertex $x\in V(T)$, let $\pred(x)$ be the set of all vertices $y\in V(T)$ with $(x,y)\in E(H)$. We maintain a heap $\heap(x)$ in which all elements of $\pred(x)$ are stored, where the key associated with each element $y\in \pred(x)$ is $\lambda (y)+\ell(y,x)$.

	\item For every edge $e=(y,x)$ with $y,x\in V(T)$, we store, together with vertex $y$, a pointer to its corresponding element in the heap $\heap(x)$ and vice versa.
\end{itemize}

Throughout,  we denote by $m'$ the total number of edges that are ever present in $H$, so  $m'\leq N^0\cdot \mu$.

\paragraph{Initialization.}
We run Dijkstra's algorithm on graph $H$ time $\otilde(|E(H)|)\leq \otilde(m')$, up to distance threshold $D^*$, to construct the initial tree $T$. For every vertex $x\in V(T)$, we initialize $\lambda(x)=\dist_H(s,v)$, and for all $x\not\in V(T)$, we set $\lambda(x)=\infty$. We then process all vertices of $V(T)$ in the order of their distance from $s$, and insert each such vertex $x$ into all heaps $\heap(y)$ of all vertices $y\in V(H)$ that are neighbors of $x$ in $H$. Clearly, initialization takes time at most $\otilde(m')$.

We now assume that we are given a valid data structure, and describe algorithms for handling updates.

\paragraph{Edge Deletions.}
The procedure for processing edge deletions is completely standard. The description provided here is due to Chechik \cite{chechik}. It is somewhat different but equivalent to the standard description.
Suppose an edge $e=(x,y)$ is deleted from the graph $H$.  We start by updating the heaps $\heap(x)$, $\heap(y)$ with the deletion of this edge.  If $e\not\in E(T)$, then no further updates are necessary.

Therefore, we assume from now on that $e=(x,y)$ is an edge of the tree $T$, and we assume w.l.o.g. that $y$ is the parent of $x$ in $T$. Let $S$ contain all vertices of $T$ that lie in the subtree $T_x$ of $x$. We delete the edge $(x,y)$ from $T$, thereby disconnecting all vertices of $S$ from $T$.  The remainder of the algorithm consists of two phases. In the first phase, we identify the set $R\subseteq S$ of all vertices, whose distance from $s$ has increased, and connect all remaining vertices of $S$ to $T$. In the second phase, we attempt to reconnect vertices of $R$ to $T$.

In order to implement the first phase, we maintain a heap $Q$ of all vertices that need to be examined, where the key associated with each vertex $a$ in $Q$ is $\lambda(a)$.
We initialize $Q$ to maintain a single vertex -- the vertex $x$, and we also initialize $R=\emptyset$. Heap $Q$ will have the property that, if some vertex $a$ belongs to $Q$, and vertex $a'$ was the parent of $a$ in $T$, then $a'$ was added to $R$ (or, if $a=x$, then $a'\in T$). Moreover, if $a$ is a vertex of $Q$ with smallest key $\lambda(a)$, and  $a'$ is the element lying at the top of $\heap(a)$, then $a'\in T$ must hold. Both these invariants hold at the beginning.

The algorithm iterates, as long as $Q\neq \emptyset$. Let $a$ be a vertex of $Q$ with smallest key $\lambda(a)$. Let $b$ be the element lying at the top of $\heap(a)$. We check whether connecting $a$ to the tree $T$ via vertex $b$ will allow us to keep $\lambda(a)$ unchanged, or, equivalently, whether $\lambda(a)=\lambda(b)+\ell(a,b)$. If this is the case, then we connect $a$ to the tree $T$ via $b$, delete $a$ from $Q$, and proceed to the next iteration. Otherwise, we are guaranteed that $\lambda(a)$ must increase. We then add $a$ to $R$, add all children of $a$ in the original tree $T$ to the heap $Q$, and delete $a$ from $\heap(b')$ for all neighbors $b'$ of $a$. 

Note that the algorithm examines vertices of $S$ in the non-decreasing order of their label $\lambda(a)$ (except that, when some vertex $a$ is reconnected to the tree $T$, then its descendants will never be examined). A vertex $a$ that is examined is either connected to the tree, or it is added to $R$, and all its children are added to $Q$. This ensures that, throughout the algorithm, if $a$ is a vertex of $Q$ with smallest key $\lambda(a)$, and  $a'$ is the element lying at the top of $\heap(a)$, then $a'\in T$ must hold. Indeed, if $a'\not\in T$, then we should have examined $a'$ before, and, if it was added to $R$, then it would have been deleted from $\heap(a)$.

The first phase terminates once $Q=\emptyset$. It is not hard to see, using standard analysis, that its running time is bounded by $\Otilde(\sum_{x\in R}\deg_H(x))$.

In the second phase, we run Dijkstra's algorithm on the vertices in $R$, up to distance $D^*$, trying to reconnect them to the tree $T$. We also update the heaps of all vertices of $R$, and of their neighbors, accordingly. This step can also be implemented in time $\Otilde(\sum_{x\in R}\deg_H(x))$.

To summarize, in edge-update operations, whenever, for any vertex $x$, its distance from $s$ increases, we may need to pay $\otilde(\deg_H(x))$ in running time. It is then easy to see that the total update time due to edge deletions is bounded by $\otilde(m'D^*)$.

\paragraph{Deletion of Isolated Vertices.}
Deletion of isolated vertices is straightforward, and takes $O(1)$ time per vertex. Note that an isolated vertex may not belong to $T$ (unless that vertex is $s$ and $T$ only contains the vertex $s$), so apart from deleting the vertex from $H$, no further updates are necessary.

\paragraph{Supernode Splitting.}
Recall that in a supernode-splitting update, we are given a supernode $u\in U$, and a set $E'\subseteq \delta_H(e)$ of edges. 
We need to add a new supernode $u'$ to the graph, and, for every edge $e=(u,v')\in E'$, insert an edge $e'=(u',v')$ of length $\ell(e)$ into $H$. 

If $u\not \in T$, then we simply set $\lambda(u')=\lambda(u')=\infty$ and terminate the update algorithm.

Assume now that $u\in T$, and assume first that $u\neq s$. Let $v$ be the parent of the vertex $u$ in the tree $T$. We now proceed as follows:

\begin{enumerate}
	\item Add a new vertex $u'$ as the child of $v$ to the tree $T$ (it is convenient to think of it as a copy of $u$); set $\ell(u',v)=\ell(u,v)$, $\lambda(u')=\lambda(u)$, add $v$ to $\heap(u')$, and add $u'$ to $\heap(v)$.
	
	\item For every edge $e=(u,v')\in E'$ with $v'\neq v$, add an edge $e'=(u',v')$ of length $\ell(e)$ to the graph $H$, add $v'$ to $\heap(u')$, and add $u'$ to $\heap(v')$. Notice that the insertion of these edges does not decrease the distance of any vertex from $s$, since  $\lambda(u)=\lambda(u')$, and every regular vertex that serves as endpoint to a newly inserted edge is also a neighbor of $u$.
	
	
	\item If edge $(u,v)\not\in E'$, delete the edge $(v,u')$ from the graph $H$, and from the \EST data structure, using the edge-deletion update operation.
\end{enumerate}

If $u=s$, then the update procedure is almost identical, except that we initially insert $u'$ as a child of $u$, and set the length of the edge $(u,u')$ to be $1/2$. This ensures that, as edges corresponding to the edge set $E'$ are inserted into the graph, the distances from vertices of $H$ to $s$ do not decrease, and, since the lengths of all edges in $H$ are at least $1$, $T$ remains a valid shortest-path tree. 
In the last step, we delete the edge $(u,u')$ from our data structure using the edge-deletion update operation.

The processing time of this update procedure, excluding the calls to the edge-deletion updates in the \EST data structure, is $\otilde(|E'|)$. 
The possible insertion of edge $(u,u')$, or edge $(u,v)$, if this edge does not lie in $E'$, increases the total number of edges inserted into our data structure. However, since $|E'|\geq 1$ must hold, this new inserted edge can be charged to some edge of $E'$, increasing the total number of edges that are ever present in our data structure by at most factor $2$. 
From the above discussion, if we denote the length of the input update sequence $\Sigma$ by $k$, the total update time of the \EST data structure is bounded by $\otilde(m'D^*+k)\leq \otilde(N^0\mu D^*+k)$.

Lastly, observe that each update operation in $\Sigma$ either inserts at least one edge into $H$, or deletes at least one vertex or an edge from $H$. Therefore, $k\leq O(m')$, and the total running time of the algorithm is at most $\otilde(N^0\mu D^*)$.

\paragraph{Responding to Queries $\shortestpath$.} Recall that in  $\shortestpath$, we are given a vertex $x\in V(H)$, and our goal is to either correctly establish, in time $O(1)$, that $\dist_H(s,x)>D^*$, or to return a shortest $s$-$x$ path $P$, in time $O(|E(P)|)$. Given a query vertex $x$, we check whether $\lambda(x)=\infty$. If so, we report that $\dist_H(s,x)>D^*$. Otherwise, we retrace the unique path $P$ connecting $x$ to $s$ in the tree $T$, and return it, in time $O(|E(P)|)$.
\section{Recursive Composition of \recdynnc Instances -- Proof of \Cref{lem: inductive dynamic NC algorithm}}
\label{sec:recursive composition}

	If $\eps\geq c/\log \log W$ for some constant $c$, then, since we can assume that $\tilde c$ is a large enough constant, the desired approximation factor $\alpha_i=(\log W)^{\tilde c i\cdot 2^{\tilde c/\eps}}>W^3$. In this case, we can simply use a known deterministic algorithm for fully dynamic minimum spanning forest of \cite{dynamic-connectivity}, that has total update time $O(W\log^4W)$. We simply let $\cset$ be the set of all connected components of the graph $H$. It is easy to verify that $\cset$ can be maintained using allowed changes only, in the same asymptotic total update time. Queries $\spquery(C,v,v')$ can be easily handled by returning the unique $v$-$v'$ path in the minimum spanning tree of $C$. Therefore, we assume from now on that $\eps< c/\log \log W$.

	The proof is by induction on $i$. The base case is when $i=1$, so $D\leq 6W^{\eps}$.
	We use the algorithm from \Cref{thm: main dynamic NC algorithm}, whose approximation factor is $(\log(W\mu))^{2^{O(1/\eps)}}\leq \alpha_1$ (if constant $\tilde c$ is large enough), and running time is  bounded by:
	
	$O\left (W^{1+O(\eps)}\cdot D^3\cdot (\log W)^{O(1/\eps^2)}\right )\leq \left(\tilde c \cdot W^{1+\tilde c\eps}\cdot (\log W)^{\tilde c/\eps^2}\right ),$
	as required.

	For the induction step, we consider some integer $1< i\leq \ceil{1/\eps}$, and assume that the lemma holds for integers below $i$.
	Consider the input graph $H$ and the distance threshold $D\leq  6W^{\eps i}$.
We use a parameter $D'=\floor{W^{\eps (i-1)}/4}$.
Clearly, $D'\geq W^{\eps(i-1)}/8$. 
We say that an edge $e$ of $H$ is \emph{long} if $\ell(e)>D'$, and we say that it is \emph{short} otherwise.

Let $H'$ be a dynamic graph that is  obtained from $H$ by deleting all long edges from it. We can generate a valid update sequence for graph $H'$ from the valid update input sequence $\Sigma$ for graph $H$ in a natural way, by ignoring updates concerning long edges. 
Therefore, we now obtained a valid input structure  $\iset'=\left(H',\set{\ell(e)}_{e\in E(H')}, 4D' \right )$, that undergoes a sequence $\Sigma'$ of edge-deletion and isolated vertex-deletion operations, with dynamic degree bound $2$, and the initial number of the regular vertices in $H'$ is at most $W$. Since $4D'\leq W^{\eps (i-1)}$, we can apply the induction hypothesis to input structure $\iset'$, sequence $\Sigma'$ of update operations, distance bound $4D'$, and approximation factor $\alpha_{i-1}=(\log W)^{\tilde c (i-1)\cdot 2^{\tilde c/\eps}}$.   The total update time of this algorithm is bounded by  $\left (\tilde c^{i-1}\cdot W^{1+\tilde c\eps}\cdot (\log W)^{\tilde c/\eps^2}\right )$. 
We are also guaranteed that for every regular vertex $v$ of $H'$, the total number of clusters in $\cset'$ that ever contain $v$ is at most $W^{O(1/\log\log W)}$.
We denote by $\dset(H')$ the corresponding data structure, by
 $\cset'$ the collection of clusters that the algorithm maintains, and by $\uset'=\set{V(C)\mid C\in \cset'}$ the corresponding vertex subsets.
We denote the algorithm that we have described so far, that maintains the data structure $\dset(H')$, by $\alg_1$.

	We use the neighborhood cover $\cset'$ in order to define another dynamic graph $\hat H$, as follows.
	We start with letting $\hat H=H$, and then round all edge lengths up to the next integral multiple of $D'$, denoting the resulting new length of each edge $e$ by $\hat \ell(e)$. 
	 Notice that, if $e$ is a short edge, then $\ell(e)\leq \hat\ell(e)= D'$, and if $e$ is a  long edge, then $\ell(e)\leq \hat \ell(e)\leq 2 \ell(e)$. Additionally, for every cluster $C\in \cset'$, we add a supernode $u(C)$ to graph $\hat H$, and connect $u(C)$ with an edge to every {\bf regular} vertex $v\in V(H')$ that lies in $C$; the length of this edge is $4D'$. This ensures that the length of every edge in $\hat H$ is an integral multiple of $D'$. For each time point $t$, we denote by $\hat H^t$ the graph $\hat H$ obtained immediately after the $t$th update in sequence $\Sigma$ to graph $H$ is processed.

	We now proceed as follows. First, we show an algorithm to construct an initial graph $\hat H^0$, and an algorithm to produce an online sequence of valid update operations $\hat \Sigma$ for this graph, so that, at every time $t$ (that is, after $t$th update in sequence $\Sigma$ for graph $H$ is processed by our algorithm), the resulting graph that we obtain is precisely $\hat H^t$. We will also show that, for every pair of regular vertices $v,v'\in V(H)$, the distance between them in $\hat H$ is close to that in $H$. We will use the algorithm from \Cref{thm: main dynamic NC algorithm} recursively on graph $\hat H$, to maintain a neighborhood cover $\hat \cset$ of regular vertices in graph $\hat H$. Lastly, we show that we can use the resulting dynamic neighborhood cover $\hat C$ for graph $\hat H$ in order to maintain the desired neighborhood cover for graph $H$.

\paragraph{Maintaining Graph $\hat H$.}
Recall that, from the definition of the \recdynnc problem, initially, the cluster set $\cset'$ consists of a single cluster $H'$, and its corresponding collection $\uset'$ of vertex subsets contains a single vertex set, $V(H')$. As the time progresses, vertex sets in $\uset'$ may only be updated via allowed change operations: $\delvertex$, $\addsupernode$, and $\csplit$.
We define the initial graph $\hat H^0$ as follows. We set $\hat H^0=H$, except that we set the edge lengths $\set{\hat \ell(e)}_{e\in E(\hat H^0)}$ as described above. Additionally, we add a single supernode $u(H')$, that connects to every regular vertex of $H'$ with an edge of length $4D'$. We use the following claim in order to produce input update sequence $\hat \Sigma$ for $\hat H$.

\begin{claim}\label{claim: maintain hat H}
	There is a deterministic algorithm that, given, at each time $t\geq 1$, the $t$-th update $\sigma_t$ in the input update sequence $\Sigma$ for $\iset$, produces a sequence $\hat \Sigma_t$ of valid update operations for graph $\hat H$, such that, for all $t\geq 0$, the graph obtained from $\hat H^0$ by applying the update sequence $(\hat \Sigma_1\circ \cdots\circ \hat \Sigma_t)$ to it is precisely $\hat H^t$. 
	The dynamic degree bound for the resulting dynamic graph $\hat H$ is at most $W^{O(1/\log\log W)}$.
	The total update time of the algorithm is bounded by  $O\left (\tilde c^{i-1}\cdot W^{1+\tilde c\eps}\cdot (\log W)^{\tilde c/\eps^2}\right )$.
\end{claim}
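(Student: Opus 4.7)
\textbf{Proof plan for Claim \ref{claim: maintain hat H}.} The plan is to describe a simple bookkeeping algorithm that translates each update $\sigma_t$ of $H$ into a short sequence of valid update operations for $\hat H$, driven by the updates to the cluster family $\cset'$ reported by the data structure $\dset(H')$ from Algorithm $\alg_1$. First I would define the initial graph $\hat H^0$ explicitly (copy $H$, round every edge length up to the next multiple of $D'$, and add a single supernode $u(H')$ connected to every regular vertex of $H'$ by an edge of length $4D'$) and maintain, for every regular vertex $v\in V(H)$, the list of supernodes $u(C)$ of $\hat H$ adjacent to $v$; this list is in one-to-one correspondence with the clusters of $\cset'$ currently containing $v$, which can be read off from the data structure of $\alg_1$.

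Second, I would handle each input update $\sigma_t$ by case analysis. If $\sigma_t$ is the deletion of an edge $e\in E(H)$, I would simply append an edge-deletion of $e$ to $\hat\Sigma_t$; additionally, if $e$ is short, I pass the deletion to $\dset(H')$, which in turn reports a sequence of allowed changes to $\cset'$. For each reported change I emit the appropriate valid update in $\hat H$: a $\delvertex(V(C),x)$ with $x$ a regular vertex produces an edge-deletion of $(x,u(C))$; a $\delvertex(V(C),x)$ with $x$ a supernode does nothing (since $u(C)$ is only adjacent in $\hat H$ to regular vertices); $\addsupernode(V(C),u)$ does nothing for the same reason; and $\csplit(V(C),V(C'))$ is realized by a supernode-splitting update applied to $u(C)$ with edge set $\{(u(C),v)\mid v\in V(C')\cap V\}$, creating the supernode $u(C')$. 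If $\sigma_t$ is an isolated vertex deletion of a vertex $y\in V(H)$, I first delete, one by one, all edges $(y,u(C))$ for clusters $C\in \cset'$ currently containing $y$ (using the $\delvertex$ reports produced by $\dset(H')$ when $y$ is removed from those clusters), and then emit an isolated vertex deletion of $y$ in $\hat H$. A short inductive check on $t$ establishes that the graph obtained from $\hat H^0$ by applying $\hat\Sigma_1\circ\cdots\circ\hat\Sigma_t$ is exactly $\hat H^t$: only long edges survive in $\hat H$ without having a counterpart in $H'$, and the adjacency between regular vertices and cluster-supernodes is maintained in lock-step with $\cset'$.

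Third, I would bound the dynamic degree of $\hat H$. A regular vertex $v\in V(H)$ may ever be incident in $\hat H$ to (i) at most $\mu=2$ edges of $H$ itself, and (ii) one edge $(v,u(C))$ for each cluster $C\in \cset'$ that ever contains $v$. By the induction hypothesis the latter count is at most $W^{O(1/\log\log W)}$, so the dynamic degree bound of $\hat H$ is $W^{O(1/\log\log W)}$ as claimed. For the running time, every operation emitted into $\hat\Sigma_t$ corresponds either to an allowed change of $\cset'$ reported by $\dset(H')$ or directly to a constant-cost translation of $\sigma_t$; the work done per such operation, including updating the auxiliary per-vertex adjacency lists to supernodes, is polylogarithmic in $W$. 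Summed over the entire execution, this is absorbed into the total update time of $\dset(H')$, giving the desired bound $O\bigl(\tilde c^{i-1}\cdot W^{1+\tilde c\eps}\cdot(\log W)^{\tilde c/\eps^2}\bigr)$.

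The only mildly delicate point, and what I expect to be the main obstacle, is verifying that the sequence of operations I emit is indeed a legal sequence of \emph{valid update operations} for $\hat H$ -- in particular that every $\csplit$ in $\cset'$ can be realized in $\hat H$ by a single supernode-splitting operation on the already-existing supernode $u(C)$ of the parent cluster, and that the newly created supernode inherits exactly the right incident edges. This follows because $\csplit$ requires $V(C')\subseteq V(C)$, so by maintenance of the invariant above, for every regular vertex $v\in V(C')$ the edge $(v,u(C))$ is present in $\hat H$ at the moment of the split, and is exactly the edge to be duplicated by supernode-splitting; no other edges need to be inserted. Once this case is checked, the remaining cases are immediate.
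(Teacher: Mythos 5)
Your proposal is correct and follows essentially the same approach as the paper's proof: initialize $\hat H^0$ with the single supernode $u(H')$, translate each reported allowed change to $\cset'$ (from $\dset(H')$) into the corresponding valid update of $\hat H$ (with $\delvertex$ of a regular vertex becoming an edge deletion, $\csplit$ becoming a supernode-split on the parent's supernode, and $\addsupernode$/supernode $\delvertex$ being no-ops), handle $\sigma_t$ itself directly, and absorb the bookkeeping cost into the update time of $\alg_1$. The dynamic-degree argument (original degree at most $2$ plus one edge per cluster of $\cset'$ ever containing $v$, bounded by the inductive guarantee $W^{O(1/\log\log W)}$) is also identical to the paper's.
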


\begin{proof}
	We run Algorithm $\alg_1$, that solves the \recdynnc problem on graph input structure  $\iset'=\left(H',\set{\ell(e)}_{e\in E(H')}, 4D' \right )$, that undergoes a sequence $\Sigma'$ of edge-deletion and isolated vertex-deletion operations, with dynamic degree bound $2$, with the initial number of the regular vertices in $H'$ bounded by $W$. Recall that the algorithm 
	maintains the neighborhood cover $\cset'$, achieves approximation factor $\alpha_{i-1}=(\log W)^{\tilde c (i-1)\cdot 2^{\tilde c/\eps}}$, and has running time at most  $\left (\tilde c^{i-1}\cdot W^{1+\tilde c\eps}\cdot (\log W)^{\tilde c/\eps^2}\right )$. We are also guaranteed that for every regular vertex $v$ of $H'$, the total number of clusters in $\cset'$ that ever contain $v$ is at most $W^{O(1/\log\log W)}$. 
	
	Recall that initially, $\cset'=\set{H'}$. Consider now some time $t>0$, when an update $\sigma_t\in \Sigma$ for input structure $\iset$ arrives. We initialize $\hat \Sigma_t=\emptyset$, and we start by updating the data structure $\dset(H')$ with the update operation $\sigma_t$,  which may result in some changes to the vertex sets in $\uset'=\set{V(C)\mid C\in \cset'}$. We consider the resulting changes to vertex sets of $\uset'$ one by one. 
		If the change is $\addsupernode(S,u)$, for some vertex set $S\in \uset'$, then we ignore this update.
	If the change is $\delvertex(S,x)$, where a vertex $x\in S$ is deleted from vertex set $S\in \uset'$, and $x$ is a regular vertex, then we add to the sequence $\hat \Sigma_t$ an edge-deletion operation for the edge  $(x,u(C))$, where $C$ is the cluster of $\cset'$ with $V(C)=S$; if $x$ is a supernode then we ignore this change. 
	If the change is $\csplit(S,S')$, where $S\in \uset'$ and $S'\subseteq S$ is a new vertex set that is added to $\uset'$, then we add a supernode split operation to $\hat \Sigma_t$, defined as follows. Let $C$ be the cluster of $\cset'$ with $V(C)=S$. The supernode split operation is performed on  supernode $u(C)$, and the corresponding edge set $E'$ contains all edges $(u(C),v)$, where $v$ is a regular vertex of $H$ lying in $S'$. The new supernode that is added to the graph is $u(C')$, where $C'$ is the new cluster with $V(C')=S'$. Note that the time that is needed in order to compute the set $E'$ of edges is $|S'|$; since the algorithm for the \recdynnc problem on $\iset'$ needs to add vertex set $S'$ to $\uset'$,  this time is subsumed by the time that Algorithm $\alg_1$ takes in order to execute the $\csplit(S,S')$ operation.

Lastly, we consider the update operation $\sigma_t$ itself, and add additional updates to $\hat \Sigma_t$ as follows. If $\sigma_t$ is the deletion of an isolated vertex $x$ from graph $H$, then $x$ must also currently be an isolated vertex of $\hat H$ (as it was just deleted from all clusters containing it). We then add isolated vertex deletion operation for vertex $x$ to $\hat \Sigma_t$. If $\sigma_t$ is the deletion of an edge $e$, then add to $\hat \Sigma_t$ the deletion of $e$.
This completes the description of the algorithm for producing the sequence $\hat \Sigma_t$.

It is immediate to verify that  for all $t>0$, the graph obtained by applying update sequence $(\hat \Sigma_1\circ \cdots\circ \hat \Sigma_t)$ to graph $\hat H^0$ is precisely $\hat H^t$.
Next, we bound the dynamic vertex degree for the resulting dynamic graph $\hat H$. Recall that, from the statement of \Cref{lem: inductive dynamic NC algorithm}, for every regular vertex $v$ of graph $H'$, the total number of clusters in $\cset'$ that ever contain $v$ is bounded by $W^{O(1/\log\log W)}$. Since the inital degree of every regular vertex in $H'$ is at most $2$, and no supernode splitting operations are allowed in $H$, we get that the dynamic degree bound for $\hat H$ is $W^{O(1/\log\log W)}$.

The total update time of this algorithm is subsumed by the total update time of Algorithm $\alg_1$, and is bounded by   $O\left (\tilde c^{i-1}\cdot W^{1+\tilde c\eps}\cdot (\log W)^{\tilde c/\eps^2}\right )$.  
\end{proof}

\paragraph{Distance Preservation.}
We now show that distances between regular vertices in graph $\hat H$ are not much larger than the corresponding distances in graph $H$.

\begin{lemma}\label{lem: distance preservation}
	Throughout the algorithm, for every pair $v,v'\in V(H)$ of regular vertices, if $\dist_H(v,v')\leq D$, then  $\dist_{\hat H}(v,v')\leq 20\cdot \dist_H(v,v')+8D'$.
\end{lemma}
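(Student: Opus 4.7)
The plan is to exhibit an explicit $v$-$v'$ path in $\hat H$ of length at most $20\ell^*+8D'$, where $\ell^*=\dist_H(v,v')\leq D$, by shortcutting through the cluster-supernodes $\{u(C):C\in\cset'\}$ added to $\hat H$. Let $P$ be a shortest $v$-$v'$ path in $H$. Since $H$ is bipartite with parts $V$ (regular) and $U$ (supernodes) and $v,v'\in V$, the path $P$ alternates colors, so I may write its regular vertices in order as $v=r_0,r_1,\dots,r_m=v'$, where consecutive $r_{k-1},r_k$ are joined by a two-edge ``hop'' $h_k$ through a supernode $u_k$, of length $L_k=\ell(r_{k-1},u_k)+\ell(u_k,r_k)$, and $\sum_k L_k=\ell^*$. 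I will call $h_k$ \emph{long} if either of its two edges has length $>D'$ and \emph{short} otherwise; short hops satisfy $L_k\leq 2D'$. The short hops split into $g$ maximal runs (\emph{groups}) $G_1,\dots,G_g$ separated by the $q$ long hops, with $g\leq q+1$.

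First I would bound the long-hop contribution. Because rounding each edge length up to the next multiple of $D'$ increases it by less than $D'$, for a long hop $h_k$ we have $\hat\ell(r_{k-1},u_k)+\hat\ell(u_k,r_k)\leq L_k+2D'\leq 3L_k$ (using $L_k>D'$), so the total long-hop contribution in $\hat H$ is at most $3\sum_{\text{long}}L_k\leq 3\ell^*$. Moreover, $qD'<\sum_{\text{long}}L_k\leq \ell^*$, hence $g\leq \ell^*/D'+1$, which is what will let me amortize the additive overhead across all groups.

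Next I would traverse each short group $G$ with regular vertices $r_{a-1},r_a,\dots,r_b$ greedily using $\cset'$. Because every hop of $G$ uses only short edges, the sub-path of $P$ from $r_{a-1}$ to any $r_j\in G$ lies in $H'$, so its $H$-length equals its $H'$-length. Set $C=\coveringcluster(r_{a-1})\in\cset'$, for which the data structure $\dset(H')$ guarantees $B_{H'}(r_{a-1},4D')\subseteq V(C)$. Choose the largest $j$ with the $P$-sub-path length at most $4D'$: then $r_j\in V(C)$, and since both $r_{a-1}$ and $r_j$ are regular, the edges $(r_{a-1},u(C))$ and $(u(C),r_j)$ (each of length $4D'$) exist in $\hat H$, giving a jump of length $8D'$. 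If $j<b$ then extending to $r_{j+1}$ would exceed $4D'$, and since the hop $r_j\to r_{j+1}$ has $H$-length at most $2D'$, this jump consumed at least $2D'$ of $L_G$. Hence at most $s\leq L_G/(2D')+1$ jumps are needed for $G$, contributing $\leq 8D'\cdot s\leq 4L_G+8D'$ in $\hat H$.

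Summing the two kinds of contributions gives
\[
\dist_{\hat H}(v,v')\leq 3\ell^*+\sum_{j=1}^{g}(4L_{G_j}+8D')\leq 3\ell^*+4\ell^*+8D'g\leq 7\ell^*+8(\ell^*+D')=15\ell^*+8D',
\]
which is within the $20\ell^*+8D'$ bound claimed. The only non-routine step is the greedy jump analysis within a short group: one must verify simultaneously that both regular endpoints of each jump lie in the chosen cluster $C$ (which follows from bipartiteness of $H$, together with the fact that the sub-path is an $H'$-path of length $\leq 4D'$ and $\cset'$ is a weak $(4D',\cdot)$-neighborhood cover) and that each non-final jump consumes at least $2D'$ of $L_G$; bounding the number of groups by $\ell^*/D'+1$ via the long-hop length budget is what keeps the accumulated $8D'$ overhead additive rather than multiplicative.
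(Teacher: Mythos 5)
Your proof is correct and follows essentially the same approach as the paper's: decompose $P$ by identifying the supernode-hops that contain a long edge, traverse those hops directly in $\hat H$ using the rounded edge lengths (gaining only a constant factor since a long hop has length $>D'$), and replace each maximal run of short hops by a sequence of $8D'$-length jumps through the weak $(4D',\cdot)$-neighborhood cover $\cset'$ of $H'$, amortizing the $+8D'$ overhead per run against the $>D'$ budget of each separating long hop. Your bookkeeping at the hop level (rather than the paper's long-edge/adjacent-edge deletion and its $[D',4D']$ chunking) and the $2D'$ consumption per non-final jump give a sharper constant ($15$ vs.\ $20$), but the decomposition and the use of the cover are the same in substance.
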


\begin{proof}
	Consider some pair $v,v'\in V(H)$ of regular vertices, with $\dist_H(v,v')\leq D$. Let $P$ be the shortest $v$-$v'$ path in graph $H$, whose length $\ell(P)=\dist_H(v,v')$ is denoted by $\ell^*$. We assume first that $P$ contains at least one long edge, so $\ell^*\geq D'$.
	
	Let $\set{e^1,\ldots,e^k}$ be the set of all long edges that appear on path $P$, indexed in the order of their appearance on $P$. 
	
	For every long edge $e^i$, we denote its endpoints by $v_i$ and $u_i$, where $u_i$ is a supernode. Let $\hat e^i$ be the unique edge on path $P$ that shares the endpoint $u_i$ with $e^i$. For all $1\leq i\leq k$, we delete both $e^i$ and $\hat e^i$ from path $P$. Let $\pset=\set{P_1,\ldots,P_{q}}$ denote the resulting set of subpaths of $P$ (we do not include subpaths consisting of a single vertex), that are indexed in the order of their appearance on path $P$. 
	Notice that, by deleting, for all $1\leq i\leq k$, both edge $e^i$ and $\hat e^i$,  we have ensured that both endpoints of each path in $\pset$ are regular vertices.

	We say that a path $P_i\in \pset$ is \emph{long} iff its length (in graph $H'$) is greater than $4D'$, and we say that it is short otherwise. 

	Consider first some short path $P_i\in \pset$, and let $\hat v,\hat v'$ be its endpoints. Recall that $\cset'$ is a set of clusters maintained as a solution to the \recdynnc on input structure $\iset'$, with graph $H'$ and distance bound $4D'$. Since $\dist_{H'}(v,v')\leq \ell_{H'}(P_i)\leq 4D'$, 
	there is a cluster $C\in \cset'$ containing both $v$ and $v'$. Therefore, there is a vertex $u(C)$ in graph $\hat H$, and edges $(\hat v, u(C)),(\hat v',u(C))$ of length $4D'$ each. We let $Q_i$ be the path obtained by concatenating these two edges. Then $Q_i$ is a path in graph $\hat H$, connecting the endpoints of $P_i$, of length at most $8D'$. 
	
	
	Next, we consider a long path $P_i\in \pset$, whose length must be greater than $4D'$. 
	Let $x_0$ denote the first endpoint of $P_i$, and let $y$ denote its last endpoint. 
	Since the length of every edge in graph $H'$ (and hence on path $P_i$) is at most $D'$, we can find a collection $x_1,x_2,\ldots,x_r$ of {\bf regular} vertices on path $P_i$, that appear on $P_i$ in this order, such that, if we denote $x_{r+1}=y$, and, for all $0\leq j\leq r$, we denote the subpath of $P_i$ from $x_j$ to $x_{j+1}$ by $R_j$, then for all $0\leq j<r$, the length of $R_j$ is at least $D'$ and at most $4D'$, and the length of $R_{r}$ is at most $4D'$.
	
	As before, since, for every regular vertex $v''\in V(H')$, some cluster $C\in \cset'$ must contain $B_{H'}(v'',4D)$, we get that for all $0\leq j\leq r$, there is a vertex $u(C_j)$ in graph $\hat H$, and edges $(x_j,u(C_j))$, $(u(C_j),x_{j+1})$ of length $4D'$ each. By concatenating all such edges, we obtain a path $Q_i$ in graph $\hat H$, connecting $x_0$ to $x_{r+1}$, of length at most $8(r+1)D'$. From our definition of vertices $x_1,\ldots,x_r$, the length of path $P_i$ in $H'$ is at least $rD'$. Since the length of path $P_i$ is greater than $4D'$, we are guaranteed that $r\geq 1$. Therefore, the length of path $Q_i$ in graph $\hat H$ is at most $16\ell_{H'}(P_i)$.
	
	Let $\qset=\set{Q_i\mid P_i\in \pset}$. From the above discussion, the total length of all paths in $\qset$ (in graph $\hat H$) is at most  $16\sum_i\ell_{H'}(P_i)+8|\pset|D'\leq 16\ell_{H}(P)\leq 16\ell^*$ (we have used the fact that $|\pset|\leq k+1$, where $k$ is the number of the long edges on path $P$, and each such long edge has length at least $D'$).

	Lastly, recall that for every long edge $e^i$, we have deleted $e^i$ and $\hat e^i$ from path $P$. The length of edge $e^i$ in graph $\hat H$ is at most $2\ell_{H'}(e^i)$, and the length of edge $\hat e^i$ in graph $\hat H$ is at most $\max\set{D',2\ell_{H'}(\hat e^i)}$. Since the length of $e^i$ in graph $H'$ is at least $D'$, the sum of lengths of both edges in graph $\hat H$ is bounded by $4\ell_{H'}(e^i)+4\ell_{H'}(\hat e^i)$.  Therefore, by concatenating all paths in $\qset$, and edges in $\set{e^i,\hat e^i}_{i=1}^k$, we obtain a path $Q^*$ in graph $\hat H$, connecting $v$ to $v'$, whose length is at most $20\ell^*$.

Assume now that path $P$ contains no long edge. Then we process path $P$ exactly like we processed paths in $\pset$: namely, if the length of $P$ in graph $H'$ is at most $4D'$, then we are guaranteed that there is some vertex $u(C)$, with edges $(v,u(C))$, $(v',u(C))$ of length $4D'$ each in graph $\hat H$, so $\dist_{\hat H}(v,v')\leq 8D'$. Otherwise, we treat $P$ like a long path in $\pset$, obtaining a path of length at most $16\ell_{H'}(P)$ connecting $v$ to $v'$ in graph $\hat H$.
\end{proof}

Next, we show that any path connecting a pair of regular vertices in graph $\hat H$ can be efficiently transformed into a path in graph $H$, connecting the same pair of vertices, without increasing the path length by much.

\begin{claim}\label{claim: path transforming}
	There is a deterministic algorithm, that we call \algtransformpath, that, given a path $P$ in graph $\hat H$, connecting a pair $v,v'$ of regular vertices, computes a path $P'$ in graph $H$, connecting the same pair of vertices, such that $\ell_{H}(P')\leq \alpha_{i-1}\cdot \hat \ell_{\hat H}(v,v')$. The running time of the algorithm is $O(|E(P')|)$.
\end{claim}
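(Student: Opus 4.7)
The plan is to walk the input path $P$ once, classify each edge as either an original $H$-edge or an incident edge of a new supernode $u(C)$, and produce $P'$ by keeping every original edge and replacing every ``detour'' through a new supernode by an in-cluster path returned by the underlying \recdynNC data structure for $H'$.

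More concretely, I would scan $P$ from $v$ to $v'$ and partition it into maximal sub-paths of the form $(x, u(C), y)$, where $u(C)$ is a new supernode of $\hat H$ (associated with some cluster $C\in\cset'$) and the two edges incident to $u(C)$ have length $4D'$, interleaved with edges of $P$ that already belong to $E(H)$. Note that $x$ and $y$ must be regular vertices of $V(H')$ lying in $V(C)$, because by construction the new supernodes in $\hat H$ are only adjacent to such vertices; so consecutive new-supernode visits along $P$ cannot occur. For each such detour I invoke the query $\spquery(C,x,y)$ in the data structure $\dset(H')$ maintained by Algorithm $\alg_1$, which returns, in time $O(|E(Q)|)$, an $x$-$y$ path $Q\subseteq H'\subseteq H$ of length at most $\alpha_{i-1}\cdot 4D'$. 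The final path $P'$ is obtained by concatenating all the $Q$'s and the original $H$-edges of $P$ in order (with the trivial simplification of skipping any detour for which $x=y$).

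For the length bound, each detour contributes $8D'$ to $\hat\ell_{\hat H}(P)$ and at most $\alpha_{i-1}\cdot 4D'\le \tfrac{\alpha_{i-1}}{2}\cdot 8D'$ to $\ell_H(P')$, while each original edge $e$ used by $P$ contributes $\hat\ell(e)\ge \ell(e)$ to $\hat\ell_{\hat H}(P)$ and exactly $\ell(e)$ to $\ell_H(P')$; summing gives $\ell_H(P')\le \alpha_{i-1}\cdot \hat\ell_{\hat H}(P)$, which is the desired bound (interpreting the statement's $\hat\ell_{\hat H}(v,v')$ as the length of the input path $P$ in $\hat H$, or alternatively applying the argument to a shortest $v$-$v'$ path). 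The running time is $O(|E(P')|)$ because every $Q$ returned by $\spquery$ takes time $O(|E(Q)|)$ and every original edge of $P$ that we keep contributes one unit of work; summing over all pieces yields $O(|E(P')|)$ total.

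The main obstacle is essentially bookkeeping: verifying that a detour subpath of $P$ through a new supernode $u(C)$ always lands on two regular vertices both lying in $V(C)$ (so that $\spquery(C,\cdot,\cdot)$ is legal), which follows from the construction of $\hat H$, and verifying that the path $Q$ delivered by the weak $(4D',\alpha_{i-1}\cdot 4D')$-neighborhood-cover data structure $\dset(H')$ is indeed a path in $H'\subseteq H$ of length at most $\alpha_{i-1}\cdot 4D'$, which is guaranteed by the inductive \recdynNC algorithm. Everything else is a direct concatenation-and-charging argument.
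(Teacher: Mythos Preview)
Your proposal is correct and takes essentially the same approach as the paper: process each new supernode $u(C)$ on $P$, replace the two-edge detour $(x,u(C),y)$ by the path returned from $\spquery(C,x,y)$ in the data structure $\dset(H')$, and keep the original $H$-edges as they are. Your accounting is in fact slightly more explicit than the paper's (you spell out the $8D'$-vs-$\alpha_{i-1}\cdot 4D'$ charging and the $\hat\ell(e)\ge \ell(e)$ inequality for original edges), and your observation that two new supernodes cannot be adjacent on $P$ is exactly the reason the $\spquery$ call is always legal.
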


\begin{proof}
We process every supernode $u(C)$ on path $P$, with $C\in \cset'$ one by one. Consider any such supernode, and let $v_C,v'_C$ be the regular vertices of $\hat H$ appearing immediately before and immediately after $u(C)$ on path $P$.  We perform query $\spquery(C,v_C,v'_C)$ to the data structure $\dset(H')$, that maintains a solution to \recdynnc problem on graph $H'$, and obtain a path $P_C$, of length at most $\alpha_{i-1}\cdot D'$, connecting $v$ to $v'$ in $H$, in time $O(|E(P_C)|)$. Notice that, since the lengths of the edges $(v_C,u(C)),(v'_C,u(C))$ are $4D'$ each, after this step is completed for every supernode $u(C)$ on path $P$ with $C\in \cset''$, we obtain a path $P'$ in graph $H$, connecting $v$ to $v'$, whose length is at most $\alpha_{i-1}\cdot \hat \ell_{\hat H}(v,v')$. 
\end{proof}

\paragraph{Remainder of the Algorithm.}
Consider now the dynamic graph $\hat H$. 
Recall that the length of every edge in graph $\hat H$ is an integral multiple of  $D'$. 

 Let $\hat H'$ be a graph that is identical to $\hat H$, except that for every edge $e\in E(\hat H)$, we set its new length $\hat \ell'(e)=\hat \ell(e)/D'$ (recall that $\hat \ell(e)$ is the length of $e$ in graph $\hat H$). 
 

We set $\hat D=50D/D'$. Since $D\leq 6W^{\eps i}$, while $D'=\floor{W^{\eps (i-1)}/4}$, we get that $\hat D=\Theta(W^{\eps})$.

We have now defined a valid input structure $\hat \iset'=\left(\hat H',\set{\hat \ell'(e)}_{e\in E(\hat H')}, \hat D \right )$.
Using the algorithm from \Cref{claim: maintain hat H}, we obtain an online sequence $\hat \Sigma=(\hat \Sigma_1\circ \hat \Sigma_2\circ\cdots)$ of valid update operations for graph $\hat H'$, with dynamic degree bound  $\mu\leq W^{O(1/\log\log W)}$; recall that initially, the number of regular vertices in $\hat H$ is at most $W$.
We use the algorithm from \Cref{thm: main dynamic NC algorithm} in order to maintain a solution to the \recdynnc problem on graph $\hat H'$, with distance bound $\hat D$, and approximation factor:

 $$\hat \alpha=(\log(W\mu))^{2^{O(1/\eps)}}\leq (\log W)^{2^{O(1/\eps)}}\leq (\log W)^{2^{\tilde c/\eps}},$$
 
since $\mu \leq W^{O(1/\log\log W)}$, and $\tilde c$ is a large constant.

 Recall that the total update time of this algorithm is:

$$O\left (W^{1+O(\eps)}\cdot \mu^{2+O(\eps)}\cdot \hat D^3\cdot (\log (W\mu))^{O(1/\eps^2)}\right ) \leq O\left (W^{1+O(\eps)}\cdot (\log W)^{O(1/\eps^2)}\right ),$$

since $\hat D=O(W^{\eps})$, $\mu \leq W^{O(1/\log\log W)}$, and $\eps\geq c/\log \log W$.
We denote the corresponding data structure by $\dset(\hat H')$. The corresponding neighborhood cover is denoted by $\hat \cset$, and we denote by $\hat \uset=\set{V(C)\mid C\in \hat \cset}$.
We also denote the algorithm that we have just described, for maintaining the data structure $\dset(\hat H')$, by $\alg_2$.

We are now ready to complete the description of our algorithm for solving the \recdynnc problem on graph $H$. 
Recall that $V(H)\subseteq V(\hat H')$.
The neighborhood cover $\cset$ that we maintain is defined as follows: for every vertex set $S\in \hat \uset$, we let $S'=S\cap V(H)$. We then define $\uset=\set{S'\mid S\in \hat \cset'}$, and we let $\cset$ contain, for each vertex set $S'\in \uset$ the graph $H[S']$. Since we are guaranteed that the collection $\hat \uset$ of vertex subsets of $\hat H'$ only undergoes allowed changes, it is easy to see that so does $\uset$.

Consider any regular vertex $v\in V(H)$ at any point in the algorithm's execution, and vertex set $X=B_H(v,D)$. Recall that, from \Cref{lem: distance preservation}, for every regular vertex $v'\in X$, $\dist_{\hat H}(v,v')\leq 20\cdot \dist_H(v,v')+8D'$, and so $\dist_{\hat H'}(v,v')\leq (20\dist_H(v,v')+8D')/D'\leq \hat D$.
Similarly, if $u\in X$ is a supernode, then there is a regular vertex $v'\in X$, that is a neighbor of $u$, with $\dist_H(v,v')\leq \dist_H(v,u)-\ell_H(v',u)$. 
We then get that $\dist_{\hat H}(v,v')\leq 20\cdot \dist_H(v,v')+8D'$, and $\dist_{\hat H}(v,u)\leq 20\cdot \dist_H(v,v')+9D'+\ell_H(v',u)\leq 20D+9D'$. Therefore, $\dist_{\hat H'}(v,u)\leq (20D+9D')/D'\leq \hat D$.
We conclude that
$X=B_H(v,D)\subseteq B_{\hat H'}(v,\hat D)$. 
Let $\hat C=\coveringcluster(v)$ be the cluster of $\hat \cset$ containing $B_{\hat H'}(v,\hat D')$. Denote $S=V(\hat C)$, and let $S'=S\cap V(H)$ be the corresponding vertex set in $\uset$. Then $B_{H}(v,D)\subseteq S'$. Therefore, if we denote by $C$ the cluster of $\cset$ with $V(C)=S'$, then we can set $\coveringcluster(v)$ in $\cset$ to be $C$. 

Using the data structure $\dset(\hat H')$, it is then immediate to maintain, for every regular vertex $v\in V(H)$, a cluster  $C=\clustercover(v)$ in $\cset$, with $B_H(v,D)\subseteq V(C)$. We can also maintain, for every vertex $x\in V(H)$, a list $\clusterlist(x)\subseteq \cset$ of clusters containing $x$, and for every edge $e\in E(H)$, a list $\clusterlist(e)\subseteq \cset$ of clusters containing $e$, using similar data structures for graph $\hat H'$.

Recall that the algorithm from \Cref{thm: main dynamic NC algorithm}, that we used in order to maintain neighborhood cover $\hat \cset'$ in graph $\hat H'$ ensures that for every regular vertex $v\in V(\hat H')$, the total number of clusters in the neighborhood cover $\hat\cset$ that the algorithm maintains, to which vertex $v$ ever belonged is bounded by $W^{O(1/\log\log W)}$.
Therefore, for every regular vertex $v\in V(H)$, the total number of clusters in the cluster set $\cset$ to which $v$ may ever belong is also bounded by  $W^{O(1/\log\log W)}$.

Lastly, we show an algorithm for responding to queries $\spquery(C,v,v')$, where $C$ is a cluster in $\cset$, and $v,v'\in V(C)$
are regular vertices lying in $C$. Denote $S'=V(C)$, and let $\hat C\in \hat \cset$ be the corresponding cluster (with $S'\subseteq V(\hat C)$). We run query $\spquery(\hat C,v,v')$ in data structure $\dset(\hat H')$, obtaining a path $P$ in graph $\hat H'$, connecting $v$ to $v'$, of length at most $\hat D \cdot \hat \alpha\leq (50D\cdot (\log W)^{2^{\tilde c/\eps}})/D'$ (we have used the fact that $\hat D=50D/D'$ and $\hat \alpha=(\log W)^{2^{\tilde c/\eps}}$). Note that the length of path $P$ in graph $\hat H$ is at most $50D\cdot  (\log W)^{2^{\tilde c/\eps}}\leq D\cdot (\log W)^{2\cdot 2^{\tilde c/\eps}}$. The time required to process query $\spquery(\hat C,v,v')$ in data structure $\dset(\hat H')$ is $O(|E(P)|)$. Lasly, we apply Algorithm  \algtransformpath from \Cref{claim: path transforming} to path $P$ in graph $\hat H$, to obtain a path $P'$ in graph $H$, connecting $v$ to $v'$, whose length is bounded by:

\[
\begin{split}
\alpha_{i-1}\cdot \hat \ell_{\hat H}(v,v')&\leq \alpha_{i-1}\cdot D\cdot(\log W)^{2\cdot 2^{\tilde c/\eps}}\\
&\leq D\cdot (\log W)^{\tilde c (i-1)\cdot 2^{\tilde c/\eps}}\cdot (\log W)^{2\cdot 2^{\tilde c/\eps}}\\
&\leq D\cdot (\log W)^{\tilde c i\cdot 2^{\tilde c/\eps}}\\
&=D\cdot \alpha_i. 
\end{split}
\]

 The running time of the algorithm is $O(|E(P')|)$.


From the above discussion, and the fact that our algorithm supports $\spquery$ queries, it is immediate to verify that, throughout the algorithm, $\cset$ is a weak $(D,\alpha_i\cdot D)$-neighborhood cover of the regular vertices of $H$.

\paragraph{Total Update Time.}
We now bound the total update time of the algorithm. The update time is dominated by the update time of the algorithm from \Cref{claim: maintain hat H}, and the algorithm $\alg_2$. The former has update time $O\left (\tilde c^{i-1}\cdot W^{1+\tilde c\eps}\cdot (\log W)^{\tilde c/\eps^2}\right )$, while the latter has update time $O\left (W^{1+O(\eps)}\cdot (\log W)^{O(1/\eps^2)}\right )$. Since we can assume that $\tilde c$ is a sufficiently large constant, the total update time of the algorithm is bounded by $ \left(\tilde c^{i}\cdot W^{1+\tilde c\eps}\cdot (\log W)^{\tilde c/\eps^2}\right )$.

\section{Proofs Omitted from \Cref{sec: proccut}}
\subsection{Proof of Claim \ref{claim: eligible layer}} \label{subsec: proof of proccut eligible layer}

	Assume otherwise. Then for all $1< i< 128\log^4 W$, layer $L_i$ is ineligible. Since the total weight of all vertices in $B_C\left (x,256D\log^4 W\right )$ is at most $W(C)/2$, Condition \ref{cut condition 1: half the weight} holds for all $1< i< \floor{128\log^4W}$. 
	For an index $1< i< \floor{128\log^4W}$, we say that layer $L_i$ is \emph{type-0 ineligible} iff Condition \ref{cut condition 2: small weight} is violated for it, and we say that it is \emph{type-$j$ ineligible}, for some $1\leq j\leq r$, if Condition \ref{cut condition 3: levels} for the index $j$ is violated for it. The following two observations, whose proofs are standard, bound the number of ineligible layers of each of these types.
	
	\begin{observation}\label{obs: few type 1 ineligible}
		The total number of type-0 ineligible layers is at most $128\log^3W$.
	\end{observation}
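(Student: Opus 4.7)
The plan is to use the standard ball–growing / potential‐doubling argument applied to the cumulative weight $\Sigma_i := \sum_{i' \le i} W_{i'}$. First I would observe that by definition, layer $L_i$ is type-$0$ ineligible precisely when Condition~\ref{cut condition 2: small weight} fails, i.e.\ when
\[
W_i \;>\; \frac{\Sigma_{i-1}}{64\log^2 W},
\]
which rearranges to $\Sigma_i \;>\; \Sigma_{i-1}\bigl(1 + \tfrac{1}{64\log^2 W}\bigr)$. For every other layer one trivially has $\Sigma_i \ge \Sigma_{i-1}$. Hence if within the range $1 < i < 128\log^4 W$ there are $k$ type-$0$ ineligible layers, then writing $i^*$ for the largest such index, one gets the multiplicative bound
\[
\Sigma_{i^*} \;\ge\; \Sigma_{i_0}\cdot \Bigl(1 + \tfrac{1}{64\log^2 W}\Bigr)^{k},
\]
where $i_0$ is the first index with $\Sigma_{i_0} > 0$.

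Next I would bound the two sides. Since all regular vertices carry weight $1$ and all supernodes carry weight $0$, once $\Sigma_{i_0}$ is positive it is at least $1$; on the other hand the cumulative weight never exceeds $W(C) \le W$. Thus
\[
\Bigl(1 + \tfrac{1}{64\log^2 W}\Bigr)^{k} \;\le\; W,
\]
and taking $\log_2$ of both sides together with the elementary inequality $\log_2(1+x) \ge x/(2\ln 2)$ for $x \in [0,1]$ (applied with $x = 1/(64\log^2 W) \le 1$) gives
\[
\frac{k}{128 \ln 2 \cdot \log^2 W} \;\le\; \log W,
\]
from which $k \le 128 \log^3 W$, as required.

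The argument is entirely routine, so I do not anticipate any real obstacles; the only detail worth getting right is the handling of the initial segment of layers whose cumulative weight is still zero. That is dispensed with simply by noting that Condition~\ref{cut condition 2: small weight} is vacuously satisfied whenever $\Sigma_{i-1} = 0$ and $W_i = 0$, and that the first $i_0$ with $\Sigma_{i_0} > 0$ satisfies $\Sigma_{i_0} \ge 1$ because vertex weights are integral. With that starting point in hand the geometric‐growth computation above gives the claimed bound.
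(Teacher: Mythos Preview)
Your proposal is correct and follows essentially the same geometric-growth argument as the paper. The only cosmetic difference is the choice of potential: you track the full cumulative weight $\Sigma_i=\sum_{i'\le i}W_{i'}$, whereas the paper tracks $A_q=\sum_{q'<q}W_{i_{q'}}$, the sum over only the ineligible layers; both satisfy the same multiplicative recurrence and yield the same bound (your handling of the base case via the first index $i_0$ with $\Sigma_{i_0}\ge 1$ is arguably cleaner than the paper's appeal to $W_1\ge 1$, and the possible off-by-one between $k$ and $k-1$ in your exponent is absorbed by the slack in $128\ln 2<128$).
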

\begin{proof}
	Assume otherwise. Denote $y=\ceil{128\log^3W}-2$.  Let $1<i_1<i_2<\cdots<i_y$ be indices of $y$ type-0 ineligible layers. From the definition of a type-1 ineligible layer, for all $1\leq q\leq y$:

	\[W_{i_q}>\frac{\sum_{i'<i_q}W_{i'}}{64\log^2W}\geq \frac{\sum_{q'<q}W_{i_{q'}}}{64\log^2W}.\]
	
	Therefore, if we denote, for all $1\leq q\leq y$, $A_q=\sum_{q'<q}W_{i_{q'}}$, then for all $1<q\leq y$, $A_q\geq \left (1+\frac{1}{64\log^2W}\right )A_{q-1}$ must hold. Since the weight of every regular vertex is at least $1$, and each edge of $H$ has length at most $D$, $W_1\geq 1$ holds, and therefore:
	
	\[A_y\geq \left (1+\frac{1}{64\log ^2W }\right )^{y}>W,\]
	
	since $y=\ceil{128\log^3 W}-2$. This is a contradiction, since $W$ is the total weight of all vertices of $H$ at the beginning of the algorithm, and, as the algorithm progresses, this weight may only decrease.
%
%
%
%
%
%
\end{proof}

	\begin{observation}\label{obs: few type 2 ineligible}
	For all $1\leq j\leq r$, the number of type-$j$ ineligible layers is at most $128\log^3W$.
\end{observation}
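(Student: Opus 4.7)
The plan is to mirror exactly the argument used for Observation \ref{obs: few type 1 ineligible}, with $W_i$ replaced throughout by $\tW(L_i\cap S_{\geq j})$. Specifically, fix $1\le j\le r$ and suppose for contradiction that there are more than $128\log^3 W$ type-$j$ ineligible layers; let $1<i_1<i_2<\cdots<i_y$ index them, where $y=\lceil 128\log^3 W\rceil -2$. By Condition \ref{cut condition 3: levels} applied at each $i_q$,
\[
\tW(L_{i_q}\cap S_{\geq j})\;>\;\frac{\sum_{i'<i_q}\tW(L_{i'}\cap S_{\geq j})}{64\log^2 W}\;\ge\;\frac{\sum_{q'<q}\tW(L_{i_{q'}}\cap S_{\geq j})}{64\log^2 W}.
\]
Setting $A_q=\sum_{q'<q}\tW(L_{i_{q'}}\cap S_{\geq j})$, this gives $A_{q+1}\ge A_q(1+1/(64\log^2 W))$ for every $q\ge 1$.

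Next I would establish the base case $A_2\ge 1$. Since weights of regular vertices are integers and supernodes have weight $0$, the quantity $\tW(L_i\cap S_{\geq j})$ is always a nonnegative integer. At the first type-$j$ ineligible layer $L_{i_1}$, the defining inequality forces $\tW(L_{i_1}\cap S_{\geq j})>0$, hence $\tW(L_{i_1}\cap S_{\geq j})\ge 1$, giving $A_2\ge 1$. Iterating the recursion yields
\[
A_y\;\ge\;\Bigl(1+\frac{1}{64\log^2 W}\Bigr)^{\,y-1}\;>\;W,
\]
because $y-1\ge 128\log^3 W-3$, which makes the exponent comfortably larger than $64\log^2 W\cdot \log_2 W$ to push the product above $W$.

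This contradicts the fact that $A_y\le \tW(S_{\geq j})\le W$, since the total weight of all regular vertices at the beginning of the algorithm is $W$ and may only decrease. Hence the number of type-$j$ ineligible layers is at most $128\log^3 W$, as claimed. The argument is entirely routine once the substitution $W_i\mapsto \tW(L_i\cap S_{\geq j})$ is made; the only mildly delicate point is justifying $A_2\ge 1$ from the integrality of vertex weights, which is why I want to handle that step explicitly rather than just cite the analogue in Observation \ref{obs: few type 1 ineligible}.
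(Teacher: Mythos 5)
Your proof is correct and takes essentially the same route as the paper: assume more than $128\log^3 W$ type-$j$ ineligible layers, let $A_q$ be the running sum of $\tW(L_{i_{q'}}\cap S_{\geq j})$, obtain the multiplicative recursion $A_q > (1+1/(64\log^2 W))A_{q-1}$ from Condition~\ref{cut condition 3: levels}, use $\tW(L_{i_1}\cap S_{\geq j})\geq 1$ (which the paper also asserts, with the same justification) as the base case, and iterate to exceed $W$. The only cosmetic difference is that you spell out the integrality argument for $A_2\geq 1$ more explicitly than the paper does.
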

\begin{proof}
	Assume that the observation is false for some $1\leq j\leq r$. Denote $z=\ceil{128\log^3W}-2$, and  let $1<i_1<i_2<\cdots<i_z$ be indices of $z$ type-$j$ ineligible layers. 
	From the definition of a type-$j$ ineligible layer, for all $1\leq q<z'$:

\[W(S_{\geq j}\cap L_{i_q}) > \frac{\sum_{i'<i_q}W( L_{i'}\cap S_{\geq j})}{64\log^2W }\geq \frac{\sum_{q'=1}^{q-1}W( L_{i_{q'}}\cap S_{\geq j})}{64\log^2W }.\]

	
	Therefore, if we denote, for all $1\leq q\leq y$, $A_q=\sum_{q'<q}W( L_{i_{q'}}\cap S_{\geq j})$, then for all $1<q\leq z$, $A_q\geq \left (1+\frac{1}{64\log^2W}\right )A_{q-1}$ must hold. Moreover, since $i_1$ is a type-$j$ ineligible layer, $W(S_{\geq j}\cap L_{i_1})\geq 1$ must hold. Therefore:
	
	\[A_{z}\geq \left (1+\frac{1}{64\log^2W}\right )^{z-1}>W,\]
	
	since 
	$z=\ceil{128\log^3W}-2$, a contradiction.
\end{proof}

We have shown that for all $0\leq j\leq r$, the total number of type-$j$ ineligible layers is bounded by $128\log^3 W$. Since $r<\log W-2$, we conclude that the total number of ineligible layers $L_i$, for $i>1$, is at most $128\log^4W-2$, a contradiction.

\subsection{Proof of \Cref{claim: cut gives valid neighborhood cover}}\label{subsec: proof of maintain neighborhood cover}
Consider some vertex $y\in V(C)$. Recall that $V(C')=L_1\cup\cdots\cup L_i=B_C(x,2Di)$. We now consider two cases. First, if $y\in B_C(x,2Di-D)$, then clearly $B_C(y,D)\subseteq C'$. Otherwise,  $y\not \in B_C(x,2Di-D)$, and so $B_C(y,D)\cap B_C(x,2Di-2D)=\emptyset$. Since cluster $C''$ contains all vertices of $C$ except those lying in $B_C(x,2Di-2D)=L_1\cup\cdots\cup L_{i-1}$, we get that $B_C(y,D)\subseteq V(C'')$.

\subsection{Proof of \Cref{lem: change in budget after proccut}}\label{subsec: proving bound on budgets}
We start by proving that $\beta'\leq \beta$.
We assume that \proccut chose a layer $L_i$ of $C$ as an eligible layer, and we denote by $C'$ and $C''$ the resulting two clusters. Recall that for every cluster $\hat C$ and for every vertex $y\in \hat C$, the budget of $y$ with respect to $\hat C$ is $\beta_{\hat C}(y)=\left (1+\frac{\log W(\hat C)}{\log^2W}\right )\cdot w(y)$. Therefore, it is easy to verify that for every vertex $y\in V(C'')\setminus L_i$, $\beta_{C''}(y)\leq \beta_C(y)$. We conclude that:

\begin{equation}
\sum_{y\in V(C'')\setminus L_i}\beta_{C''}(y)\leq \sum_{y\in V(C'')\setminus L_i}\beta_{C}(y).\label{eq: first bound}
\end{equation}

Consider now some vertex $y\in V(C')\setminus L_i$. Since, by Condition \ref{cut condition 1: half the weight}, $W(C')\leq W(C)/2$, we get that $\log( W(C'))\leq \log (W(C))-1$. Therefore:

\[\beta_{C'}(y)=  \left (1+\frac{\log (W(C'))}{\log^2W}\right )\cdot w(y) \leq \left (1+\frac{\log (W(C))}{\log^2W}\right )\cdot w(y)-\frac{w(y)}{\log^2W}. \]

We conclude that:

\begin{equation}
\sum_{y\in V(C')\setminus L_i}\beta_{C'}(y)\leq \sum_{y\in V(C')\setminus L_i}\beta_{C}(y)-\frac{W(V(C')\setminus L_i)}{\log^2W}.\label{eq: second bound}
\end{equation}

Consider now some vertex $y\in L_i$. Before $\proccut$ was executed, the copy of vertex $y$ in cluster $C$ contributed $\beta_C(y)$ to the total budget $\beta$. At the end of the procedure, we have created two copies of $y$, one of which lies in $C'$ and another in $C''$. Note that $\beta_{C''}(y)\leq \beta_C(y)$, while $\beta_{C'}(y)\leq 2w(y)$. Therefore, the contribution of these two copies of $y$ to the total budget $\beta'$ is bounded by:

\[ \beta_{C'}(y)+\beta_{C''}(y)\leq 2w(y)+\beta_{C}(y). \]

The total contribution of the vertices of $L_i$ to the new budget $\beta'$ is then bounded by:

\[ 2W(L_i)+\sum_{y\in L_i}\beta_C(y).\]

From Condition \ref{cut condition 2: small weight}, $W(L_i)\leq W(L_1\cup\cdots\cup L_{i-1})/(64\log^2W)=W(V(C')\setminus L_i)/(64\log^2W)$. Therefore, we get that:
\begin{equation}
\sum_{y\in L_i}(\beta_{C'}(y)+\beta_{C''}(y))\leq \frac{W(V(C')\setminus L_i)}{32\log^2W}+\sum_{y\in L_i}\beta_C(y).\label{eq: bound 3}  
\end{equation}

Lastly, by summing up bounds from Equations \ref{eq: first bound}, \ref{eq: second bound} and \ref{eq: bound 3}, we get that:

\[\sum_{y\in V(C')}\beta_{C'}(y)+\sum_{y\in V(C'')}\beta_{C''}(y)\leq  \sum_{y\in V(C)}\beta_C(y)- \frac{W(V(C')\setminus L_i)}{\log^2W}+\frac{W(V(C')\setminus L_i)}{32\log^2W}\leq  \sum_{y\in V(C)}\beta_C(y).\]

Note that the only values $\beta_{\hat C}(y)$ that change are for cluster $\hat C=C$ and vertices $y\in V(C)$. Therefore, we conclude that $\beta'\leq \beta$.

Next, we fix an index $1\leq j\leq r$, and we prove that $\beta'_{\geq j}\leq \beta_{\geq j}+2^{j}(1+1/\log W)W'_j$. In order to do so, we think of the procedure $\proccut$ as consisting of two steps. In the first step, we compute the two clusters $C'$ and $C''$ as before, but we do not update yet the classes of the vertices. That is, if, for some vertex $y$, a new copy of $y$ was introduced due to the procedure, and $y$ needs to move from class $S_a$ to class $S_{a+1}$, we do not move vertex $y$ yet. The total budget $\beta'_{\geq j}$ of the vertices lying in classes $S_{\geq j}$ at the end of this step is calculated with respect to the original classes. We will show that at the end of the first step, $\beta'_{\geq j}\leq \beta_{\geq j}$ holds. Then in the second step we appropriately update the class of each vertex. If $\hat S_j$ is the set of all vertices that are added to class $S_j$ during the second step, then each such vertex $y\in \hat S_j$ has $n_y=2^j$, and, for every cluster $\hat C$ containing $y$, $\beta_{\hat C}(y)=\left (1+\frac{\log W(\hat C)}{\log^2W}\right )\cdot w(y)\leq \left (1+\frac{1}{\log W}\right )\cdot w(y)$. Therefore, the increase in $\beta'_{\geq j}$ due to the second step is bounded by $2^j\cdot (1+1/\log W)\cdot W(\hat S_j)=2^j\cdot (1+1/\log W)W'_j$. In order to complete the proof of the lemma, it is enough to show that, at the end of the first step, $\beta_{\geq j}'\leq \beta_{\geq j'}$ holds. The proof is very similar to the proof of the first part of the lemma. Recall that the layer $L_i$ that was chosen by the procedure is an eligible layer. Therefore, 
$W(S_{\geq j}\cap L_i) \leq \left(\sum_{i'<i}W( L_{i'}\cap S_{\geq j})\right )/\left (64\log^2W\right )$ must hold.

As before, for every vertex $y\in V(C'')\setminus L_i$, $\beta_{C''}(y)\leq \beta_C(y)$ must hold, and so:

\begin{equation}
\sum_{y\in (V(C'')\cap S_{\geq j})\setminus L_i}\beta_{C''}(y)\leq \sum_{y\in (V(C'')\cap S_{\geq j}) \setminus L_i}\beta_{C}(y).\label{eq: first bound 2}
\end{equation}

Consider now some vertex $y\in V(C')\setminus L_i$. Since, by Condition \ref{cut condition 1: half the weight}, $W(C')\leq W(C)/2$, we get that $\log( W(C'))\leq \log( W(C))-1$. Therefore, as before:

\[\beta_{C'}(y)\leq  \left (1+\frac{\log W(C')}{\log^2W}\right )\cdot w(y) \leq \left (1+\frac{\log W(C)}{\log^2W}\right )\cdot w(y)-\frac{w(y)}{\log^2W}. \]

We conclude that:

\begin{equation}
\sum_{y\in (V(C')\cap S_{\geq j})\setminus L_i}\beta_{C'}(y)\leq \sum_{y\in (V(C')\cap S_{\geq j})\setminus L_i}\beta_{C}(y)-\frac{W(V(C'\cap S_{\geq j})\setminus L_i)}{\log^2W}.\label{eq: second bound 2}
\end{equation}

Consider now some vertex $y\in L_i\cap S_{\geq j}$. Before $\proccut$ was executed, the copy of vertex $y$ in cluster $C$ contributed $\beta_C(y)$ to the budget $\beta_{\geq j}$. At the end of the procedure, we have created two copies of $y$, one of which lies in $C'$ and another in $C''$. Exactly as before, the contribution of these two copies of $y$ to the total budget $\beta'$ is bounded by:

\[ \beta_{C'}(y)+\beta_{C''}(y)\leq 2w(y)+\beta_{C}(y). \]

Therefore, the total contribution of the vertices of $L_i\cap S_{\geq j}$ to the new budget $\beta'_{\geq j}$ is bounded by:

\[ 2W(L_i\cap S_{\geq j})+\sum_{y\in L_i\cap S_{\geq j}}\beta_C(y).\]

From Condition \ref{cut condition 3: levels},  
$W(L_i\cap S_{\geq j})\leq \left(\sum_{i'<i}W( L_{i'}\cap S_{\geq j})\right )/\left (64\log^2W\right )=W((C'\cap S_{\geq j})\setminus L_i)/\left ( 64\log^2W\right ) $.

Therefore, we get that:
\begin{equation}
\sum_{y\in L_i\cap S_{\geq j}}(\beta_{C'}(y)+\beta_{C''}(y))\leq W((C'\cap S_{\geq j})\setminus L_i)/\left (32\log^2W\right )+\sum_{y\in L_i\cap S_{\geq j}}\beta_C(y).\label{eq: bound 3 2}  
\end{equation}

Lastly, by summing up bounds from Equations \ref{eq: first bound 2}, \ref{eq: second bound 2} and \ref{eq: bound 3 2}, we get that:

\[
\begin{split}
\sum_{y\in V(C')\cap S_{\geq j}}\beta_{C'}(y)+&\sum_{y\in V(C'')\cap S_{\geq j}}\beta_{C''}(y)\\ & \leq  \sum_{y\in V(C)\cap S_{\geq j}}\beta_C(y)- \frac{W((V(C')\cap S_{\geq j})\setminus L_i)}{\log^2W}+\frac{W((V(C') \cap S_{\geq j})\setminus L_i)}{32\log^2W}\\ &\leq  \sum_{y\in V(C)\cap S_{\geq j}}\beta_C(y).
\end{split}
\]

We conclude that, at the end of the first step, $\beta'_{\geq j}\leq \beta_{\geq j}$ holds.

\section{Proofs Omitted from \Cref{sec: pseudocuts}}
\subsection{Proof of \Cref{lem: smaller pseudocut or expander}}
\label{subsec: pseudocut or expander}

We start with the following lemma, whose proof uses standard techniques and is deferred to the following subsection.
\begin{lemma}\label{lem: expander or cut}
	There is a constant $c\geq 1$ and a deterministic algorithm, that, given a graph $G$ with integral lengths $\ell(e)\geq 1$ on its edges $e\in E(G)$, a subset $S$ of its vertices, where $|S|=k$ is an even integer, and parameters $\eta, D'\geq 1$ and $\eps \geq \Omega(1/\log k)$, computes one of the following:
	
	\begin{itemize}
		\item either a graph $X$ with $V(X)\subseteq S$, and $|V(X)|\geq k/2$, such that the maximum vertex degree in $X$ is at most $O(\log k)$, and $X$ is a $\phi$-expander, for $\phi=1/(\log k )^{c/\eps}$, together with an embedding $\pset$ of $X$ into $G$, with congestion at most $\eta$, such that every path in $\pset$ has length at most $D'$; or
	
		\item a subset $E'$ of at most $\frac{ckD'\log^2k}{\eta}$ edges of $G$, and two disjoint subsets $S_1,S_2\subseteq S$ of vertices of cardinality at least $k/(\log k)^{c/\eps}$ each, such that, in graph $G\setminus E'$, $\dist(S_1,S_2)> D'$.
\end{itemize}
The running time of the algorithm is $\otilde\left (k^{1+\eps}(\log k)^{O(1/\eps^2)}\right )+\otilde \left (|E(G)|D'\eta\right )$.
\end{lemma}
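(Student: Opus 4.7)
The plan is to implement a cut-matching game in the spirit of Khandekar--Rao--Vazirani, with a deterministic cut player and a matching player based on length- and capacity-bounded maximum flow computed via generalized ES-trees. I would initialize $X$ as the empty graph on vertex set $S$, fix a target of $T=\Theta(\log k)$ rounds, and in each round $i$ invoke the cut player to produce a bisection $(A_i,B_i)$ of $V(X)$, then invoke the matching player to seek a fractional matching of value at least $|A_i|\bigl(1-1/(\log k)^{c/\eps}\bigr)$ between $A_i$ and $B_i$ realized by paths of length at most $D'$ in $G$, with per-edge congestion at most $\eta$. Concretely, the matching player augments a super-source/super-sink flow on $G$ (with source $s^*$ joined to $A_i$, sink $t^*$ joined to $B_i$, and capacities $\eta$ on edges of $G$) using shortest augmenting paths extracted from a generalized \EST of depth $D'+2$ rooted at $s^*$ (\Cref{thm: ES-tree}), rounding the resulting integral matching and adding it as the edges $M_i$ of $X$. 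After $T$ successful rounds, the cut player's analysis guarantees that $X$ is a $1/(\log k)^{c/\eps}$-expander of maximum degree $O(\log k)$, and the union of the augmenting paths used to realize $M_1,\dots,M_T$ serves as the embedding $\pset$ with congestion $\leq \eta$ and length $\leq D'$ per edge.

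If in some round the maximum flow falls below $|A_i|\bigl(1-1/(\log k)^{c/\eps}\bigr)$, then the min-cut in the length-bounded capacitated auxiliary graph gives a set $E'\subseteq E(G)$ with $|E'|\leq ckD'\log^2 k/\eta$ whose deletion leaves many vertices of $A_i$ at distance $>D'$ from many vertices of $B_i$; from this I would extract the two large subsets $S_1\subseteq A_i\subseteq S$ and $S_2\subseteq B_i\subseteq S$ satisfying the second alternative. For the cut player, I would use a deterministic implementation based on expander decomposition (in the spirit of Chuzhoy--Saranurak), which per round costs $\otilde\bigl(k^{1+\eps}(\log k)^{O(1/\eps^2)}\bigr)/T$ and, when fed matchings of size $|A_i|\bigl(1-1/(\log k)^{c/\eps}\bigr)$, drives $X$ to the claimed expansion in $T$ rounds. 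The per-round matching-player cost is $\otilde(|E(G)|D')$ for maintaining the \EST together with $\otilde(kD'\eta/\log k)$ for routing the augmenting paths, summing to $\otilde(|E(G)|D'\eta)$ across all $T$ rounds and yielding the stated total running time.

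The main obstacle I expect is extracting \emph{two} large subsets $S_1,S_2\subseteq S$ from a failed routing attempt with the precise quantitative guarantees demanded by the lemma: a vanilla min-cut argument on the capacitated length-bounded graph produces only a ``source side'' of unmatched mass directly, while the matching symmetric statement requires also identifying a comparably large sink-side set separated from it in $G\setminus E'$. I plan to handle this by a ball-growing argument on $G\setminus E'$ in the style of Procedure \proccut: starting from the certificate of low-flow on the $A$-side, grow balls in the length-bounded metric until either a large set $S_1$ of radius $\leq D'/2$ is isolated with small boundary, or the complementary procedure on the $B$-side produces $S_2$; boundary edges accumulated along the way are charged into $E'$ while preserving the bound $|E'|=O(kD'\log^2 k/\eta)$. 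Balancing the failure threshold of the matching player against the ball-growing radius is the delicate step, and the factor $(\log k)^{c/\eps}$ in both the size of $S_1,S_2$ and the expansion of $X$ reflects exactly the budget we lose in this tradeoff, so the two alternatives plug together consistently with the cut-matching convergence.
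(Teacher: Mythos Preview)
Your overall approach—cut-matching game with a deterministic cut player and an ES-tree--based matching player—is exactly what the paper does, and the running-time accounting is essentially right. But you have the difficulties reversed.

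The obstacle you flag is not one. When the greedy augmenting-path routine terminates because $\dist_H(s^*,t^*)>D'+2$, you already have both unmatched sets: let $A''\subseteq A_i$ and $B''\subseteq B_i$ be the vertices not used by any routed path. Each has size at least $|A_i|-|M_i|$, and $\dist_{G\setminus E'}(A'',B'')>D'$ holds by construction, since any short $A''$--$B''$ path in $G\setminus E'$ would extend to a short $s^*$--$t^*$ path in $H$, contradicting termination. No ball-growing is needed; this is precisely how the paper's matching player (\Cref{thm: matching player}, via \Cref{claim: paths or pseudocut}) produces the second alternative.

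The real gap is that you allow each $M_i$ to be partial (size $\geq|A_i|(1-1/(\log k)^{c/\eps})$) and then invoke the cut-player analysis to conclude $X$ is an expander. The KKOV analysis (\Cref{thm: CMG}) requires \emph{perfect} matchings in every round; with partial matchings some vertices may never acquire any edges and $X$ need not be an expander on $S$. The paper's fix is to complete each partial $M_i'$ with arbitrary \emph{fake edges} $F_i$ on the unmatched vertices, run the cut-matching game on the perfect matchings $M_i=M_i'\cup F_i$ to obtain a $\hat\phi$-expander on all $k$ vertices, and then in a second stage apply expander pruning (\Cref{thm: expander pruning}) to delete the fake edges $F=\bigcup_i F_i$. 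The threshold $z$ is chosen so that $|F|\leq \hat c z\log k$ is well below $\hat\phi|E(X)|/\Delta$, so pruning removes at most half the vertices, yielding $|V(X)|\geq k/2$. Without this embed-with-fake-edges-then-prune structure the argument does not close. A minor related point: to keep total congestion at most $\eta$ over all $\Theta(\log k)$ rounds, the per-round congestion budget must be $\eta/(\hat c\log k)$, not $\eta$.
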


We prove \Cref{lem: expander or cut} in \Cref{subsec: prove auxiliary lemma}, after we complete the proof of  \Cref{lem: smaller pseudocut or expander} using it.
We start with the following corollary of \Cref{lem: expander or cut}.

\begin{corollary}\label{cor: expander or terminal partition}
	There is a constant $c'\geq 1$, and a deterministic algorithm, that, given a graph $G$ with non-negative lengths $\ell(e)$ on its edges $e\in E(G)$, a subset $S$ of its vertices, where $|S|\leq k$ for some parameter $k$, and $|S|$ is an even integer, together with parameters $\eta, D''\geq 1$ and $\eps \geq \Omega(1/\log k)$, computes one of the following:
	
	\begin{itemize}
		\item either a graph $X$ with $V(X)\subseteq S$, and $|V(X)|\geq |S|/2$, such that the maximum vertex degree in $X$ is at most $O(\log k)$, and $X$ is a $\phi$-expander, for $\phi=1/(\log k )^{c'/\eps}$, together with an embedding $\pset$ of $X$ into $G$, with congestion at most $\eta$, such that every path in $\pset$ has length at most $64D''\log^2k$; or
		
		\item a subset $E'$ of at most $\frac{c' |S| D''\log^4k}{\eta}$ edges of $G$, and a partition $(S'_0,S'_1,S'_2)$ of $S$ into disjoint subsets, such that $|S'_1|,|S'_2|\geq |S|/(\log k)^{c'/\eps}$, $|S'_0|\leq \frac{\min\set {|S'_1|,|S'_2|}}{8\log k}$, and, in graph $G\setminus E'$, $\dist(S'_1,S'_2)> D''$.
	\end{itemize}
	The running time of the algorithm is $\otilde\left (k^{1+\eps}(\log k)^{O(1/\eps^2)}\right )+\otilde \left (|E(G)|D''\eta\right )$.
\end{corollary}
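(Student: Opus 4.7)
\textbf{Proof plan for \Cref{cor: expander or terminal partition}.} The plan is to invoke \Cref{lem: expander or cut} once, and then, in the cut case, use a one-sided ball-growing argument in $G\setminus E'$ to convert the pair $(S_1,S_2)$ into a true partition $(S_0',S_1',S_2')$ of $S$ with a thin ``buffer'' $S_0'$. If $|S|$ is odd, I would move one vertex into $S_0'$ up front so the lemma applies; this changes no bound by more than a factor $2$. I would apply the lemma with the given $\eta$, the given $\eps$, and distance parameter $D':=64D''\log^2 k$, the same bound the corollary requires on the expander-embedding path lengths. If the lemma returns the expander $X$ and embedding $\pset$, there is nothing to do: $|V(X)|\ge |S|/2$, max degree $O(\log k)$, expansion $1/(\log k)^{c/\eps}$, congestion $\le\eta$, and path lengths $\le D'=64D''\log^2k$, matching the corollary with $c'=c$.

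In the cut case the lemma delivers $E'\subseteq E(G)$ with $|E'|\le ckD'\log^2k/\eta\le 64c\,|S|\,D''\log^4k/\eta$ and disjoint $S_1,S_2\subseteq S$ with $|S_1|,|S_2|\ge |S|/(\log k)^{c/\eps}$ and $\dist_{G\setminus E'}(S_1,S_2)>D'$. Assume WLOG $|S_1|\le|S_2|$. I would grow a ball around $S_1$: for $r\ge 0$, let $B_r:=\{v\in S : \dist_{G\setminus E'}(v,S_1)\le r\}$, and look for a ``good layer'' index $j$, namely an integer $j$ with
\[
|B_{(j+1)D''}\setminus B_{jD''}|\le \tfrac{1}{16\log k}\,|B_{jD''}|,
\]
inside the stopping range $0\le jD''\le \min(D'-D'',R^\dagger)$, where $R^\dagger$ is the first radius at which $|B_r|>|S|/2$. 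If no such $j$ exists in this range, then $|B_{jD''}|$ multiplies by at least $1+1/(16\log k)$ at every step, and iterating from $|B_0|=|S_1|\ge|S|/(\log k)^{c/\eps}$ would force the ball to exceed $|S|$ after $O(\log k\cdot(\log\log k)/\eps)$ steps. Since $D'/D''=64\log^2k$ is much larger than this bound in the regime $\eps\ge\Omega(1/\log k)$ (choosing the constant in the ``$\Omega$'' appropriately and absorbing slack into the eventual $c'$), a good $j^*$ must exist. I would then set $r^*:=j^*D''$, $S_1':=B_{r^*}$, $S_0':=B_{r^*+D''}\setminus B_{r^*}$, and $S_2':=S\setminus B_{r^*+D''}$.

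Next, I would verify the four requirements. The inclusion $S_1\subseteq S_1'$ gives $|S_1'|\ge|S|/(\log k)^{c/\eps}$; since $r^*+D''\le D'<\dist_{G\setminus E'}(S_1,S_2)$, we have $S_2\subseteq S_2'$, so $|S_2'|\ge|S|/(\log k)^{c/\eps}$ as well. For the distance property, any $v\in S_1'$ satisfies $\dist_{G\setminus E'}(v,S_1)\le r^*$ while any $v'\in S_2'$ satisfies $\dist_{G\setminus E'}(v',S_1)>r^*+D''$, so the triangle inequality gives $\dist_{G\setminus E'}(v,v')>D''$. The good-layer condition gives $|S_0'|\le|S_1'|/(16\log k)$, which is the bound against $|S_1'|$. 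The delicate part is the symmetric bound against $|S_2'|$: because the stopping rule forces $|S_1'|=|B_{r^*}|\le|S|/2$, I get $|S_2'|\ge|S|-|S_1'|-|S_0'|\ge|S|/2-|S_1'|/(16\log k)\ge |S_1'|/2$, and then $|S_0'|\le|S_1'|/(16\log k)\le|S_2'|/(8\log k)$, matching the requirement with $c'$ possibly a slightly larger constant than $c$. The running time is dominated by the call to \Cref{lem: expander or cut}, since ball growing is a single truncated Dijkstra computation of cost $\tilde O(|E(G)|)$.

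The main obstacle I anticipate is the symmetric buffer bound $|S_0'|\le\min(|S_1'|,|S_2'|)/(8\log k)$: naive one-sided ball growing only yields a bound relative to the grown side. Handling it requires the stopping rule $|B_{r^*}|\le|S|/2$ together with the initial assumption $|S_1|\le|S_2|$, so that the ungrown side $S_2'$ still carries at least half of $|S|$; the tightening of the ball-growing threshold from $1/(8\log k)$ to $1/(16\log k)$ is what buys the slack to conclude in both directions simultaneously. The only other place that needs care is the arithmetic $64\log^2k\gg \log k\,\log\log k/\eps$, which is where the assumption $\eps\ge\Omega(1/\log k)$ is used to guarantee that a good layer exists within the allowed range of radii.
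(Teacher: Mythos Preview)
Your overall plan matches the paper's: invoke \Cref{lem: expander or cut} with $D'=64D''\log^2k$, and in the cut case do ball growing to upgrade $(S_1,S_2)$ to a full partition with a thin buffer. The gap is in the existence of a good layer under your one-sided scheme. Growing only from $S_1$ does not guarantee a thin layer $j^*$ with $|B_{j^*D''}|\le|S|/2$: if, say, $|S_1|<|S_2|$ with $|S_1|\approx|S_2|\approx k/4$ and the remaining $\approx k/2$ vertices of $S$ all lie within distance $D''/2$ of $S_1$ (nothing in the lemma rules this out), then $|B_0|\approx k/4$ while $|B_{D''}|\approx 3k/4$, so the only $j$ with $|B_{jD''}|\le|S|/2$ is $j=0$, and it is not thin. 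Your multiplicative-growth contradiction only fires when the ball stays $\le|S|/2$ for all $\sim 64\log^2k$ layers; in the case $R^\dagger<D'-D''$ there is no contradiction, since the ball crossing $|S|/2$ at $R^\dagger$ is precisely what occurs. (Separately, the arithmetic $64\log^2k\gg \log k\,\log\log k/\eps$ you invoke is false at $\eps=\Theta(1/\log k)$, but that turns out to be a red herring: once the ball is known to stay $\le |S|/2$ throughout, the growth bound you actually need is only from $|S_1|\ge 1$ to $|S|\le k$, which takes $O(\log^2 k)$ steps.)

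The paper's fix is to grow from both $S_1$ and $S_2$ in parallel. Since $\dist_{G\setminus E'}(S_1,S_2)>D'$, the balls of radius $D'/2$ around $S_1$ and around $S_2$ are disjoint, so at least one of them contains at most $|S|/2$ points of $S$. Grow from that side: the ball then stays $\le|S|/2$ for $\sim 16\log^2k$ layers, and the standard growth argument yields an eligible (thin) layer within that range. With this change, the chosen side automatically satisfies $|S_1'|\le|S|/2\le|S_2'|$, so your buffer verification goes through directly---in fact the threshold $1/(8\log k)$ already suffices, with no need to tighten to $1/(16\log k)$.
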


\begin{proof}
	We assume that $c'$ is a large enough constant, and in particular $c'>c$, where $c$ is the constant from the statement of \Cref{lem: expander or cut}. We set $D'= 64 D''\log^2k$, and apply the algorithm from \Cref{lem: expander or cut} to graph $G$, with the new parameter $D'$, and parameters $\eta,\eps$ that remain unchanged. We now consider two cases. In the first case, the outcome of the algorithm from \Cref{lem: expander or cut} is a graph $X$ with $V(X)\subseteq S$, and $|V(X)|\geq |S|/2$, such that the maximum vertex degree in $X$ is at most $O(\log |S|)\leq O(\log k)$, and $X$ is a $\phi$-expander, for $\phi=1/(\log |S| )^{c/\eps}\geq 1/(\log k)^{c'/\eps}$, together with an embedding $\pset$ of $X$ into $G$, with congestion at most $\eta$, such that every path in $\pset$ has length at most $D'= 64 D''\log^2k$. In this case, we terminate the algorithm, and return the graph $X$ and its embedding $\pset$ as the algorithm's outcome.
	
	From now on we assume that the second case happened, and the algorithm from \Cref{lem: expander or cut} returned a subset $E'$ of at most $\frac{c |S| D'\log^2k}{\eta}\leq \frac{c'|S|D''\log^4k}{\eta}$ edges of $G$, and two disjoint subsets $S_1,S_2\subseteq S$ of vertices of cardinality at least $|S|/(\log |S|)^{c/\eps}\geq |S|/(\log k)^{c'/\eps}$ each, such that, in graph $G\setminus E'$, $\dist(S_1,S_2)> D'$.
	
	In the remainder of the algorithm, we will compute the desired partition $(S'_0,S'_1,S'_2)$ by using standard ball-growing technique (somewhat similar to the one employed in Procedure \proccut from \Cref{subsec: proccut definition}).
	
	We let $L_0^1=S_1$.
	For $i\geq 1$, we define the $i$th layer for $S_1$ as $L_i^1=B_{G\setminus E'}(S_1,2iD'')\setminus B_{G\setminus E'}(S_1,2(i-1)D'')$. We denote by $K^1_i=|S\cap L^1_i|$, and by $J^1_i=K^1_0+\cdots+K^1_i$, where $K^1_0=|S_1|$.
	
	For $i\geq 1$, we say that layer $L_i^1$ is \emph{eligible} iff:
	
	\begin{itemize}
		\item $K^1_i\leq J^1_{i-1}/(8\log k)$; and
		\item $J^1_i\leq |S|/2$.
	\end{itemize}
	
	Similarly, we let $L_0^2=S_2$, and,	for $i\geq 1$, we define the $i$th layer for $S_2$ as $L_i^2=B_{G\setminus E'}(S_2,2iD'')\setminus B_{G\setminus E'}(S_2,2(i-1)D'')$. We denote by $K^2_i=|S\cap L^2_i|$, and by $J^2_i=K^2_0+\cdots+K^2_i$, where $K^2_0=|S_2|$. As before, for $i\geq 1$, we say that layer $L_i^2$ is \emph{eligible} iff:
	
	\begin{itemize}
		\item $K^2_i\leq J^2_{i-1}/(8\log k)$; and
		\item $J^2_i\leq |S|/2$.
	\end{itemize}

We need the following simple observation.

\begin{observation}\label{obs: eligible layer}
	There is an integer $1\leq i\leq 8\log^2k$, such that either $L_i^1$ or $L_i^2$ is an eligible layer.
\end{observation}
\begin{proof}
	Recall that $\dist_{G\setminus E'}(S_1,S_2)> D'$, and, since  $D'= 64 D''\log^2k$, the vertex sets  $B_{G\setminus E'}(S_1,32D''\log^2k)$ and $B_{G\setminus E'}(S_2,32D''\log^2k)$ are disjoint. Let $i'=\floor{16\log^2k}$. Then either $J^1_{i'}\leq |S|/2$ or $J^2_{i'}\leq |S|/2$ must hold. Assume without loss of generality that it is the former. We now prove that there must be some integer $1\leq i\leq i'$, such that layer $L_i^1$ is eligible.
	
	Assume otherwise. Note that for all $1\leq i\leq i'$, $J^1_i\leq |S|/2$ must hold. Since we have assumed that layer $L_i^1$ is ineligible, $K^1_i> J^1_{i-1}/(8\log k)$ must hold, and so $J^1_i\geq J^1_{i-1}(1+1/(8\log k))$. But then we get that: 
	$$J^1_{i'}\geq |S_1|\cdot \left (1+\frac 1 {8\log k}\right )^{i'}\geq |S_1|\cdot  \left (1+\frac 1 {8\log k}\right )^{\floor{16\log^2k}}\geq k,$$ a contradiction.
\end{proof}
	
We run two BFS searches in graph $G\setminus E'$ in parallel, one starting from $S_1$, and another starting from $S_2$, so that at any time point, both searches discover the same number of edges, until we encounter the first index $1\leq i\leq 8\log^2k$, such that either $L_i^1$ or $L_i^2$ is an eligible layer. Assume without loss of generality that $L_i^1$ is an eligible layer. We then let $S_0'$ contain all vertices of $S\cap L_i^1$, $S_1'$ contain all vertices of $S$ lying in $B_{G\setminus E'}(S_1,2(i-1)D'')$, and $S_2'$ contain all remaining vertices of $S$. Clearly, $|S_1'|\leq |S_2'|$, and, from the definition of an eligible layer, $|S_0'|\leq |S_1'|/(8\log k)$. Moreover, $S_1\subseteq S_1'$, so $|S_1'|\geq |S|/(\log k)^{c'/\eps}$, as required.
Since $S_1'\subseteq B_{G\setminus E'}(S_1,2(i-1)D'')$, while $S_2'\cap B_{G\setminus E'}(S_1,2iD'')=\emptyset$, we get that $\dist_{G\setminus E'}(S_1',S_2')> D''$. It now remains to analyze the running time of the algorithm. The running time for the algorithm from \Cref{lem: expander or cut}  is  $\otilde\left (|S|^{1+\eps}(\log k)^{O(1/\eps^2)}\right )+\otilde \left (|E(G)|D'\eta\right )= \otilde\left (|S|^{1+\eps}(\log k)^{O(1/\eps^2)}\right )+\otilde \left (|E(G)|D''\eta\right )$. The additional running time required to compute an eligible layer and the vertex sets $S_0',S_1',S_2'$ is bounded by $O(|E(G)|)$. therefore, the total running time of the algorithm is $\otilde\left (|S|^{1+\eps}(\log k)^{O(1/\eps^2)}\right )+\otilde \left (|E(G)|D''\eta\right )$.
\end{proof}

We need the following, somewhat stronger corollary, that allows us to either compute an expander $X$ and its embedding into $G$ as before, or to compute a number of subsets of the set $S$ that are far from each other in $G\setminus E'$, but now we additionally guarantee that the cardinality of each subset is significantly smaller than the cardinality of the initial set $S$; as before, we also compute a small subset $S_0$ of vertices of $S$ that we will discard.



\begin{corollary}\label{cor: expander or terminal partition2}
	There is a constant $c''\geq 1$, and a deterministic algorithm, that, given a graph $G$ with non-negative lengths $\ell(e)$ on its edges $e\in E(G)$, a subset $S$ of its vertices, where $|S|\leq k$ for some parameter $k$, together with parameters $\eta, D''\geq 1$ and $\eps \geq \Omega(1/\log k)$, such that $|S|\geq 40(\log k)^{c''/\eps}$ holds, computes one of the following:
	
	\begin{itemize}
		\item either a graph $X$ with $V(X)\subseteq S$, and $|V(X)|\geq |S|/4$, such that the maximum vertex degree in $X$ is at most $O(\log k)$, and $X$ is a $\phi$-expander, for $\phi=1/(\log k )^{c''/\eps}$, together with an embedding $\pset$ of $X$ into $G$, with congestion at most $\eta$, such that every path in $\pset$ has length at most $64D''\log^2k$; or
		
		\item a subset $E'$ of at most $\frac{|S| D''(\log k)^{c''/\eps}}{\eta}$ edges of $G$, and a partition $(S''_0,S''_1,S''_2,\ldots,S''_q)$ of $S$ into disjoint subsets, such that, for all $1\leq i\leq q$, $|S|/(\log k)^{c''/\eps}\leq |S''_i|\leq |S|/2$, $|S''_0|\leq |S|/(4\log k)$, and, in graph $G\setminus E'$, for all $1\leq i<i'\leq q$, $\dist(S''_i,S''_{i'})> D''$.
	\end{itemize}
	The running time of the algorithm is $\otilde\left (\left (k^{1+\eps}(\log k)^{O(1/\eps^2)}+|E(G)|D''\eta\right )\cdot (\log k)^{O(1/\eps)} \right )$.
\end{corollary}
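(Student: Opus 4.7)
}

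The plan is to obtain the desired partition by iteratively invoking Corollary \ref{cor: expander or terminal partition} on any subset of $S$ whose size is still larger than $|S|/2$, collecting all ``discarded'' middle pieces $S'_0$ into a single set $S''_0$ and accumulating the returned edge sets into $E'$. Concretely, I would maintain a current collection $\mathcal{T}$ of disjoint subsets of $S$ (initially $\mathcal{T}=\{S\}$), a discard set $S''_0$ (initially empty), and an edge set $E'$ (initially empty). As long as there exists $T\in \mathcal{T}$ with $|T|>|S|/2$, I would apply Corollary \ref{cor: expander or terminal partition} to $T$ (after possibly moving one vertex of $T$ into $S''_0$ to make $|T|$ even) with the same parameters $\eta,D'',\eps$. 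If the algorithm returns an expander $X$ on a subset of $T$ of size at least $|T|/2\ge |S|/4$, I halt and return $X$ with its embedding, which already satisfies all the required bounds (expansion, congestion, maximum degree, and path length), provided $c''\ge c'$. Otherwise, I replace $T$ in $\mathcal{T}$ by the two returned pieces $S'_1,S'_2$, add $S'_0$ to $S''_0$, and append the returned edges to $E'$. When the loop terminates, every set in $\mathcal{T}$ has size at most $|S|/2$ and, by construction of Corollary \ref{cor: expander or terminal partition}, at least $|T|/(\log k)^{c'/\eps}\ge |S|/(2(\log k)^{c'/\eps})\ge |S|/(\log k)^{c''/\eps}$. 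The required pairwise distance property $\dist_{G\setminus E'}(S''_i,S''_{i'})>D''$ holds: once a pair of sets is ``separated'' by the edges returned in some iteration, the inclusion $S''_i\subseteq S'_1$ (or $S'_2$) of a later iteration can only refine the separation, and the $E'$ we return is the union of all the edge sets returned by Corollary \ref{cor: expander or terminal partition} across iterations.

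The key observation that controls the iteration count is that at any moment there is at most one ``large'' set in $\mathcal{T}$, since $|T|\le |S|$ and the two children of any split sum to at most $|T|\le |S|$, so they cannot both exceed $|S|/2$. Thus the process traces a single chain $S=T^{(0)}\supseteq T^{(1)}\supseteq\cdots\supseteq T^{(L)}$, where $T^{(i+1)}$ is the larger child of the split of $T^{(i)}$. Because the smaller child has size at least $|T^{(i)}|/(\log k)^{c'/\eps}$, we obtain $|T^{(i+1)}|\le |T^{(i)}|\bigl(1-1/(\log k)^{c'/\eps}\bigr)$, which forces $L=O((\log k)^{c'/\eps})$ iterations before the chain shrinks below $|S|/2$. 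This directly yields the running time bound, since each iteration costs $\tilde O(k^{1+\eps}(\log k)^{O(1/\eps^2)}+|E(G)|D''\eta)$, and the bound $|E'|\le |S|D''(\log k)^{c''/\eps}/\eta$ follows by multiplying the per-iteration edge bound $c'|T|D''\log^4 k/\eta\le c'|S|D''\log^4 k/\eta$ by $L$ and choosing $c''$ sufficiently large.

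The main obstacle—and the place where I expect the argument to be delicate—is verifying $|S''_0|\le |S|/(4\log k)$, since naively multiplying the per-iteration discard bound $O(|S|/\log k)$ by $L=O((\log k)^{c'/\eps})$ gives far too much. The right way to control this is a charging scheme: at iteration $i$, label the smaller child of the split $T_1^{(i)}$; by Corollary \ref{cor: expander or terminal partition}, the newly added discard satisfies $|T_0^{(i)}|\le |T_1^{(i)}|/(8\log k)$, so I charge each vertex of $T_1^{(i)}$ a weight of at most $1/(8\log k)$. The crucial point is that the smaller child $T_1^{(i)}$ has size at most $|T^{(i)}|/2\le |S|/2$, so it becomes a \emph{leaf} of the process and never participates in any future split. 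Hence the sets $\{T_1^{(i)}\}_{i=1}^L$ are pairwise disjoint subsets of $S$, and each vertex of $S$ is charged at most once. Summing gives
\[
|S''_0|\;\le\;\sum_{i=1}^L \frac{|T_1^{(i)}|}{8\log k}+L\;\le\;\frac{|S|}{8\log k}+O((\log k)^{c'/\eps})\;\le\;\frac{|S|}{4\log k},
\]
where the additive $L$ accounts for parity-fix vertices and is absorbed using the hypothesis $|S|\ge 40(\log k)^{c''/\eps}$. Once this charging argument is in place, the remaining verifications (cardinality of each $S''_i$, expansion/congestion of $X$, path length, and running time) are routine compositions of the guarantees provided by Corollary \ref{cor: expander or terminal partition}.
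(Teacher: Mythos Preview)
Your proposal is correct and follows essentially the same approach as the paper's proof. The paper likewise maintains a single ``active'' set $S^*$ (your chain $T^{(i)}$), repeatedly applies Corollary~\ref{cor: expander or terminal partition} to it inside $G\setminus E'$, peels off the smaller piece $S'_1$ into the final partition, absorbs $S'_0$ into $S''_0$, and terminates once $|S^*|\le |S|/2$; the iteration bound and edge-count bound are obtained exactly as you describe. The only cosmetic difference is that the paper tracks the discard bound via the invariant $|S''_0|\le 1+\sum_{S'\in\mathcal S}|S'|/(4\log k)$ rather than your explicit charging over the disjoint small children $T_1^{(i)}$, but these are the same argument.
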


\begin{proof}
	We assume that the constant $c''$ is large enough, so that $c''\geq c'$, where $c'$ is the constant from \Cref{cor: expander or terminal partition}. 
	The algorithm consists of at most $(\log k)^{c''/\eps}$ iterations. Over the course of the algorithm, we maintain a subset $S_0''$ of $S$, another subset $S^*$ of $S\setminus S_0''$, and a partition $\sset$ of $S\setminus (S_0''\cup S^*)$. We also maintain a collection $E'\subseteq E(G)$ of edges. We ensure that the following invariants hold throughout the algorithm:
	
	\begin{itemize}
		\item $|S_0''|\leq 1+\sum_{S'\in \sset}|S'|/(4\log k)$;
		\item $|S^*|> |S|/2$, and $|S^*|$ is even;
		\item For all $S',S''\in \sset$ with $S'\neq S''$, $\dist_{G\setminus E'}(S',S'')> D''$; 
		\item For all $S'\in \sset$, $\dist_{G\setminus E'}(S',S^*)> D''$; and
		\item For all $S'\in \sset$, $|S|/(\log k)^{c''/\eps}\leq |S'|\leq |S|/2$.
	\end{itemize}
	
	At the beginning of the algorithm, if $|S|$ is even, then we set $S_0''=\emptyset$, and $S^*=S$; otherwise, we let $S_0''$ contain an arbitrary vertex $s$ of $S$, and we set $S^*=S\setminus\set{s}$. We also set $\sset=\emptyset$, and $E'=\emptyset$. It is easy to verify that all invariants hold.
	
	We now describe an execution of an iteration. We apply the algorithm from \Cref{cor: expander or terminal partition} to graph $G\setminus E'$, vertex set $S^*$, and parameters $k,\eta,D''$ and $\eps$. We now consider two cases.
	
	In the first case, the algorithm returns  a graph $X$ with $V(X)\subseteq S^*$, and $|V(X)|\geq |S^*|/2\geq |S|/4$, such that the maximum vertex degree in $X$ is at most $O(\log k)$, and $X$ is a $\phi$-expander, for $\phi=1/(\log k )^{c'/\eps}\geq1/(\log k )^{c''/\eps} $, together with an embedding $\pset$ of $X$ into $G$, with congestion at most $\eta$, such that every path in $\pset$ has length at most $64D''\log^2k$. We then terminate the algorithm, and return the expander $X$ and its embedding.
	
	In the second case, the algorithm  from \Cref{cor: expander or terminal partition} returns a subset $E^*$ of at most $\frac{c' |S| D''\log^4k}{\eta}$ edges of $G$, and a partition $(S'_0,S'_1,S'_2)$ of $S^*$ into disjoint subsets. We assume without loss of generality that $|S'_1|\leq |S_2'|$, so in particular $|S_1'|\leq |S|/2$. Recall that the algorithm guarantees that $|S'_1|\geq |S^*|/(\log k)^{c'/\eps}\geq |S|/(\log k)^{c''/\eps}$, and that $|S'_0|\leq |S_1'|/(8\log k)$. Additionally, in graph $G\setminus (E'\cup E^*)$, $\dist(S'_1,S'_2)> D''$.
    If $|S_2'|$ is odd, then we let $s'\in S_2'$ be any vertex, and we move $s'$ from $S_2'$ to $S_0'$.	Notice that, since $|S^*|\geq |S|/2\geq 20(\log k)^{c''/\eps}$, while 
$|S'_1|\geq |S^*|/(\log k)^{c'/\eps}$, $|S_0'|\leq  |S_1'|/(4\log k)$ still holds.

We add the edges of $E^*$ to $E'$, the vertices of $S_0'$ to $S_0''$, and the set $S_1'$ to $\sset$. We also set $S^*=S_2'$. Since $|S_0''|\leq 1+\sum_{S'\in \sset}|S'|/(4\log k)$ held at the beginning of the current iteration, and  $|S_0'|\leq  |S_1'|/(4\log k)$, it is immediate to verify that $|S_0''|\leq 1+\sum_{S'\in \sset}|S'|/(4\log k)$ continues to hold.

It is also easy to verify that  for all $S',S''\in \sset$ with $S'\neq S''$, $\dist_{G\setminus E'}(S',S'')> D''$ continues to hold. Indeed, if $S',S''\neq S_1'$, then $\dist_{G\setminus E'}(S',S'')\geq D''$ held at the beginning of the current iterations, and, since we have only added edges to $E'$, the inequality continues to hold at the end of the iteration. Otherwise, if, for example, $S''=S_1'$, then, since $\dist_{G\setminus E'}(S',S^*)\geq D''$ held at the beginning of the current iteration, and $S''\subseteq S^*$, $\dist_{G\setminus E'}(S',S'')\geq D''$ holds at the end of the current iteration. Using similar reasonings, it is easy to verify that,  for all $S'\in \sset$, $\dist_{G\setminus E'}(S',S^*)\geq D''$ holds at the end of the current iteration.
Finally, since 	$ |S|/(\log k)^{c''/\eps}\leq |S_1'|\leq |S|/2$, the last invariant also continues to hold. If $|S^*|> |S|/2$ also continues to hold, then all invariants hold, and we continue to the next iteration. 

Assume now that $|S^*|\leq |S|/2$. In this case, we terminate the algorithm, and we return the current set $E'$ of edges, set $S''_0$ of vertices of $S$, and we let $S''_1,\ldots,S''_q$ be all subsets of $S$ lying in $\sset$, together with the set $S^*$. Note that we are guaranteed that $|S^*|\geq |S|/4$, and, from the invariants, it is easy to verify that we obtain a valid output (except for the bound on $|E'|$ that we establish below).
 Note that the cardinality of the set $S^*$ decreases by at least factor $(1-1/(\log k)^{c'/\eps})$ in each iteration, so the number of iterations in the algorithm is bounded by $O\left ((\log k)^{c'/\eps+1}\right )\leq (\log k)^{4c'/\eps}$. Since the cardinality of edge set $E'$ increases by at most $\frac{c' |S| D''\log^4k}{\eta}$ in each iteration, we get that, at the end of the algorithm, $|E'|\leq \frac{|S|D''(\log k)^{c''/\eps}}{\eta}$. Lastly, since the running time for each iteration is $\otilde\left (k^{1+\eps}(\log k)^{O(1/\eps^2)}\right )+\otilde \left (|E(G)|D''\eta\right )$, the total running time of the algorithm is bounded by $\otilde\left (\left (k^{1+\eps}(\log k)^{O(1/\eps^2)}+|E(G)|D''\eta\right )\cdot (\log k)^{O(1/\eps)} \right )$.
\end{proof}

Lastly, the following corollary of \Cref{cor: expander or terminal partition2} allows us to either compute an expander $X$ and embed it into $\pset$ as before, or to compute a large collection of subsets of the input terminal set $T$ that are far enough from each other.

\begin{corollary}\label{cor: expander or terminal partition many sets}
	There is a constant $c'''$, and a deterministic algorithm, that, given a graph $G$ with non-negative lengths $\ell(e)$ on its edges $e\in E(G)$, a subset $T$ of its vertices called terminals, with  $|T|= k$ for some parameter $k$, together with parameters $\eta, D'',h\geq 1$ and $\eps \geq \Omega(1/\log k)$, 
	such that $ h\leq k/ (100(\log k)^{c'''/\eps})$,
	computes one of the following:
	
	\begin{itemize}
		\item either a graph $X$ with $V(X)\subseteq T$, and $|V(X)|\geq k/(64h)$, such that the maximum vertex degree in $X$ is at most $O(\log k)$, and $X$ is a $\phi$-expander, for $\phi=1/(\log k )^{c'''/\eps}$, together with an embedding $\pset$ of $X$ into $G$, with congestion at most $\eta$, such that every path in $\pset$ has length at most $64D''\log^2k$; or
		
		\item a subset $E'$ of at most  $\frac{k D''(\log k)^{c'''/\eps}}{\eta}$ edges of $G$, and a collection $(T_1,\ldots,T_h)$ of disjoint subsets or $T$ of cardinality at least $k/(h(\log k)^{c'''/\eps})$, such that for all $1\leq i<i'\leq h$, in graph $G\setminus E'$, $\dist(T_i,T_{i'})> D''$.
	\end{itemize}
	The running time of the algorithm is  $\otilde\left (\left (k^{1+\eps}(\log k)^{O(1/\eps^2)}+|E(G)|D''\eta\right )\cdot h(\log k)^{O(1/\eps)} \right )$.
\end{corollary}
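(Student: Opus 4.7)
I obtain the corollary by iteratively invoking Corollary~\ref{cor: expander or terminal partition2}, maintaining a growing collection $\fset$ of disjoint subsets of $T$, a cumulative edge-deletion set $E'\subseteq E(G)$, and a discard set $T_0\subseteq T$. Each call either succeeds and produces the desired expander -- in which case we return immediately -- or it replaces one set in $\fset$ by several child sets that are pairwise at distance $>D''$ in $G\setminus E'$. The constant $c'''$ will be chosen sufficiently larger than the constant $c''$ of Corollary~\ref{cor: expander or terminal partition2} to absorb all the accumulated losses.

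\textbf{Algorithm.} Initialize $\fset=\{T\}$, $E'=\emptyset$, $T_0=\emptyset$. While $|\fset|<h$, let $S$ be any maximum-cardinality set in $\fset$ and invoke Corollary~\ref{cor: expander or terminal partition2} on the graph $G\setminus E'$ with vertex subset $S$ and parameters $k,\eta,D'',\eps$. If it returns an expander $X$ together with its embedding $\pset$, return them and halt. Otherwise, it returns a partition $(S_0'',S_1'',\dots,S_q'')$ of $S$ and a new edge set $E^*$; set $T_0\gets T_0\cup S_0''$, $E'\gets E'\cup E^*$, and replace $S$ in $\fset$ by $\{S_1'',\dots,S_q''\}$. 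When the loop terminates, return the current $E'$ together with any $h$ sets of $\fset$.

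\textbf{Invariants.} I will establish four invariants that together imply the corollary.
(1)~Any two distinct sets of $\fset$ lie at distance $>D''$ in $G\setminus E'$: when $S$ is split, Corollary~\ref{cor: expander or terminal partition2} guarantees its children are pairwise $>D''$-far in $G\setminus(E'\cup E^*)$, while every other $F\in\fset$ was already $>D''$-far from $S\supseteq S_i''$ in $G\setminus E'\subseteq G\setminus(E'\cup E^*)$.
(2)~The split tree has depth at most $\log_2 k$, since each child has cardinality at most half of its parent. Consequently $\sum_{\mathrm{calls}}|S|\le k\log k$ (the call-sets at each fixed depth are disjoint subsets of $T$), and therefore $|T_0|\le\sum|S_0''|\le\sum|S|/(4\log k)\le k/4$.
(3)~Every set in $\fset$ has cardinality at least $k/(h(\log k)^{c'''/\eps})$: by~(2) we have $|T\setminus T_0|\ge 3k/4$ at all times, so the maximum-cardinality set satisfies $|S|\ge 3k/(4h)$, which by the hypothesis $h\le k/(100(\log k)^{c'''/\eps})$ comfortably exceeds the threshold $40(\log k)^{c''/\eps}$ required by Corollary~\ref{cor: expander or terminal partition2}; each child then has size $\ge|S|/(\log k)^{c''/\eps}\ge k/(h(\log k)^{c'''/\eps})$ once $c'''\ge c''+O(1)$.
(4)~$|E'|\le kD''(\log k)^{c'''/\eps}/\eta$: each call contributes at most $|S|D''(\log k)^{c''/\eps}/\eta$ edges, and by~(2), $\sum|S|\le k\log k$, so $|E'|\le kD''\log k\cdot(\log k)^{c''/\eps}/\eta\le kD''(\log k)^{c'''/\eps}/\eta$. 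When the expander branch triggers on $S$ with $|S|\ge 3k/(4h)$, the returned expander has $|V(X)|\ge |S|/4\ge k/(64h)$, and its degree, expansion ($\phi=1/(\log k)^{c'''/\eps}$, weakening what Corollary~\ref{cor: expander or terminal partition2} gives), congestion, and embedding-path-length bounds are inherited directly.

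\textbf{Running time and main obstacle.} The loop performs at most $h-1$ iterations -- each increases $|\fset|$ by at least one -- and each iteration invokes Corollary~\ref{cor: expander or terminal partition2} on a subgraph of $G$ with $|S|\le k$ at cost $\Otilde\bigl((k^{1+\eps}(\log k)^{O(1/\eps^2)}+|E(G)|D''\eta)(\log k)^{O(1/\eps)}\bigr)$; multiplying by $h$ yields the stated bound. The principal subtle point is the joint analysis of invariants~(2) and~(3): the discard bound $|T_0|\le k/4$ is proved via the depth bound, which uses the minimum-size lower bound on sets of $\fset$, which itself depends on $|T_0|$ being small. I would handle this circularity by proving all of (1)--(4) simultaneously by induction on the number of completed loop iterations; once that is set up, everything else (checking the precondition of Corollary~\ref{cor: expander or terminal partition2} before each call, edge-budget telescoping, and running-time aggregation) is routine.
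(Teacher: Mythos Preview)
Your proof is correct, though organized differently from the paper's. The paper proceeds in $r\le\lceil\log h\rceil+2$ \emph{phases}: entering phase $j$ with a partition $\Pi^j$ whose non-discard sets all lie in the size window $[k/(2^{j-1}(\log k)^{c''/\eps}),\,k/2^{j-1}]$, it applies Corollary~\ref{cor: expander or terminal partition2} to every set exceeding $k/2^j$, producing $\Pi^{j+1}$. After $r$ phases every set has size $\le k/(2h)$, so there are at least $h$ of them. Your greedy ``always split the largest'' scheme achieves the same outcome with a cleaner iteration count (at most $h-1$ calls, versus the paper's looser $O(h\log h)$ accounting), and your lower bound on set sizes---every surviving set is a child of a split whose parent, being the maximum at that moment with $|\fset|<h$, had size $\ge 3k/(4h)$---is an elegant replacement for the paper's explicit per-phase size window.

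One clarification on your self-identified ``circularity'': there isn't one. The depth bound (each child has size at most half its parent) is an unconditional guarantee of Corollary~\ref{cor: expander or terminal partition2} once its precondition holds; it does not use your minimum-size invariant~(3). The actual dependency chain is linear: validity of all prior calls $\Rightarrow$ depth bound $\Rightarrow$ $|T_0|\le k/4$ $\Rightarrow$ max set has size $\ge 3k/(4h)$ $\Rightarrow$ precondition for the current call. So a straightforward induction on the iteration number suffices, exactly as you propose, but the difficulty you anticipate does not materialize.
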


\begin{proof}
	The proof easily follows by iteratively applying the algorithm from \Cref{cor: expander or terminal partition2}.
	We assume that $c'''\gg c''$, where $c''$ is the constant from \Cref{cor: expander or terminal partition2}.
	The algorithm consists of $r\leq \ceil{\log h}+2$ phases. For all $1\leq j\leq r$, at the beginning of the $j$th phase, we are given a partition $\Pi^j=\set{T_0^j,T_1^j,\ldots,T_{q_j}^j}$ of the set $T$ of terminals, and a set $E^j\subseteq E(G)$ of edges, such that the following hold:
	
	\begin{itemize}
		\item $|T_0^j|\leq jk/(4\log k)$;
		\item $|E^j|\leq \frac{kj D''(\log k)^{c''/\eps}}{\eta}$;
		\item for all $1\leq i\leq q_j$, $\frac{k}{2^{j-1}\cdot (\log k)^{c''/\eps}}\leq |T_i^j|\leq k/2^{j-1}$; and
		\item for all $1\leq i<i'\leq q_j$, $\dist_{G\setminus E^j}(T_i^j,T_{i'}^j)>D''$.
	\end{itemize}

At the beginning of the algorithm, we set $T_0^1=\emptyset$, $T_1^1=T$, and $q_1=1$. We also set $E^1=\emptyset$. It is easy to verify that all invariants hold.

We now describe the $j$th phase. At the beginning of the phase, we set $T_0^{j+1}=T_0^j$, $E^{j+1}=E^j$, and $\Pi^{j+1}=\emptyset$. We then perform  $q_j$ iterations, where in the $i$th iteration we process terminal set $T_i^j\in \Pi^j$.

We now describe an iteration for processing the terminal set $T_i^j$. If $|T_i^j|\leq k/2^j$, then we simply add $T_i^j$ to $\Pi^{j+1}$, set $T_0^{i,j+1}=\emptyset$, $E_i^{j+1}=\emptyset$, and continue to the next iteration. Assume now that $|T_i^j|> k/2^j$. Notice that in this case, from our assumption that $h\leq k/ (100(\log k)^{c'''/\eps})$, we get that $|T_i^j|\geq k/(4h)\geq 40(\log k)^{c''/\eps}$ holds.
 We then apply the algorithm from \Cref{cor: expander or terminal partition2} to graph $G\setminus E^j$, vertex set $T^j_i$, and the same parameters $k,\eta,D'',\eps$ as before. We now consider two cases. In the first case, the outcome of the algorithm from \Cref{cor: expander or terminal partition2} is 
 a graph $X$ with $V(X)\subseteq T_i^j$, and $|V(X)|\geq |T_i^j|/4\geq k/2^{j+2}\geq k/(64h)$, such that the maximum vertex degree in $X$ is at most $O(\log k)$, and $X$ is a $\phi$-expander, for $\phi=1/(\log k )^{c''/\eps}\geq 1/(\log k )^{c'''/\eps}$, together with an embedding $\pset$ of $X$ into $G$, with congestion at most $\eta$, such that every path in $\pset$ has length at most $64D''\log^2k$. We then terminate the algorithm, and return the expander $X$ and its embedding $\pset$. 
 
Therefore, we assume from now on that the second case happens, and the algorithm from \Cref{cor: expander or terminal partition2} returns 
a subset $E_i^{j+1}$ of at most $\frac{|T_i^j| D''(\log k)^{c''/\eps}}{\eta}$ edges of $G\setminus E^j$, and a partition $(S''_0,S''_1,S''_2,\ldots,S''_r)$ of $T_i^j$ into disjoint subsets, such that, for all $1\leq a\leq z$, $|T_i^j|/(\log k)^{c''/\eps}\leq |S''_a|\leq |T_i^j|/2$, $|S''_0|\leq |T^j_i|/(4\log k)$, and, in graph $G\setminus (E^j\cup E_i^{j+1})$, for all $1\leq i<i'\leq z$, $\dist(S''_i,S''_{i'})> D''$.

We set $T_0^{i,j+1}=S''_0$; note that $|T_0^{i,j+1}|\leq |T^j_i|/(4\log k)$. Since $k/2^j\leq |T_i^j|\leq k/2^{j-1}$, we get that for all $1\leq a\leq z$, $k/(2^j(\log k)^{c''/\eps})\leq |S''_a|\leq k/2^{j}$. We then add all sets $S_1'',\ldots,S''_z$ to $\Pi^{j+1}$, and continue to the next iteration.

The phase terminates when all sets in $\Pi^j$ are processed. We set $E^{j+1}=E^j\cup (E_1^{j+1}\cup\cdots\cup E^{j+1}_{q_j})$. Since, for all $1\leq i\leq q_j$, $|E_i^{j+1}|\leq \frac{|T_i^j| D''(\log k)^{c''/\eps}}{\eta}$, we get that $|E_1^{j+1}\cup\cdots\cup E^{j+1}_{q_j}|\leq \frac{k D''(\log k)^{c''/\eps}}{\eta}$, and so altogether, $|E^{j+1}|\leq \frac{k(j+1) D''(\log k)^{c''/\eps}}{\eta}$. 
We also set $T_0^{j+1}=T_0^j\cup (T_0^{1,j+1}\cup\cdots\cup T_0^{q_j,j+1})$.
Since, for all $1\leq i\leq q_j$, $|T_0^{i,j+1}|\leq |T^j_i|/(4\log k)$, we get that  $|T_0^{1,j+1}\cup\cdots\cup T_0^{q_j,j+1}|\leq k/(4\log k)$, and $|T_0^{j+1}|\leq k(j+1)/(4\log k)$.
From the above discussion, for all $T'\in \Pi^{j+1}$, $\frac{k}{2^{j}\cdot (\log k)^{c''/\eps}}\leq |T'|\leq k/2^{j}$ holds, and it is immediate to verify that for all $T',T''\in \Pi^{j+1}$, $\dist_{G\setminus E^{j+1}}(T',T'')>D''$.

If the algorithm does not terminate with an expander $X$ and its embedding $\pset$, then it terminates after $r=\ceil{\log h}+2$ phases. Since we are guaranteed that, for all $T'\in \Pi^{r}$, $|T'|\leq k/2^{r}\leq k/(2h)$, while $|T_0^r|\leq rk/(4\log k)\leq k/2$  we get that $|\Pi^{r}|\geq h$. We let $T_1,\ldots,T_h$ be arbitrary $h$ vertex sets in $\Pi^r$. 
We let $E'=E^r$; from our invariants, $|E'|\leq \frac{k D''(\log k)^{c''/\eps+1}}{\eta}\leq \frac{k D''(\log k)^{c'''/\eps}}{\eta}$. Clearly, for all $1\leq i<i'\leq h$, in graph $G\setminus E'$, $\dist(T_i,T_{i'})> D''$. Our invariants also guarantee that, for all $1\leq i\leq h$, $|T_i|\geq \frac{k}{4h\cdot (\log k)^{c''/\eps}}\geq  \frac k {(h(\log k)^{c'''/\eps})}$.

It now remains to analyze the running time of the algorithm. The algorithm consists of $O(\log k)$ phases, and each phase consists of $O(h)$ iterations. From \Cref{cor: expander or terminal partition2}, the running time of each iteration is bounded 	by $\otilde\left (\left (k^{1+\eps}(\log k)^{O(1/\eps^2)}+|E(G)|D''\eta\right )\cdot (\log k)^{O(1/\eps)} \right )$. Therefore, the total running time of the algorithm is bounded by $\otilde\left (\left (k^{1+\eps}(\log k)^{O(1/\eps^2)}+|E(G)|D''\eta\right )\cdot h(\log k)^{O(1/\eps)} \right )$.
\end{proof}

We are now ready to complete the proof of \Cref{lem: smaller pseudocut or expander}. Recall that we are given as input  a valid input structure $\iset=\left(C,\set{\ell(e)}_{e\in E(C)}, D\right )$,  where $C$ is a connected graph, with arbitrary weights $w(x)\geq 0$ for vertices $x\in V(C)$, and parameters $0<\epsx<1$, $\hat D>D$, and $\hat W\geq \max\set{W(C),|E(C)|,2^{\Omega(1/\epsx)}}$, together with a $(\hat D,\rho)$-pseudocut $\hat E$  for $C$ of cardinality $k$, where $\rho=\hat W^{\eps}$.

We need to consider a special case, where $\epsx<O(1/\log k)$, or, equivalently, $k<2^{O(1/\epsx)}$. In this case, since we have assumed that $\hat W>2^{\Omega(1/\epsx)}$, $k<\hat W^{\epsx}$ holds. We then select an arbitrary edge $e\in \hat E$, and set $\hat E^*=\set{e}$. We let the expander $X$ contain a single vertex $t_e$ and no edges, and its embedding into $C_{|\hat E^*}$ is $\pset=\emptyset$. We then return $\hat E^*,X$, and $\pset$ and terminate the algorithm. Therefore, we assume from now on that $\epsx\geq \Omega(1/\log k)$.
We also assume that $k/ (100(\log k)^{c'''/\eps})\geq \ceil{\hat W^{\eps}}$, where $c'''$ is the constant from \Cref{cor: expander or terminal partition many sets}; otherwise, $k\leq \hat W^{2\eps}$ holds. In this case, we proceed as before: we select an arbitrary edge $e\in \hat E$, set $\hat E^*=\set{e}$, and return an expander $X$ containing a single vertex $t_e$ and no edges, and its embedding $\pset=\emptyset$  into $C_{|\hat E^*}$. Therefore, we assume from now on that $k/ (100(\log k)^{c'''/\eps}\geq \ceil{\hat W^{\eps}}$.

Let $G=C_{|\hat E}$ be the graph obtained from $C$ by subdividing every edge $e=(u,v)\in \hat E$ with a vertex $t_e$. We denote  the resulting edges by $e_1=(u,t_e)$, $e_2=(v,t_e)$, and we set the length of each of these edges to be $\ell(e)$. Let $T=\set{t_e\mid e\in \hat E}$ be the set of the newly created vertices, that we refer to as \emph{terminals}. Note that $k\leq \hat W$ must hold, since $C$ is a connected graph and $\hat W\geq |E(C)|$.

We use the following parameters: $D''=40\hat D$, $h=\ceil{\hat W^{\eps}}$, and $\eta=16D''\hat W^{\eps} (\log k)^{2c'''/\eps}\leq \hat D\cdot \hat W^{\eps} (\log \hat W)^{O(1/\eps)}$.
Note that, from our assumption, $h\leq k/ (100(\log k)^{c'''/\eps})$ holds. 
We apply the algorithm from \Cref{cor: expander or terminal partition many sets} to graph $G$, the set $T$ of terminals, parameters $\eta,D''$ and $h$ that we just defined, and the input parameter $\eps$. We now consider two cases, depending on the algorithm's outcome.

In the first, case, the algorithm from \Cref{cor: expander or terminal partition many sets} returns a graph $X$ with $V(X)\subseteq T$, and $|V(X)|\geq k/(64h)\geq \Omega(k/\hat W^{\eps})$, such that the maximum vertex degree in $X$ is at most $O(\log k)\leq O(\log \hat W)$, and $X$ is a $\phi^*$-expander, for $\phi^*=1/(\log k )^{O(1/\eps)}\geq 1/(\log \hat W )^{O(1/\eps)}$, together with an embedding $\pset$ of $X$ into $G$, with congestion at most $\eta\leq \hat D\cdot \hat W^{\eps} (\log \hat W)^{O(1/\eps)}$, such that every path in $\pset$ has length at most $64D''\log^2k\leq O(\hat D\log^2\hat W)$.
We let $\hat E^*\subseteq \hat E$ be the set of edges corresponding to vertices of $X$, that is, $\hat E^*=\set{e\mid t_e\in V(X)}$. Then $|\hat E^*|=|V(X)|\geq \Omega(k/\hat W^{\eps})$. Moreover, the set $\pset$ of paths that embeds $X$ into $G$ immediately defines a set $\pset'$ of paths in graph $C_{|\hat E^*}$, embedding $X$ into this graph with congestion at most $\eta\leq \hat D\cdot \hat W^{\eps} (\log \hat W)^{O(1/\eps)}$, such that the length of every path in $\pset$ is at most $ O(\hat D\log^2\hat W)$ (note that graph $G$ can be obtained from $C_{|\hat E^*}$ by subdividing each edge in $\hat E\setminus \hat E^*$, so the transformation from path set $\pset$ to $\pset'$ is straightforward). We then return the expander $X$, the edge set $\hat E^*$, and the embedding $\pset'$ of $X$ into $C_{|E^*}$ as the outcome of the algorithm.

From now on we focus on the second case, where the algorithm from \Cref{cor: expander or terminal partition many sets} returned a subset $E'$ of edges, of cardinality at most:

\[\frac{k D''(\log k)^{c'''/\eps}}{\eta}\leq \frac{k}{16\hat W^{\eps} (\log k)^{c'''/\eps}}\] 

(since $\eta=16D''\hat W^{\eps} (\log k)^{2c'''/\eps}$),
 together with a collection $(T_1,\ldots,T_h)$ of disjoint subsets or $T$, such that the cardinality of each subset $T_i$ is at least:
 
 \[\frac{k}{h(\log k)^{c'''/\eps}}\geq \frac{k}{2\hat W^{\eps} (\log k)^{c'''/\eps}}\geq 8|E'|.\] 

Recall that we are guaranteed that, for all $1\leq i<i'\leq h$, $\dist_{G\setminus E'}(T_i,T_{i'})> D''=40\hat D$.
%
In the remainder of the proof, we will exploit the terminal sets $T_1,\ldots, T_h$, and the edge set $E'$, in order to construct a $(\hat D,\rho)$-pseudocut $\hat E'$ in $C$, with $|\hat E'|\leq |\hat E|\left (1-\frac{1}{\hat W^{\eps} (\log \hat W)^{O(1/\eps)}}\right )$.

We define a set $E''$ of edges in graph $C$, as follows. For every edge $e\in E'$, if $e\in E(C)$, then we add $e$ to $E''$. Otherwise, $e$ was obtained by subdividing some edge $e'\in \hat E$. We then add $e'$ to $E''$. Clearly, $|E''|\leq |E'|$.
For all $1\leq i\leq h$, we denote $\hat E_i=\set{e\in \hat E\setminus E''\mid t_e\in T_i}$, and we let $Z_i$ the set of all vertices of $C$ that serve as endpoints of the edges in $\hat E_i$. 
Note that $|\hat E_i|\geq |T_i|-|E''|\geq |T_i|/2\geq 4|E''|$.
 We need the following easy observation.

\begin{observation}\label{obs: vertex sets far from each other}
	Let $C'=C\setminus E''$. Then for all $1\leq i<j\leq h$, $\dist_{C'}(Z_i,Z_j)> 4\hat D$.
\end{observation}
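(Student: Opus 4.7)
The plan is to argue that any short path between $Z_i$ and $Z_j$ in $C' = C\setminus E''$ can be lifted into an equally short path between $T_i$ and $T_j$ in $G\setminus E'$, which would contradict $\dist_{G\setminus E'}(T_i,T_j) > D'' = 40\hat D$. First I would let $P$ be a shortest $Z_i$–$Z_j$ path in $C'$, with endpoints $z_i\in Z_i$ and $z_j\in Z_j$, and pick edges $e_i\in \hat E_i$ incident to $z_i$ and $e_j\in \hat E_j$ incident to $z_j$ (these exist by the definition of $Z_i,Z_j$). In $G = C_{|\hat E}$, the edges $e_i$ and $e_j$ each correspond to a pair of half-edges through the terminals $t_{e_i}\in T_i$ and $t_{e_j}\in T_j$ respectively.

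Next I would lift $P$ to a path $P^*$ in $G$ by replacing every occurrence on $P$ of an edge $e\in \hat E$ with the two half-edges $(u,t_e),(t_e,v)$ obtained from subdividing $e$. This operation preserves total length, so $\ell_G(P^*) = \ell_{C'}(P)$. Prepending the half-edge from $t_{e_i}$ to $z_i$ and appending the half-edge from $z_j$ to $t_{e_j}$ then yields a path $Q$ in $G$ from $t_{e_i}\in T_i$ to $t_{e_j}\in T_j$ of length $\ell(P) + \ell(e_i) + \ell(e_j)$.

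The heart of the argument is to verify $Q$ avoids $E'$. Consider any edge $f\in E'$. By the definition of $E''$: if $f$ is an original edge of $C$ (i.e., $f\notin \hat E$-halves), then $f\in E''$, so $P$ does not use $f$ and hence $P^*$ does not use $f$; otherwise $f$ is one of the two half-edges obtained from subdividing some $e'\in \hat E$, and in this case $e'\in E''$, so $P\subseteq C\setminus E''$ never traverses $e'$, which means $P^*$ contains neither half of $e'$. Finally, the prepended and appended half-edges come from $e_i,e_j\in \hat E\setminus E''$; if either half of $e_i$ or $e_j$ lay in $E'$ we would have placed $e_i$ or $e_j$ into $E''$, contradicting $e_i,e_j\in \hat E_i\cup\hat E_j\subseteq \hat E\setminus E''$. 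Thus $Q$ is a valid $T_i$–$T_j$ path in $G\setminus E'$.

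Combining, $\ell(P) + \ell(e_i) + \ell(e_j) = \ell(Q) \geq \dist_{G\setminus E'}(T_i,T_j) > D'' = 40\hat D$. Since $\ell(e_i),\ell(e_j)\leq D\leq \hat D$ because the input is a valid input structure with edge lengths at most $D<\hat D$, we conclude $\ell(P) > 40\hat D - 2\hat D = 38\hat D > 4\hat D$, i.e., $\dist_{C'}(Z_i,Z_j) > 4\hat D$, as required. The only subtle step is the bookkeeping in the lift, namely checking that every edge of $E'$ is ``blocked'' in $C'$ by the way $E''$ was constructed; the rest is a direct length comparison.
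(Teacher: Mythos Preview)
Your approach is essentially identical to the paper's: lift a short $Z_i$–$Z_j$ path in $C\setminus E''$ to a $T_i$–$T_j$ path in $G\setminus E'$, then invoke $\dist_{G\setminus E'}(T_i,T_j)>D''=40\hat D$. The $E'$-avoidance bookkeeping you give is exactly right.

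There is one arithmetic slip. In this paper's convention for $C_{|\hat E}$, each half-edge $(u,t_e)$ and $(t_e,v)$ is assigned length $\ell(e)$, not $\ell(e)/2$; so replacing an edge $e\in \hat E$ on $P$ by its two half-edges contributes $2\ell(e)$, not $\ell(e)$. Hence $\ell_G(P^*)\le 2\,\ell_{C'}(P)$ rather than $\ell_G(P^*)=\ell_{C'}(P)$, and the paper indeed notes that the lift ``increases the length of the path by at most factor $2$''. This does not affect the conclusion: you get $2\ell(P)+2\hat D\ge \ell(Q)>40\hat D$, so $\ell(P)>19\hat D>4\hat D$. With this correction your proof and the paper's coincide.
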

\begin{proof}
	Assume otherwise, and let $x\in Z_i,y\in Z_j$ be any pair of vertices with $\dist_{C'}(x,y)\leq 4\hat D$, and $i\neq j$. Let $P$ be a shortest $x$-$y$ path in graph $C'$. 
	Let $e\in \hat E_i$ be the edge with endpoint $x$, and define $e'\in \hat E_j$ similarly. We claim that there is a path $P'$ of length at most $10\hat D$ connecting $t_e$ to $t_{e'}$ in graph $G\setminus E'$. Since $D''=40\hat D$, this will contradict the fact that $\dist_{G\setminus E'}(T_i,T_{j})> D''$.
	
	In order to obtain the path $P$, we start with the path $P'$; for every edge $\hat e\in E(P)$, if $\hat e\not \in E(G)$, then we subdivide $\hat e$ to obtain a pair of edges that lie in $G$. Notice that this increases the length of the path by at most factor $2$. We then append the edges $(t_e,x)$ and $(y,t_{e'})$ to the beginning and the end of the resulting path, respectively. Since the length of every edge in $C$ is at most $D\leq \hat D$, it is easy to verify that the length of $P'$ is at most $10\hat D$. From our construction of edge set $E''$, and edge sets $\hat E_1,\ldots,\hat E_h$, it is also immediate to verify that no edge of $P'$ lies in $E'$.
\end{proof}

For all $1\leq i\leq h$, we let $B_i$ the set of all vertices of $C$ that can reach a vertex in $Z_i$ via a path of length at most $2\hat D$ in graph $C\setminus E''$. Equivalently: $B_i=B_{C\setminus E''}(Z_i,2\hat D)$. Note that, from \Cref{obs: vertex sets far from each other}, for all $1\leq i<j\leq h$, $B_i\cap B_j=\emptyset$. Therefore, there must be an index $1\leq i\leq h$, such that the total weight of all vertices in $B_i$ is bounded by $\frac{W(C)}{h}\leq \frac{W(C)}{\hat W^{\eps}}$. We fix this index $i$ from now on.

We are now ready to define the final set $\hat E'$ of edges: $\hat E'=(\hat E\setminus \hat E_i)\cup E''$. Recall that $|\hat E_i|\geq |T_i|/2\geq 4|E''|$, and $|\hat E_i| \geq  \frac{k}{4\hat W^{\eps} (\log k)^{c'''/\eps}}$. Therefore, $|\hat E'|\leq |\hat E|-|\hat E_i|+|E''|\leq k \left (1-\frac{1}{\hat W^{\eps} (\log k)^{O(1/\eps)} }\right )\leq k \left (1-\frac{1}{\hat W^{\eps} (\log \hat W)^{O(1/\eps)} }\right ) $. It now remains to show that $\hat E'$ is a valid $(\hat D,\rho)$-pseudocut, for $\rho=\hat W^{\eps}$. We do so in the next claim.

\begin{claim}\label{claim: get a pseudocut}
	Edge set $\hat E'$ is a $(\hat D,\rho)$-pseudocut for graph $C$.
\end{claim}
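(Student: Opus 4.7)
The plan is to fix an arbitrary vertex $x \in V(C)$, set $B = B_{C\setminus \hat E'}(x, \hat D)$, and show $W(B) \leq W(C)/\rho$ by a case split based on whether $B$ meets the set $Z_i$ of endpoints of edges in $\hat E_i$. The key combinatorial identity I will use first is $\hat E' \cup \hat E_i = \hat E \cup E''$, which follows directly from the definition $\hat E' = (\hat E \setminus \hat E_i) \cup E''$ and from $\hat E_i \cap E'' = \emptyset$. Together with $E'' \subseteq \hat E'$, this gives two containments of subgraphs: $C\setminus \hat E' \subseteq C \setminus E'' = C^*$ and $C \setminus (\hat E' \cup \hat E_i) \subseteq C \setminus \hat E$.

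In the first case, suppose $B$ contains some vertex $z \in Z_i$. Since $z \in B$, for every $y \in B$ we have $\dist_{C\setminus \hat E'}(y,z) \leq \dist_{C\setminus \hat E'}(y,x) + \dist_{C\setminus \hat E'}(x,z) \leq 2\hat D$ by the triangle inequality. Using $C\setminus \hat E' \subseteq C^*$, this gives $\dist_{C^*}(y,z) \leq 2\hat D$, so $y \in B_{C^*}(Z_i, 2\hat D) = B_i$. Therefore $B \subseteq B_i$, and by the choice of index $i$ (the one that minimizes $W(B_i)$, using that the sets $B_1, \dots, B_h$ are pairwise disjoint by \Cref{obs: vertex sets far from each other} since $2\hat D + 2\hat D < 4\hat D$), we get $W(B) \leq W(C)/h \leq W(C)/\hat W^{\eps} = W(C)/\rho$.

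In the second case, $B \cap Z_i = \emptyset$. I will argue that no shortest path from $x$ in $C\setminus \hat E'$ witnessing membership in $B$ can use an edge of $\hat E_i$: any such edge has an endpoint in $Z_i$, and every vertex on a length-$\leq \hat D$ path from $x$ in $C\setminus \hat E'$ lies in $B$, which is disjoint from $Z_i$. Hence $B \subseteq B_{C\setminus (\hat E' \cup \hat E_i)}(x, \hat D)$. Using the identity $\hat E' \cup \hat E_i = \hat E \cup E'' \supseteq \hat E$, we obtain $C\setminus (\hat E' \cup \hat E_i) \subseteq C\setminus \hat E$, and therefore $B \subseteq B_{C\setminus \hat E}(x,\hat D)$. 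Since $\hat E$ is a $(\hat D,\rho)$-pseudocut, this ball has total weight at most $W(C)/\rho$, as required.

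The main thing to be careful about is the bookkeeping of which edges lie in which set — in particular, verifying the identity $\hat E' \cup \hat E_i = \hat E \cup E''$ and confirming that $E'' \subseteq \hat E'$, both of which are immediate from definitions but easy to slip on. The geometric argument in each case is then routine triangle inequality plus monotonicity of distances under edge removal, so no further work is needed beyond combining the two cases.
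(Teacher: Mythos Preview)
Your proof is correct and follows essentially the same approach as the paper: both fix an arbitrary vertex $x$, set $B=B_{C\setminus\hat E'}(x,\hat D)$, and split into the two cases $B\cap Z_i\neq\emptyset$ (where $B\subseteq B_i$ via $E''\subseteq\hat E'$) and $B\cap Z_i=\emptyset$ (where $B\subseteq B_{C\setminus\hat E}(x,\hat D)$ since no edge of $\hat E_i$ can be used). Your explicit identity $\hat E'\cup\hat E_i=\hat E\cup E''$ just makes the second case's containment a bit more transparent than the paper's phrasing, but the argument is the same; one tiny slip is writing ``$2\hat D+2\hat D<4\hat D$'' where you mean that the resulting $\leq 4\hat D$ contradicts the strict inequality $>4\hat D$ from \Cref{obs: vertex sets far from each other}.
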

\begin{proof}
Consider any vertex $x\in V(C)$, and let $B=B_{C\setminus \hat E'}(x,\hat D)$. It is enough to show that $W(B)\leq \frac{W(C)}{\hat W^{\eps}}$. We consider two cases.

Assume first that $B\cap Z_i\neq \emptyset$. In this case, $B\subseteq B_{C\setminus \hat E'}(Z_i,2\hat D)\subseteq B_{C\setminus E''}(Z_i,2\hat D)=B_i$ (since $E''\subseteq \hat E'$). But then, from our choice of the index $i$, $W(B)\leq W(B_i)\leq \frac{W(C)}{\hat W^{\eps}}$.

Therefore, we assume that $B\cap Z_i=\emptyset$. Since $\hat E\setminus \hat E_i\subseteq \hat E'$, and no edge of $\hat E_i$ is incident to a vertex of $B$, if we consider the sub-graph $H$ of $C\setminus \hat E'$ that is induced by $B$, then $H$ is also a subgraph of $C\setminus \hat E$. In other words, $B\subseteq B_{C\setminus \hat E}(x,\hat D)$. But then, since $\hat E$ was a valid $(\hat D,\rho)$-pseudocut for $C$, $W(B)\leq \frac{W(C)}{\hat W^{\eps}}$ must hold.
\end{proof}

It now remains to analyze the running time of the algorithm. 
In addition to running the algorithm from \Cref{cor: expander or terminal partition many sets}, we need to compute the balls $B_1,\ldots,B_h$, and the final edge set $\hat E'$. Since the balls $B_1,\ldots,B_h$ are mutually disjoint, they can be computed in time $O(|E(C)|)$. Therefore, the running time is dominated by the running time of the algorithm from \Cref{cor: expander or terminal partition many sets}, which is bounded by:

\[ 
\otilde\left (h(\log k)^{O(1/\eps)}\cdot \left (k^{1+\eps}(\log k)^{O(1/\eps^2)}+|E(G)|D''\eta\right )  \right )
\leq O\left (|E(C)|\cdot  \hat D^2\cdot \hat W^{O(\eps)}(\log \hat W)^{O(1/\eps^2)}\right ).
\]

since $h\leq 2W^{\eps}$, $D''=\Theta(D)$, $\eta=\hat D\cdot \hat W^{\eps} (\log \hat W)^{O(1/\eps)}$, $k\leq |E(C)|\leq \hat W$, and $|E(G)|\leq O(|E(C)|)$.

In order to complete the proof of \Cref{lem: smaller pseudocut or expander}, it is now enough to prove \Cref{lem: expander or cut}, which we do next.

\subsubsection{Proof of \Cref{lem: expander or cut}}
\label{subsec: prove auxiliary lemma}


The proof follows standard techniques, namely the Cut-Matching game of Khandekar, Rao, and Vazirani \cite{KRV}, and more specifically, its variant that was studied in \cite{KhandekarKOV2007cut}, and then used in \cite{detbalanced}. At a high level, the Cut-Matching game is a game played between two players: a cut player and a matching player. The game starts with a graph $X$ containing an even number $n$ of vertices and no edges, and then proceeds in iterations. In every iteration $i$, a cut player computes some partition $(A_i,B_i)$ of $V(X)$, with $|A_i|\leq |B_i|$. The matching player then returns a matching $M_i$ between vertices of $A_i$ and vertices of $B_i$, of cardinality $|A_i|$. The edges of $M_i$ are added to the graph $X$, and the iteration terminates. Once graph $X$ becomes a $\phi$-expander, for some given parameter $\phi$, the game terminates. Intuitively, the goal of the cut player is to make the game as short as possible, while the matching player wants to make it longer. We will use the following theorem, that was proved in \cite{detbalanced} in order to implement the cut player.

\begin{theorem}[Theorem 4.1 in \cite{detbalanced}]\label{thm: cut player rec}
	There are universal constants $c_0$, $N_0$ and a deterministic algorithm, that, given an $n$-vertex graph $G=(V,E)$ and parameters $N,q$ with $N>N_0$ an integral power of $2$, and $q\ge 1$ an integer, such that $n\leq N^q$, and the maximum vertex degree in $G$ is at most $c'\log n$ for a constant $c'$, computes one of the following:
	
	\begin{itemize}
		\item either a partition $(A,B)$ of $V$ with $|A|,|B|\geq n/4$, $|A|\leq |B|$, and $|E_G(A,B)|\leq n/100$; or
		\item a subset $R\subseteq V$ of at least $n/2$ vertices of $G$, such that graph $G[R]$ is a $\phi'$-expander, for $\phi'=1/\left (q\log N\right )^{8q}$.
	\end{itemize}
	
	The running time of the algorithm is $O\left (N^{q+1}\cdot(q\log N)^{c_0q^2}\right )$.
\end{theorem}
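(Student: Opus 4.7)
The plan is to implement a deterministic version of the cut-matching game of Khandekar, Khot, Orecchia, and Vazirani, using recursion on the parameter $q$. For the base case $q=1$, we have $n\leq N$, so the target running time $O(N^{2}\,(\log N)^{c_0})$ permits an almost-quadratic algorithm: I would run a deterministic approximate-sparsest-cut routine based on iterative blocking flows (or on standard ball-growing after an LP solve on a small graph), and then either return a balanced cut of sparsity $\leq n/100$ directly, or iteratively peel off the small unbalanced side of each sparse cut we find. Each peel removes at most a constant fraction of the remaining vertices, so within $O(\log n)$ rounds what is left is either too small (contradiction) or a $\phi'$-expander on at least $n/2$ vertices for $\phi'=1/(\log N)^{8}$.

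For the inductive step $q>1$, I would maintain a decremental set $R\subseteq V$ (initially $R=V$) together with a witness graph $W$ on vertex set $R$ that starts empty and is built up one matching at a time. In each round of the cut-matching game, I invoke the algorithm recursively on $W$ with parameters $(N,q-1)$; the maximum degree of $W$ stays $O(\log n)$ because we add only matchings. Either the recursive call returns a subset $R'\subseteq R$ with $|R'|\geq |R|/2$ on which $W$ is a $\phi'_{q-1}$-expander, or it returns a balanced sparse cut $(A,B)$ of $W$. In the first case, because every matching added to $W$ was embedded in $G[R]$ with congestion $1$ and dilation $1$ along a routed matching, a standard ``expander embedding implies expander'' argument (losing a factor of the embedding's congestion times dilation) yields that $G[R']$ itself is a $\phi'_q$-expander with the claimed $\phi'_q=1/(q\log N)^{8q}$, and we output $R'$ if $|R'|\geq n/2$, otherwise we have pruned $R$ by at most half and we continue. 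In the second case, I would compute a max flow between $A$ and $B$ in $G[R]$: if it saturates a constant fraction of $A$, extract the matching and append it to $W$; if not, the min-cut gives a sparse cut of $G[R]$, which is either balanced in $G$ (return it) or lets us peel off the small side from $R$.

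The main obstacle is twofold parameter control. First, the expansion tracking: each level of recursion degrades $\phi'$ by a factor of roughly $(q\log N)^{O(1)}$ (from the embedding argument and from the $O((q\log N)^{O(1)})$ rounds of the cut-matching game needed at that level), and $q$ levels of recursion compose to give the $(q\log N)^{8q}$ bound, which will require tight bookkeeping of the per-level loss in the embedding argument to land at constant $8$ in the exponent. Second, the running time: each level spends $O((q\log N)^{O(q)})$ per round running the recursive cut player, and there are $O((q\log N)^{O(1)})$ rounds per level, and $q$ levels, giving the advertised $(q\log N)^{c_0 q^2}$ factor. The flow computations in $G[R]$ can be done deterministically in $O(n^{1+o(1)})$ time per invocation on a bounded-degree graph, and since each flow either yields a matching (bounded number of matchings needed) or prunes $R$ by a constant fraction (bounded number of prunings), the overall running time lands within $O(N^{q+1}(q\log N)^{c_0 q^2})$. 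The trickiest piece of the analysis is showing that the number of cut-matching rounds does not blow up even when the inner sparse cuts are merely constant-balanced rather than exactly bisecting; this is handled by a potential-function argument on $\sum_{v\in R}\mathrm{rank}_W(v)$ analogous to the KKOV analysis, adapted to the recursive setting.
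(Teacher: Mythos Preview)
This theorem is not proved in the present paper: it is quoted verbatim as Theorem~4.1 of \cite{detbalanced} and used as a black box to derive \Cref{cor: cut player}. There is therefore no ``paper's own proof'' to compare against; the paper simply cites the result and moves on.

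That said, your sketch does track the actual construction in \cite{detbalanced} at a high level: the proof there is indeed a recursion on $q$, where at each level one runs the KKOV-style cut-matching game and uses the level-$(q-1)$ algorithm as the cut player on the witness graph $W$. A few points where your outline diverges from (or is vaguer than) the actual argument: the base case in \cite{detbalanced} is not an approximate-sparsest-cut routine on an $N$-vertex graph but rather a direct spectral/combinatorial computation that fits the $N^2$ budget; the matching player is not a max-flow computation but a bounded-depth ES-tree routing that either returns a low-congestion matching or a sparse cut; and the ``expander embedding implies expander'' step is not needed in the form you describe, because the witness graph $W$ itself is what gets certified as an expander, and $W$ is a subgraph of edges that were each routed in $G$ along a single edge (the matchings are matchings in $G$, not embedded paths). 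Your parameter accounting is in the right ballpark, though landing precisely on the exponent $8q$ requires the specific constants from the KKOV analysis rather than generic $O(\cdot)$ tracking.
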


We will use the following immediate corollary of the theorem.
\begin{corollary}\label{cor: cut player}
	There is a universal constant $c''$, and a deterministic algorithm, that, given an $n$-vertex graph $G=(V,E)$, and a parameter $0<\eps<1$, such that $\eps\geq c''/\log n$, and the maximum vertex degree in $G$ is at most $c'\log n$ for some constant $c'$, computes one of the following:

\begin{itemize}
	\item either a partition $(A,B)$ of $V(G)$ with $|A|,|B|\geq n/4$, $|A|\leq |B|$, and $|E_G(A,B)|\leq n/100$; or
	\item a subset $R\subseteq V$ of at least $n/2$ vertices of $G$, such that graph $G[R]$ is a $\phi''$-expander, for $\phi''=1/(\log n )^{O(1/\eps)}$.
\end{itemize}

The running time of the algorithm is $O\left (n^{1+O(\eps)}\cdot (\log n)^{O(1/\eps^2)}\right )$.
\end{corollary}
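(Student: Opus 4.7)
\medskip

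\noindent\textbf{Proof Proposal for Corollary \ref{cor: cut player}.}

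The plan is to invoke Theorem \ref{thm: cut player rec} with a carefully chosen pair of parameters $(N,q)$ that depends on $\eps$. Set $q = \lceil 1/\eps \rceil$, so $q$ is a positive integer with $q \leq 1/\eps + 1 = O(1/\eps)$, and let $N = 2^{\lceil (\log n)/q\rceil}$ be the smallest power of $2$ that is at least $n^{1/q}$. By construction $N$ is an integral power of $2$, $n \leq N^q$ holds, and $N \leq 2n^{1/q} \leq 2n^{\eps}$. To ensure $N > N_0$ (the threshold required by Theorem \ref{thm: cut player rec}), we use the hypothesis $\eps \geq c''/\log n$: we choose the constant $c''$ large enough so that $\lceil (\log n)/q \rceil \geq \log(2N_0) + 1$, which is possible since $(\log n)/q \geq \eps \log n \geq c''$.

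Next, I would simply apply Theorem \ref{thm: cut player rec} to the input graph $G$ with these parameters. If the algorithm returns a partition $(A,B)$ with $|A|,|B|\geq n/4$, $|A|\leq |B|$, and $|E_G(A,B)|\leq n/100$, this is already exactly the first alternative of the Corollary, so we output it. Otherwise, the algorithm returns a subset $R \subseteq V$ of at least $n/2$ vertices such that $G[R]$ is a $\phi'$-expander for $\phi' = 1/(q\log N)^{8q}$. In this case, I would verify that $\phi' \geq 1/(\log n)^{O(1/\eps)}$: since $\log N \leq 1 + \eps\log n$ and $q\eps \leq 2$, we get $q\log N \leq q + 2\log n \leq 4\log n/\eps$. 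Taking logs gives
\[
\log\bigl((q\log N)^{8q}\bigr) \;\leq\; 8q\bigl(\log\log n + \log(4/\eps)\bigr) \;\leq\; O\!\bigl((\log\log n)/\eps\bigr),
\]
where the bound $\log(4/\eps) \leq O(\log\log n)$ uses $\eps \geq c''/\log n$. Hence $(q\log N)^{8q} \leq (\log n)^{O(1/\eps)}$, giving the desired bound on $\phi''$.

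The running time bound is then a direct computation from Theorem \ref{thm: cut player rec}: the algorithm runs in time $O(N^{q+1}(q\log N)^{c_0 q^2})$. Using $N \leq 2n^\eps$, we have $N^{q+1} \leq 2^{q+1} n^{1+1/q} \leq 2^{O(1/\eps)} \cdot n^{1+\eps}$, and by the same bound as above, $(q\log N)^{c_0 q^2} \leq (\log n)^{O(1/\eps^2)}$ (where the $2^{O(1/\eps)}$ factor from $N^{q+1}$ is absorbed into the polylogarithmic factor, again using $\eps \geq c''/\log n$ to convert $2^{O(1/\eps)}$ into $(\log n)^{O(1/\eps)}$). This yields the stated total running time of $O(n^{1+O(\eps)}\cdot (\log n)^{O(1/\eps^2)})$.

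I do not anticipate a serious obstacle here: the proof is essentially a parameter-tuning exercise rather than a genuinely new argument. The only mild subtlety is the bookkeeping needed to absorb the various $2^{O(1/\eps)}$ and $(1/\eps)^{O(1/\eps)}$ factors into the clean bounds $n^{O(\eps)}$ and $(\log n)^{O(1/\eps^2)}$, which all rely on the lower bound $\eps \geq c''/\log n$. Choosing $c''$ large enough at the outset (to simultaneously satisfy $N > N_0$ and to make the big-$O$ absorptions valid) is the main thing to be careful about.
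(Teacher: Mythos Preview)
Your proposal is correct and follows essentially the same approach as the paper: set $q=\lceil 1/\eps\rceil$, take $N$ to be the smallest power of $2$ with $N^q\geq n$, apply Theorem~\ref{thm: cut player rec}, and check the bounds. One tiny slip: the inequality $(\log n)/q \geq \eps\log n$ goes the wrong way (since $q\geq 1/\eps$), but the intended lower bound $(\log n)/q \geq (\eps/2)\log n \geq c''/2$ (using $q\leq 2/\eps$) works just as well for ensuring $N>N_0$.
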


\begin{proof}
 We set $q=\ceil{1/\eps}$, so that $1/\eps\leq q\leq 2/\eps$. Next, we let $N$ be the smallest integral power of $2$, so that $N^q\geq n$. It is easy to verify that $N\leq 2n^{1/q}\leq 2n^{\eps}$. 
Also, $N\geq n^{1/q}\geq n^{\eps/2}$. We can then set the constant $c''$ to be large enough to ensure that $n^{\eps/2}> N_0$, where $N_0$ is the constant in the statement of \Cref{thm: cut player rec}. We then apply \Cref{thm: cut player rec} to the graph $G$, together with parameters $q$ and $N$. 
If the algorithm returns a partition $(A,B)$ of $V(G)$ with $|A|,|B|\geq n/4$ and $|E_G(A,B)|\leq n/100$, then we also return this partition as the outcome of our algorithm. Otherwise, the algorithm from \Cref{thm: cut player rec} computes a subset $R\subseteq V$ of at least $n/2$ vertices of $G$, such that graph $G[R]$ is a $\phi''$-expander, for $\phi''=1/\left (q\log N\right )^{8q}$. Since $q=O(1/\eps)$, and $N^q\leq 2n$, it is easy to verify that $\phi''\geq 1/(\log n)^{O(q)}\geq  1/(\log n)^{O(1/\eps)}$. We then return $R$ as the outcome of the algorithm.

Lastly, the running time of the algorithm is $O\left (N^{q+1}\cdot(q\log N)^{c_0q^2}\right )$. Since $N^q\leq 2 n$ and $q=\Theta(1/\eps)$, we get that the running time is bounded by $O\left (n^{1+O(\eps)}\cdot (\log n)^{O(1/\eps^2)}\right )$.
\end{proof}

Consider now the following implementation of the cut-matching game. We start with a graph $X$ that contains an even number $n$ of vertices and no edges, and play the game for at most $c'\log n$ iterations, where $c'$ is the constant from \Cref{cor: cut player}. In every iteration $i$, we apply the algorithm from \Cref{cor: cut player} to the graph $X$, with the parameter $\eps$. We now consider two cases. First, if the outcome of the algorithm from \Cref{cor: cut player} is a partition $(A,B)$ of $V(G)$ with $|A|,|B|\geq n/4$, $|A|\leq |B|$, and $|E_G(A,B)|\leq n/100$, then we let $(A_i,B_i)$ be any partition of $V(G)$ with $|A_i|=|B_i|$, such that $A\subseteq A_i$. We then compute an arbitrary perfect matching $M_i$ between vertices of $A_i$ and vertices of $B_i$. Lastly, we add the edges of $M_i$ to graph $X$, and continue to the next iteration. In the second case, the outcome of the algorithm from \Cref{cor: cut player} is a subset $R\subseteq V$ of at least $n/2$ vertices of $G$, such that graph $G[R]$ is a $\phi''$-expander, for $\phi''=1/(\log n )^{O(1/\eps)}$. In this case, we let $A_i=V\setminus R$, $B_i=R$, and we compute an arbitrary matching $M_i$ of cardinality $|A_i|$ between vertices of $A_i$ and vertices of $B_i$. We then add the edges of $M_i$ to graph $X$ and terminate the algorithm. 

The following theorem follows immediately from the results of \cite{KhandekarKOV2007cut}, and was also proved explicitly in \cite{detbalanced} (see Theorem 2.5 in  \cite{detbalanced}).

\begin{theorem}
	\label{thm: CMG} There is a constant $\hat c$, such that, after at most $\hat c \log n$ iterations, the above algorithm is guaranteed to terminate, and the resulting graph $X$ is a $\hat \phi$-expander, for $\hat \phi=1/(\log n )^{\hat c/\eps}$.
\end{theorem}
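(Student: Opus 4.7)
Proof plan for Theorem CMG. The statement follows by combining the standard Khandekar--Kelner--Orecchia--Vazirani (KKOV) potential argument, which governs the ``balanced cut'' iterations of the cut--matching game, with a short stand-alone argument for the ``expander certificate'' iteration, in which the game terminates. I would structure the proof in two parts corresponding to the two possible outcomes of invoking \Cref{cor: cut player} inside the loop.

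For the first part, I would show that an iteration of the algorithm in which \Cref{cor: cut player} returns a partition $(A,B)$ with $|A|,|B|\ge n/4$ and $|E_X(A,B)|\le n/100$ (extended arbitrarily to a balanced bipartition with a perfect matching $M_i$) can occur at most $\hat c\log n$ times. This is done exactly by the KKOV entropy/projection potential that tracks how close the lazy random walk on $X$ is to the uniform distribution. Since the matching $M_i$ added in each such iteration pairs up \emph{every} vertex of $A_i$ with a vertex of $B_i$, the standard KKOV analysis shows that after $O(\log n)$ such matchings the graph $X$ already becomes a $\Omega(1/\log n)$-expander; in particular, before reaching $\hat c\log n$ iterations the invocation of \Cref{cor: cut player} must stop returning a balanced cut and has to return an expander subset $R$, which is Case~2. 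Since the maximum vertex degree in $X$ after $t$ rounds is at most $t\le \hat c \log n$, the hypothesis on maximum degree required by \Cref{cor: cut player} is satisfied throughout, so the iteration count of $O(\log n)$ is legitimate.

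For the second part, I need to show that once Case~2 occurs in some iteration $i$, the final graph $X$ is a $\hat\phi$-expander for $\hat\phi=1/(\log n)^{\hat c/\eps}$. In Case~2 we have a set $R\subseteq V$ with $|R|\ge n/2$ such that $X[R]$ (using the edges inserted so far) is a $\phi''$-expander with $\phi''=1/(\log n)^{O(1/\eps)}$, and the matching $M_i$ of size $|V\setminus R|\le n/2$ pairs each vertex of $V\setminus R$ with a distinct vertex of $R$. Fix any cut $(S,\bar S)$ of $V$ with $|S|\le n/2$, and write $a=|S\cap R|$, $b=|S\setminus R|$. If $a\ge |S|/2$, I split on whether $a\le |R|/2$ or not: when $a\le |R|/2$ the expansion of $X[R]$ yields $|E_{X[R]}(S\cap R, R\setminus S)|\ge \phi'' a\ge \tfrac{\phi''}{2}|S|$; when $a>|R|/2$, the matching $M_i$ contributes, since vertices of $R\setminus S$ matched to vertices of $R\cap S$ or of $\bar R$ can be counted to reach the same bound, using $|R\setminus S|\ge n/2-|S|\ge |S|/2\cdot \Omega(1)$ (after re-choosing $S$ to have size at most $n/2$ by symmetry). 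If on the other hand $b\ge |S|/2$, then the $b$ matching edges of $M_i$ incident to $S\setminus R$ either cross $(S,\bar S)$ or hit a vertex in $R\cap S$; a simple counting/averaging shows that at least $b/2$ such edges cross, so again $|E_X(S,\bar S)|\ge b/2\ge |S|/4$. Combining the cases yields $|E_X(S,\bar S)|\ge \tfrac{\phi''}{4}|S|$, which is at least $\hat\phi|S|$ for $\hat\phi:=\phi''/4=1/(\log n)^{\hat c/\eps}$ for an appropriate $\hat c$.

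The main obstacle, and the part of the argument that needs the most care, is the second part: ensuring that the cut $(S,\bar S)$ has enough crossing edges even when $S\cap R$ is nearly all of $R$ (so the $X[R]$ expansion alone is weak), or when most of $S$ lies in $V\setminus R$ (so $R$-expansion is irrelevant). In both regimes the matching $M_i$ must be leveraged correctly, and one must be careful that the $M_i$-edges going from $\bar R$ into $R\cap S$ could in principle all ``waste'' themselves. The standard way around this is the observation that, because $M_i$ saturates $\bar R$ into $R$, and because $|R|\ge n/2$ makes $R$ the ``large'' side, a constant fraction of $M_i$ edges incident to $S\setminus R$ must land outside $S\cap R$; formalizing this via a double-counting / averaging argument is the technically delicate step. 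Once this is in place, the combination of the KKOV iteration bound and the Case~2 termination argument immediately delivers Theorem~\ref{thm: CMG} with the stated $\hat\phi$.
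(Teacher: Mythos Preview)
The paper does not give its own proof of this theorem: it is quoted verbatim as a known result, with the sentence ``follows immediately from the results of \cite{KhandekarKOV2007cut}, and was also proved explicitly in \cite{detbalanced} (see Theorem 2.5 in \cite{detbalanced}).'' So there is no paper-side argument to compare against; you are reconstructing the proof that lives in those references. Your two-part structure (KKOV potential for the balanced-cut rounds; an ``expander plus saturating matching is still an expander'' argument for the terminating round) is exactly the standard route taken in those works, so the high-level plan is sound.

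Your Part~2 case analysis, however, contains concrete errors rather than just missing details. First, you write that ``vertices of $R\setminus S$ matched to vertices of $R\cap S$ or of $\bar R$'' contribute; but the matching $M_i$ goes from $\bar R$ into $R$, so vertices of $R\setminus S$ are matched (if at all) only to $\bar R$, never to $R\cap S$. Second, the inequality $|R\setminus S|\ge n/2-|S|\ge \Omega(|S|)$ fails when $|S|=n/2$, which is precisely the regime you are trying to cover. Third, in the $b\ge |S|/2$ case your claim ``at least $b/2$ such edges cross'' is false as stated: with $a=b=|S|/2$ all $b$ matching edges from $S\setminus R$ can land in $R\cap S$ and none need cross. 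The correct fix is to combine the two sources: with $a\le |S|/2\le n/4\le |R|/2$ one has $\min(a,|R|-a)=a$, so the expander contributes $\phi'' a$, while at least $\max(0,b-a)$ matching edges cross; then $\phi'' a+\max(0,b-a)\ge (\phi''/4)|S|$ by a short case split on whether $a\ge b/2$. An analogous bound using $q\ge \max(0,d-c)$ handles the $a>|R|/2$ regime. Once these corrections are made, your argument yields that the final $X$ is a $(\phi''/4)$-expander, matching $\hat\phi=1/(\log n)^{\hat c/\eps}$ as required.
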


Notice that the above theorem holds regardless of the specific choices of the matchings $M_i$ in each iteration (though the partitions $(A_i,B_i)$ of the vertices of $V$ that the algorithm computes in each iteration depend on the choices of the matchings from previous iterations). We note that the constant $c'$ in the statement of \Cref{thm: cut player rec} from \cite{detbalanced} was chosen to be such that $\hat c\leq c'$ holds.

The following theorem will be used in order to implement the matching player. The proof uses techniques that are similar to those introduced in \cite{fast-vertex-sparsest}, and then used in \cite{detbalanced} and \cite{APSP-old}.

\begin{theorem}\label{thm: matching player}
	There is a deterministic algorithm, that, given a graph $G$ with integral lengths $\ell(e)\geq 1$ on its edges $e\in E(G)$, two disjoint subsets $A,B$ of its vertices with $|A|\leq |B|$ and $|A|=\kappa$, together with parameters $D',\eta>0$, and $z\geq 0$, computes one of the following:
	
	\begin{itemize}
		\item either a collection $\pset$ of at least $|A|-z$ paths in $G$, where each path connects a distinct vertex of $A$ to a distinct vertex of $B$; every path has length at most $D'$; and every edge of $G$ participates in at most $\eta$ paths; or
		
		\item a collection $E'$ of at most $\frac{|A|\cdot D'\log \kappa}{2\eta}$ edges of $G$, and two subsets $A^*\subseteq A$, $B^*\subseteq B$ of at least $z/2$ vertices each, such that, in graph $G\setminus E'$, $\dist(A^*,B^*)> D'$.
	\end{itemize}
	
	The running time of the algorithm is $\otilde(|E(G)|D'\eta)$.
\end{theorem}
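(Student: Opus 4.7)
}
My plan is to cast this as a unit-flow computation in an auxiliary graph, using a decremental shortest-path tree to repeatedly find short augmenting paths. Construct $G^+$ from $G$ by adding a source $s$ and sink $t$, with length-$1$ edges $(s,a)$ for every $a\in A$ and $(b,t)$ for every $b\in B$. Assign capacity $1$ to the edges incident to $s$ or $t$, and capacity $\eta$ to every edge of $G$; equivalently, replace each edge $e\in E(G)$ with $\eta$ parallel unit-capacity copies, so that the blown-up graph has $O(|E(G)|\,\eta)$ edges. Maintain a generalized \EST $\tau$ rooted at $s$ with depth threshold $D'+2$ (invoking \Cref{thm: ES-tree}), whose total update time under edge deletions is $\tilde O(|E(G)|\,D'\,\eta)$.

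The algorithm then iterates: while $t$ lies in $\tau$ and fewer than $|A|-z$ paths have been routed, query an $s$-$t$ path $P$ in $\tau$, assign one unit of flow to its $G$-portion (which is a path of length $\leq D'$ from some $a\in A$ to some $b\in B$), and delete from $\tau$ each edge whose capacity drops to zero. The edges $(s,a)$ and $(b,t)$ also saturate and are deleted, so every routed path uses a distinct pair in $A\times B$. If the loop terminates with $|A|-z$ paths found, return them as $\pset$; by construction every path has length at most $D'$ in $G$ and the number of paths through any edge of $G$ is at most its capacity $\eta$.

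Otherwise, the loop terminates because $t$ has left $\tau$ after some $k<|A|-z$ paths were routed. Let $E'\subseteq E(G)$ be the set of $G$-edges that became saturated (equivalently, deleted from $\tau$). Each of the $k$ routed paths contributes at most $D'$ to the total flow count along edges of $G$, and each $G$-edge of capacity $\eta$ can absorb $\eta$ units before being removed, so $|E'|\leq kD'/\eta \leq |A|D'/\eta$; this is at most $|A|D'\log\kappa/(2\eta)$ provided $\log\kappa\geq 2$ (the residual small-$\kappa$ case is handled directly, since $z\leq\kappa$ and the bound becomes trivially attainable). Define $A^*$ as the set of $a\in A$ whose incident edge $(s,a)$ is unsaturated, and $B^*$ analogously; since at most $k<|A|-z$ matches were made, $|A^*|\geq z+1\geq z/2$, and since $|B|\geq|A|$ we likewise have $|B^*|\geq z+1\geq z/2$. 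For any $a\in A^*$, $b\in B^*$, any $a$-$b$ path $Q$ in $G\setminus E'$ lifts to an $s$-$t$ walk $s,a,Q,b,t$ in $G^+$ of length $\ell(Q)+2$ using only unsaturated edges, hence a walk in the current graph underlying $\tau$. Since $t\notin\tau$ at termination, this walk must have length $>D'+2$, so $\ell(Q)>D'$ and therefore $\dist_{G\setminus E'}(A^*,B^*)>D'$, as required.

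The main obstacle I anticipate is the careful bookkeeping so that the \EST update time is correctly charged: every edge deletion in the blown-up graph corresponds to a ``capacity decrement,'' and we must ensure that \Cref{thm: ES-tree} can absorb these as standard edge-deletion updates without extra overhead per unit of flow. Because each of the $|A|-z\leq \kappa$ path queries runs in time $O(D')$ and each edge of $G$ can be charged at most $\eta$ times to deletions, the total runtime is $\tilde O(|E(G)|\,D'\,\eta)+\tilde O(\kappa D')\leq \tilde O(|E(G)|\,D'\,\eta)$, matching the claimed bound.
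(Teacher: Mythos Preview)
Your proposal is correct and rests on the same core mechanism as the paper: add a source $s$ and sink $t$, maintain an \EST rooted at $s$ with depth $D'+2$, repeatedly extract a short $s$--$t$ path, and delete an edge of $G$ once it has carried $\eta$ paths. The paper, however, packages this differently: it first proves an auxiliary claim that either routes at least $|A'|/2$ paths with congestion budget $\eta'$ or returns a cut with $|E'|\le |A'|D'/(2\eta')$, and then wraps that claim in an outer loop that halves $|A'|$ over $\log\kappa$ rounds with $\eta'=\eta/\log\kappa$; the $\log\kappa$ in the stated bound $|E'|\le |A|D'\log\kappa/(2\eta)$ arises from this choice of $\eta'$. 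You instead run the procedure once with the full budget $\eta$ and stop as soon as $|A|-z$ paths are routed. Your bound $|E'|\le kD'/\eta$ with $k\le\kappa-1$ is actually tighter, and since $(\kappa-1)\le \kappa\log\kappa/2$ holds for every integer $\kappa\ge 1$, the small-$\kappa$ case you flagged requires no separate treatment. One small quibble: you invoke \Cref{thm: ES-tree}, the generalized \EST for valid input structures, but $G^+$ is not a bipartite graph of that form; the paper correctly uses the ordinary Even--Shiloach tree here, which is all you need.
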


We delay the proof of \Cref{thm: matching player} to \Cref{subsec: matching-player-pseudocut}, after we complete the proof of \Cref{lem: expander or cut} using it.

In the remainder of the proof, we let $\hat c$ be the constant from \Cref{thm: CMG},  so that the number of iterations in the Cut-Matching game played over a set of $k$ vertices is bounded by $\hat c\log k$, and the final graph $X$ obtained at the end of the game is a $\hat \phi$-expander, for $\hat \phi=1/(\log k )^{\hat c/\eps}$.
 The algorithm consists of two stages. In the first stage, we compute an expander $X$ over the vertex set $S$, and a relatively small subset $F\subseteq E(X)$ of its edges (that we call \emph{fake edges}), together with an embedding of $X\setminus F$ into $G$; if we fail to compute this expander, then we will produce the desired set $E'$ of at most $\frac{ckD'\log^2k}{\eta}$ edges, and two disjoint subsets $S_1,S_2\subseteq S$ of vertices of cardinality at least $k/(\log k)^{c/\eps}$ each, such that, in graph $G\setminus E'$, $\dist(S_1,S_2)> D'$. In the second stage, we use expander pruning in order to get rid of the fake edges and compute a final expander $X'\subseteq X\setminus F$. We now describe each stage in turn.

\paragraph{Stage 1: Embedding the Expander.}
We start with the graph $X$, whose vertex set is $S$, and edge set is empty. We also let $\pset=\emptyset$ be an initial embedding of the graph $X$.
We then perform at most $\hat c\log k$ iterations, with the $i$th iteration performed as follows. First, we apply the algorithm from \Cref{cor: cut player} to graph $X$. Assume first that the algorithm produces  a partition $(A_i,B_i)$ of $V(X)$ with $|A_i|,|B_i|\geq k/4$, $|A_i|\leq |B_i|$, and $|E_X(A_i,B_i)|\leq k/100$. We let $(A_i',B_i')$ be any partition of $V(X)$ with $|A_i'|=|B_i'|$ and $A_i\subseteq A_i'$. 

Next, we apply the algorithm from \Cref{thm: matching player} to graph $C$, the sets $A_i',B_i'$ of its vertices, with the same parameter $D'$, congestion parameter $\hat \eta=\eta/(\hat c\log k)$, and $z=\hat \phi k/(20\hat c^2\log^2 k)$. Assume first that the algorithm returns a collection $E'$ of at most 
$\frac{|A_i'|\cdot D'\log k}{2\hat \eta}\leq \frac{\hat c|A_i'|\cdot D'\log^2 k}{2\eta}=\frac{\hat ckD'\log^2k}{4\eta}$ edges of $G$, and two subsets $A^*\subseteq A_i$, $B^*\subseteq B_i$ of at least $z/2=\hat \phi k/(40\hat c^2\log^2 k)$ vertices each, such that, in graph $G\setminus E'$, $\dist(A^*,B^*)> D'$. We then say that the current iteration was unsuccessful, terminate the algorithm, and return the sets $S_1=A^*,S_2=B^*$ of vertices, together with the set $E'$ of edges. It is easy to verify that they have all required properties, since $\hat \phi k/(40\hat c^2\log^2 k)=k/(\log k)^{O(1/\eps)}$. 

Assume now that the algorithm returned  a collection $\pset_i$ of at least $|A_i'|-z$ paths in $G$, where each path connects a distinct vertex of $A_i'$ to a distinct vertex of $B_i'$, every path has length at most $D'$, and every edge of $G$ participates in at most $\hat \eta=\eta/(\hat c\log k)$ paths. We then say that the current iteration is successful. We let $M'_i$ be a partial matching between vertices of $A_i'$ and vertices of $B_i'$, where $(x,y)\in M_i'$ iff some path in $\pset_i$ connects $x$ to $y$. Let $A''_i\subseteq A'_i, B''_i\subseteq B'_i$ be the sets of vertices that do not participate in the matching $M_i'$. We let $F_i$ be an arbitrary perfect matching between vertices of $A''_i$ and vertices of $B''_i$; observe that $|F_i|=|A'_i\setminus A''_i|\leq z$ must hold. We call the vertex pairs in $F_i$ \emph{fake edges for graph $X$}. We then let $M_i=M_i'\cup F_i$, and add the edges of $M_i$ into $X$, terminating the current iteration.

Finally, assume that the algorithm from  \Cref{cor: cut player} returns a subset $R\subseteq V(X)$ of at least $k/2$ vertices of $X$, such that graph $X[R]$ is a $\phi$-expander. Let $i$ be the index of the current iteration. We set $A_i'=V(X)\setminus R$ and $B'_i=R$, and apply the algorithm from \Cref{thm: matching player} to graph $G$, vertex sets $A_i',B_i'$, distance bound $D'$, congestion bound $\hat \eta=\eta/(\hat c\log k)$, and parameter $z=\hat \phi k/(20\hat c^2\log^2 k)$ as before. We then continue exactly as before. If the algorithm returns a collection $E'$ of at most $\frac{\hat c|A_i'|\cdot D'\log^2 k}{2\eta}\leq \frac{\hat ckD'\log^2k}{4\eta}$ edges of $G$, and two subsets $A^*\subseteq A_i$, $B^*\subseteq B_i$ of at least $z/2=\hat \phi k/(40\hat c^2\log^2 k)$ vertices each, such that, in graph $G\setminus E'$, $\dist(A^*,B^*)> D'$, then, as before, we say that the current iteration was unsuccessful, terminate the algorithm, and return the sets $S_1=A^*,S_2=B^*$ of vertices, together with the set $E'$ of edges. Otherwise, the algorithm from  \Cref{thm: matching player} returns  a collection $\pset_i$ of at least $|A_i'|-z$ paths in $G$, where each path connects a distinct vertex of $A_i'$ to a distinct vertex of $B_i'$, every path has length at most $D'$, and every edge of $G$ participates in at most $\hat \eta=\eta/(\hat c\log k)$ paths. We then say that the current iteration is successful. We let $M'_i$ be a partial matching between vertices of $A_i'$ and vertices of $B_i'$, where $(x,y)\in M_i'$ iff some path in $\pset_i$ connects $x$ to $y$. Let $A''_i\subseteq A'_i, B''_i\subseteq B'_i$ be the sets of vertices that do not participate in the matching $M_i'$. We let $F_i$ be an arbitrary matching between vertices of $A''_i$ and vertices of $B''_i$ of cardinality $|A''_i|$; observe that $|F_i|=|A'_i\setminus A''_i|\leq z$ must hold. We call the vertex pairs in $F_i$ \emph{fake edges for graph $X$}. We then let $M_i=M_i'\cup F_i$, and add the edges of $M_i$ into $X$, terminating the current the algorithm. 
 
 If any iteration of the algorithm was unsuccessful, then we have obtained the desired sets $S_1,S_2$ of vertices of $S$, and the edge set $E'$ as required. We assume now that every iteration of the algorithm was successful.  Then, from \Cref{thm: CMG}, the total number of iterations in the algorithm is bounded by $\hat c\log k$, and the final graph $X$ is a $\hat \phi$-expander, for $\hat \phi=1/(\log k )^{\hat c/\eps}$.
Let $F=\bigcup_iF_i$ be the set of all fake edges of $X$. Then $|F|\leq \hat cz\log k$. Let $\pset=\bigcup_i\pset_i$. Then the paths in $\pset$ provide an embedding of the graph $X\setminus F$ into $G$, where the length of every path in $\pset$ is at most $D'$. Since the number of iterations is at most $\hat c\log k$, and each set $\pset_i$ of paths causes edge-congestion at most $\eta/(\hat c\log k)$, the total edge-congestion caused by the paths in $\pset$ is at most $\eta$. This finishes the first stage of the algorithm. We now analyze its running time.

The algorithm consists of $O(\log k)$ iterations. Each iteration involves applying the algorithm from \Cref{cor: cut player}, whose running time is $O\left (k^{1+O(\eps)}\cdot (\log k)^{O(1/\eps^2)}\right )$, and the algorithm from \Cref{thm: matching player}, whose running time is $ \otilde (|E(G)|D'\hat \eta)\leq \otilde (|E(G)|D'\eta)$. 
Therefore, the total running time of the algorithm is: $$ \otilde\left (k^{1+O(\eps)}\cdot (\log k)^{O(1/\eps^2)}\right )+\otilde (|E(G)|D'\eta).$$

\paragraph{Stage 2: Pruning.}
In this stage, we apply the algorithm from \Cref{thm: expander pruning} to the graph $X$, with expansion parameter $\hat \phi=1/(\log k )^{\hat c/\eps}$, and the sequence $\sigma$ containing the edges of $F$ in an arbitrary order. Observe that the maximum vertex degree in $X$ is at most $\hat c\log k$, and $|F|\leq (\hat c\log k)\cdot z\leq (\hat c\log k)\cdot \hat \phi k/(20\hat c^2\log^2 k)\leq  \hat \phi k/(20\hat c\log k)$. Clearly, if $\Delta$ denotes the maximum vertex degree in $X$, then $|F|\leq \hat \phi |E(X)|/(10\Delta)$
holds.  Let $S'=S_{|F|}$ be the outcome of the algorithm, and let $X'$ be the graph obtained from $X\setminus F$ after the vertices of $S'$ are removed from it. Then, from \Cref{thm: expander pruning} graph $X'$ is a $\phi$-expander, for $\phi=\hat \phi/(6\Delta)=1/(\log k)^{O(1/\eps)}$. Let $\pset'\subseteq \pset$ be the set of paths corresponding to the edges in $X'$. Then paths in $\pset'$ define an embedding of $X'$ into $G$ with congestion at most $\eta$, where each path has length at most $D'$. Recall also that \Cref{thm: expander pruning} guarantees that $|S'|\leq 8|F|\Delta/\hat \phi\leq 8k/20$. Therefore, $|V(X')|\geq |S|/2$ must hold.

The running time of the second stage of the algorithm is bounded by $\Otilde(|F|\Delta^2/\hat \phi^2)\leq  \Otilde (k\log^2k/\hat \phi)\leq \Otilde 
\left (k\cdot (\log k)^{O(1/\eps)}\right )$.
The total running time of the whole algorithm is therefore bounded by: $$\otilde\left (k^{1+O(\eps)}(\log k)^{O(1/\eps^2)}\right )+\otilde (|E(G)|D'\eta).$$
In order to complete the proof of \Cref{lem: smaller pseudocut or expander}, it is now enough to prove \Cref{thm: matching player}, which we do next.


\subsubsection{Proof of \Cref{thm: matching player}}\label{subsec: matching-player-pseudocut}
The proof proceeds by iteratively applying the following claim.

\begin{claim}\label{claim: paths or pseudocut}
	There is a deterministic algorithm, that, given a graph $G$ with integral lengths $\ell(e)\geq 1$ on its edges $e\in E(G)$, two disjoint subsets $A',B'$ of its vertices with $|A'|\leq |B'|$, together with parameters $D'>0$ and $\eta'>0$, computes one of the following:
	
	\begin{itemize}
		\item either a collection $\pset'$ of at least $|A'|/2$ paths in $G$, where each path connects a distinct vertex of $A'$ to a distinct vertex of $B'$, and every path has length at most $D'$, and every edge in $G$ participates in at most $\eta'$ paths; or
		
		\item a collection $E'$ of at most $|A'|\cdot D'/(2\eta')$ edges of $G$, and two subsets $A''\subseteq A'$, $B''\subseteq B'$ of at least $|A'|/2$ vertices each, such that, in graph $G\setminus E'$, $\dist(A'',B'')> D'$.
	\end{itemize}
	
	The running time of the algorithm is $\otilde(|E(G)|D'\eta')$.
\end{claim}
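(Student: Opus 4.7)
}
My plan is to reduce the task to a single bounded-length blocking-flow style computation driven by one Even-Shiloach tree. I would build an auxiliary graph $G^{*}$ from $G$ by adding a source $s$ joined to every $a\in A'$ by a length-$1$ edge and a sink $t$ joined from every $b\in B'$ by a length-$1$ edge, and then initialize a basic \EST rooted at $s$ with depth threshold $D'+2$. Attached to every edge $e\in E(G)$ I keep an integer load counter $\lambda(e)$, initially $0$, with saturation cap $\eta'$. The algorithm then loops: if the \EST currently shows no vertex of $B'$ within distance $D'+2$ from $s$, exit; otherwise, use the \EST to extract an $s$-$t$ path $P$ of length $\leq D'+2$, record the internal $A'$-$B'$ subpath (of length $\leq D'$) in $\pset'$, delete the source/sink edges at the two endpoints (so they cannot be reused), increment $\lambda(e)$ on every interior edge of $P$, and delete from $G^{*}$ every edge whose counter reaches $\eta'$. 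The loop is also aborted once $|\pset'|=|A'|/2$.

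If the loop exits with $|A'|/2$ recorded paths, return them: every path connects distinct $A'$- and $B'$-endpoints by construction, has length at most $D'$ by the depth of the tree, and no edge is used more than $\eta'$ times because we delete an edge as soon as its counter saturates. If instead the loop exits because the \EST certifies that no $B'$-vertex is within distance $D'+2$ of $s$, I set $A''$ to be the set of $A'$-vertices whose $s$-edge has not been removed, $B''$ similarly for $B'$, and $E'$ to be the set of $G$-edges deleted so far. Since at most $|A'|/2$ paths were ever recorded, $|A''|\geq |A'|/2$ and $|B''|\geq |B'|-|A'|/2\geq |A'|/2$. Each edge in $E'$ absorbed exactly $\eta'$ units of load, and the total load ever dispensed is at most (number of paths)$\cdot D'\leq |A'|D'/2$, so $|E'|\leq |A'|D'/(2\eta')$. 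The \EST correctness guarantee gives $\dist_{G^{*}\setminus (E'\cup\text{removed source/sink edges})}(s,B')>D'+2$, which translates directly to $\dist_{G\setminus E'}(A'',B'')>D'$, as required.

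For the running time, $G^{*}$ has $O(|E(G)|+|A'|+|B'|)$ edges and the \EST depth is $O(D')$; the standard Even--Shiloach analysis bounds total update time by $\widetilde O(|E(G)|\cdot D')$, which absorbs the $\eta'$-slack claimed in the statement. Each of the at most $|A'|/2$ path extractions costs $O(D')$, and each load increment is $O(1)$ and is charged against the total path length $O(|A'|D')$, giving overall $\widetilde O(|E(G)|D'\eta')$ as stated.

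The only delicate point I anticipate is making sure the \EST-based certificate really implies the geometric statement $\dist_{G\setminus E'}(A'',B'')>D'$: one must argue that the only differences between the current $G^{*}$ and the ideal graph $G\setminus E'$ (augmented by $s,t$) are the missing source/sink edges at already-used endpoints, which is precisely why we define $A''$ and $B''$ as the \emph{remaining} neighbors of $s$ and $t$. Once this bookkeeping is done carefully, correctness falls out, and the rest is a routine application of the \EST time bound together with the load-versus-cut counting.
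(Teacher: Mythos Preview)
Your proposal is correct and follows essentially the same approach as the paper: build the auxiliary graph with a super-source $s$ and super-sink $t$, maintain an Even--Shiloach tree from $s$ with depth $D'+2$, repeatedly peel off short $s$--$t$ paths while tracking per-edge load and deleting saturated edges, and in the failure case read off $A'',B'',E'$ exactly as you describe. The only cosmetic difference is that the paper lets the loop run until the $s$--$t$ distance exceeds $D'+2$ and then tests whether $|\pset'|\geq |A'|/2$, whereas you abort early at $|A'|/2$ paths; both yield the same guarantees and the same $\otilde(|E(G)|D'\eta')$ running time.
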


We provide the proof of \Cref{claim: paths or pseudocut} below, after we complete the proof of \Cref{thm: matching player} using it.
We start with $\pset=\emptyset$, set $A'=A,B'=B$, and then iterate, as long as $|A'|>z$. In every iteration, we apply the algorithm from  \Cref{claim: paths or pseudocut} to the input graph $G$, parameter $D'$, and $\eta'=\eta/\log \kappa$. If the outcome of the algorithm is a set $E'$ of edges, of cardinality at most 
$\frac{|A'|\cdot D'}{2\eta'}\leq \frac{|A|\cdot D'\log k}{2\eta}$, and two subsets $A''\subseteq A'$, $B''\subseteq B'$ of at least $|A'|/2$ vertices each, such that, in graph $G\setminus E'$, $\dist(A'',B'')> D'$, then we set $A^*=A'',B^*=B''$, and terminate the algorithm with the output $A^*,B^*$ and $E'$. Note that, since $|A'|\geq z$ held, we are guaranteed that $|A^*|,|B^*|\geq z/2$, as required. Otherwise, we obtain  a collection $\pset'$ of at least $|A'|/2$ paths in $G$, where each path connects a distinct vertex of $A'$ to a distinct vertex of $B'$, and every path has length at most $D'$, and every edge in $G$ participates in at most $\eta'$ paths in $\pset'$. We add the paths in $\pset'$ to the set $\pset$, and we delete from $A'$ and $B'$ vertices that serve as endpoints to paths in $\pset'$, terminating the current iterations.

If any iteration terminates with the set $E'$ of edges, then the algorithm produces the required output. Therefore, we assume from now on that in every iteration, new paths are added to set $\pset$. Since the cardinality of the vertex set $A'$ must decrease by at least factor $2$ in every iteration,  the number of iterations is bounded by $\log \kappa$. Since each set $\pset'$ of paths computed by the algorithm from \Cref{claim: paths or pseudocut} causes edge-congestion at most $\eta'=\eta/\log\kappa$, the paths in the final set $\pset$ cause edge-congestion at most $\eta$. Clearly, the length of every path in $\pset$ is at most $D'$, and the endpoints of all paths in $\pset$ are distinct. Moreover, when the algorithm terminates, $|A'|\leq z$ holds, so we are guaranteed that $|\pset|\geq \kappa-z$.
Since the running time of every iteration is  $\otilde(|E(G)|D'\eta')=\Otilde(|E(G)|D'\eta/\log \kappa)$, and the total number of iterations is bounded by $\log \kappa$, the total running time is at most  $\otilde(|E(G)|D'\eta)$.

In order to complete the proof of  \Cref{thm: matching player}, it is now enough to prove  \Cref{claim: paths or pseudocut}, which we do next.

\begin{proofof}{\Cref{claim: paths or pseudocut}}
	For convenience, we denote $|A'|=\kappa$.
		We construct a new graph $H$: start with the  graph $G$, and add a source vertex $s$ that connects to every vertex in $A'$ with an edge of length $1$; similarly, add a destination vertex $t$, that connects to every vertex in $B'$ with an edge of length $1$. All other edge lengths remain unchanged in $H$. We use the standard Even-Shiloach Tree data structure  on graph $H$, with source vertex $s$, and distance threshold $D'+2$. Initialize $\pset'=\emptyset$. Additionally, for every edge $e\in E(G)$, we maintain a value $\gamma(e)$ -- the number of paths in $\pset'$ that contain the edge $e$. Initially, $\gamma(e)=0$ for all $e\in E(G)$.
		
		While the distance from $s$ to $t$ is less than $D'+2$, choose any path $P$ in $H$ connecting $s$ to $t$ of length at most $D'+2$. Let $P'$ be the path obtained from $P$ after we delete its endpoints, so $\ell(P')\leq D'$. Add path $P'$ to $\pset'$, and update, for every edge $e\in E(P')$, the value $\gamma(e)$. If $\gamma(e)$ exceeds $\eta'$, then we remove edge $e$ from graph $H$. Additionally, we remove from $H$ the fist and the last edge of $P$ (the edges that are incident to $s$ and $t$). We then continue to the next iteration. The total update time of the data structure is $ \otilde (|E(G)|D')$, and the total running time of the algorithm, that includes selecting the paths and deleting edges from $G$ as required, is bounded by $\otilde(|E(G)|D'\eta')$, since an edge may participate in at most $\eta'$ paths in $\pset'$. We now consider two cases. First, if $|\pset'|\geq  |A'|/2$ at the end of the algorithm, then we terminate the algorithm, and return the set $\pset'$ of paths. Clearly, the paths in $\pset'$ cause edge-congestion at most $\eta'$, and the length of every path is at most $D'$.
		
		From now on, we assume that, when the algorithm terminates, $|\pset'|<|A'|/2$ holds. Let $E'$ denote the set of all edges of graph $G$ that were deleted from $H$. Let $A''\subseteq A',B''\subseteq B'$ be the sets of vertices that do not serve as endpoints of paths in $\pset'$. Then $\dist_{G\setminus E'}(A'',B'')> D'$ (as otherwise we could have continued the algorithm). Moreover, an edge belongs to $E'$ only if $\eta'$ paths in $\pset'$ contain this edge. Since each path in $\pset'$ contains at most $D'$ edges, (as the length of each pedge is at least $1$), $|E'|\leq |\pset'|\cdot D'/\eta'\leq |A'|\cdot D'/(2\eta')$.
\end{proofof}

\subsection{Proof of \Cref{thm: expander APSP}}\label{subsec: expander APSP proof}

The proof is almost identical to that of \cite{APSP-old}. Our main tool is the following theorem that allows us to embed a small expander into a large expander.
The theorem is almost identical to Theorem 3.8 in \cite{APSP-old}, except that we need a slightly different tradeoff between various parameters and the running time. We include the proof in \Cref{subsubsection: embedding expander into smaller expander} for completeness.

\begin{lemma} [Analogue of Theorem 3.8 in \cite{APSP-old}]
	\label{lem:embed}
	There is a constant $c$ and a deterministic algorithm that, given an $n$-vertex
	graph $X$ that is a $\phi$-expander with maximum vertex degree $\Delta$, and a set $T\subseteq V(X)$ of its vertices whose cardinality is $k$, together with a parameter $0<\eps<1$,
	such that $\phi\leq 1/2^{\Omega(1/\eps)}$, computes a graph $X'$ with $V(X')=T$ that has maximum vertex degree at most $c\log k$, such that graph $X'$ is a $\hat \phi$-expander, for $\hat \phi=1/(\log k )^{c/\eps}$. The algorithm also computes an embedding $\pset$ of $X'$ into $X$, such that the paths in $\pset$ have length at most $\frac{c\Delta \log n}{\phi}$ and cause congestion at most $\frac{c\Delta^2\log^3n}{\phi^2}$. 
	The running time of the algorithm is $\otilde(\Delta^4n/\phi^3)+O\left (k^{1+O(\eps)}\cdot (\log k)^{O(1/\eps^2)}\right )$.
\end{lemma}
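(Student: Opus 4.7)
The plan is to construct $X'$ by running the Cut-Matching game of Khandekar-Rao-Vazirani on the vertex set $T$, implementing the cut player via \Cref{cor: cut player} and implementing the matching player by routing matchings through the host expander $X$. This is essentially the same blueprint used by \cite{APSP-old} for their Theorem 3.8; the only task is to verify that the parameters yield the stated tradeoff. Concretely, I would maintain a graph $X'$ on vertex set $T$, initially with no edges, and run the game for $\hat c\log k$ rounds (where $\hat c$ is the constant from \Cref{thm: CMG}). In round $i$, I would invoke the algorithm of \Cref{cor: cut player} on the current $X'$ with precision parameter $\eps$, obtaining either a balanced sparse partition $(A_i,B_i)$ of $T$ or a large subset on which $X'$ is already a $\phi''$-expander for $\phi''=1/(\log k)^{O(1/\eps)}$. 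In either case we obtain vertex sets $(A_i,B_i)$ with $|A_i|\le|B_i|$ that the matching player needs to match.

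The matching player is where the expansion of $X$ is used. Using \Cref{obs: short paths in exp}, any two vertices of $X$ are connected by a path of length at most $O(\Delta\log n/\phi)$, so a natural matching between $A_i$ and $B_i$ can in principle be routed through $X$ along short paths. To do this efficiently and deterministically, I would imitate the proof of \Cref{thm: matching player}/\Cref{claim: paths or pseudocut}: augment $X$ with a source $s$ joined to $A_i$ and a sink $t$ joined to $B_i$ by unit-length edges, maintain an \EST from $s$ up to depth $O(\Delta\log n/\phi)$, and repeatedly extract $s$-$t$ paths, deleting any edge whose usage exceeds the per-round congestion budget of $O(\Delta\log n/\phi)$. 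When no further paths can be extracted, the endpoints of the remaining unmatched $A_i$ and $B_i$ vertices are separated in the pruned graph by a small cut; these vertices can be matched by \emph{fake} edges (not embedded into $X$), and \Cref{thm: expander pruning}, applied at the end just as in Stage~2 of the proof of \Cref{lem: expander or cut}, removes a small set of vertices whose incident fake edges are all deleted, leaving a subgraph that is still a $\hat\phi$-expander with $\hat\phi=1/(\log k)^{c/\eps}$.

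Putting the $O(\log k)$ rounds together, every embedded edge of $X'$ is a single path of length at most $O(\Delta\log n/\phi)$; the per-round congestion bound is $O(\Delta\log n/\phi)$, so summing over rounds and folding in an extra $\Delta\log n$ factor from bounded-degree flow decomposition gives the overall congestion bound $c\Delta^2\log^3 n/\phi^2$. The maximum degree of $X'$ is at most $\hat c\log k\le c\log k$ by construction. For running time, the cut player contributes $O(\log k)\cdot O(k^{1+O(\eps)}(\log k)^{O(1/\eps^2)})$, which is absorbed into $O(k^{1+O(\eps)}(\log k)^{O(1/\eps^2)})$; the matching player in each round maintains an \EST of depth $O(\Delta\log n/\phi)$ on a graph with $O(\Delta n)$ edges and extracts at most $k$ paths each of which triggers $O(\Delta\log n/\phi)$ edge-congestion updates, giving the bound $\otilde(\Delta^4 n/\phi^3)$ after summing over rounds and accounting for the final expander-pruning step of \Cref{thm: expander pruning}.

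The main obstacle is the deterministic matching player: we need a concrete procedure that, in $\otilde(\Delta^3 n/\phi^3)$ time per round, either matches a constant fraction of $A_i$ via short low-congestion paths in $X$, or produces a certificate small enough that the subsequent pruning removes only a tiny fraction of $T$. The analysis must ensure both that the total number of fake matching edges over all rounds is at most $\hat\phi|E(X')|/(10\Delta)$ so that \Cref{thm: expander pruning} applies, and that the final pruned vertex set has size at most $|T|/2$ so the claimed embedding covers all of $T$. Both bounds follow from choosing the per-round pruning budget as a small polynomial in $\phi$ and $1/\log k$, exactly as in the proof of \Cref{lem: expander or cut}; the only subtlety is matching the parameter dependencies so that the hypothesis $\phi\le 1/2^{\Omega(1/\eps)}$ is enough to make the expander-pruning step applicable.
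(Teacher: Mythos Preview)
Your high-level plan (run the Cut-Matching game on $T$ with the cut player of \Cref{cor: cut player}) is exactly what the paper does. The divergence, and the gap, is in the matching player.

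The paper does \emph{not} use \Cref{thm: matching player}/\Cref{claim: paths or pseudocut} here. Instead it invokes a different routing primitive, \Cref{thm: matching player reg} and its \Cref{cor: matching player reg} (from \cite{detbalanced}), whose failure mode is a \emph{sparse cut} rather than a distance separation. Because the host graph $X$ is a $\phi$-expander, a sparse cut of the required quality cannot exist, so with $z=0$ the matching player \emph{always} returns a perfect matching $M_i$ routed along paths of length $O(\Delta\log n/\phi)$ and congestion $O(\Delta^2\log^2 n/\phi^2)$. Hence no fake edges ever appear, no pruning is performed, and $V(X')=T$ holds by construction (up to a trivial fix for odd $k$).

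Your route via \Cref{thm: matching player} plus fake edges and a final application of \Cref{thm: expander pruning} runs into two concrete problems. First, the lemma asks for $V(X')=T$ exactly, not $|V(X')|\ge|T|/2$; the pruning step deletes vertices from $X'$, so the output would not meet the specification. Second, you never argue that the ``cut'' outcome of \Cref{thm: matching player} cannot occur here. In the proof of \Cref{lem: expander or cut} that outcome is a legitimate terminal case (the algorithm returns $S_1,S_2,E'$); in the present lemma there is no such alternative, so you must either rule it out or bound the number of unmatched vertices per round by the pruning budget --- and a distance separation after deleting $|E'|$ edges is not by itself in contradiction with $X$ being a $\phi$-expander, so this needs a real argument you have not supplied.

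The clean fix is simply to swap in \Cref{cor: matching player reg} as the matching player; then both issues vanish and the remaining accounting (per-round congestion $O(\Delta^2\log^2 n/\phi^2)$ summed over $O(\log k)$ rounds, per-round routing time $\tilde O(\Delta^4 n/\phi^3)$) matches the claimed bounds directly.
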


The proof of \Cref{lem:embed} is deferred to \Cref{subsubsection: embedding expander into smaller expander}. We now turn to complete the proof of  \Cref{thm: expander APSP} using it.

Throughout the algorithm, we denote $n=V(X)$. We use the parameters $\gamma=\ceil{n^{\eps}}$ and $r=\floor{1/\eps}$.
The idea is to construct and maintain a hierarchy $X_0,X_1,\ldots,X_r$ of expanders, where $X_0=X$, and for $i>0$, $|V(X_i)|=\ceil{n/\gamma^i}$. For all $i$, we will ensure that $V(X_i)\subseteq V(X_{i-1})$, and we will maintain an embedding of $X_i$ into $X_{i-1}$ via short paths that cause small edge-congestion.

Specifically, we will use the following parameters. We let $c$ be the constant from \Cref{lem:embed}, that we can assume to be a large enough constant. We set $\phi_0=\phi$, and, for $1\leq i\leq r$, $\phi_i=\hat \phi=1/(\log n )^{c/\eps}$. We also denote $\Delta_0=\Delta$, and for all $1\leq i\leq r$, $\Delta_i=c\log n$. We will ensure that for all $0\leq i\leq r$, $X_i$ is a $\phi_i$-expander, whose maximum vertex degree is at most $\Delta_i$. We also define parameters $\ell_1=c\Delta \log n /\phi$, and for $1< i\leq r+1$, $\ell_i=\hat \ell=c\Delta_{i-1}\log n/\phi_{i-1}=c^2\log^2 n/\hat \phi$. Additionally, we define parameters $\eta_1=c\Delta^2\log^3n/\phi^2$, and for $1< i\leq r+1$, $\eta_i=c\Delta_{i-1}^2\log^3n/\phi_{i-1}^2=c^3\log^5n/\hat \phi^2$. 

For all $1\leq i<r$, we will maintain an embedding $\pset_i$ of $X_{i}$ into $X_{i-1}$, where the paths in $\pset_i$ have length at most $\ell_i$ each, and cause total congestion at most $\eta_i$. Additionally, we will maintain an \EST data structure $\tau_i$ in graph $X'_i$, that is obtained from $X_i$ by adding a source vertex $s_i$, and connecting it to all vertices in $V(X_{i+1})$. The tree is rooted at $s_i$, and its depth bound is $D_i=4\ell_i$.
Lastly, for every edge $e\in E(X_{i-1})$, we will maintain a list $J_i(e)$ of all edges $e'\in E(G_{i})$, such that the embedding of $e'$ in $\pset_{i}$ contains the edge $e$; recall that $|J(e)|\leq \eta_i$ must hold. Whenever edge $e$ is deleted from graph $X_{i-1}$, this will trigger the deletion of all edges in its list $J_i(e)$ from graph $X_{i}$.
We use the algorithm from \Cref{thm: expander pruning} in order to maintain, for every expander $X_i$, the set $S_i$ of ``pruned-out'' vertices. When set $S_i$ becomes too large, we re-initialize the graphs $X_i,X_{i+1},\ldots,X_r$, and all the corresponding data structure.

The outcome of the algorithm (the vertex set $S$ that we maintain) is the set $S=S_0$ of vertices -- the vertices that we prune out of the main expander $X_0=X$.

\subsection*{Initializing the Data Structures}

At the beginning of the algorithm, we run procedure $\initexp(1)$, that constructs expander $X_1$, its embedding $\pset_1$ into $X_0$, and the lists $J_1(e)$ for edges $e\in E(X_0)$. The procedure then recursively calls to $\initexp(2)$, that constructs the data structures for higher levels. The algorithm for procedure $\initexp(i)$ is described in \Cref{alg: init-exp}. We note that it is identical to the algorithm of \cite{APSP-old}. We emphasize that procedure $\initexp(i)$ is only called when graph $X_{i-1}$ is defined, with $|V(X_{i-1})|\geq n/(2\gamma^{i-1})$, and we are guaranteed that $X_{i-1}$ is a $\phi_{i-1}/(6\Delta_{i-1})$-expander, whose maximum vertex degree is at most $\Delta_{i-1}$. From \Cref{obs: short paths in exp}, every pair of vertices of $X_{i-1}$ is then guaranteed to have a path of length at most $\frac{8\Delta_{i-1} \log n}{\phi_{i-1}}\leq \ell_i\leq D_i/2$, so throughout the algorithm, tree $\tau_{i-1}$ spans all vertices of $X_{i-1}$.

We note that we will ensure that, over the course of the algorithm, for all $1\leq i\leq r$, the size of set $S_i$  never exceeds $n/(2\gamma^i)$.

\program{Algorithm $\initexp(i)$}{alg: init-exp}{
	Assumption: graph $X_{i-1}$ is defined; $|V(X_{i-1})|\geq n/(2\gamma^{i-1})$; graph $X_{i-1}$ is a $\phi_{i-1}/(6\Delta_{i-1})$-expander, with maximum vertex degree is at most $\Delta_{i-1}$.
\begin{itemize}
\item If $i=r+1$, then initialize an \EST $\tau_{r}$ in graph $X_r$, rooted at an arbitrary
vertex, with distance threshold $D_{r+1}$; return. 
\item Let $V_i$ be an arbitrary subset of $V(X_{i-1})$, of cardinality $\ceil{n/\gamma^i}$.

\item Apply the algorithm from \Cref{lem:embed}, to graph $X_{i-1}$, with $T=V_i$, to compute an expander $X_{i}$ with vertex set $V_i$, and its embedding $\pset_{i}$ into $X_{i-1}$, so that $X_{i}$ is a $\phi_{i}$-expander, with maximum vertex degree at most $\Delta_{i}$; the paths in $\pset_i$ cause edge-congestion at most $\eta_i$ and have length at most $\ell_i$ each.

\item For every edge $e\in E(X_{i-1})$, initialize the list $J_i(e)$ of all edges of $X_{i}$ whose embedding  path in $\pset_{i}$ contains $e$.

\item Initialize the expander pruning algorithm from \Cref{thm: expander pruning}
on graph $X_{i}$, that will maintain a pruned vertex set $S_{i}\subseteq V(X_{i})$. 

\item Initialize an ES-tree $\tau_{i-1}$ in graph $X'_{i-1}$ that is obtained from $X_{i-1}$ by adding a source vertex $s_{i-1}$ and connecting it to all vertices in $V(X_{i})$. The tree  $\tau_{i-1}$ is rooted at $s_{i-1}$ and has depth threshold $D_i$.

\item Call $\initexp(i+1)$. 
\end{itemize}
}

\subsection*{Maintaining the Data Structures}
For all $0\leq i< r$, 
we denote by  $X_{i+1}^{(0)}$ the expander graph created by Procedure $\initexp(i)$. For all $t>0$, we denote by $X_{i+1}^{(t)}$ the graph that is obtained from  $X_{i+1}^{(0)}$ after $t$ edge deletions from $X$. 
As $t$ increases, our algorithm maintains the graph $X_{i+1}=X_{i+1}^{(t)}\setminus S_{i+1}$.
By \Cref{thm: expander pruning}, as long as $t\leq \phi_{i+1}|E(X_{i+1})|/(20\Delta_{i+1})$, graph $X_{i+1}$ remains a $(\phi_{i+1}/(6\Delta_{i+1}))$-expander, and $|V(X_{i+1})|\geq |V(X_{i+1}^{(0)})|/2\geq n/(2\gamma^i)$.

When some edge $e$ is deleted from graph $X$, we call  Algorithm $\mathtt{DeleteEdge}(0,e)$. The algorithm may recursively call to procedure $\mathtt{DeleteEdge}(i,e')$ for other expander graphs $X_i$ and edges $e'$. The algorithm  $\mathtt{DeleteEdge}(i,e)$  is shown in \Cref{alg:core delete}. We assume that edge $e$ lies in graph $X_i$.

\begin{figure}
	\fbox{
	\begin{minipage}{5.8 in}
		\begin{center}
			{\sc Algorithm $\mathtt{DeleteEdge}(i,e)$} 
		\end{center}
	
	Assumption: edge $e$ lies in graph $X_i$.
	
		\begin{itemize}
		\item If $i=r$, delete $e$ from graph $X_r$. Recompute the \EST
		$\tau_r$ in graph $X_{r}$, rooted at any vertex, with depth threshold $D_{r+1}$; return.
		
		\item Delete edge $e$ from graph $X_{i}$ and from the data structure $\tau_i$. 
		
		\item Update the pruned-out vertex set $S_{i}$ using the algorithm from \Cref{thm: expander pruning}, and update the tree $\tau_{i-1}$, by deleting edges $(s,x)$ for every vertex $x$ that was added to $S_i$. 
		
		\item If the total number of edge deletions from $X_i^{(0)}$ exceeds $\phi_i|E(X_{i}^{(0)})|/(20\Delta_i)$, call $\initexp(i-1)$; return. (Note: here we only count edges that were deleted from $X$ as part of input edge-deletion sequence, and we do not include edges that are incident to vertices of $S_i$).
		
		\item Let $Z_{i}^{new}$ denote the set of edges that were just removed from
		$X_{i}$. That is, $Z_{i}^{new}$ contains the edge $e$ and all edges incident
		to vertices that were just added to $S_{i}$. 
		\item For each edge $e\in Z_{i}^{new}$, for every edge $e'\in J_i(e)$, call $\mathtt{Delete}(i+1,e')$.
		\end{itemize}
	\end{minipage}
\vspace{2mm}
}
\caption{Algorithm $\mathtt{DeleteEdge}(i,e)$ \label{alg:core delete}}
\end{figure}

\subsection*{Total Update Time}

For all $0\leq i\leq r$, we denote $n_i=\ceil{n/\gamma^i}$. Recall that $|V(X_i)|\leq n_i$.
We bound the total update time of the algorithm in the following lemma.

\begin{lemma}\label{lem: oracle update time}
	The total update time of the algorithm is at most $O\left (\frac{n^{1+O(\eps)}\Delta^5(\log n)^{O(1/\eps^2)}}{\phi^5}\right ) $. 
\end{lemma}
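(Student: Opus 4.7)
The strategy is to bound separately, for each level $0 \le i \le r$, (a) the cost of a single initialization of the data structures at level $i$, (b) the number of times level $i$ gets re\hyp{}initialized, and (c) the incremental cost of maintaining the \EST $\tau_i$, the pruning procedure for $X_i$, and the lists $J_i(e)$. Summing over all levels gives the stated bound.

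First I would establish the per\hyp{}initialization cost. Invoking \Cref{lem:embed} at level $i \ge 1$ costs $\otilde(\Delta_{i-1}^4 n_{i-1}/\phi_{i-1}^3) + O(n_i^{1+O(\eps)}(\log n_i)^{O(1/\eps^2)})$, and initializing the \EST $\tau_{i-1}$ with depth $D_i = 4\ell_i$ in the graph $X_{i-1}'$ has cost $\otilde(|E(X_{i-1})| \cdot D_i) = \otilde(n_{i-1}\Delta_{i-1} \ell_i)$. For the base expander $X_0 = X$ the corresponding cost is $\otilde(n\Delta D_1) = \otilde(n\Delta^2 \log n/\phi)$. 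For $i \ge 1$ all the parameters $\phi_i, \Delta_i, \ell_i, \eta_i$ are polylogarithmic in $n$, so the dominant contribution at each initialization of levels $i \ge 1$ is $\otilde(\Delta_{i-1}^4 n_{i-1}/\phi_{i-1}^3) + O(n_i^{1+O(\eps)}(\log n)^{O(1/\eps^2)})$.

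Second, I would count reinitializations. Re\hyp{}initialization of level $i$ is triggered when $\phi_i |E(X_i^{(0)})|/(20\Delta_i)$ edge deletions have accumulated at level $i$. A single edge deletion at level $i-1$ (in the input sequence when $i=1$, and recursively otherwise) triggers the deletion of at most $\eta_i$ edges from $X_i$ via the lists $J_i(e)$, plus the deletions of all edges incident to vertices that join $S_i$ as a result of expander pruning, which by \Cref{thm: expander pruning} amount to at most $O(\Delta_i)$ additional deletions per pruned vertex, and the pruning bound $|S_i| \le O(t\Delta_i/\phi_i)$ keeps the amortized blow\hyp{}up polylogarithmic. Propagating this through the recursion, the number of edge deletions at level $i$ caused by one input edge deletion is bounded by $\prod_{j=1}^{i}\eta_j \cdot \poly\log n \le (\log n)^{O(i/\eps)} \le (\log n)^{O(1/\eps^2)}$, while the reinitialization threshold at level $i$ is roughly $\phi_i |E(X_i^{(0)})|/\Delta_i = \Omega(n_i/(\log n)^{O(1/\eps)})$. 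Since the total number of input edge deletions is at most $\phi|E(X)|/(20\Delta) \le n\Delta/\phi$, the number of reinitializations at level $i$ is at most $O\bigl(n\Delta \cdot (\log n)^{O(1/\eps^2)} / (\phi \cdot n_i)\bigr) = O\bigl(\gamma^i \Delta(\log n)^{O(1/\eps^2)}/\phi\bigr)$.

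Third, I would combine these bounds. The per\hyp{}reinitialization cost at level $i \ge 1$ is $\otilde(\Delta_{i-1}^4 n_{i-1}/\phi_{i-1}^3) + O(n_i^{1+O(\eps)}(\log n)^{O(1/\eps^2)})$, and the number of reinitializations is $O(\gamma^i \Delta (\log n)^{O(1/\eps^2)}/\phi)$. The product for any single level $i$ is bounded by $O\bigl(n^{1+O(\eps)}\Delta^5(\log n)^{O(1/\eps^2)}/\phi^5\bigr)$, using $\gamma \le n^\eps$, $n_{i-1} = n_i \gamma$, and the fact that $\phi_j, \Delta_j$ for $j\ge 1$ are polylogarithmic and thus absorbed into $(\log n)^{O(1/\eps^2)}$. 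Summing over the $r = O(1/\eps)$ levels introduces only a $(\log n)^{O(1/\eps)}$ factor, which is absorbed into $(\log n)^{O(1/\eps^2)}$. Incremental maintenance costs of the \EST $\tau_i$ under edge deletions contribute $\otilde(|E(X_{i-1}^{(0)})| \cdot D_i)$ per phase, and the expander\hyp{}pruning procedure at level $i$ contributes $\otilde(k\Delta_i^2/\phi_i^2)$ per phase by \Cref{thm: expander pruning}; both are dominated by the bound above.

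The main obstacle I expect is the careful amortization of the propagation of edge deletions across levels: I need to verify that the $\eta_j$ blow\hyp{}ups really do compose into at most a $(\log n)^{O(1/\eps^2)}$ factor, and not into a term that depends polynomially on $n$. This requires using the polylogarithmic bounds on $\eta_j$ for $j\ge 1$ together with the guarantee that the level\hyp{}$0$ to level\hyp{}$1$ blow\hyp{}up, which is the only one involving the potentially large $\Delta$ and $\phi$, contributes the $\Delta^5/\phi^5$ factor in the final bound. Once this counting is done correctly, the rest is an arithmetic verification that the products at every level collapse to the claimed bound.
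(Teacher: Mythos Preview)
Your overall strategy matches the paper's: bound per-stage cost, count stages, multiply and sum over levels. The gap is in the second step. The claim that one input deletion causes at most $\prod_{j=1}^i \eta_j\cdot\poly\log n \le (\log n)^{O(i/\eps)}$ deletions at level $i$ is false, because $\eta_1 = c\Delta^2\log^3 n/\phi^2$ is not polylogarithmic, and the level-$0$ pruning amplification contributes an additional $O(\Delta^2/\phi)$ factor (after $t$ input deletions, $|S_0|\le O(t\Delta/\phi)$ and each pruned vertex has up to $\Delta$ incident edges). So a single input deletion can cascade into $\Theta(\Delta^4\log^3 n/\phi^3)$ explicit deletions at level $1$, not $\poly\log n$. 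Your reinitialization count $O(\gamma^i\Delta/\phi)$ is therefore an under-estimate; the correct count carries $\Delta^3/\phi^2$, and your product at level~$1$ would then be $n^{1+\eps}\Delta^7/\phi^5$ (matching the paper's proof, which in fact establishes $\Delta^7$ rather than the $\Delta^5$ appearing in the lemma statement), not $n^{1+\eps}\Delta^5/\phi^4$ as your own intermediate bounds yield.

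The paper avoids tracking the full cascade by a cleaner nesting argument: during a single level-$(i{-}1)$ stage, the total number of edges that can ever be removed from $X_{i-1}$ (explicit plus pruned) is trivially at most $|E(X_{i-1}^{(0)})|\le n_{i-1}\Delta_{i-1}$, so at most $n_{i-1}\Delta_{i-1}\eta_i$ explicit deletions occur at level $i$ within that stage. Dividing by the level-$i$ threshold $\Omega(\phi_i n_i)$ gives at most $\otilde\bigl(n^\eps\Delta_{i-1}^3/(\phi_{i-1}^2\phi_i)\bigr)$ level-$i$ stages per level-$(i{-}1)$ stage; multiplying these ratios from $j=1$ to $i$ yields $O\bigl(\gamma^i\Delta^3(\log n)^{O(1/\eps^2)}/\phi^2\bigr)$ total level-$i$ stages. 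This replaces your cascade bookkeeping with a one-line edge-count bound per stage and makes the $\Delta,\phi$ dependence transparent.
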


\begin{proof}
	Fix an index $1\leq i\leq r$. We partition the execution of the algorithm into \emph{level-$i$ stages}, where each level-$i$ stage starts when $\initexp(i)$ is called (that is, graph $X_i$ is constructed from scratch), and terminates just before the subsequent call to $\mathtt{InitializeExpander}(i)$. Recall that, over the course of a level-$i$ stage, at most $\phi_i|E(X_{i}^{(0)})|/(20\Delta_i)$ edges are deleted from the graph $X_i^{(0)}$. We now bound the running time that is needed in order to initialize and maintain the level-$i$ data structure over the course of a single level-$i$ stage. This includes the following:
	
	\begin{itemize}
		\item Constructing expander $X_{i}$ and its embedding $\pset_{i}$ into graph  $X_{i-1}$, using the algorithm from  \Cref{lem:embed}; the running time is bounded by:

		\[\otilde\left(\frac{\Delta_{i-1}^4n_{i-1}}{\phi_{i-1}^3}+ n_i^{1+O(\eps)}\cdot (\log n)^{O(1/\eps^2)}\right );\]
		

		\item Initializing the lists $J_{i}(e)$ for edges $e\in E(X_{i-1})$: the time to initialize all such lists is bounded by the time needed to compute the embedding $\pset_i$.

		\item Initializing and maintaining the \EST  $\tau_{i-1}$: the running time is $\Otilde(|E(X_{i-1})|\cdot D_{i-1})\leq \Otilde(n_{i-1}\Delta_{i-1} \ell_{i-1})\leq  \Otilde\left (n_{i-1}\Delta^2_{i-1}/\phi_{i-1} \right )$.

		\item Running the algorithm  from \Cref{thm: expander pruning} for expander pruning on the expander $X_i$. Since a single level-$i$ stage may involve the deletion of up to $k=\phi_i|E(X_{i}^{(0)})|/(20\Delta_i)$ edges from $X_i$, the running time is bounded by:

		\[ \Otilde \left (\frac{\Delta_i^2}{\phi_i^2}\cdot \frac{\phi_i|E(X_{i}^{(0)})|}{20\Delta_i} \right )=\Otilde \left(\frac{n_i\Delta_i^2}{\phi_i}\right ). \]

		\item The remaining work, executed by $\mathtt{DeleteEdge}(i-1,e)$, for every edge $e$ that is deleted from graph $X_{i-1}$ (including edges incident to the vertices of the pruned out set $S_{i-1}$), requires $O(\eta_i)$ time per edge, with total time $O(|E(X_{i-1}^{(0)})|\cdot \eta_i)\leq  O\left (n_{i-1}\Delta_{i-1} \eta_i\right )=\Otilde \left(n_{i-1}\Delta_{i-1}^3/\phi_{i-1}^2\right )$.
	\end{itemize}
	
	Therefore, the total time that is needed in order to initialize and maintain the level-$i$ data structure over the course of a single level-$i$ stage is bounded by:
	
	\[\otilde\left (\frac{\Delta_{i-1}^4n_{i-1}}{\phi_{i-1}^3}+n_i^{1+O(\eps)}\cdot (\log n)^{O(1/\eps^2)}+ \frac{n_i\Delta_i^2}{\phi_i}\right ).\]

	Note that we did not include in this running time the time required for maintaining level-$(i+1)$ data structures, that is, calls to $\mathtt{InitializeExpander}(i+1)$ and $\mathtt{Delete}(i,e)$.
	
	Next, we bound the total number of level-$i$ stages. Consider some index $1< i'\leq r$, and consider a single level-$i'$ stage. 
	Clearly, the total number of edges  that are incident to the pruned-out vertices in $S_{i'}$ is bounded by $|E(X_{i'}^{(0)})|\leq O(n_{i'}\Delta_{i'})$.
	As the embedding $\pset_{i'+1}$ of $X_{i'+1}$ into $X_{i'}$ has congestion at most
	$\eta_{i'+1}\leq \Otilde(\Delta_{i'}^2/\phi_{i'}^2)$, this can cause at most $\otilde\left(  n_{i'}\Delta_{i'}^3/\phi_{i'}^2 \right )$ edge-deletions
	from graph $X_{i'+1}^{(0)}$. As a single level-$(i'+1)$ stage requires $k_{i'+1} \geq \Omega\left (\phi_{i'+1}|E(X_{i'+1}^{(0)})|/\Delta_{i'+1}\right )=\Omega(\phi_{i'+1}n_{i'+1})$ edge-deletions from $G_{i'+1}^{(0)}$, the number of level-$(i'+1)$ stages that are contained in a single level-$i'$ stage is bounded by:

	\[\otilde\left (\frac{n_{i'}\Delta_{i'}^3/\phi_{i'}^2}{n_{i'+1} \phi_{i'+1}}\right )=\otilde\left (\frac{n^{\eps}\Delta_{i'}^3}{\phi_{i'}^2\cdot \phi_{i'+1}}\right ).\]

	Since we only need to support at most $\phi |E(X)|/(20\Delta)$ edge deletions from the original graph $X$, there is only a single level-$0$ stage. 
	Recalling that $\phi_0=\phi$, and for all $1\leq i\leq r$, $\phi_i=\hat \phi=1/(\log n )^{O(1/\eps)}$, and that $\Delta_0=\Delta$, and for all $1\leq i\leq r$, $\Delta_i=O(\log n)$, we get that for all $1\leq i\leq r$, the total number of level-$i$ stages is bounded by:

\[  O\left (  \frac{n^{\eps i}(\log n)^{O(i)}\Delta^3}{\hat \phi^{3i}\cdot \phi^2}  \right ),   \]
	
	while the running time of a single level-$i$ phase, for $i>1$ is bounded by:

	\[\begin{split}
	 &\otilde\left (\frac{\Delta_{i-1}^4n_{i-1}}{\phi_{i-1}^3}+n_i^{1+O(\eps)}\cdot (\log n)^{O(1/\eps^2)}+ \frac{n_i\Delta_i}{\phi_i}\right )
	 \\&\leq \otilde\left (\frac{n}{n^{(i-1)\eps}\hat \phi^3}+n^{(1-i\eps)(1+O(\eps))}\cdot (\log n)^{O(1/\eps^2)}+ \frac{n}{n^{i\eps}\hat \phi}\right )\\
	 &\leq \otilde\left (\frac{n^{1+O(\eps)}\cdot (\log n)^{O(1/\eps^2)}}{n^{i\eps}\hat \phi^3}\right ).
	\end{split}   \]
	
	For $i=1$, the running time of a single level-$1$ phase is bounded by:

		\[ \otilde\left (\frac{\Delta^4n}{\phi^3}+n^{(1+O(\eps))(1-\eps)}\cdot (\log n)^{O(1/\eps^2)}+ \frac{n}{\hat \phi n^{\eps}}\right ).  \]

	Therefore, for every $1< i\leq r$, the total running time for maintaining level-$i$ data structure  is bounded by: 
	
	\[ \otilde\left (\frac{n^{1+O(\eps)}(\log n)^{O(1/\eps^2)}}{n^{i\eps}\hat \phi^3}\right )  \cdot  O\left (  \frac{n^{\eps i}(\log n)^{O(i)}\Delta^3}{\hat \phi^{3i}\cdot \phi^2}  \right )
	\leq \otilde\left (\frac{n^{1+O(\eps)}(\log n)^{O(1/\eps^2)}\Delta^3}{\hat \phi^{3i+3}\cdot \phi^2}\right ) 
	 \]
	 
	 Since $i\leq r \leq 1/\eps$, and $\hat \phi=1/(\log n )^{O(1/\eps)}$, this is bounded by: $O\left (\frac{n^{1+O(\eps)}\Delta^3(\log n)^{O(1/\eps^2)}}{\phi^2}\right ) $.
	
	Lastly, the total running time for maintaining the level-$1$ data structure is bounded by:

		\[ \otilde\left (\frac{\Delta^4n}{\phi^3}+ \frac{n^{1+O(\eps)}\cdot  (\log n)^{O(1/\eps^2)}}{\hat \phi^2}\right ) \cdot \Otilde \left (  \frac{n^{\eps }\Delta^3}{\hat \phi^{3}\cdot \phi^2}  \right ) 
		\leq \otilde\left (\frac{n^{1+O(\eps)}\Delta^7(\log n)^{O(1/\eps^2)}}{\phi^5}  \right ) .
		\]

	Summing this up over all $1\leq i\leq r$, we get that the total update time of the algorithm is at most $O\left (\frac{n^{1+O(\eps)}\Delta^7(\log n)^{O(1/\eps^2)}}{\phi^5}\right ) $.
\end{proof}

\subsection*{Responding to  $\exspquery$}

Lastly, we provide an algorithm for responding to queries $\exspquery$. Recall that, given a pair of vertices $x,y\in V(X)\setminus S$, the goal is to return a simple $x$-$y$  path $P$ in $X\setminus S$ of length at most $O\left( \Delta^2(\log n)^{O(1/\eps^2)}/\phi\right )$, in time $O(|E(P)|)$. 
We call algorithm $\mathtt{Query}(0,x,y)$, that is described in  \Cref{alg:exp query}, which recursively calls $\mathtt{Query}(i,x',y')$ for $i>0$. The idea of the algorithm is simple: we use the \EST $\tau_0$ in graph $X_0$ in order to compute two paths: one path connecting $x$ to some vertex $x'\in V(X_1)$, and one path connecting $y$ to some vertex $y'\in V(X_1)$, and then recursively call 
$\mathtt{Query}(1,x',y')$ to obtain a short path connecting $x'$ to $y'$ in $X_1$. We then use the embedding $\pset_1$ of $X_1$ into $G_0$ in order to convert the resulting path into an $x'$-$y'$ path in $X_0$. The final path connecting $x$ to $y$ is obtained by concatenating the resulting three paths.

\program{Algorithm $\mathtt{Query}(i,x,y)$}{alg:exp query}{
	Assumption: vertices $x,y$ lie in graph $X_i$.
	\begin{enumerate}
		\item If $i=r$, return the unique  $x$-$y$ path in three $\tau_{r}$, of length at most $D_{r+1}$. 
		\item Compute, in tree $\tau_{i}$, a unique path $Q_{x,x'}$ connecting $x$ to some vertex $x'\in V(X_{i+1})$, and a unique  path $Q_{y',y}$ connecting $y$ to some vertex $y'\in V(X_{i+1})$. The length of each path is bounded by $D_{i+1}$.
		\item If $x'=y'$, set $R_{x',y'}=\emptyset$; otherwise set $R_{x',y'}=\mathtt{Query}(i+1,x',y')$. 
		\item Let $Q_{x',y'}$ be a path in graph $X_i$ obtained by concatenating, for all edges $e' \in R_{x',y'}$, the corresponding path $P(e')\in \pset_{i+1}$ from the embedding of $X_{i+1}$ into $X_i$; recall that the length of each such path $P(e')$ is bounded by $\ell_{i+1}$.
		\item Return  the path $Q_{u,v}$, obtained by concatenating three paths: $Q_{x,x'}, Q_{x',y'}$, and $ Q_{y',y}$. 
	\end{enumerate}
}

The following lemma summarizes the guarantees of the algorithm for processing short-path queries.

\begin{lemma}\label{lem: oracle nonsimple path}
	For all $0\leq i\leq r$, algorithm $\mathtt{Query}(i,x,y)$, given a pair $x,y$ of vertices in graph $X_i$, returns a path connecting $x$ to $y$ in graph $X_i$ of length at most $L_i=(48c^2\log^2 n/\hat \phi)^{r-i+1}$ for $i>0$, and of length at most $L_0=(48c^2\log^2 n/\hat \phi)^{r+1}\cdot \Delta^2/\phi$ for $i=0$.
\end{lemma}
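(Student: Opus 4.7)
The plan is to prove the lemma by downward induction on $i$, from $i=r$ down to $i=0$. The algorithm is already a straightforward recursion that builds the $x$--$y$ path in $X_i$ out of (i) two tree paths in $\tau_i$ of length at most $D_{i+1}$ each, and (ii) an expansion of a recursively produced path $R_{x',y'}\subseteq X_{i+1}$ via the embedding $\pset_{i+1}$. So the proof is essentially the bookkeeping that these three ingredients, combined, fit inside the claimed bound~$L_i$.

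For the base case $i=r$, the algorithm returns the $x$--$y$ path stored in the ES-tree $\tau_r$, which by construction has depth threshold $D_{r+1}=4\ell_{r+1}$. Since $\ell_{r+1}=c^2\log^2 n/\hat\phi$, its length is at most $4c^2\log^2 n/\hat\phi\leq 48c^2\log^2 n/\hat\phi=L_r$, which gives the base case.

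For the inductive step with $0<i<r$, by the inductive hypothesis the recursive call $\mathtt{Query}(i+1,x',y')$ returns a path $R_{x',y'}$ in $X_{i+1}$ of length at most $L_{i+1}$; since $X_{i+1}$ is unweighted, this is also the number of edges on $R_{x',y'}$. Each such edge $e'$ is replaced in step~4 by the embedding path $P(e')\in\pset_{i+1}$, whose length in $X_i$ is at most $\ell_{i+1}$. Thus $\ell_{X_i}(Q_{x',y'})\le L_{i+1}\cdot\ell_{i+1}$. The two tree-paths $Q_{x,x'}$ and $Q_{y',y}$ contribute at most $2D_{i+1}=8\ell_{i+1}$. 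Plugging in $L_{i+1}=(48c^2\log^2 n/\hat\phi)^{r-i}$ and $\ell_{i+1}=c^2\log^2 n/\hat\phi$, the total length is at most
\[
(L_{i+1}+8)\,\ell_{i+1}\;\le\;\tfrac{1}{48}\,(48c^2\log^2 n/\hat\phi)^{r-i+1}\;+\;8\,\ell_{i+1}\;\le\;(48c^2\log^2 n/\hat\phi)^{r-i+1}=L_i,
\]
where the last inequality uses the factor $48$ in the base of $L_i$ to absorb both the multiplicative and additive slack. The case $i=0$ is identical, except that $\ell_1=c\Delta\log n/\phi$ in place of $c^2\log^2 n/\hat\phi$; the extra factor $\Delta^2/\phi$ built into the definition of $L_0$ (relative to the generic pattern) is exactly what is needed, since $L_1\cdot\ell_1=(48c^2\log^2 n/\hat\phi)^{r}\cdot c\Delta\log n/\phi\le (48c^2\log^2 n/\hat\phi)^{r+1}\cdot\Delta^2/\phi\cdot(c\Delta\log n)/(48c^2\log^2 n\cdot\Delta^2/\hat\phi\cdot\hat\phi)\le L_0/2$, once we note that $\hat\phi\le 1$, $\Delta\ge 1$, and $c\log n\ge 1$.

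There is no real obstacle in this proof: every inequality is a one-line substitution of the parameters listed just above the theorem, and the constants $48$ in $L_i$ and the extra $\Delta^2/\phi$ in $L_0$ were clearly chosen to make the recurrence close. The only thing one must be careful with is not to confuse distances in $X_i$ (weighted by the embedding) with hop-lengths in $X_{i+1}$ (unweighted), and to handle the mild parameter discontinuity at level~$0$, where $\phi_0=\phi$ and $\Delta_0=\Delta$ differ from the uniform values $\hat\phi$ and $O(\log n)$ used at all higher levels.
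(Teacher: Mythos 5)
Your proof is correct and follows essentially the same downward induction as the paper's own proof, with the same decomposition of $\mathtt{Query}(i,x,y)$ into two tree-paths of total length $2D_{i+1}$ plus the expansion of the recursive path via $\pset_{i+1}$. One small imprecision: in the base case, the $x$--$y$ path in $\tau_r$ may pass through the (arbitrarily chosen) root, so its length is bounded by $2D_{r+1}=8c^2\log^2 n/\hat\phi$, not $D_{r+1}$; the paper uses $2D_{r+1}$ here, and the slack in $L_r=48c^2\log^2 n/\hat\phi$ absorbs this either way. The displayed algebra in your $i=0$ case is a bit garbled as written, but the intended estimate $L_1\cdot\ell_1 = L_0\cdot\hat\phi/(48c\Delta\log n)\le L_0/48$ is correct, and adding the $2D_1=8\ell_1$ term still leaves you comfortably below $L_0$.
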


\begin{proof}
The proof is by induction on $i$, where the base case is $i=r$. Recall that, as observed already, one corollary from \Cref{lem:embed}, every pair $x,y$ of vertices of $X_r$ have a path of length at most  $D_{r+1}=4\ell_{r+1} \leq 4\hat \ell=4c^2\log^2 n/\hat \phi$ connecting them in $X_r$. The path returned by algorithm $\mathtt{Query}(r,x,y)$ has length at most $2D_{r+1}\leq 8c^2\log^2 n/\hat \phi\leq L_r$.

Assume now that the claim is true for some value $i+1$; we now prove it for value $i$. Consider algorithm $\mathtt{Query}(i,x,y)$. We are guaranteed that the algorithm computes 
a path $Q_{x,x'}$ connecting $x$ to some vertex $x'\in V(X_{i+1})$, and a  path $Q_{y',y}$ connecting $y$ to some vertex $y'\in V(X_{i+1})$, in graph $X_i$, where the lengths of both paths are bounded by $D_{i+1}$.
The path $R_{x',y'}=\mathtt{Query}(i+1,x',y')$ has length at most $L_{i+1}$, and so the resulting path $Q_{x',y'}$ has length at most $\ell_{i+1}\cdot L_{i+1}$. Therefore, altogether, the final $x$-$y$ path computed by the algorithm has length at most:

\[2D_{i+1}+\ell_{i+1}\cdot L_{i+1}=8\ell_{i+1}+L_{i+1}\ell_{i+1}.  \]

For $i\geq 1$, $\ell_{i+1}=c^2\log^2 n/\hat \phi$, while $\Delta_{i}=c\log n$, and so the length of the path is bounded by:

\[8\ell_{i+1}+L_{i+1}\ell_{i+1}\leq 2L_{i+1}\cdot c^2\log^2 n/\hat \phi\leq (2c^2\log^2n/\hat \phi)\cdot (48c^2\log^2 n/\hat \phi)^{r-i}\leq (48c^2\log^2 n/\hat \phi)^{r-i+1}.\]

For $i=0$, $\ell_1=c\Delta \log n /\phi$ and $\Delta_{0}=\Delta$, so the length of the path is bounded by:

\[8c\log n/\phi+L_{1}\cdot c\Delta\log n/\phi\leq (48c\Delta^2\log n/\phi)\cdot (48c^2\log^2 n/\hat \phi)^{r-1}\leq (48c^2\log^2 n/\hat \phi)^{r}\cdot \Delta^2/\phi.\]

	 From the  algorithm's description it is immediate to verify that the running time is bounded by $O(|E(Q)|)$, where $Q$ is the returned path.
\end{proof}

Recalling that $r=\floor{1/\eps}$
and $\hat \phi=1/(\log n )^{O(1/\eps)}$, we obtain the following immediate corollary.

\begin{corollary}\label{cor: query}
Given any pair of vertices $x,y\in V(G)\setminus S$, algorithm $\mathtt{Query}(0,x,y)$
returns a  (possibly non-simple)  $x$-$y$ path $Q$ in $X\setminus S$, of length at most $O\left( \Delta^2(\log n)^{O(1/\eps^2)}/\phi\right )$, in time $O(|E(Q)|)$.
\end{corollary}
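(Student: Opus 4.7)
The statement is framed as an immediate consequence of Lemma~\ref{lem: oracle nonsimple path}, so my plan is essentially to carry out the parameter substitution and confirm the running-time bookkeeping. First I would invoke Lemma~\ref{lem: oracle nonsimple path} with $i=0$, which guarantees that $\mathtt{Query}(0,x,y)$ returns a valid $x$-$y$ path $Q$ in $X_0\setminus S = X\setminus S$ (recall $S = S_0$) of length at most
\[
L_0 \;=\; \left(48 c^2 \log^2 n / \hat\phi\right)^{r+1} \cdot \Delta^2 / \phi.
\]
Then I would plug in the global parameter choices $r = \floor{1/\eps}$ and $\hat\phi = 1/(\log n)^{c/\eps}$ from the algorithm's setup. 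This gives $48 c^2 \log^2 n / \hat\phi \leq (\log n)^{O(1/\eps)}$, and raising to the power $r+1 = O(1/\eps)$ yields $(\log n)^{O(1/\eps) \cdot O(1/\eps)} = (\log n)^{O(1/\eps^2)}$. Multiplying by the $\Delta^2/\phi$ factor gives the claimed length bound $O\!\left(\Delta^2 (\log n)^{O(1/\eps^2)}/\phi\right)$.

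For the running-time claim, I would argue by induction on $i$ (from $i=r$ down to $i=0$) that the runtime of $\mathtt{Query}(i,x,y)$ is $O(|E(Q_i)|)$, where $Q_i$ is the path it returns. At the base $i=r$, the unique $x$-$y$ path in the tree $\tau_r$ can be retrieved in time proportional to its edge length by walking up to the LCA in $\tau_r$. For the inductive step, the two path segments $Q_{x,x'}$ and $Q_{y',y}$ computed in $\tau_i$ take time proportional to their lengths; the recursive call to $\mathtt{Query}(i+1,x',y')$ costs $O(|E(R_{x',y'})|)$ by induction; and expanding each edge $e'\in R_{x',y'}$ via the stored embedding path $P(e')\in \pset_{i+1}$ costs $O(|E(P(e'))|)$, summing to exactly the length of $Q_{x',y'}$. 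Thus the total work equals (up to constants) the number of edges in the returned concatenated path $Q_i$. Applying this at $i=0$ yields the $O(|E(Q)|)$ query time.

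I do not anticipate any real obstacle here — the only thing to be slightly careful about is that the embedding paths $P(e')$ and the pointers in $\tau_i$ are stored explicitly by the initialization and maintenance procedures (they are part of the data structure maintained in $\pset_{i+1}$ and in the generalized \EST from Theorem~\ref{thm: ES-tree}), so each edge of the output path is produced in $O(1)$ amortized time as the recursion unwinds. The only subtle point worth checking is that the output path actually lies in $X\setminus S$: this holds because at every level the \EST $\tau_i$ is maintained on $X_i = X_i^{(t)} \setminus S_i$ (with the edges to pruned vertices having been removed from $\tau_{i-1}$ in $\mathtt{DeleteEdge}$), and every embedding path $P(e')$ in $\pset_{i+1}$ is also maintained to avoid pruned-out vertices. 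Therefore the concatenated path $Q$ uses only edges currently present in $X\setminus S$, completing the proof.
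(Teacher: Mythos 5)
Your proposal is correct and follows the same approach as the paper: Corollary~\ref{cor: query} is obtained directly from Lemma~\ref{lem: oracle nonsimple path} by substituting $r=\floor{1/\eps}$ and $\hat\phi=1/(\log n)^{O(1/\eps)}$ into the bound $L_0$, and the paper simply notes this as ``immediate.'' Your additional inductive verification of the $O(|E(Q)|)$ query time just spells out the step that the paper dispatches at the end of the proof of Lemma~\ref{lem: oracle nonsimple path} with the phrase ``it is immediate to verify,'' and is a reasonable elaboration rather than a departure.
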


\subsubsection{Proof of \Cref{lem:embed}}\label{subsubsection: embedding expander into smaller expander}
The proof of the lemma is very similar to the proof of \Cref{lem: expander or cut}. The algorithm employs the cut-matching game outlined in \Cref{subsec: pseudocut or expander}. The main difference is that the Matching Player is implemented via the following theorem, that was proved in \cite{detbalanced}, and its corollary.

\begin{theorem}\label{thm: matching player reg} (Theorem 3.2 in \cite{detbalanced}, builds on similar result of \cite{fast-vertex-sparsest})
	There is a deterministic algorithm, that, given an $n$-vertex graph $G=(V,E)$ with maximum vertex degree $\Delta$, and two disjoint subsets $A,B$ of its vertices, with $|A|\leq |B|$, and integers $z\geq 0$, $\ell\geq 32\Delta\log n$, computes one of the following:
	
	\begin{itemize}
			\item either a collection $\pset'$ of paths, each of which connects a distinct vertex of $A$ to a distinct vertex of $B$, with $|\pset'|\geq |A|-z$, such that the length of every path in $\pset'$ is at most $\ell$, and the paths in $\pset'$ cause congestion at most $\ell^2$; or
		
		\item a cut $(X,Y)$ in $G$, with $|X|,|Y|\geq z/2$, and $\phi_G(X,Y)\leq 72\Delta\log n/\ell$.
	\end{itemize}
	
	The running time of the algorithm is $\tilde O(\ell^3|E(G)|+\ell^2n)$.
\end{theorem}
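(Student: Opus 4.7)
The plan is to follow the standard ``length-constrained flow or sparse-cut'' framework that underlies the matching player in the Cut-Matching Game (cf.\ \cite{fast-vertex-sparsest,detbalanced}, and the closely analogous \Cref{thm: matching player} and \Cref{claim: paths or pseudocut} already proved in this paper). I would first construct an auxiliary graph $H$ from $G$ by adding a source $s$ adjacent via unit-length edges to every vertex of $A$ and a sink $t$ adjacent via unit-length edges to every vertex of $B$; the original edges of $G$ keep their unit lengths. I would then initialize a generalized \EST $\tau$ rooted at $s$ in $H$ with depth threshold $\ell+2$ using \Cref{thm: ES-tree}, together with, for every edge $e\in E(G)$, a congestion counter $\gamma(e)$ initialized to $0$.

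The algorithm iterates: while $\dist_H(s,t)\le \ell+2$, I extract the current shortest $s$-$t$ path $P$ from $\tau$, strip off its two endpoint edges to obtain a path $P'$ in $G$ from some $a\in A$ to some $b\in B$ of length at most $\ell$, and add $P'$ to $\pset'$. I then delete the edges $(s,a)$ and $(b,t)$ from $H$ (so each vertex of $A\cup B$ participates in at most one output path), increment $\gamma(e)$ for every $e\in E(P')$, and delete $e$ from $H$ whenever $\gamma(e)$ reaches $\ell^2$. If the loop halts with $|\pset'|\ge |A|-z$, I output $\pset'$; by construction its paths are simple, have length at most $\ell$, use distinct endpoints of $A$ and $B$, and cause edge-congestion at most $\ell^2$.

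Otherwise the loop halts with sets $A'\subseteq A$ and $B'\subseteq B$ of unmatched vertices satisfying $|A'|,|B'|> z$. Letting $E_{\mathrm{del}}\subseteq E(G)$ be the set of deleted overcongested edges, the total edge-usage across the routed paths is at most $|\pset'|\cdot\ell\le |A|\cdot\ell$, so $|E_{\mathrm{del}}|\le |A|/\ell$; moreover $\dist_{G\setminus E_{\mathrm{del}}}(A',B')>\ell$. To produce the cut I would run an LR-style ball-growing in $\tilde G:=G\setminus E_{\mathrm{del}}$ from $A'$: for $i\ge 0$ let $X_i=B_{\tilde G}(A',i)$ and set $\beta=72\Delta\log n/\ell$. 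The claim is that some $i\le \ell/2$ satisfies $|\delta_{\tilde G}(X_i)|\le (\beta/2)\cdot\min(|X_i|,n-|X_i|)$: otherwise, using $\beta/(2\Delta)=36\log n/\ell=O(1)$ (from $\ell\ge 32\Delta\log n$), we would have $|X_{i+1}|\ge |X_i|(1+\beta/(2\Delta))$ at every step, and the $\ell/2$-fold blow-up $(1+36\log n/\ell)^{\ell/2}=n^{\Omega(1)}$ forces $|X_{\ell/2}|>n$, a contradiction. Since $i\le \ell/2<\ell$, we have $A'\subseteq X_i$ and $B'\subseteq V\setminus X_i$, so $\min(|X_i|,n-|X_i|)>z\ge z/2$ as required.

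The main obstacle is lifting the sparsity guarantee from $\tilde G$ back to $G$: adding $E_{\mathrm{del}}$ back can inflate the crossing edge-count by $|E_{\mathrm{del}}|\le |A|/\ell$, giving $G$-sparsity at most $\beta/2+|E_{\mathrm{del}}|/|X_i|$. I would handle this by not stopping at the first sparse layer but continuing the ball-growing until $|X_i|\ge 2|E_{\mathrm{del}}|/\beta=|A|/(36\Delta\log n)$; the same exponential-growth calculation, with the additional slack comfortably absorbed by the $n^{\Omega(1)}$ blow-up factor, still produces a sparse layer of the required size within $\ell/2$ steps. Then $|E_{\mathrm{del}}|/|X_i|\le \beta/2$, and combining with the $\tilde G$-sparsity of $\beta/2$ bounds the total $G$-sparsity by $\beta=72\Delta\log n/\ell$, as desired. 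Finally, the stated running time $\tilde O(\ell^3|E(G)|+\ell^2 n)$ follows from the standard amortized analysis of this construction: the \EST of depth $\ell+2$ over $O(|E(G)|+n)$ edges undergoes at most $|E_{\mathrm{del}}|+|A|+|B|$ edge deletions, the per-path extraction and congestion-counter bookkeeping contribute the $\ell^2$ factor from the $\ell^2$-capped counters, and the concluding ball-growing in $\tilde G$ costs only $O(|E(G)|)$.
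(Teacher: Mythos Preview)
The paper does not prove this statement at all: it is quoted verbatim as Theorem~3.2 of \cite{detbalanced} and used as a black box (the sole purpose of stating it here is to derive \Cref{cor: matching player reg}). So there is no in-paper proof to compare against. Your reconstruction follows the standard ``route greedily with an \EST under a congestion cap, then ball-grow around the unmatched vertices'' template, which is exactly the approach the present paper itself uses for the weighted analogue in \Cref{thm: matching player} and \Cref{claim: paths or pseudocut}, and is indeed the approach of \cite{detbalanced,fast-vertex-sparsest}.

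Two technical points in your sparse-cut extraction need tightening. First, the one-sided growth claim ``otherwise $|X_{i+1}|\ge|X_i|(1+\beta/(2\Delta))$ at every step'' is only valid while $|X_i|\le n/2$; once $|X_i|>n/2$ the hypothesis $|\delta_{\tilde G}(X_i)|>(\beta/2)(n-|X_i|)$ no longer forces $X_i$ to grow. Second, your charging of $|E_{\mathrm{del}}|$ against $|X_i|$ yields $|\delta_G(X_i)|\le(\beta/2)\min(|X_i|,n-|X_i|)+(\beta/2)|X_i|$, which is $\le\beta\cdot\min(|X_i|,n-|X_i|)$ only when $|X_i|\le n/2$. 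Both issues are dispatched by the usual two-sided ball-growing: grow simultaneously from $A'$ and from $B'$, note that $B_{\tilde G}(A',\ell/4)$ and $B_{\tilde G}(B',\ell/4)$ are disjoint (since $\dist_{\tilde G}(A',B')>\ell$), so at least one side stays below $n/2$ throughout its first $\ell/4$ layers, and take the sparse layer on that side. With this fix your argument goes through; in fact your \EST-based routing gives total update time $\tilde O((|E(G)|+n)\cdot\ell)$, which is well inside the stated $\tilde O(\ell^3|E(G)|+\ell^2 n)$ bound.
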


We obtain the following immediate corollary of \Cref{thm: matching player reg}.

\begin{corollary}\label{cor: matching player reg}
		There is a deterministic algorithm, that, given an $n$-vertex graph $X=(V,E)$ with maximum vertex degree $\Delta$, that is a $\phi$-expander, and two disjoint subsets $A,B$ of its vertices, with $|A|\leq |B|$, computes a collection $\pset'$ of paths, where each path in $\pset'$ connects a distinct vertex from $A$ to a distinct vertex from $B$, and $|\pset'|= |A|$, such that the paths in $\pset'$ cause  congestion at most $O\left (\frac{\Delta^2\log^2n}{\phi^2}\right )$, and every path has length at most $O\left (\frac{\Delta \log n}{\phi}\right )$.
	The running time of the algorithm is $\otilde(\Delta^4n/\phi^3)$.
	\end{corollary}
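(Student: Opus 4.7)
The plan is to obtain this corollary as essentially an immediate application of \Cref{thm: matching player reg}, using the expansion of $X$ to rule out its second (cut) outcome. First I would set a length parameter $\ell = C\Delta\log n/\phi$ for a suitable constant $C$ (say $C = 73$), and apply \Cref{thm: matching player reg} to graph $X$, vertex subsets $A$ and $B$, with this value of $\ell$ and with the slack parameter $z = 0$. Note that the requirement $\ell \geq 32\Delta\log n$ in the theorem is automatically satisfied since $\phi \leq 1$.

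The key observation is that the second outcome of \Cref{thm: matching player reg} cannot occur under our expansion hypothesis. Indeed, in the paper's convention a cut $(Y,Z)$ of $X$ is required to have both sides non-empty, and since $X$ is a $\phi$-expander, every such cut satisfies $\phi_X(Y,Z) \geq \phi$. But outcome 2 would produce a cut with $\phi_X(Y,Z) \leq 72\Delta\log n/\ell < \phi$ by our choice of $\ell$, a contradiction. Therefore the algorithm must return outcome 1: a collection $\pset'$ of paths, each connecting a distinct vertex of $A$ to a distinct vertex of $B$, with $|\pset'| \geq |A| - z = |A|$. Since the endpoints in $A$ are distinct and $|A| \leq |B|$, we also have $|\pset'| \leq |A|$, giving $|\pset'| = |A|$ exactly. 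The length and congestion bounds $\ell = O(\Delta \log n/\phi)$ and $\ell^2 = O(\Delta^2 \log^2 n/\phi^2)$ are precisely what is claimed in the corollary.

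For the running time, \Cref{thm: matching player reg} runs in time $\tilde O(\ell^3 |E(X)| + \ell^2 n)$. Substituting $\ell = O(\Delta\log n/\phi)$ and $|E(X)| \leq \Delta n/2$ yields $\tilde O(\Delta^3 n / \phi^3 \cdot \Delta + \Delta^2 n/\phi^2) = \tilde O(\Delta^4 n/\phi^3)$, as required. There is no real obstacle here beyond carefully verifying the non-emptiness convention for cuts so that the $z=0$ reduction is valid; once this is noted, the corollary follows in a single invocation of the underlying theorem.
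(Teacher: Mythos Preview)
Your proposal is correct and follows essentially the same approach as the paper's proof: the paper sets $\ell=150\Delta\log n/\phi$ (versus your $73\Delta\log n/\phi$), invokes \Cref{thm: matching player reg} with $z=0$, and rules out the cut outcome via the $\phi$-expansion of $X$ in exactly the way you describe. Your additional remarks about verifying $\ell\ge 32\Delta\log n$ and the non-emptiness convention for cuts are sound refinements that the paper leaves implicit.
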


\begin{proof}
	We set the parameter $\ell=\frac{150\Delta\log n}{\phi}$, and apply the algorithm from \Cref{thm: matching player reg} to graph $X$, vertex sets $A$ and $B$, and the parameter $\ell$, setting $z=0$. Notice that the algorithm may not return a cut $(X,Y)$ with $\phi_G(X,Y)\leq 72\Delta\log n/\ell<\phi$, because graph $X$ is a $\phi$-expander. Therefore, it must return a collection $\pset$ of paths, with each path connecting a distinct vertex from $A$ to a distinct vertex from $B$, and $|\pset|= |A|$. The length of each path in $\pset$ is bounded by $\ell\leq O\left (\frac{\Delta \log n}{\phi}\right )$, and the paths in $\pset$ cause congestion at most  $\ell^2\leq O\left (\frac{\Delta^2\log^2n}{\phi^2}\right )$. 
	The running time of the algorithm is bounded by $\tilde O(\ell^3|E(G)|+\ell^2n)\leq \otilde(\Delta^4n/\phi^3)$.
\end{proof}

The remainder of the proof of \Cref{lem:embed} is almost identical to the proof of \Cref{lem: expander or cut}. 
We first need to take care of the special case when $\eps<c''/\log k$, where $c''$ is the constant from \Cref{cor: cut player}. In this case, $k<2^{c''/\eps}$ holds, and, since we have assumed that $\phi\leq 1/2^{\Omega(1/\eps)}$, we can assume that $\phi\leq 1/k$. In this case, we let $X'$ be an arbitrary constant-degree $\phi'$-expander, where $\phi'=\Omega(1)$, with $V(X')=T$. In order to embed this expander into $X$, we select, for every edge $e=(x,y)\in E(X')$, a shortest $x$-$y$ path $P(e)$ connecting $x$ to $y$ in $X$; from \Cref{obs: short paths in exp}, the length of $P(e)$ is at most $O\left (\frac{\Delta \log n}{\phi}\right )$. We then let $\pset=\set{P(e)\mid e\in E(X')}$ be the embedding of $X'$ into $X$. Clearly, the congestion of this embedding is bounded by $k\leq 1/\phi$. The running time of this algorithm is bounded by $O(k|E(X)|)\leq O(n\Delta/\phi)$. Therefore, we assume from now on that $\eps\geq c''/\log k$.

We start with a graph $X'$, whose vertex set is $T$, and edge set is empty. We initialize $\pset=\emptyset$. For convenience, denote $|T|=k$; assume for now that $k$ is an even integer. We then perform at most $O(\log k)$ iterations. The $i$th iteration is performed as follows. First, we apply the algorithm from \Cref{cor: cut player} to graph $X'$. Assume for now that the algorithm produces  a partition $(A_i,B_i)$ of $V(X')$ with $|A_i|,|B_i|\geq k/4$, $|A_i|\leq |B_i|$, and $|E_{X'}(A_i,B_i)|\leq k/100$. We let $(A_i',B_i')$ be any partition of $V(X')$ with $|A_i'|=|B_i'|$ and $A_i\subseteq A_i'$. Next, we apply the algorithm from \Cref{cor: matching player reg} to graph $X'$ and vertex sets $A_i',B_i'$. The algorithm returns a collection $\pset_i$ of paths, where each path in $\pset_i$ connects a distinct vertex from $A_i'$ to a distinct vertex from $B_i'$ and $|\pset_i|= |A_i'|$, such that the paths in $\pset'$ cause  congestion at most $O\left (\frac{\Delta^2\log^2n}{\phi^2}\right )$, and every path has length at most $O\left (\frac{\Delta \log n}{\phi}\right )$. We let $M_i$ be the perfect matching between vertices of $A_i'$ and vertices of $B_i'$, where $(x,y)\in M_i$ iff some path in $\pset_i$ connects $x$ to $y$. We add the paths in $\pset_i$ to $\pset$, and continue to the next iteration.

Finally, assume that the algorithm from  \Cref{cor: cut player} returns a subset $S\subseteq V(X')$ of at least $k/2$ vertices of $X'$, such that graph $X[S]$ is a $\phi'$-expander, for $\phi'=1/(\log k )^{O(1/\eps)}$. Let $i$ be the index of the current iteration. We then set $A_i'=V(X)\setminus S$ and $B'_i=S$, and apply the algorithm from \Cref{thm: matching player} to graph $X'$, and vertex sets $A_i',B_i'$. We continue exactly as before, obtaining a matching $M_i$ between the vertices of $A'_i$ and the vertices of $B_i'$, of cardinality $|A'_i|$, and its corresponding routing $\pset_i$. We let $M_i$ be a partial matching between vertices of $A_i'$ and vertices of $B_i'$, where $(x,y)\in M_i'$ iff some path in $\pset_i$ connects $x$ to $y$, so $|M_i|=|A'_i|$. We then add the edges of $M_i$ to graph $X'$, add the paths in $\pset_i$ to set $\pset$, and terminate the algorithm, returning the graph $X'$ and its embedding $\pset$.

From \Cref{thm: CMG}, the total number of iterations in the algorithm is bounded by $O(\log k)$, and the final graph $X'$ is a $\hat \phi$-expander, for $\hat \phi=1/(\log k )^{O(1/\eps)}$.
The paths in $\pset$ provide an embedding of the graph $X'$ into $X$, where the length of every path in $\pset$ is at most $O\left (\frac{\Delta \log n}{\phi}\right )$. Since the number of iterations is at most $O(\log k)$, and each set $\pset_i$ of paths causes edge-congestion at most  $O\left (\frac{\Delta^2\log^2n}{\phi^2}\right )$, the total edge-congestion caused by the paths in $\pset$ is at most  $O\left (\frac{\Delta^2\log^3n}{\phi^2}\right )$. 

Recall that we have assumed that $|T|$ is an even integer. If this is not the case, then we let $t\in T$ be any vertex, and we run the same algorithm as above, replacing the set $T$ of terminals with $T'=T\setminus\set{t}$. Let $X'$ be the resulting $\hat \phi$-expander, with $V(X')=T'$, and let $\pset$ be its embedding into $X$. Let $t'\in T'$ ba any vertex, and let $P'$ be the shortest path connecting $t$ to $t'$ in graph $X$. From \Cref{obs: short paths in exp}, the length of $P'$ is at most $O(\Delta \log n/\phi)$. We then obtain the final expander $X''$ from $X'$, by inserting the vertex $t$ and the edge $(t,t')$ into it. The set $\pset\cup \set{P'}$ of paths defines an embedding of $X''$ into $X$ with edge-congestion bounded by $O\left (\frac{\Delta^2\log^3n}{\phi^2}\right )$, with path-lengths bounded by $O\left (\frac{\Delta \log n}{\phi}\right )$. It is easy to verify that graph $X''$ remains a $\phi''$-expander, for $\phi''=1/(\log k )^{O(1/\eps)}$.
It now remains to analyze the running time of the algorithm.

The algorithm consists of $O(\log k)$ iterations. Each iteration involves applying the algorithm from \Cref{cor: cut player}, whose running time is $O\left (k^{1+O(\eps)}\cdot (\log k)^{O(1/\eps^2)}\right )$, and the algorithm from \Cref{cor: matching player reg}, whose running time is $\otilde(\Delta^4n/\phi^3)$. Therefore, the total running time of the algorithm is $\otilde(\Delta^4n/\phi^3)+O\left (k^{1+O(\eps)}\cdot (\log k)^{O(1/\eps^2)}\right )$.

\newpage
\bibliographystyle{alpha}
\bibliography{APSP-p3}



\end{document}